\theoremstyle{definition}
\newtheorem{definition}{Definition}
\newtheorem{theorem}{Theorem}
\newtheorem{lemma}{Lemma}
\newtheorem{proposition}{Proposition}
\newtheorem{corollary}{Corollary}
\newtheorem{example}{Example}
\title{Exact and Approximate Counting of Database Repairs}
\author{
	Marco Calautti$^1$,
	Ester Livshits$^2$,
	Andreas Pieris$^{2,3}$,
	Markus Schneider$^3$\\[2mm]
	$^1$Department of Computer Science, University of Milan\\
	$^2$School of Informatics, University of Edinburgh\\
	$^3$Department of Computer Science, University of Cyprus
}
\date{}
\begin{document}

\newcommand{\OMIT}[1]{}
\newcommand{\mi}[1]{\mathit{#1}}
\newcommand{\ins}[1]{\mathbf{#1}}
\newcommand{\adom}[1]{\mathsf{dom}(#1)}
\renewcommand{\paragraph}[1]{\textbf{#1}}
\newcommand{\ra}{\rightarrow}
\newcommand{\fr}[1]{\mathsf{fr}(#1)}
\newcommand{\dep}{\Sigma}
\newcommand{\sch}[1]{\mathsf{sch}(#1)}
\newcommand{\ar}[1]{\mathsf{ar}(#1)}
\newcommand{\body}[1]{\mathsf{body}(#1)}
\newcommand{\head}[1]{\mathsf{head}(#1)}
\newcommand{\guard}[1]{\mathsf{guard}(#1)}
\newcommand{\class}[1]{\mathbb{#1}}
\newcommand{\pos}[2]{\mathsf{pos}(#1,#2)}
\newcommand{\app}[2]{\langle #1,#2 \rangle}
\newcommand{\crel}[1]{\prec_{#1}}

\newcommand{\ccrel}[1]{\prec_{#1}^+}

\newcommand{\tcrel}[1]{\prec_{#1}^{\star}}
\newcommand{\rctaa}{\class{CT}_{\forall \forall}^{\mathsf{res}}}
\newcommand{\rctaapr}{\mathsf{CT}_{\forall \forall}^{\mathsf{res}}}
\newcommand{\rctae}{\class{CT}_{\forall \exists}^{\mathsf{res}}}
\newcommand{\rctaepr}{\mathsf{CT}_{\forall \exists}^{\mathsf{res}}}
\newcommand{\base}[1]{\mathsf{base}(#1)}
\newcommand{\eqt}[1]{\mathsf{eqtype}(#1)}
\newcommand{\result}[1]{\mathsf{result}(#1)}
\newcommand{\chase}[2]{\mathsf{ochase}(#1,#2)}
\newcommand{\pred}[1]{\mathsf{pr}(#1)}
\newcommand{\origin}[1]{\mathsf{org}(#1)}
\newcommand{\eq}[1]{\mathsf{eq}(#1)}
\newcommand{\dept}[1]{\mathsf{depth}(#1)}
\newcommand{\comp}[2]{\mathsf{comp}_{#2}(#1)}

\def\val#1{\mathtt{#1}}
\def\attr#1{\mathrm{#1}}
\def\rel#1{\mathit{#1}}

\newcommand{\rep}[2]{\mathsf{rep}_{#2}(#1)}
\newcommand{\repp}[2]{\mathsf{rep}_{#2}\left(#1\right)}
\newcommand{\rfreq}[2]{\mathsf{rfreq}_{#2}(#1)}
\newcommand{\homs}[3]{\mathsf{hom}_{#2,#3}(#1)}
\newcommand{\prob}[1]{\mathsf{#1}}
\newcommand{\key}[1]{\mathsf{key}(#1)}
\newcommand{\keyval}[2]{\mathsf{key}_{#1}(#2)}
\newcommand{\block}[2]{\mathsf{block}_{#2}(#1)}
\newcommand{\sblock}[2]{\mathsf{sblock}_{#2}(#1)}

\newcommand{\rt}[1]{\mathsf{root}(#1)}
\newcommand{\child}[1]{\mathsf{child}(#1)}

\newcommand{\var}[1]{\mathsf{var}(#1)}
\newcommand{\const}[1]{\mathsf{const}(#1)}
\newcommand{\pvar}[2]{\mathsf{plhsvar}_{#2}(#1)}
\newcommand{\prhsvar}[2]{\mathsf{prhsvar}_{#2}(#1)}

\newcommand{\att}[1]{\mathsf{att}(#1)}
\newcommand{\card}[1]{\sharp #1}

\newcommand{\pr}{\mathsf{Pr}}
\newcommand{\prsp}{\mathsf{PS}}

\newcommand{\sign}[1]{\mathsf{sign}(#1)}
\newcommand{\litval}[2]{\mathsf{lval}_{#2}(#1)}
\newcommand{\angletup}[1]{\langle #1 \rangle}


\def\qed{\hfill{\qedboxempty}      
  \ifdim\lastskip<\medskipamount \removelastskip\penalty55\medskip\fi}

\def\qedboxempty{\vbox{\hrule\hbox{\vrule\kern3pt
                 \vbox{\kern3pt\kern3pt}\kern3pt\vrule}\hrule}}

\def\qedfull{\hfill{\qedboxfull}   
  \ifdim\lastskip<\medskipamount \removelastskip\penalty55\medskip\fi}

\def\qedboxfull{\vrule height 4pt width 4pt depth 0pt}

\newcommand{\markfull}{\qedboxfull}
\newcommand{\markempty}{\qed}

\def\lhs{\mathsf{lhs}}
\def\rhs{\mathsf{rhs}}
\def\prim{\mathsf{prim}}
\def\Eval{\mathsf{Eval}}
\def\IsSafe{\mathsf{IsSafe}}
\def\true{\mathsf{true}}
\def\false{\mathsf{false}}
\def\set#1{\mathord{\{#1\}}}
\def\lrhd{\triangleright}
\newcommand*{\translrhd}{\ensuremath{\mathop{\overset{+}{\lrhd}}}}
\def\e#1{\emph{#1}}

\def\dl{\mathrel{{:}{\text{-}}}}

\def\rev#1{{\color{black}#1}}

\maketitle

\begin{abstract}
Database integrity constraints allow us to specify semantic properties that should be satisfied by all databases of a certain relational schema. However, real-life databases are often inconsistent, i.e., do not conform to their specifications in the form of integrity constraints. This leads to the problem of querying inconsistent databases, which aims at a mechanism that allows us to compute trustable answers to user queries, despite the inconsistency of the underlying database.
Consistent query answering (CQA) is such a mechanism that goes back in the late 1990s. Its key elements are the notion of (database) repair, that is, a consistent database whose difference with the original inconsistent database is minimal according to some measure, and the notion of certain answers, that is, answers that are certainly true in all repairs.
Although the notion of certain answers is conceptually meaningful, it is also very rigid as it may completely discard a candidate answer only because it is not entailed by few repairs. A more refined measure of certainty is the so-called relative frequency of a candidate answer, that is, the percentage of repairs that entail it. Of course, to compute the relative frequency of a candidate answer, we need a way to compute the number of repairs that entail it, and the total number of repairs. This brings us to the central problem of counting database repairs (with or without a query in place), which is the main theme of the present work. A key task in this context is to establish a complete complexity classification (a.k.a. dichotomy). 
When there is no query in place, this has been recently achieved for the central class of functional dependencies (FDs) via an \text{\rm FP}/$\sharp$\text{\rm P}-complete dichotomy. With a query in place, an analogous classification exists for primary keys and self-join-free conjunctive queries (CQs), but nothing is known once we go beyond primary keys. We lift the latter result to arbitrary FDs and self-join-free CQs.
Another important task in this context is whenever the counting problem in question is intractable, to classify it as approximable, i.e., the target value can be efficiently approximated with error guarantees via a fully polynomial-time randomized approximation scheme (FPRAS), or as inapproximable. Although for primary keys and CQs (even with self-joins) the problem is always approximable, we prove that this is not the case for FDs (with or without a CQ in place).
We show, however, that the class of FDs with a so-called left-hand side chain forms an island of approximability. These results, apart from being interesting in their own right, are crucial steps towards a complete classification of approximate counting of repairs in the case of FDs.
\end{abstract}

\section{Introduction}\label{sec:introduction}

In a relational database system, it is possible to specify semantic properties that should be satisfied by all databases of a certain relational schema such as ``every social security number is associated with at most one person''. Such properties are crucial in the development of transparent and usable database schemas for complex applications, as well as for optimizing the evaluation of queries~\cite{AbHV95}.
Database semantic properties are typically specified via integrity constraints. For example, the above property about persons is a simple key constraint, which essentially states that in the relation that stores persons, the attribute ``social security number'' functionally determines all the other attributes of that relation.
Unfortunately, real-life databases are often {\em inconsistent}, i.e., do not conform to their specifications in the form of integrity constraints. The reason behind this is that data is not perfect and clean; it may come, for example, from several conflicting sources~\cite{Bertossi11}.
Data cleaning attempts to fix this problem by resolving the inconsistencies~\cite{FaGe12}. However, this may lead to the loss of valuable information. Furthermore, there are  scenarios like virtual data integration, where the data stays at the autonomous data sources, in which there is no way to modify the data without having ownership of the sources. This leads to the problem of querying inconsistent databases. In other words, we need a mechanism that allows us to compute meaningful and trustable answers to user queries, despite the fact that the underlying database is inconsistent.

\subsection{Consistent Query Answering and Database Repairs}

Consistent query answering (CQA) is an elegant framework introduced in the late 1990s by Arenas, Bertossi, and Chomicki~\cite{ArBC99} towards a mechanism for querying inconsistent databases that allows us to deliver meaningful answers to queries that can still be obtained from inconsistent data. The key elements underlying the CQA approach are:
\begin{enumerate} 
	\item the notion of {\em (database) repair} of an inconsistent database $D$, that is, a consistent database whose difference with $D$ is somehow minimal, and
	\item the notion of query answering based on {\em certain answers}, that is, answers that are entailed by every database repair. 
\end{enumerate}
A simple example, taken from~\cite{CaCP19}, that illustrates the above notions follows:

\begin{example}\label{exa:cqa}
	Consider the relational schema consisting of a single relation name 
	\[
	\rel{Employee}(\attr{id}, \attr{name}, \attr{dept})
	\]
	that comes with the constraint that the attribute $\attr{id}$ functionally determines $\attr{name}$ and $\attr{dept}$. This means that the attribute $\attr{id}$ is the key of $\rel{Employee}$. Consider also the simple database $D$
	\begin{figure}[h]
		\centering
		\begin{tabular}{c|c|c}
			\multicolumn{3}{c}{$\rel{Employee}$}\\\hline\hline
			$\attr{id}$ & $\attr{name}$ & $\attr{dept}$\\\hline
			$1$ & $\val{Bob}$ & $\val{HR}$\\
			$1$ & $\val{Bob}$ & $\val{IT}$\\
			$2$ & $\val{Alice}$ & $\val{IT}$\\
			$2$ & $\val{Tim}$ & $\val{IT}$\\
			\hline
		\end{tabular}
	\end{figure}

	\noindent It is easy to see that $D$ is inconsistent since we are uncertain about Bob's department, and the name of the employee with id $2$. To devise a repair, we need to keep one tuple from each conflicting pair, which leads to a maximal subset of $D$ that is consistent. Thus, we get the four repairs depicted below.
		\begin{figure}[h!]
			\centering
			\begin{tabular}{c|c|c}
				\multicolumn{3}{c}{$\rel{Repair~1}$}\\\hline\hline
				$\attr{id}$ & $\attr{name}$ & $\attr{dept}$\\\hline
				$1$ & $\val{Bob}$ & $\val{HR}$\\
				$2$ & $\val{Alice}$ & $\val{IT}$\\
	\hline
\end{tabular}
			\hspace{0.5cm}
			\begin{tabular}{c|c|c}
	\multicolumn{3}{c}{$\rel{Repair~2}$}\\\hline\hline
	$\attr{id}$ & $\attr{name}$ & $\attr{dept}$\\\hline
	$1$ & $\val{Bob}$ & $\val{HR}$\\
	$2$ & $\val{Tim}$ & $\val{IT}$\\
	\hline
\end{tabular}
			\hspace{0.5cm}
\begin{tabular}{c|c|c}
	\multicolumn{3}{c}{$\rel{Repair~3}$}\\\hline\hline
	$\attr{id}$ & $\attr{name}$ & $\attr{dept}$\\\hline
	$1$ & $\val{Bob}$ & $\val{IT}$\\
	$2$ & $\val{Alice}$ & $\val{IT}$\\
	\hline
\end{tabular}
			\hspace{0.5cm}
\begin{tabular}{c|c|c}
	\multicolumn{3}{c}{$\rel{Repair~4}$}\\\hline\hline
	$\attr{id}$ & $\attr{name}$ & $\attr{dept}$\\\hline
	$1$ & $\val{Bob}$ & $\val{IT}$\\
	$2$ & $\val{Tim}$ & $\val{IT}$\\
	\hline
\end{tabular}
		\end{figure}
	\noindent Observe now that the (Boolean) query that asks whether employees $1$ and $2$ work in the same department is true only in two out of four repairs, that is, $\rel{Repair~3}$ and $\rel{Repair~4}$. Therefore, since the query is false in at least one repair, its final answer over the inconsistent database $D$ is false. This is a conceptually meaningful answer as we are not entirely certain that employees $1$ and $2$ work in the same department, and thus, it will be misleading to evaluate the query to true. \hfill\markfull
\end{example}

Example~\ref{exa:cqa}, despite its simplicity, illustrates one of the limitations of the CQA approach. The notion of certain answers only says that a candidate answer (i.e. a tuple of database values) is entailed by all repairs, or is not entailed by some repair. But, as discussed in~\cite{CaLP18}, the former is too strict, while the latter is not very useful in a practical context. Instead, we would like to know how often a tuple is an answer, that is, its {\em relative frequency}, or, in other words, the percentage of repairs that entail that tuple. For instance, in Example~\ref{exa:cqa}, the relative frequency of the empty tuple, which corresponds to true and is the only candidate answer as the query is Boolean, is $\frac{1}{2}$ since, out
of four repairs in total, only two of those entail the query. Of course, to compute the relative frequency of a tuple, we need a way to compute (i) the number of repairs that entail a tuple (the numerator), and (ii)
the total number of repairs (the denominator). 

\rev{Note that another fundamental task where the number of repairs plays a crucial role is that of measuring the inconsistency of a database, a topic that has received significant attention over the years; see, e.g.,~\cite{DBLP:conf/lpnmr/Bertossi19,
		DBLP:journals/jiis/GrantH06,
		DBLP:conf/ecsqaru/GrantH11,
		DBLP:journals/ijar/GrantH17,
		DBLP:journals/ai/HunterK10,
		DBLP:conf/ijcai/KoniecznyLM03,
		DBLP:journals/lmcs/LivshitsK22,
		DBLP:journals/ai/ParisiG23,
		DBLP:journals/ki/Thimm17}.
Inconsistency measures serve as tools for assessing the reliability of a database and the amount of effort required for cleaning it. One particularly well-explored metric is based on the number repairs~\cite{DBLP:conf/ecsqaru/GrantH11,
	DBLP:journals/ijar/GrantH17,
	DBLP:journals/lmcs/LivshitsK22,
	DBLP:journals/ai/ParisiG23,
	DBLP:journals/ki/Thimm17}. In this context, the task of counting repairs is that of estimating the extent of inconsistency in the database.}

\subsection{Counting Database Repairs}

The problem of counting database repairs is indeed a central one in the context of querying inconsistent data, and has been extensively studied since the advent of the CQA approach. The two key counting problems of main concern are the following, with $\dep$ being a set of integrity constraints and $Q$ a query (note that, for the sake of clarity, we base our discussion on Boolean queries, but it can be naturally extended to non-Boolean queries):

\begin{center}
	\fbox{\begin{tabular}{c}
			$\sharp \prob{Repairs}(\dep)$: For a database $D$, compute the number of repairs of $D$ w.r.t.~$\dep$
			\\[0.3cm]
			$\sharp \prob{Repairs}(\dep,Q)$: For a database $D$, compute the number of repairs of $D$ w.r.t.~$\dep$ that entail $Q$
	\end{tabular}}
\end{center}

\noindent A significant amount of research work was dedicated to the task of pinpointing the complexity of the above problems, which is also the main concern of this work. It has been observed that both problems can be tractable or intractable depending on the syntactic shape of $\dep$ and $Q$. This leads to the natural question whether we can establish a complete complexity classification, i.e., for every $\dep$ and $Q$, classify $\sharp \prob{Repairs}(\dep)$ and $\sharp \prob{Repairs}(\dep,Q)$ as tractable, i.e., in \text{\rm FP} (the counting analogue of \textsc{PTime}), or intractable, i.e., $\sharp \text{\rm P}$-complete (with $\sharp \text{\rm P}$ being the counting analogue of \text{\rm NP}), by simply inspecting $\dep$ and $Q$.
Such a classification, apart from being very interesting from a theoretical point of view, can be also very useful in a more practical context as it can tell us precisely when the set of constraints and query at hand render the problem of interest tractable, which in turn guides the choice of the counting algorithm to be applied.

It is well-known that establishing such complexity classifications (a.k.a.~dichotomies) is a highly non-trivial task. Nevertheless, despite the underlying technical difficulties, such classifications have been established for important classes of integrity constraints. In particular, there are interesting classifications for the central class of {\em functional dependencies} (FDs), which generalize key constraints and allow us to express useful properties of the form ``some attributes functionally determine some other attributes''; the standard syntax of an FD is $R : X \ra Y$, where $R$ is a relation name and $X,Y$ are sets of attributes of $R$, stating that the attributes in $X$ functionally determine the attributes in $Y$. Here is a comprehensive summary of what we currently know:

\begin{itemize}
\item The work~\cite{LiKW21} has recently established a complexity classification for the problem $\sharp \prob{Repairs}(\dep)$ assuming that $\dep$ is a set of FDs. More precisely, it has been shown that whenever $\dep$ has a so-called left-hand side (LHS, for short) chain (up to equivalence), $\sharp \prob{Repairs}(\dep)$ is in \text{\rm FP}; otherwise, it is $\sharp\text{\rm P}$-complete. We also know that checking whether $\dep$ has an LHS chain (up to equivalence) is feasible in polynomial time. Let us recall that a set $\dep$ of FDs has an LHS chain if, for every two FDs $R : X_1 \ra Y_1$ and $R : X_2 \ra Y_2$ of $\dep$, $X_1 \subseteq X_2$ or $X_2 \subseteq X_1$.

\item When it comes to $\sharp \prob{Repairs}(\dep,Q)$, such a classification has been established in~\cite{MaWi13} assuming that $\dep$ is a set of {\em primary keys}, i.e., at most one key constraint per relation name, and $Q$ is a {\em self-join-free conjunctive query} (SJFCQ), i.e., a CQ that cannot mention a relation name more than once. In particular, it has been shown that whenever $Q$ is a so-called safe query w.r.t.~$\dep$, where safety is a technical notion introduced in~\cite{MaWi13} that can be checked in polynomial time, $\sharp \prob{Repairs}(\dep,Q)$ is in \text{\rm FP}; otherwise, it is $\sharp\text{\rm P}$-complete. 
An analogous classification for arbitrary CQs with self-joins was established by the same authors in~\cite{MaWi14} under the assumption that the primary keys are simple, i.e., they consist of a single attribute.
\end{itemize}

\subsection{Main Research Questions}

From the above summary, it is clear that for the central class of FDs, which is the main concern of this work, the problem $\sharp \prob{Repairs}(\dep)$ is well-understood.
However, for the problem $\sharp \prob{Repairs}(\dep,Q)$, once we go beyond primary keys we know very little concerning the existence of a complete complexity classification as the ones described above. In particular, the dichotomy result of~\cite{MaWi13} does not apply when we consider arbitrary FDs. This brings us to the following question:

\medskip

\noindent {\em \textbf{Research Question 1:} Can we lift the dichotomy result for primary keys and SJFCQs to the more general case of functional dependencies?}

\medskip

Let us stress that the above question is deliberately stated for self-join-free CQs and not for arbitrary CQs. The reason is because providing a complete classification for arbitrary CQs with self-joins is an overly ambitious task that is currently out of reach. Indeed, it is well-known that self-joins in CQs cause several complications, and this is why the question whether such a classification exists even for primary keys and CQs with self-joins remains a challenging open problem ten years after the classification result for primary keys and CQs without self-joins was established.

Another key task is to classify the problems $\sharp \prob{Repairs}(\dep)$ and $\sharp \prob{Repairs}(\dep,Q)$, for a set $\dep$ of FDs and a CQ $Q$ (with or without self-joins), as approximable, that is, the target value can be efficiently approximated with error guarantees via a fully polynomial-time randomized approximation scheme (FPRAS), or as inapproximable. This leads to the following question:

\medskip

\noindent {\em \textbf{Research Question 2:} For a set $\dep$ of FDs and a CQ $Q$ (with or without self-joins), can we determine whether $\sharp \prob{Repairs}(\dep)$ and $\sharp \prob{Repairs}(\dep,Q)$ admit an FPRAS by inspecting $\dep$ and $Q$?
}

\medskip

Note that whenever $\sharp \prob{Repairs}(\dep)$ or $\sharp \prob{Repairs}(\dep,Q)$ is tractable, then it is trivially approximable. Therefore, the above question for $\sharp \prob{Repairs}(\dep)$ trivializes whenever $\dep$ has an LHS chain (up to equivalence). However, for FDs without an LHS chain (up to equivalence) this is not true since there exists a set $\dep$ such that $\sharp \prob{Repairs}(\dep)$ is not approximable; the latter is actually a result of the present work.
Concerning $\sharp \prob{Repairs}(\dep,Q)$, we only know that if $\dep$ consists of primary keys and $Q$ is a CQ (even with self-joins), $\sharp \prob{Repairs}(\dep,Q)$ is always approximable; this is implicit in~\cite{DaSu07}, and it has been made explicit in~\cite{CaCP19}.
Once we go beyond primary keys, the above question for $\sharp \prob{Repairs}(\dep,Q)$ is a non-trivial one since, depending on the syntactic shape of $\dep$ and $Q$, $\sharp \prob{Repairs}(\dep,Q)$ can be approximable or not; the latter is also a result of the present work.

\subsection{Summary of Contributions}

Our goal is to provide answers to the questions discussed above. For Question~1, we provide a definitive answer. For Question~2, we establish results that form crucial steps towards a definitive answer. As we explain, an answer to Question~2 will resolve a challenging graph-theoretic open problem. Our contributions can be summarized as follows:

\begin{itemize}
\item Concerning Question 1, we lift the dichotomy of~\cite{MaWi13} for primary keys and SJFCQs to the general case of FDs (Theorem~\ref{the:fds-dichotomy}). To this end, we build on the dichotomy for the problem of counting repairs (without a query) from~\cite{LiKW21}, which allows us to concentrate on FDs with an LHS chain (up to equivalence) since for all the other cases we can inherit the $\sharp \text{P}$-hardness.
Therefore, our main technical task was actually to lift the dichotomy for primary keys and SJFCQs from~\cite{MaWi13} to the case of FDs with an LHS chain (up to equivalence). Although the proof of this result borrows several ideas from the proof of~\cite{MaWi13}, the task of lifting the result to FDs with an LHS chain (up to equivalence) is a non-trivial one. This is due to the significantly more complex structure of database repairs under FDs with an LHS chain compared to those under primary keys. \rev{The tractable side of the dichotomy is proved by introducing a polynomial-time algorithm. For the intractable side, we first introduce certain rewrite rules for pairs consisting of a set $\dep$ of FDs and an SJFCQ $Q$. A crucial property of those rules is that whenever $(\dep,Q)$ can be rewritten into $(\dep',Q')$, there is a polynomial-time Turing reduction from $\sharp \prob{Repairs}(\dep',Q')$ to $\sharp \prob{Repairs}(\dep,Q)$. We then show that every pair $(\dep,Q)$ can be rewritten, by following a sequence of rewrite rules, into a pair $(\dep',Q')$ that we call final. Intuitively, a final pair cannot be rewritten further without breaking its non-safety. For final pairs, we establish $\sharp\text{\rm P}$-hardness via a reduction from the primary key case, and the $\sharp\text{\rm P}$-hardness for all the non-final pairs is obtained from the sequence of rewritings; further details are discussed in Section~\ref{sec:lhs-chain-fds}.}

\item Concerning Question 2, although we do not establish a complete classification, we provide results that, apart from being interesting in their own right, are crucial steps towards a complete classification. We first reveal the difficulty underlying a proper approximability/inapproximability 
dichotomy by discussing that such a result will resolve the challenging open problem of whether counting maximal matchings in a bipartite graph is approximable. We then show that, for every set $\dep$ of FDs with an LHS chain (up to equivalence) and a CQ $Q$ (even with self-joins), $\sharp \prob{Repairs}(\dep,Q)$ admits an FPRAS (item (1) of Theorem~\ref{the:apx-main-result}). 
On the other hand, we show that there exists a very simple set $\dep$ of FDs such that $\sharp \prob{Repairs}(\dep)$ does not admit an FPRAS (item (2) of Theorem~\ref{the:apx-main-result}), which in turn allows us to show that, for every SJFCQ $Q$, $\sharp \prob{Repairs}(\dep,Q)$ does not admit an FPRAS (item (3) of Theorem~\ref{the:apx-main-result}). As usual, these inapproximability results hold under a standard complexity-theoretic assumption. The inapproximability of $\sharp \prob{Repairs}(\dep)$  exploits the interesting technique of gap amplification, used to prove inapproximability of optimization problems~\cite{Has01}.
\end{itemize}

\medskip

\noindent
\paragraph{Roadmap.} In Section~\ref{sec:preliminaries}, we recall the basics on relational databases, FDs and CQs. The problems of interest are formally defined in Section~\ref{sec:problem-definition}. In Section~\ref{sec:lhs-chain-fds}, we introduce the class of FDs with a left-hand side chain that plays a crucial role in our analysis. A definitive answer to Question 1 stated above is provided in Section~\ref{sec:exact-counting}, whereas a partial answer to Question 2 is given in Section~\ref{sec:apx-counting}. We conclude in Section~\ref{sec:future-work} by briefly discussing our results and giving directions for future research.
For the sake of readability, several technical details and proofs are given in a clearly marked appendix.
\section{Preliminaries}\label{sec:preliminaries}

\begin{figure}[t]
\centering
\begin{tabular}{c||c|c|c|c|c}
\multicolumn{6}{c}{$\rel{Schedule}$}\\\hline\hline
 & $\attr{train}$ & $\attr{origin}$ & $\attr{destination}$ & $\attr{time}$ & $\attr{duration}$\\\hline
$f_1$ & $\val{16}$ & $\val{NYP}$ & $\val{BBY}$ & $\val{1030}$ & $\val{315}$\\
$f_2$ & $\val{16}$ & $\val{NYP}$ & $\val{PVD}$ & $\val{1030}$ & $\val{250}$\\
$f_3$ & $\val{16}$ & $\val{PHL}$ & $\val{WIL}$ & $\val{1030}$ & $\val{20}$\\
$f_4$ & $\val{16}$ & $\val{PHL}$ & $\val{BAL}$ & $\val{1030}$ & $\val{70}$\\
$f_5$ & $\val{16}$ & $\val{PHL}$ & $\val{WAS}$ & $\val{1030}$ & $\val{120}$\\
$f_6$ & $\val{16}$ & $\val{BBY}$ & $\val{PHL}$ & $\val{1030}$ & $\val{260}$\\
$f_7$ & $\val{16}$ & $\val{BBY}$ & $\val{NYP}$ & $\val{1030}$ & $\val{260}$\\
$f_8$ & $\val{16}$ & $\val{BBY}$ & $\val{WAS}$ & $\val{1030}$ & $\val{420}$\\
$f_9$ & $\val{16}$ & $\val{WAS}$ & $\val{PVD}$ & $\val{1030}$ & $\val{390}$\\\hline
\end{tabular}
\hspace{1cm}
\begin{tabular}{c||c|c}
\multicolumn{3}{c}{$\rel{Station}$}\\\hline\hline
& $\attr{name}$ & $\attr{city}$\\\hline
$g_1$ & $\val{NYP}$ & $\val{New York}$\\
$g_2$ & $\val{PHL}$ & $\val{Philadelphia}$\\
$g_3$ & $\val{BBY}$ & $\val{Boston}$\\
$g_4$ & $\val{WAS}$ & $\val{Washington}$\\
$g_5$ & $\val{PVD}$ & $\val{Providence}$\\
$g_6$ & $\val{WIL}$ & $\val{Wilmington}$\\
$g_7$ & $\val{BAL}$ & $\val{Baltimore}$\\
\hline
\end{tabular}
\caption{\label{fig:trains} An inconsistent database of a railroad company.}
\end{figure}

We recall the basics on relational databases, functional dependencies, and conjunctive queries. We consider the disjoint countably infinite sets $\ins{C}$ and $\ins{V}$ of {\em constants} and {\em variables}, respectively. For an integer $n > 0$, we write $[n]$ for the set $\{1,\ldots,n\}$.

\medskip

\noindent\paragraph{Relational Databases.}
A {\em (relational) schema} $\ins{S}$ is a finite set of relation names with associated arity; we write $R/n$ to denote that $R$ has arity $n > 0$. Each relation name $R/n$ is associated with a tuple of distinct attribute names $(A_1,\ldots,A_n)$; we write $\att{R}$ for the set of attributes $\{A_1,\ldots,A_n\}$. A {\em position} of $\ins{S}$ is a pair $(R,A)$, where $R \in \ins{S}$ and $A \in \att{R}$, that essentially identifies the attribute $A$ of the relation name $R$.
A {\em fact} over $\ins{S}$ is an expression of the form $R(c_1,\ldots,c_n)$, where $R/n \in \ins{S}$, and $c_i \in \ins{C}$ for each $i \in [n]$; we may say $R$-fact to indicate that the relation name is $R$. A {\em database} $D$ over $\ins{S}$ is a finite set of facts over $\ins{S}$. We write $D_{R}$, for $R \in \ins{S}$, for the database $\{f \in D \mid f \text{ is an } R\text{-fact}\}$. The {\em active domain} of $D$, denoted $\adom{D}$, is the set of constants occurring in $D$.
For a fact $f = R(c_1,\ldots,c_n)$, with $(A_1,\ldots,A_n)$ being the tuple of attribute names of $R$, we write $f[A_i]$ for $c_i$, that is, the value of the attribute $A_i$ in the fact $f$.
We say that a constant $c$ occurs in $f$ at position $(R,A)$ if $f[A] = c$.

\begin{example}\label{example:trains}
 Figure~\ref{fig:trains} depicts a
  database, partially taken from~\cite{LiKi21}, over a schema with two relation names: $\rel{Schedule}$ (storing a train schedule) and $\rel{Station}$ (storing the locations of train stations). The relation name $\rel{Schedule}$ is associated with the tuple $(\attr{train},\attr{origin},\attr{destination},\attr{time},\attr{duration})$ of attribute names,
and the relation name $\rel{Station}$ is associated with $(\attr{name},\attr{city})$. The fact
\[
  f_1\ =\ \rel{Schedule}(\val{16}, \val{NYP}, \val{BBY}, \val{1030}, \val{315})
\]
  states that train number 16 is scheduled to depart from the New York Penn (NYP) Station at 10:30, and arrive at the Boston Back Bay (BBY) Station 315 minutes later. Hence, for example, we have that $f_1[\attr{origin}]=\val{NYP}$. The first and the third fact of $\rel{Station}$ 
  \[
  g_1\ =\ \rel{Station}(\val{NYP}, \val{New York}) \qquad \textrm{and} \qquad   g_3\ =\ \rel{Station}(\val{BBY}, \val{Boston})
  \]
  state that the NYP station is located in New York and the BBY station is located in Boston. \hfill\markfull 
\end{example}

\medskip

\noindent
\paragraph{Functional Dependencies.}
A {\em functional dependency} (FD) $\phi$ over a schema $\ins{S}$ is an expression
\[
R\ :\ X \ra Y
\]
where $R/n \in \ins{S}$ and $X,Y \subseteq \att{R}$; \rev{we may call $\phi$ an $R$-FD to indicate that it is defined over the relation name $R$}. We call $\phi$ a {\em key} if $X \cup Y = \att{R}$. When $X$ or $Y$ are singletons, we may avoid the curly brackets, and simply write the attribute name.
Given a set $\dep$ of FDs over $\ins{S}$, we write $\dep_R$, for $R \in \ins{S}$, for the set $\{\phi \in \dep \mid \phi \text{ is an } R\text{-FD}\}$. 
We call $\dep$ a set of {\em primary keys} if it consists only of keys, and $|\dep_R| \leq 1$ for each $R \in \ins{S}$.
A database $D$ over $\ins{S}$ satisfies an FD $\phi$ over $\ins{S}$, denoted $D \models \phi$, if, for every two $R$-facts $f,g \in D$ the following holds: $f[A]=g[A]$ for every $A \in X$ implies $f[B]=g[B]$ for every $B \in Y$. In simple words, $D$ satisfies $\phi$ if every two $R$-facts $f,g$ of $D$ either agree on the values of all the attributes of $X \cup Y$, or disagree on the value of at least one attribute of $X$.
We say that $D$ is {\em consistent} w.r.t.~a set $\dep$ of FDs, written $D \models \dep$, if $D \models \phi$ for every $\phi \in \dep$; otherwise, $D$ is {\em inconsistent} w.r.t.~$\dep$.
Two sets of FDs $\dep$ and $\dep'$ are {\em equivalent} if, for every database $D$, $D \models \dep$ iff $D \models \dep'$.

\begin{example}\label{example:train_fds}
 Consider again the database depicted in Figure~\ref{fig:trains} and the FDs  
 \[
  \rel{Schedule}\ :\ \{\attr{train}, \attr{time}\} \rightarrow \attr{origin} \quad\quad\quad
  \rel{Schedule}\ :\ \{\attr{train}, \attr{time}, \attr{duration}\} \rightarrow \attr{destination}
\]
  stating that the train number and departure time determine the origin station, and the train number, departure time, and duration determine the destination station. We also consider the FD
  \[
  \rel{Station}\ :\ \attr{name}\rightarrow \attr{city}
  \]
  which is actually a single (primary) key over $\rel{Station}$. Note that there are no violations of this key in the database; however, the first and third fact of  $\rel{Schedule}$, for example, jointly violate the first FD given above. Therefore, the database is inconsistent. \hfill\markfull
\end{example}

\medskip

\noindent
\paragraph{Conjunctive Queries.}
A {\em (relational) atom} $\alpha$ over a schema $\ins{S}$ is an expression $R(t_1,\ldots,t_n)$, where $R/n \in \ins{S}$, and $t_i \in \ins{C} \cup \ins{V}$ for each $i \in [n]$; we may say $R$-atom to indicate that the relation name is $R$. Note that a fact is an atom without variables. As for facts, we may say that \rev{a term $t\in \ins{C} \cup \ins{V}$} occurs in $\alpha$ at a certain position $(R,A)$ with $A \in \att{R}$, and we may write $\alpha[A]$ for the term $t$ occurring in $\alpha$ at position $(R,A)$.
A {\em conjunctive query} (CQ) $Q$ over $\ins{S}$ is an expression
\[
\textrm{Ans}(\bar x)\ \dl\ R_1(\bar y_1), \ldots, R_n(\bar y_n)
\]
where each $R_i(\bar y_i)$, for $ i \in [n]$, is an atom over $\ins{S}$, $\bar x$ are the {\em answer variables} of $Q$, and each variable in $\bar x$ is mentioned in $\bar y_i$ for some $i \in [n]$. We may write $Q(\bar x)$ to indicate that $\bar x$ are the answer variables of $Q$. When $\bar x$ is empty, $Q$ is called {\em Boolean}. 
A subclass of CQs that will play a crucial role in our analysis is that of self-join-free CQs. Formally, a {\em self-join-free} CQ (SJFCQ) over a schema $\ins{S}$ is a CQ over $\ins{S}$ that mentions every relation name of $\ins{S}$ at most once. We may treat a CQ as the set of atoms in the right-hand side of the $\dl$ symbol. Let $\var{Q}$ and $\const{Q}$ be the set of variables and constants in $Q$, respectively. The semantics of CQs is given via homomorphisms.
A {\em homomorphism} from a CQ $Q$ as the one above to a database $D$ is a function $h : \var{Q} \cup \const{Q} \ra \adom{D}$, which is the identity over $\ins{C}$, such that $R_i(h(\bar y_i)) \in D$ for $i \in [n]$.
A tuple $\bar t \in \adom{D}^{|\bar x|}$ is an {\em answer to $Q$ over $D$} if there is a homomorphism $h$ from $Q$ to $D$ with $h(\bar x) = \bar t$. Let $Q(D)$ be the answers to $Q$ over $D$. For Boolean CQs, we write $D \models Q$, and say that $D$ {\em entails} $Q$, if $() \in Q(D)$.

\begin{example}\label{example:train_queries}
 We consider the following CQs over the schema of the database of Figure~\ref{fig:trains}:
 \[
 Q_1\ =\ \textrm{Ans}(x,y)\ \dl\ \rel{Schedule}(x,\val{BBY},z,y,w),\rel{Station}(z,\val{Washington})
 \] 
 that asks for all the train numbers and departure times of trains that depart from the BBY station and arrive to Washington, and
 \[
 Q_2\ =\ \textrm{Ans}(x,y,w)\ \dl\ \rel{Schedule}(x,\val{BBY},z,y,w),\rel{Station}(z,\val{Washington})
 \]
 that asks for all the train numbers, departure times, and ride durations of such trains. Note that both queries are self-join free. 
 Note that the tuple 
 $(\val{16},\val{1030})$ is an answer to $Q_1$ over the database, as witnessed by the homomorphism $h$ with $h(x)=\val{16}$, $h(y)=\val{1030}$, $h(z)=\val{WAS}$, and $h(w)=\val{420}$. The same homomorphism, \rev{this time with respect to the variables of $Q_2$}, witnesses that $(\val{16},\val{1030},\val{420})$ is an answer to $Q_2$ over the database. \hfill\markfull
\end{example}

\medskip
\noindent
\paragraph{Database Repairs.}
\rev{Informally, a repair of an inconsistent database $D$ w.r.t.~a set of integrity constraints is a consistent database $D'$ that minimally deviates from $D$. Here,
 minimality refers to the symmetric difference between $D$ and $D'$. In the
case of anti-monotonic constraints (e.g.,~FDs), where consistency
cannot be violated by removing facts, a repair is a maximal consistent subset (i.e.,~subset repair~\cite{DBLP:conf/icdt/AfratiK09}).}
Formally, given a database $D$ and a set $\dep$ of FDs, both over a schema $\ins{S}$, a {\em repair} of $D$ w.r.t.~$\dep$ is a database $D' \subseteq D$ such that
\begin{enumerate}
	\item $D' \models \dep$, and
	\item for every $D'' \supsetneq D'$, $D'' \not\models \dep$.
\end{enumerate}
In simple words, a repair of $D$ w.r.t.~$\dep$ is a maximal subset of $D$ that is consistent w.r.t.~$\dep$. Note that if $D$ is consistent w.r.t.~$\dep$, then there is only one repair, that is, $D$ itself. Let $\rep{D}{\dep}$ be the set of repairs of $D$ w.r.t~$\dep$.
Given a CQ $Q(\bar x)$ over $\ins{S}$, and a tuple $\bar t \in \adom{D}^{|\bar x|}$, we write $\rep{D}{\dep,Q,\bar t}$ for $\{D' \in \rep{D}{\dep} \mid \bar t \in Q(D')\}$, that is, the set of repairs $D'$ of $D$ w.r.t.~$\dep$ such that $\bar t$ is an answer to $Q$ over $D'$. 
If $Q$ is Boolean, and thus, the only possible answer is the empty tuple $()$, we write $\rep{D}{\dep,Q}$ instead of $\rep{D}{\dep,Q,()}$.
For brevity, we write $\card{\rep{D}{\dep}}$ and $\card{\rep{D}{\dep,Q,\bar t}}$ for the cardinality of $\rep{D}{\dep}$ and $\rep{D}{\dep,Q,\bar t}$, respectively.

\begin{example}\label{example:train_num_repairs}
 It is easy to verify that the database $D$ of Figure~\ref{fig:trains} has five repairs w.r.t.~the set $\dep$ consisting of the FDs given in  Example~\ref{example:train_fds}: 
 \begin{enumerate}
     \item $\set{f_1,f_2,g_1,g_2,g_3,g_4,g_5,g_6,g_7}$,
     \item $\set{f_3,f_4,f_5,g_1,g_2,g_3,g_4,g_5,g_6,g_7}$,
     \item $\set{f_6,f_8,g_1,g_2,g_3,g_4,g_5,g_6,g_7}$,
     \item $\set{f_7,f_8,g_1,g_2,g_3,g_4,g_5,g_6,g_7}$,
     \item $\set{f_9,g_1,g_2,g_3,g_4,g_5,g_6,g_7}$.
 \end{enumerate}

\smallskip

\noindent Therefore, it holds that $\card{\rep{D}{\dep}}=5$. Now, consider the query $Q_1$ of Example~\ref{example:train_queries} and the answer $(\val{16},\val{1030})$. Only the third and fourth repairs belong to $\rep{D}{\dep,Q_1,(\val{16},\val{1030})}$, as these are the only repairs that contain the fact $f_8$ that mentions a train departing from the BBY station and arriving at the WAS station in Washington. Hence, we have that $\card{\rep{D}{\dep,Q_1,(\val{16},\val{1030})}}=2$. The same holds for the query $Q_2$ and the answer $(\val{16},\val{1030},\val{420})$; that is, $\card{\rep{D}{\dep,Q_2,(\val{16},\val{1030},\val{420})}}=2$. \hfill\markfull
\end{example}

A useful observation is that $\card{\rep{D}{\dep}} = \card{\rep{D'}{\dep,Q,\bar t}}$, for some easily computable database $D'$ and tuple $\bar t$, providing that $Q$ is self-join-free.

\def\lemmarepairreduction{
	Consider a database $D$, a set $\dep$ of FDs, and an SJFCQ $Q(\bar x)$. We can compute in polynomial time a database $D'$ and a tuple $\bar t \in \adom{D'}^{|\bar x|}$ such that $\card{\rep{D}{\dep}} = \card{\rep{D'}{\dep,Q,\bar t}}$.
}
\begin{lemma}\label{lem:cook-reduction}
\lemmarepairreduction
\end{lemma}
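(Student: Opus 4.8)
The plan is to turn every repair of $D'$ into one that entails $Q$, by hard-wiring into $D'$ a small set of fresh facts that form a homomorphic image of $Q$ and that survive in \emph{every} repair; then counting the repairs of $D'$ that entail $Q$ at $\bar t$ collapses to counting \emph{all} repairs of $D'$, which we arrange to be equinumerous with the repairs of $D$.

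Before the construction I would apply two harmless normalizations. First, I would make $Q$ constant-free: replacing each occurrence of each constant $a\in\const{Q}$ by a fresh answer variable $z_a$ (appended to $\bar x$) yields a self-join-free CQ $Q'$ with $\const{Q'}=\emptyset$ such that, for every database $E$ and tuple $\bar s$, $\bar s\in Q(E)$ iff $\bar s\bar a\in Q'(E)$, where $\bar a$ lists the replaced constants; hence $\rep{E}{\dep,Q,\bar s}=\rep{E}{\dep,Q',\bar s\bar a}$ and it suffices to prove the claim for $Q'$. Second, I would assume w.l.o.g.\ that no FD of $\dep$ has an empty left-hand side (FDs of the form $R:\emptyset\ra Y$ are degenerate, forcing all $R$-facts to agree on $Y$, and require only a routine separate treatment).

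Write $Q$ --- now self-join-free and constant-free --- as $\textrm{Ans}(\bar x)\ \text{:-}\ R_1(\bar y_1),\ldots,R_n(\bar y_n)$, where the $R_i$ are pairwise distinct. I would pick a fresh distinct constant $c_z\notin\adom{D}$ for each $z\in\var{Q}$, let $h$ be the substitution $z\mapsto c_z$, put $f_i=R_i(h(\bar y_i))$, $W=\{f_1,\ldots,f_n\}$, $D'=D\cup W$, and $\bar t=h(\bar x)\in\adom{D'}^{|\bar x|}$. Two easy facts about $W$: since $Q$ is self-join-free, $W$ contains at most one fact per relation name, so $W\models\dep$ trivially; and since every value occurring in $W$ is a fresh constant while every FD has a non-empty left-hand side, no fact of $W$ agrees with any fact of $D$ on the left-hand side of any FD, hence no fact of $W$ conflicts with a fact of $D$.

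It then remains to check, by routine arguments, that (i) every repair of $D'$ contains $W$ --- a fact conflicting with no fact of $D'$ cannot be dropped from a maximal consistent set; (ii) consequently every repair of $D'$ entails $Q$ at $\bar t$, because $h$ is a homomorphism from $Q$ to $W\subseteq D'$ with $h(\bar x)=\bar t$, so $\rep{D'}{\dep,Q,\bar t}=\rep{D'}{\dep}$; and (iii) the map $D_0\mapsto D_0\cup W$ is a bijection from $\rep{D}{\dep}$ onto $\rep{D'}{\dep}$ --- it is well defined (maximality of $D_0$ in $D$ together with the no-conflict property yields maximality of $D_0\cup W$ in $D'$), injective, and surjective (by (i), every repair of $D'$ equals $(D'\setminus W)\cup W$ with $D'\setminus W$ a repair of $D$). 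Chaining these equalities gives $\card{\rep{D}{\dep}}=\card{\rep{D'}{\dep}}=\card{\rep{D'}{\dep,Q,\bar t}}$, and $D'$ and $\bar t$ are plainly computable from $D,\dep,Q$ in polynomial time. The one point requiring genuine attention is the no-conflict property between $W$ and $D$: this is exactly what freshness of the $c_z$ secures, and it is the reason the constant-free and non-empty-left-hand-side normalizations are done up front; everything else is bookkeeping.
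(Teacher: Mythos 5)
Your core construction is the paper's: graft onto $D$ a fresh homomorphic image $W$ of $Q$, argue that $W$ conflicts with nothing and therefore survives in every repair, and read off a bijection $\rep{D}{\dep}\to\rep{D'}{\dep}=\rep{D'}{\dep,Q,\bar t}$. For constant-free queries your argument is complete and matches the paper's proof. The problem is your first normalization. The lemma's conclusion is about the \emph{original} query $Q$: a repair $E$ belongs to $\rep{D'}{\dep,Q,\bar t}$ only if there is a homomorphism from $Q$ to $E$ that is the identity on $\const{Q}$, so the constants of $Q$ must literally occur in $E$ at the right positions. After you replace each $a\in\const{Q}$ by a fresh answer variable $z_a$ and then instantiate \emph{every} variable of $Q'$ --- including $z_a$ --- by a fresh constant, the facts of $W$ carry $c_{z_a}\neq a$ where $Q$ demands $a$. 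Your equivalence $\rep{E}{\dep,Q,\bar s}=\rep{E}{\dep,Q',\bar s\bar a}$ is correct, but it only transfers the claim back to $Q$ if the tuple you produce for $Q'$ has the form $\bar s\,\bar a$; yours has fresh constants in the $\bar a$-coordinates, so "it suffices to prove the claim for $Q'$" does not follow. Concretely, for $Q()\ \text{:-}\ R(a)$, $\dep=\emptyset$ and $D=\{R(b)\}$, your construction yields $D'=\{R(b),R(c)\}$ with $c$ fresh, and $\card{\rep{D'}{\dep,Q}}=0\neq 1=\card{\rep{D}{\dep}}$.

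The fix is the one the paper uses: instead of removing the constants from $Q$, rename the constants of $D$ (injectively, with fresh constants) so that $\const{Q}\cap\adom{D}=\emptyset$ --- this preserves $\card{\rep{D}{\dep}}$ --- and build $W$ by substituting fresh constants for the variables of $Q$ while \emph{keeping} its constants. Then $\adom{W}\cap\adom{D}=\emptyset$ and your no-conflict and bijection arguments go through verbatim, now witnessing entailment of the actual $Q$. One small point in your favour: you explicitly flag that FDs with empty left-hand side would break the "fresh facts conflict with nothing" step, a degenerate case the paper's proof silently glosses over as well; but calling its treatment "routine" without giving it leaves that corner open in your write-up too.
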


\begin{proof}
	We assume, w.l.o.g., that $\const{Q}\cap\adom{D}=\emptyset$; indeed, if this is not the case, then we can simply rename all the constants in $D$ with fresh ones not occurring in $Q$ without affecting the number of repairs of $D$ w.r.t.~$\dep$.
	Let $D_Q$ be the database obtained by replacing each variable $x$ in $Q$ with a fresh constant $c_x \in \ins{C}$ not occurring in $\adom{D}$. Since $\adom{D} \cap \adom{D_Q} = \emptyset$, and $D_Q$ is consistent w.r.t.~$\dep$ since it contains at most one $R$-fact for each relation name $R$ occurring in $Q$, we get that $\card{\rep{D}{\dep}}=\card{\rep{D\cup D_Q}{\dep}}$. Moreover, for every repair $D'$ of $D \cup D_Q$ w.r.t.~$\dep$, $c(\bar x) \in Q(D')$ with $c(\bar x)$ being the tuple of constants obtained by replacing each variable $x$ in $\bar x$ with $c_x$. Therefore, $\card{\rep{D}{\dep}} = \card{\rep{D \cup D_Q}{\dep,Q,c(\bar x)}}$. It is clear that $D_Q$ and $c(\bar x)$ can be constructed in polynomial time, and the claim follows with $D' = D \cup D_Q$ and $\bar t = c(\bar x)$.
\end{proof}
\section{Problems of Interest}\label{sec:problem-definition}

The two counting problems of interest, already discussed in Section~\ref{sec:introduction}, are formally defined as follows. Fix a set $\dep$ of FDs and a CQ $Q(\bar x)$. We consider the problems

\begin{center}
	\fbox{\begin{tabular}{ll}
			{\small PROBLEM} : & $\sharp \prob{Repairs}(\dep)$
			\\
			{\small INPUT} : & A database $D$
			\\
			{\small OUTPUT} : &  $\card{\rep{D}{\dep}}$
	\end{tabular}}
\end{center}

\begin{center}
	\fbox{\begin{tabular}{ll}
			{\small PROBLEM} : & $\sharp \prob{Repairs}(\dep,Q(\bar x))$
			\\
			{\small INPUT} : & A database $D$ and a tuple $\bar t \in \adom{D}^{|\bar x|}$
			\\
			{\small OUTPUT} : &  $\card{\rep{D}{\dep,Q,\bar t}}$
	\end{tabular}}
\end{center}

\medskip
\noindent
\paragraph{Exact Counting of Database Repairs.} Recall that one of our goals, towards a definitive answer to Research Question 1, is to provide a complete complexity classification for $\sharp \prob{Repairs}(\dep,Q(\bar x))$. In other words, for each set $\dep$ of FDs and CQ $Q(\bar x)$, the goal is to classify it as tractable (i.e., place it in FP) or as intractable (i.e., show that it is $\sharp$P-complete). Recall that FP and $\sharp$P are classes of {\em counting functions} $\{0,1\}^* \ra \mathbb{N}$. In particular, FP (the counting analog of \textsc{PTime}) is defined as
\[
\text{\rm FP}\ =\ \{f \mid f \text{ is a polynomial-time computable counting function}\}
\]
whereas $\sharp$P (the counting analog of NP) is defined as
\[
\sharp \text{\rm P}\ =\ \{\mathsf{accept}_M \mid M \text{ is a polynomial-time non-deterministic Turing machine}\}
\]
where $\mathsf{accept}_M$ is the counting function $\{0,1\}^* \ra \mathbb{N}$ that assigns to each $x \in \{0,1\}^*$ the number of accepting computations of $M$ on input $x$.
Typically, hardness results for $\sharp$P rely on polynomial-time Turing reductions (a.k.a.~Cook reductions). In particular, given two counting functions $f,g : \{0,1\}^* \ra \mathbb{N}$, we say that $f$ is {\em Cook reducible} to $g$ if there exists a polynomial-time deterministic transducer $M$, with access to an oracle for $g$, such that, for every $x \in \{0,1\}^*$, $f(x) = M(x)$. 
As discussed in Section~\ref{sec:introduction}, Maslowski and Wijsen~\cite{MaWi13} established the following classification:

\begin{theorem}[\cite{MaWi13}]\label{the:pk-dichotomy}
	For a set $\dep$ of primary keys and an SJFCQ $Q$, the following statements hold: 
	\begin{enumerate}
		\item $\sharp \prob{Repairs}(\dep,Q)$ is either in \text{\rm FP} or $\sharp$\text{\rm P}-complete. 
		\item We can decide in polynomial time in $||\dep|| + ||Q||$ whether $\sharp \prob{Repairs}(\dep,Q)$ is in \text{\rm FP} or $\sharp$\text{\rm P}-complete.\footnote{As usual, $||o||$  denotes the size of the encoding of a syntactic object $o$.}
	\end{enumerate}
\end{theorem}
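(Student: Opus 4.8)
The plan is to recast $\sharp\prob{Repairs}(\dep,Q)$ as the evaluation of a fixed query over a \emph{block-independent-disjoint} probabilistic database, and then establish a safe/unsafe dichotomy for the resulting probability computation. Throughout we work in data complexity, so $\dep$ and $Q$ are fixed. First I would reduce to the Boolean case: for a fixed answer tuple $\bar t$, substituting the answer variables of $Q$ by the constants of $\bar t$ yields a Boolean self-join-free CQ with constants, and the whole development below tolerates constants. Since $\dep$ is a set of primary keys, for each relation $R$ the facts $D_R$ partition into \emph{blocks} of $R$-facts that agree on $\key{R}$; a repair is exactly a choice of one fact from each block, whence $\card{\rep{D}{\dep}}=\prod_{b}|b|$ (the product over all blocks $b$), computable in polynomial time. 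Drawing a repair by picking, independently across blocks, a uniformly random fact from each block gives $\card{\rep{D}{\dep,Q}}=\card{\rep{D}{\dep}}\cdot\pr[Q]$, where $\pr[Q]$ is the probability that the random repair entails $Q$. Hence $\sharp\prob{Repairs}(\dep,Q)$ is in FP if and only if $\pr[Q]$ is polynomial-time computable, which reduces the whole question to the complexity of evaluating $Q$ over such probabilistic instances.

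For the tractable side I would give a lifted-inference counting algorithm for $\pr[Q]$ driven by three rules, calling $Q$ \emph{safe} when it reduces to the base case. (i) \emph{Independent split}: if the atoms of $Q$ partition into two nonempty groups sharing no variable, then $\pr[Q]=\pr[Q_1]\cdot\pr[Q_2]$, so the counts factor. (ii) \emph{Separator elimination}: if a variable $x$ occurs in every atom, at each occurrence in a position forced by the block structure (inside or determined by the relevant key), then the events ``$Q$ holds via value $a$ for $x$'' become independent across the constants $a$, giving $\pr[\neg Q]=\prod_a \pr[\neg Q[x/a]]$ and letting us recurse on the residual queries $Q[x/a]$. (iii) A single-atom or ground query is evaluated directly from block sizes. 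When every branch terminates at the base case, the recursion tree has polynomially many nodes in the data, each doing polynomial-time work, placing $\sharp\prob{Repairs}(\dep,Q)$ in FP.

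For the hard side I would show that if the decomposition gets stuck at some stage, then $\sharp\prob{Repairs}(\dep,Q)$ is $\sharp\text{\rm P}$-complete; membership in $\sharp\text{\rm P}$ is clear, since one can nondeterministically guess a subset of $D$ and verify in polynomial time that it is a repair entailing $Q$. The canonical obstruction is the pattern $h_0\colon R(x), S(x,y), T(y)$ with $x,y$ sitting outside the keys, so that the blocks induce a bipartite choice structure. For $h_0$ I would exhibit a count-preserving reduction from a known $\sharp\text{\rm P}$-complete problem, namely counting the models of a bipartite monotone $2$-DNF formula (equivalently, counting independent sets in a bipartite graph), encoding one graph side as $R$-blocks, the other as $T$-blocks, and the edges as $S$-facts, so that repairs entailing $Q$ correspond, up to an easily computed factor, to the counted objects. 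For a general stuck $Q$ I would prove a structural lemma: failure of rules (i)--(iii) forces an $h_0$-style obstruction to be present, and freezing the remaining atoms to suitable constants collapses $Q$ to the $h_0$ gadget over a sub-instance, transporting the hardness.

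Finally, for part (2), the safety criterion is decidable by running the decomposition of the second paragraph \emph{on $Q$ and $\dep$ alone}: at each node one inspects the polynomially many candidate separator variables and the connected-component split, each test being feasible in polynomial time in $\|\dep\|+\|Q\|$, and the recursion depth is bounded by the number of variables of $Q$; so the criterion is checkable in polynomial time in $\|\dep\|+\|Q\|$. The main obstacle is the hard direction, specifically the structural lemma: one must show that \emph{every} query on which the decomposition stalls genuinely embeds the $h_0$ obstruction in a way that supports a reduction whose output counts match the target with the correct multiplicative factor. Reconciling that count bookkeeping with the \emph{disjoint} block structure of repairs (rather than the tuple-independent structure of ordinary probabilistic databases) is the delicate point, and is precisely where the interplay between the key constraints and the join pattern of $Q$ must be controlled.
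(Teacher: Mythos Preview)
First, note that the paper does not prove this theorem; it is quoted from \cite{MaWi13}. The paper's own contribution is the generalization to FDs (Theorems~\ref{the:fds-dichotomy-rfreq} and~\ref{the:hard-side}), whose proof follows the Maslowski--Wijsen template: pass to the relative frequency, define $\dep$-safety via four inductive clauses (Definition~\ref{def:safety}), prove tractability for safe queries via the recursive $\mathsf{Eval}$, and establish hardness for unsafe queries by a system of rewrite rules that contract any unsafe query to a \emph{final} one. Your BID/lifted-inference framing is a legitimate alternative viewpoint, but as written it has a concrete gap on the tractable side: your separator rule (ii) demands that $x$ occur in \emph{every} atom at a key position, whereas the actual criterion requires $x$ only in $\pvar{\alpha}{\dep}$ for every $\alpha$ in the \emph{complex part} $\comp{Q}{\dep}$. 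Take $Q=R(x,y),\,S(y)$ with $R$ keyed on its first attribute and $S$ unary: here $\comp{Q}{\dep}=\{R(x,y)\}$ and $x$ is a valid separator, so $Q$ is safe and $\card{\rep{D}{\dep,Q}}$ is polynomial-time computable; yet your rules (i)--(iii) all fail ($x\notin S$; $y$ sits at a non-key position of $R$; the atoms share $y$), so your decomposition stalls on a tractable query. You are also missing the fourth safety clause (Lemma~\ref{lem:aux-3}), which sums over a right-hand-side variable when all key positions of some complex atom are constants---without it the recursion cannot bottom out after a separator step plants constants at key positions.

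On the hard side, because your tractable rules are incomplete, the set of queries on which your decomposition stalls strictly contains tractable queries like the one above, so the structural lemma you propose (``every stuck query embeds $h_0$'') is false as stated. The Maslowski--Wijsen hardness proof does not isolate a single forbidden pattern; it uses the rewrite rules R1--R10 (Figure~\ref{fig:rules} in the appendix) to reach a final query and then argues hardness for the whole class of final queries directly. Fixing your plan would require both tightening the tractable rules to the complex-part-based criterion and replacing the single-pattern reduction by something closer to the rewrite/final-query machinery.
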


\noindent As we shall see in Section~\ref{sec:exact-counting}, one of the main results of the present work is a generalization of the above result to arbitrary functional dependencies.

\medskip
\noindent
\paragraph{Approximate Counting of Database Repairs.}  Another key task is whenever $\sharp \prob{Repairs}(\dep)$ or $\sharp \prob{Repairs}(\dep,Q)$ is $\sharp$\text{\rm P}-complete to classify it as approximable, i.e., the target value can be efficiently approximated with error guarantees via a {\em fully polynomial-time randomized approximation scheme} (FPRAS, for short), or as inapproximable.
Formally, an FPRAS for $\sharp \prob{Repairs}(\dep)$ is a randomized algorithm $\mathsf{A}$ that takes as input a database $D$, $\epsilon > 0$, and $0 < \delta < 1$, runs in polynomial time in $||D||$, $1/\epsilon$ and $\log(1/\delta)$, and produces a random variable $\mathsf{A}(D,\epsilon,\delta)$ such that
\[
\pr\left(|\mathsf{A}(D,\epsilon,\delta) - \card{\rep{D}{\dep}}|\ \leq\ \epsilon \cdot \card{\rep{D}{\dep}}\right)\ \geq\ 1-\delta.
\]
Analogously, an FPRAS for $\sharp \prob{Repairs}(\dep,Q(\bar x))$ is a randomized algorithm $\mathsf{A}$ that takes as input a database $D$, a tuple $\bar t \in \adom{D}^{|\bar x|}$, $\epsilon > 0$, and $0 < \delta < 1$, runs in polynomial time in $||D||$, $||\bar t||$, $1/\epsilon$ and $\log(1/\delta)$, and produces a random variable $\mathsf{A}(D,\bar t,\epsilon,\delta)$ such that
\[
\pr\left(|\mathsf{A}(D,\bar t,\epsilon,\delta) - \card{\rep{D}{\dep,Q,\bar t}}|\ \leq\ \epsilon \cdot \card{\rep{D}{\dep,Q,\bar t}}\right)\ \geq\ 1-\delta.
\]
The only known non-trivial result concerning approximate counting is that $\sharp \prob{Repairs}(\dep,Q(\bar x))$ always admits an FPRAS whenever $\dep$ is a set of primary keys; this is implicit in~\cite{DaSu07}, and it has been made explicit in~\cite{CaCP19}.
The picture concerning approximate counting for arbitrary FDs is rather unexplored. We make crucial steps of independent interest towards a complete classification of approximate counting for FDs.
We show that (under a reasonable complexity-theoretic assumption) the existence of an FPRAS for both problems is not guaranteed whenever we focus on FDs. On the other hand, it is guaranteed for FDs with a so-called left-hand side chain (up to equivalence); the class of FDs with left-hand side chain is discussed in Section~\ref{sec:lhs-chain-fds}.

\medskip
\noindent
\paragraph{Boolean vs. Non-Boolean CQs.}
For technical clarity, both exact and approximate counting are studied for Boolean CQs, but all the results can be easily generalized to non-Boolean CQs. 
Indeed, given a set $\dep$ of FDs and a CQ $Q(\bar x)$, $\card{\rep{D}{\dep,Q,\bar t}}$, for some database $D$ and tuple $\bar t \in \adom{D}^{|\bar x|}$, coincides with $\card{\rep{D}{\dep,Q(\bar t)}}$, where $Q(\bar t)$ is the Boolean CQ obtained after instantiating $\bar x$ with $\bar t$.
Therefore, in the rest of the paper, by CQ or SJFCQ we silently refer to a Boolean query.

\begin{example}\label{example:train_bool_queries}
    Consider again the CQs $Q_1$ and $Q_2$ given in Example~\ref{example:train_queries}. Consider also the tuples $(\val{16},\val{1030})$ and $(\val{16},\val{1030},\val{420})$. As mentioned in Example~\ref{example:train_num_repairs}, we have that $\card{\rep{D}{\dep,Q_1,(\val{16},\val{1030})}}=2$ and $\card{\rep{D}{\dep,Q_2,(\val{16},\val{1030},\val{420})}}=2$ for the database $D$ of Figure~\ref{fig:trains} and the FD set $\dep$ of Example~\ref{example:train_fds}. By instantiating the output variables of $Q_1$ with the values of $(\val{16},\val{1030})$ (i.e., replacing the variable $x$ with the constant $\val{16}$ and the variable $y$ with the constant $\val{1030}$), we obtain the Boolean CQ
    \[
    Q_3\ =\ \textrm{Ans}()\ \dl\ \rel{Schedule}(\val{16},\val{BBY},z,\val{1030},w),\rel{Station}(z,\val{Washington}).
    \]
    It is easy to see that $\card{\rep{D}{\dep,Q_1,(\val{16},\val{1030})}}=\card{\rep{D}{\dep,Q_3}}$. Similarly, from the CQ $Q_2$ and the tuple $(\val{16},\val{1030},\val{420})$ we obtain the Boolean CQ 
    \[
    Q_4\ =\ \textrm{Ans}()\ \dl\ \rel{Schedule}(\val{16},\val{BBY},z,\val{1030},\val{420}),\rel{Station}(z,\val{Washington})
    \]
    by replacing the variable $x$ with the constant $\val{16}$, the variable $y$ with the constant $\val{1030}$, and the variable $w$ with the constant $\val{420}$. We then have that $\card{\rep{D}{\dep,Q_2,(\val{16},\val{1030},\val{420})}}=\card{\rep{D}{\dep,Q_4}}$. Henceforth, in our examples, we will exploit the Boolean CQs $Q_3$ and $Q_4$. \hfill\markfull
\end{example}
\section{LHS Chain Functional Dependencies}\label{sec:lhs-chain-fds}

We recall the class of FDs with a left-hand side chain together with some basic notions that will be useful for obtaining our main results.

\begin{definition}[\textbf{LHS Chain FDs},~\cite{LiKW21}]\label{def:lhs-fds}
	Consider a set $\dep$ of FDs over a schema $\ins{S}$ and a relation name $R \in \ins{S}$. We say that $\dep_R$ has a {\em left-hand side chain} (LHS chain, for short) if, for every two FDs $R : X_1 \ra Y_1$ and $R : X_2 \ra Y_2$ from $\dep_R$, $X_1 \subseteq X_2$ or $X_2 \subseteq X_1$. 
	We further say that $\dep$ has an LHS chain if, for every $R \in \ins{S}$, $\dep_R$ has an LHS chain. \hfill\markfull
\end{definition}

Note that the existence of an LHS chain means that the FDs of $\dep_R$ can be arranged in a sequence $R : X_1 \ra Y_1,\ldots,R : X_n \ra Y_n$ such that $X_1 \subseteq X_2 \subseteq \cdots \subseteq X_n$; it is easy to verify that this sequence is not always unique. We call such a sequence an LHS chain of $\dep_R$.

\begin{example}
The set of FDs consisting of those given in Example~\ref{example:train_fds} has an LHS chain since $\{\attr{train},\attr{time}\}\subseteq\{\attr{train},\attr{time},\attr{duration}\}$ for the FDs over the relation name $\rel{Schedule}$, and there is a single FD over the relation name $\rel{Station}$ (hence, this FD trivially forms an LHS chain). Moreover,
\[
\rel{Schedule}\ :\ \{\attr{train}, \attr{time}\} \rightarrow \attr{origin},\,\,\,
\rel{Schedule}\ :\ \{\attr{train},\attr{time},\attr{duration}\} \rightarrow \attr{destination}
\]
is an LHS chain of $\dep_{\rel{Schedule}}$. If we had, in addition, the FD
\[
\rel{Station}\ :\ \attr{city}\rightarrow \attr{name}
\] 
that essentially states that every city has a single train station, then the FD set $\dep_{\rel{Station}}$ would not have an LHS chain since $\{\attr{name}\}\not\subseteq \{\attr{city}\}$ and $\{\attr{city}\}\not\subseteq \{\attr{name}\}$. \hfill\markfull
\end{example}

\medskip

\noindent
\paragraph{Building Database Repairs.}
When we focus on sets of FDs with an LHS chain, there is a convenient way to build database repairs, which is in turn the building block for showing that repairs can be counted in polynomial time in the size of the database~\cite{LiKW21}.
Given a database $D$, and an FD $\phi$ of the form $R : X \ra Y$, a {\em block} (resp., {\em subblock}) of $D$ w.r.t.~$\phi$ is a maximal subset $D'$ of $D_R$ such that $f,g \in D'$ implies $f[A] = g[A]$ for every $A \in X$ (resp., $A \in X \cup Y$). In simple words, a block (resp., subblock) of $D$ w.r.t.~$\phi$ collects all the $R$-facts of $D$ that agree on the attributes of $X$ (resp., $X \cup Y$). We write $\block{D}{\phi}$ and $\sblock{D}{\phi}$ for all the blocks and subblocks, respectively, of $D$ w.r.t.~$\phi$.
By exploiting the existence of an LHS chain, we can arrange blocks and sublocks in a rooted tree~\cite{LiKi21}, which will then guide the construction of database repairs.

\begin{definition}[\textbf{Blocktree}]\label{def:blocktree}
	Consider a database $D$ over $\ins{S}$, and a set $\dep$ of FDs with an LHS chain over $\ins{S}$, and let $\Lambda = \phi_1,\ldots,\phi_n$ be an LHS chain of $\dep_R$ for some $R \in \ins{S}$. The {\em $(R,\Lambda)$-blocktree} of $D$ w.r.t.~$\dep$ is a labeled rooted tree $T= (V,E,\lambda)$ of height $2n$ (with the root being at level $0$), with $\lambda$ being a function that assigns subsets of $D_R$ to the nodes of $V$, such that the following hold:
	\begin{itemize}
		\item If $v$ is the root node, then $\lambda(v) = D_R$.
		\item If $v_1,\ldots,v_k$, for $k > 0$, are the nodes at level $2i+1$ for some $i \in \{0,\ldots,n-1\}$ with parent $u$, then $|\block{\lambda(u)}{\phi_{i+1}}| = k$ and $\block{\lambda(u)}{\phi_{i+1}} = \{\lambda(v_1),\ldots,\lambda(v_k)\}$.
		\item If $v_1,\ldots,v_k$, for $k > 0$, are the nodes at level $2i+2$ for some $i \in \{0,\ldots,n-1\}$ with parent $u$, then $|\sblock{\lambda(u)}{\phi_{i+1}}| = k$ and $\sblock{\lambda(u)}{\phi_{i+1}} = \{\lambda(v_1),\ldots,\lambda(v_k)\}$. \hfill\markfull
	\end{itemize}
\end{definition}

In simple words, each node of $T$ is associated with a subset of $D_R$ in a way that the root gets the database $D_R$ itself, a node that occurs at an odd level $2i+1$, for $i \in \{0,\ldots,n-1\}$, gets a block of the database assigned to its parent w.r.t.~$\phi_{i+1}$, and a node that occurs at an even level $2i+2$, for $i \in \{0,\ldots,n-1\}$, gets a subblock of the database assigned to its parent w.r.t.~$\phi_{i+1}$.

It is straightforward to verify that, for every node $v$ of an $(R,\Lambda)$-blocktree $T= (V,E,\lambda)$ of a database $D$ w.r.t.~a set $\dep$ of FDs with an LHS chain, and for every two distinct children $u_1$ and $u_2$ of $v$, the following statements hold: (i) if $v$ appears at an even level of $T$ (including level $0$), then $\{f,g\} \models \dep_R$ for every $f \in \lambda(u_1)$ and $g \in \lambda(u_2)$, and (ii) if $v$ appears at an odd level of $T$, then $\{f,g\} \not\models \dep_R$ for every $f \in \lambda(u_1)$ and $g \in \lambda(u_2)$.
This easy observation leads to a simple recursive procedure, which operates on $T$, for constructing repairs of $D_R$ w.r.t.~$\dep_R$: 
\begin{enumerate}
	\item Select the root node.
	\item For every selected node at an even level, select {\em all} of its children.
	\item For every selected node at an odd level, select {\em one} of its children.
	\item Output the union of the labels of the selected leaf nodes.
\end{enumerate}
It should be clear that the obtained database is a repair of $D_R$ w.r.t.~$\dep_R$.
This is essentially the building block underlying the recursive procedure obtained from~\cite{LiKW21} for counting repairs in polynomial time in the size of the database:

\rev{
\begin{proposition}[\cite{LiKW21}]\label{pro:counting-repairs-lhs-fp}
	Let $\dep$ be a set of FDs with an LHS chain (up to equivalence). Given a database $D$, $\card{\rep{D}{\dep}}$ is computable in polynomial time, and thus, $\sharp \prob{Repairs}(\dep)$ is in \text{\rm FP}.
\end{proposition}}

\rev{Let us clarify that blocktrees have not been used in~\cite{LiKW21}. Instead, a connection is established with the class of $P_4$-free graphs, which are also known as cographs. These graphs are characterized by the absence of a path of length four as an induced subgraph. This connection is made through the notion of conflict graph; the conflict graph of a database $D$ with respect to a set $\dep$ of FDs is a graph that has a node for each fact in the database, and there is an edge between two nodes if the corresponding facts jointly violate an FD of $\dep$. In~\cite{LiKW21}, it was shown that a set of FDs has an LHS chain (up to equivalence) if and only if the conflict graph of \e{every} database $D$ w.r.t.~$\dep$ is $P_4$-free. The notion of blocktree introduced above essentially corresponds to the cotree of the associated conflict graph; thus, these two representations of sets of FDs with an LHS chain are equivalent, and we consider blocktrees for technical convenience.}

 \begin{figure}
     \centering
     \begin{tikzpicture}
[
->,>=stealth',
    level 1/.style = {sibling distance = 10cm},
    level 2/.style = {sibling distance = 4.5cm},
    level 3/.style = {sibling distance = 1.8cm},
    level 4/.style = {sibling distance = 1.3cm}
]
    \node [draw]{$r$}
    child {node [draw] {$v_1$}
    child {node [draw] {$v_2$}
    child {node [draw] {$v_5$}
    child {node [draw] {$v_{12}$}
    edge from parent node [align=left,right,font=\scriptsize] (d){$\val{BBY}$}}
    edge from parent node [align=left,right,font=\scriptsize] (c){$\val{315}$}}
    child {node [draw] {$v_6$}
    child {node [draw] {$v_{13}$}
    edge from parent node [align=left,right,font=\scriptsize] {$\val{PVD}$}}
    edge from parent node [align=left,right,font=\scriptsize] {$\val{250}$}}
    edge from parent node [align=left,right,font=\scriptsize,xshift=1em] (b){$\val{NYP}$}}   
    child {node [draw] {$v_3$}
    child {node [draw] {$v_7$}
    child {node [draw] {$v_{14}$}
    edge from parent node [align=left,right,font=\scriptsize] {$\val{WIL}$}}
    edge from parent node [align=left,right,font=\scriptsize] {$\val{20}$}}
    child {node [draw] {$v_8$}
    child {node [draw] {$v_{15}$}
    edge from parent node [align=left,right,font=\scriptsize] {$\val{BAL}$}}
    edge from parent node [align=left,right,font=\scriptsize] {$\val{70}$}}
    child {node [draw] {$v_9$}
    child {node [draw] {$v_{16}$}
    edge from parent node [align=left,right,font=\scriptsize] {$\val{WAS}$}}
    edge from parent node [align=left,right,font=\scriptsize] {$\val{120}$}}
    edge from parent node [align=left,right,font=\scriptsize] {$\val{PHL}$}}
    child {node [draw] {$v_4$}
    child {node [draw] {$v_{10}$}
    child {node [draw] {$v_{17}$}
    edge from parent node [align=left,right,font=\scriptsize] {$\val{PHL}$}}
    child {node [draw] {$v_{18}$}
    edge from parent node [align=left,right,font=\scriptsize] {$\val{NYP}$}}
    edge from parent node [align=left,right,font=\scriptsize] {$\val{260}$}}
    child {node [draw] {$v_{11}$}
    child {node [draw] {$v_{19}$}
    edge from parent node [align=left,right,font=\scriptsize] {$\val{WAS}$}}
    edge from parent node [align=left,right,font=\scriptsize] {$\val{420}$}}
    edge from parent node [align=left,right,font=\scriptsize,xshift=1em] {$\val{BBY}$}}     
    edge from parent node [align=left,right,font=\scriptsize] (a){$\val{16}$\\$\val{1030}$}};
    \path (a) ++(-3.1in,0)coordinate(a0)  node [right,align=left,font=\footnotesize] {$\attr{train}$\\$\attr{time}$};
    \node at (b -| a0)(b0) {} (b0)++(0,0)  node [right,align=left,font=\footnotesize] {$\attr{origin}$};
    \node at (c -| a0)(c0) {} (c0)++(0,0)  node [right,align=left,font=\footnotesize] {$\attr{duration}$};
    \node at (d -| a0)(d0) {} (d0)++(0,0)  node [right,align=left,font=\footnotesize] {$\attr{destination}$};
\end{tikzpicture}
     \caption{The $(\rel{Schedule},\Lambda)$-blocktree of the database of Figure~\ref{fig:trains} w.r.t.~the FD set of Example~\ref{example:train_fds}. 
     }
     \label{fig:blocktree}
 \end{figure}

\begin{example}
Figure~\ref{fig:blocktree} depicts the $(\rel{Schedule},\Lambda)$-blocktree of the database $D$ of Figure~\ref{fig:trains} w.r.t.~the FD set $\dep$ of Example~\ref{example:train_fds}. Recall that the LHS chain $\Lambda$ of $\dep_{\rel{Schedule}}$ is
\[
\rel{Schedule}\ :\ \{\attr{train}, \attr{time}\} \rightarrow \attr{origin},\,\,\,
\rel{Schedule}\ :\ \{\attr{train},\attr{time},\attr{duration}\} \rightarrow \attr{destination}.
\]
The root $r$ is associated with the entire database $D_{\rel{Schedule}}$ (that is, $\lambda(r)=D_{\rel{Schedule}}$). Since all the facts of $D_{\rel{Schedule}}$ agree on the values of the $\attr{train}$ and $\attr{time}$ attributes, there is a single block w.r.t.~the first FD in the chain that contains all the facts; hence, $\lambda(v_1)=D_{\rel{Schedule}}$ as well. Then, the node $v_4$, for example, is associated with all the facts $f$ with $f[\attr{origin}]=\val{BBY}$; hence, $\lambda(v_4)=\{f_6,f_7,f_8\}$. This set of facts is a subblock of $D_{\rel{Schedule}}$ w.r.t.~the FD $\rel{Schedule} : \{\attr{train}, \attr{time}\} \rightarrow \{\attr{origin}\}$. The node $v_{10}$ collects all the facts $f\in \lambda(v_4)$ with $f[\attr{duration}]=\val{260}$; thus, $\lambda(v_4)=\{f_6,f_7\}$. Here, $\{f_6,f_7\}$ is a block of $\lambda(v_4)$ w.r.t.~the FD $\rel{Schedule} : \{\attr{train},\attr{time},\attr{duration}\} \rightarrow \{\attr{destination}\}$ as these facts agree on the values of the attributes $\attr{train},\attr{time}$, and $\attr{duration}$, while the fact $f_8$ has a different value for the attribute $\attr{duration}$ and belongs to a different block. \hfill\markfull
\end{example}


\noindent
\paragraph{The Case of Primary Keys.} Before we proceed further, let us focus for a moment on the simpler case of primary keys, and briefly comment on how database repairs are constructed via the machinery described above. Given a database $D$ and a set $\dep$ of primary keys, by definition, $\dep_R$ consists of a single 
FD $\phi_R$ of the form $R : X \ra Y$ with $X \cup Y = \att{R}$.
This means that the $(R,\Lambda)$-blocktree $T= (V,E,\lambda)$ of $D$ w.r.t.~$\dep_R$, where $\Lambda$ is the trivial LHS chain of $\dep_R$ mentioning only $\phi_R$, is of height two, and, for each leaf node $v$, $\lambda(v)$ contains a single $R$-fact. This implies that a repair of $D_R$ w.r.t.~$\dep_R$ can be constructed by simply selecting, for each node at level one of $T$, one of its children (which are leaf nodes), and then collect the $R$-facts that label the selected leaf nodes.
This illustrates that the underlying structure of database repairs in the case of primary keys is significantly simpler compared to that of repairs in the case of FDs with an LHS chain.

\rev{Let us further note that the case of primary keys is closely related to the block-independent-disjoint probabilistic databases~\cite{DBLP:journals/cacm/DalviRS09,DBLP:journals/jcss/DalviRS11}, where the repairs coincide with the possible worlds of the probabilistic database (see~\cite{MaWi13} for further details). However, as discussed in~\cite{MaWi13}, these are different models that require different techniques. Moreover, it remains unclear how the block-independent-disjoint probabilistic model can be adapted to align with the case of an LHS chain.}
\section{Exact Counting}\label{sec:exact-counting}

We now concentrate on exact counting and show that Theorem~\ref{the:pk-dichotomy} can be extended to arbitrary FDs. In particular, we establish  the following classification:

\begin{theorem}\label{the:fds-dichotomy}
	For a set $\dep$ of FDs and an SJFCQ $Q$, the following statements hold:
	\begin{enumerate}
		\item $\sharp \prob{Repairs}(\dep,Q)$ is either in \text{\rm FP} or $\sharp$\text{\rm P}-complete. 
		\item We can decide in polynomial time in $||\dep|| + ||Q||$ whether $\sharp \prob{Repairs}(\dep,Q)$ is in \text{\rm FP} or $\sharp$\text{\rm P}-complete.
	\end{enumerate}
\end{theorem}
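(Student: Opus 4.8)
Membership in $\sharp\text{P}$ is immediate and uniform in $(\dep,Q)$: a polynomial-time nondeterministic machine guesses a subset $D'\subseteq D$, checks in polynomial time that $D'\models\dep$, that $D'$ is maximal (that is, $D'\cup\{f\}\not\models\dep$ for every $f\in D\setminus D'$), and that $\bar t\in Q(D')$; the number of accepting runs equals $\card{\rep{D}{\dep,Q,\bar t}}$. Hence the content of the theorem is the FP versus $\sharp\text{P}$-hard split together with a polynomial-time meta-procedure that outputs the correct verdict on input $(\dep,Q)$. The first split is by whether $\dep$ has an LHS chain up to equivalence, which is decidable in polynomial time in $\|\dep\|$ by~\cite{LiKW21}.

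\emph{Case 1: $\dep$ has no LHS chain up to equivalence.}
Then, by~\cite{LiKW21}, the query-free problem $\sharp\prob{Repairs}(\dep)$ is $\sharp\text{P}$-complete. By Lemma~\ref{lem:cook-reduction}, from any database $D$ we compute in polynomial time a database $D'$ and a tuple $\bar t$ with $\card{\rep{D}{\dep}}=\card{\rep{D'}{\dep,Q,\bar t}}$; a single call to $\sharp\prob{Repairs}(\dep,Q)$ therefore solves $\sharp\prob{Repairs}(\dep)$, so $\sharp\prob{Repairs}(\dep,Q)$ is $\sharp\text{P}$-hard and, together with membership above, $\sharp\text{P}$-complete. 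The verdict here is ``$\sharp\text{P}$-complete'', irrespective of $Q$.

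\emph{Case 2: $\dep$ has an LHS chain up to equivalence.}
Replacing $\dep$ by an equivalent set that literally has an LHS chain leaves every repair, and hence every instance of the problem, unchanged, so we may assume $\dep$ itself has an LHS chain and use the blocktree of Definition~\ref{def:blocktree}. Following the blueprint of~\cite{MaWi13}, I would isolate a syntactic condition $\mathsf{safe}(\dep,Q)$ --- the LHS-chain-FD analogue of the Maslowski--Wijsen condition, phrased over an attack-graph-like structure on the atoms of $Q$ refined by the per-relation LHS chains --- and establish three things. (i) \emph{Tractability under safety:} if $\mathsf{safe}(\dep,Q)$ holds, then $\sharp\prob{Repairs}(\dep,Q)$ is in FP; the algorithm lifts the Maslowski--Wijsen recursion onto the blocktree, in that where for primary keys one ``picks one fact per block'', here a block contributes a whole sub-repair built recursively from the levels beneath it (as in Proposition~\ref{pro:counting-repairs-lhs-fp}), so a recursion over the query decomposition is nested inside a recursion over the blocktree, with inclusion--exclusion bookkeeping over which atoms of $Q$ are matched. (ii) \emph{Hardness under non-safety:} if $\mathsf{safe}(\dep,Q)$ fails, then $\sharp\prob{Repairs}(\dep,Q)$ is $\sharp\text{P}$-hard; I would adapt the gadget reductions of~\cite{MaWi13} (from $\sharp$-hard problems such as counting independent sets or $\sharp$SAT), verifying that the ``bad pattern'' witnessing non-safety can still be exercised in the richer repair structure --- often by taking instances whose blocks are degenerate (one subblock each), so that the LHS-chain instance behaves like a primary-key one and the existing reduction applies, and otherwise by a fresh reduction exploiting the extra blocktree levels. (iii) \emph{Deciding safety:} $\mathsf{safe}(\dep,Q)$ is testable in polynomial time in $\|\dep\|+\|Q\|$, being essentially an acyclicity/reachability test on a polynomially sized graph together with a local syntactic check.

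Combining the two cases proves part (1). For part (2), on input $(\dep,Q)$ the meta-procedure first runs the polynomial-time test of~\cite{LiKW21}; if $\dep$ has no LHS chain up to equivalence it outputs ``$\sharp\text{P}$-complete'', and otherwise it computes an equivalent LHS-chain set and runs the polynomial-time safety test of (iii), outputting ``FP'' or ``$\sharp\text{P}$-complete'' accordingly. The main obstacle is step (i): proving correctness of the blocktree-aware counting recursion, since the interaction of a single query atom with several levels of blocks and subblocks of the same relation has no analogue in the primary-key case and must be handled with care --- precisely the ``significantly more complex structure of database repairs'' noted in Section~\ref{sec:lhs-chain-fds}. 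Step (ii), making the hardness gadgets survive the passage from primary keys to LHS-chain FDs, is the secondary difficulty.
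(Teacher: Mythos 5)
Your outer architecture coincides with the paper's: membership in $\sharp$P by guess-and-check, the split on whether $\dep$ has an LHS chain up to equivalence (decidable in polynomial time by~\cite{LiKW21}), $\sharp$P-completeness in the negative case via $\sharp\prob{Repairs}(\dep)$ and Lemma~\ref{lem:cook-reduction}, and a safety-style dichotomy in the positive case. Your observation that passing to an equivalent LHS-chain set changes nothing is also fine (the paper uses the canonical cover). The problem is that everything you say about Case~2 is a plan rather than a proof, and the plan underestimates both halves. First, the safety condition is never defined, and defining it correctly is where the work is: the paper must first introduce primary FDs, primary-lhs positions and the primary prefix of each per-relation chain, and the resulting ``complex part'' of $Q$, before Definition~\ref{def:safety} can even be stated; your ``attack-graph-like structure'' does not pin this down. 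Second, for tractability the paper does not do inclusion--exclusion over matched atoms nested inside the blocktree; it computes the \emph{relative frequency} $\rfreq{Q}{D,\dep}$ (so that Proposition~\ref{pro:counting-repairs-lhs-fp} supplies the denominator in FP) via three exact recurrences (Lemmas~\ref{lem:aux-1}--\ref{lem:aux-3}) whose correctness depends on carefully quotienting out the facts $D_{\mathsf{conf}}^{\dep,Q}$ and $D_{\mathsf{ind}}^{\dep,Q}$ that conflict with, or are independent of, the primary prefix. Whether your nested recursion with inclusion--exclusion can be made correct is not established by the proposal.

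The larger gap is in step~(ii). The idea that one can ``take instances whose blocks are degenerate so that the LHS-chain instance behaves like a primary-key one'' does not suffice, because the obstruction is not at the database level but at the query/schema level: an atom whose relation carries a genuine chain of two or more FDs interacts with the query in ways that have no primary-key analogue, and no choice of input database makes that interaction disappear while still witnessing non-safety. The paper's hardness proof (Theorem~\ref{the:hard-side}) requires an entire rewriting calculus of ten rules, each with its own Cook reduction, the notion of \emph{final} queries, a long sequence of structural lemmas showing that in a final query every relation in the complex part is constrained by one of six very specific FD shapes, and then a per-shape reduction that replaces each such relation by a pair of fresh relations (one with a single key, one unconstrained) to land in the primary-key setting of~\cite{MaWi13}. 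None of this is anticipated by the proposal, so the $\sharp$P-hardness of the unsafe LHS-chain cases --- and hence part~(1) of the theorem --- remains unproved.
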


We start by observing that, for every set $\dep$ of FDs and an SJFCQ $Q$, $\sharp \prob{Repairs}(\dep,Q)$ is in $\sharp$\text{\rm P}: guess a subset $D'$ of the given database $D$, and verify that $D'$ is a repair of $D$ w.r.t.~$\dep$ that entails $Q$. Since both the guess and the verify steps are feasible in polynomial time in $||D||$, we conclude that $\sharp \prob{Repairs}(\dep,Q)$ is in $\sharp$\text{\rm P}. 
We further observe that the non-existence of an LHS chain (up to equivalence) is a preliminary boundary for the hard side of the target dichotomy, no matter how the CQ looks like. We know from~\cite{LiKW21} that, for a set $\dep$ of FDs, $\sharp \prob{Repairs}(\dep)$ is $\sharp$P-hard if $\dep$ does not have an LHS chain (up to equivalence). From the above discussion and Lemma~\ref{lem:cook-reduction}, we get that:

\begin{proposition}\label{pro:no-lhs-hard}
	For a set $\dep$ of FDs without an LHS chain (up to equivalence) and an SJFCQ $Q$, it holds that $\sharp \prob{Repairs}(\dep,Q)$ is $\sharp$\text{\rm P}-complete.
\end{proposition}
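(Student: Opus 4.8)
The plan is to establish the two parts separately: membership in $\sharp\text{\rm P}$ is immediate, and hardness is obtained by lifting the already-known query-free lower bound through Lemma~\ref{lem:cook-reduction}. For membership, I would reuse the generic guess-and-check argument recorded just above the statement: a nondeterministic polynomial-time machine guesses a subset $D'\subseteq D$ of the input database and accepts exactly when $D'$ is a repair of $D$ w.r.t.~$\dep$ that entails $Q$; both tests run in polynomial time in $\|D\|$, so the number of accepting computations equals $\card{\rep{D}{\dep,Q}}$, witnessing membership in $\sharp\text{\rm P}$.

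For $\sharp\text{\rm P}$-hardness I would invoke the result of~\cite{LiKW21} that $\sharp\prob{Repairs}(\dep)$ is $\sharp\text{\rm P}$-hard whenever $\dep$ has no LHS chain up to equivalence, and then reduce this problem to $\sharp\prob{Repairs}(\dep,Q)$. Given an input database $D$ for $\sharp\prob{Repairs}(\dep)$, apply Lemma~\ref{lem:cook-reduction} to compute in polynomial time a database $D'$ and a tuple $\bar t$ with $\card{\rep{D}{\dep}} = \card{\rep{D'}{\dep,Q,\bar t}}$; then a single call to an oracle for $\sharp\prob{Repairs}(\dep,Q)$ on input $(D',\bar t)$ returns precisely $\card{\rep{D}{\dep}}$. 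This is a polynomial-time, in fact parsimonious, reduction, so $\sharp\prob{Repairs}(\dep,Q)$ inherits the $\sharp\text{\rm P}$-hardness, and combined with membership we obtain $\sharp\text{\rm P}$-completeness. One bookkeeping point worth stating explicitly is that the hypothesis transfers unchanged under the ``up to equivalence'' qualifier: since repairs of $D$ w.r.t.~$\dep$ depend on $\dep$ only via its satisfaction relation, $\sharp\prob{Repairs}(\dep,Q)$ and $\sharp\prob{Repairs}(\dep',Q)$ coincide for equivalent $\dep$ and $\dep'$, so the lower bound of~\cite{LiKW21} applies verbatim.

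I do not expect a genuine obstacle here: all the combinatorial difficulty -- building a family of databases that forces $\sharp\text{\rm P}$-hardness of counting repairs for a set of FDs that is not an LHS chain even up to equivalence -- is encapsulated in~\cite{LiKW21}, and on our side the only thing to check is that pairing those databases with a self-join-free query cannot make the problem easier, which is exactly what Lemma~\ref{lem:cook-reduction} provides ($D'$ has polynomial size and $Q$ is fixed as part of the problem's signature). The only mild subtlety is purely formal, namely phrasing the reduction for the ``query-in-the-signature'' problem $\sharp\prob{Repairs}(\dep,Q)$ rather than for the three-argument problem $\sharp\prob{Repairs}$; with $Q$ held fixed this is routine.
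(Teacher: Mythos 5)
Your proposal is correct and follows essentially the same route as the paper: membership via the guess-and-check argument, and hardness by composing the query-free $\sharp$P-hardness of $\sharp\prob{Repairs}(\dep)$ from~\cite{LiKW21} with the count-preserving reduction of Lemma~\ref{lem:cook-reduction}. Your remark that the problem is invariant under replacing $\dep$ by an equivalent set of FDs is a reasonable (if implicit in the paper) bookkeeping point and introduces no gap.
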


From Proposition~\ref{pro:no-lhs-hard}, and the fact that checking whether a set $\dep$ of FDs has an LHS chain (up to equivalence) is feasible in polynomial time~\cite{LiKW21}, we conclude that to obtain Theorem~\ref{the:fds-dichotomy} for FDs, it suffices to provide an analogous result for FDs with an LHS chain (up to equivalence). 
To this end, following what was done for primary keys in~\cite{MaWi13}, we are going to concentrate on the problem of computing the relative frequency of the query. The {\em relative frequency} of a CQ $Q$ w.r.t.~a database $D$ and a set $\dep$ of FDs is the ratio that computes the percentage of repairs that entail it, i.e.,
\[
\rfreq{Q}{D,\dep}\ =\ \frac{\card{\rep{D}{\dep,Q}}}{\card{\rep{D}{\dep}}}.
\]
We then consider the following problem for a set $\dep$ of FDs and a CQ $Q$:

\medskip

\begin{center}
	\fbox{\begin{tabular}{ll}
			{\small PROBLEM} : & $\prob{RelFreq}(\dep,Q)$
			\\
			{\small INPUT} : & A database $D$
			\\
			{\small OUTPUT} : &  $\rfreq{Q}{D,\dep}$
	\end{tabular}}
\end{center}

\medskip

\noindent We can indeed focus on $\prob{RelFreq}(\dep,Q)$ since we know from Proposition~\ref{pro:counting-repairs-lhs-fp} that the problem of computing $\card{\rep{D}{\dep}}$ is feasible in polynomial time in $||D||$ whenever $\dep$ has an LHS chain (up to equivalence), which implies that $\sharp \prob{Repairs}(\dep,Q)$ and $\prob{RelFreq}(\dep,Q)$ have the same complexity, that is, both are either in \text{\rm FP} or $\sharp \text{P}$-hard. Hence, to obtain Theorem~\ref{the:fds-dichotomy} it suffices to show the following:

\def\thmdichotomyrfreq{
For a set $\dep$ of FDs with an LHS chain (up to equivalence) and an SJFCQ $Q$, the following statements hold:
	\begin{enumerate}
		\item $\prob{RelFreq}(\dep,Q)$ is either in \text{\rm FP} or $\sharp \text{\rm P}$-hard.
		\item We can decide in polynomial time in $||\dep|| + ||Q||$ whether $\prob{RelFreq}(\dep,Q)$ is in \text{\rm FP} or $\sharp \text{\rm P}$-hard.
	\end{enumerate}
}

\begin{theorem}\label{the:fds-dichotomy-rfreq}
\thmdichotomyrfreq
\end{theorem}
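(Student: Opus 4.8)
The plan is to mimic the dichotomy proof of Maslowski and Wijsen~\cite{MaWi13} for primary keys, but carried out on the blocktree machinery recalled in Section~\ref{sec:lhs-chain-fds} rather than on the flat block structure of primary keys. The first step is to normalize the input: by the minimality assumptions on $\dep$ and the fact that checking (and computing) an LHS chain up to equivalence is polynomial~\cite{LiKW21}, I may assume $\dep$ itself has an LHS chain, fix for each $R\in\ins{S}$ a canonical LHS chain $\Lambda_R=\phi_1^R,\dots,\phi_{n_R}^R$, and work relative to the $(R,\Lambda_R)$-blocktrees. I would also preprocess the query $Q$: since $Q$ is self-join-free, each relation name is hit at most once, and for each atom $R(\bar y)$ of $Q$ I can record, for every position $(R,A)$, whether it carries a constant, a ``shared'' variable (one that also appears in another atom), or a ``private'' variable. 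The analogue of Wijsen's attack-graph/consistency analysis will be stated on these annotated atoms together with the chain order $X_1^R\subseteq\cdots\subseteq X_{n_R}^R$: one defines a syntactic condition on $(\dep,Q)$ — call it being \emph{safe} — which will be the FP side, and the dichotomy theorem says $\prob{RelFreq}(\dep,Q)$ is in \text{\rm FP} if $(\dep,Q)$ is safe and $\sharp\text{\rm P}$-hard otherwise; safety is manifestly checkable in polynomial time in $\|\dep\|+\|Q\|$, which gives item (2) for free once the characterization is proved.

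For the tractability (FP) direction I would argue by structural recursion on the blocktrees, exactly in the spirit of how repairs are built in Definition~\ref{def:blocktree} and how they are counted in Proposition~\ref{pro:counting-repairs-lhs-fp}. The idea is to compute, bottom-up over the forest of blocktrees (one per relation name, combined via the product decomposition $\rep{D}{\dep}=\{\bigcup D_i\mid (D_1,\dots)\in\times_R\rep{D}{\dep_R}\}$), not just the number of repairs of each node but a small vector of ``weighted counts'' indexed by the possible ways the part of $Q$ touching that relation can be satisfied — e.g. for each candidate value of a shared variable, the number of sub-repairs of that subtree that already make the local atom true with that value, versus those that do not. At an even-level node one multiplies children's vectors (children are mutually consistent, so the atom can be matched in any one of them independently); at an odd-level node one sums, being careful that exactly one child is chosen. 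Safety of $(\dep,Q)$ is precisely the condition guaranteeing these vectors stay of polynomial size — i.e. that the ``shared'' positions of $Q$ sit high enough in the relevant chains (or are otherwise functionally pinned down) so that one never has to remember an exponentially-branching correlation between different relations or different chain levels. Inclusion–exclusion over the (constantly many) atoms of $Q$ then converts ``at least one homomorphism'' into the quantity $\card{\rep{D}{\dep,Q}}$, and dividing by $\card{\rep{D}{\dep}}$ (polynomial by Proposition~\ref{pro:counting-repairs-lhs-fp}) yields $\rfreq{Q}{D,\dep}$.

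For the hardness direction, whenever $(\dep,Q)$ is not safe I would exhibit a Cook reduction from a known $\sharp\text{\rm P}$-hard problem — the natural source being the $\sharp$P-hardness engine already used for $\sharp\prob{Repairs}(\dep)$ on non-LHS-chain FDs in~\cite{LiKW21}, or, following Wijsen, $\sharp$-reductions from counting problems such as computing the probability of a $2$-monotone DNF / independent-set–style gadgets. The unsafe configuration supplies two ``conflicting'' portions of $Q$ whose matches in a repair are forced to be correlated across blocks in a way that encodes the hard instance; the blocktree gives us freedom to plant gadgets at the appropriate chain level so that choosing children of odd-level nodes simulates the free bits of the hard problem. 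I expect this hardness step to be the main obstacle: in the primary-key case the conflict structure is a disjoint union of blocks, so the reduction gadgets are easy to localize, whereas under an LHS chain a single relation already has a nested, tree-shaped conflict structure, and one must verify that the unsafety witness can always be isolated — stripping away the ``easy'' parts of $Q$ and $\dep$ by the FP machinery above — down to a minimal hard core on which the gadget plugs in cleanly. Handling the ``up to equivalence'' clause and the interaction of several relation names simultaneously (so that the query's self-join-freeness is actually used to keep the cross-relation bookkeeping polynomial) are the remaining technical points; once the hard core is identified the gap/interpolation arguments are routine.
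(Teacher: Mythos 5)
There is a genuine gap: your text is a plan that defers exactly the parts that constitute the proof. Most importantly, you never define the safety condition. The whole content of the classification lives in that definition (in the paper: primary FDs, primary-lhs positions, the complex part $\comp{Q}{\dep}$, and the recursive notion of a $\dep$-safe query), and asserting that ``a syntactic condition exists'' and is ``manifestly checkable in polynomial time'' proves neither item (1) nor item (2). On the tractable side, your bottom-up DP over blocktrees with ``inclusion--exclusion over the (constantly many) atoms of $Q$'' does not compute $\card{\rep{D}{\dep,Q}}$: the quantity to be counted is the union, over the polynomially many homomorphic images of $Q$ in $D$, of the sets of repairs containing each image, and inclusion--exclusion over atoms does not yield that union (inclusion--exclusion over the images would have exponentially many terms). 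The paper avoids this by recursing on the \emph{query}, not the blocktree: after deleting the conflicting and independent facts $D_{\mathsf{conf}}^{\dep,Q}$ and $D_{\mathsf{ind}}^{\dep,Q}$, it proves that when a variable $x$ sits at a primary-lhs position of every atom of $\comp{Q}{\dep}$ the events ``$Q_{x\mapsto c}$ is entailed'' are independent across $c$ (giving $1-\prod_c(1-\rfreq{Q_{x\mapsto c}}{\cdot}{\cdot})$), and when an atom has $\pvar{\alpha}{\dep}=\emptyset$ with a variable on the right-hand side of the primary FD these events are pairwise disjoint (giving a sum). Those independence/disjointness lemmas are the crux of the FP side and are absent from your sketch; ``safety keeps the vectors polynomial'' is not a substitute for them.

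The hardness side is where the bulk of the paper's work lies, and your proposal essentially concedes it (``I expect this hardness step to be the main obstacle \ldots once the hard core is identified the gap/interpolation arguments are routine''). The paper does \emph{not} build gadgets from scratch for each unsafe pair, and it cannot lean on the hardness engine for FDs without an LHS chain, since here $\dep$ has an LHS chain and $\sharp\prob{Repairs}(\dep)$ is in FP. Instead it introduces ten query rewrite rules, each inducing a Cook reduction $\prob{RelFreq}(\dep',Q')\le \prob{RelFreq}(\dep,Q)$, rewrites any unsafe $Q$ to a \emph{final} query, proves via a long structural analysis that in a final query every relation carries either a single key or one of six very specific FD shapes, and then reduces an unsafe \emph{primary-key} instance of~\cite{MaWi13} to it. Identifying and isolating the ``minimal hard core'' is precisely this final-query machinery; without it (or a concrete alternative reduction) the $\sharp\text{\rm P}$-hardness half of the dichotomy is unproven.
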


To establish the above dichotomy result, we first introduce the central notion of safety for an SJFCQ $Q$ w.r.t.~a set $\dep$ of FDs with an LHS chain, inspired by a similar notion for the case of primary keys from~\cite{MaWi13}, and show that the problem of deciding whether $Q$ is safe w.r.t.~$\dep$ is feasible in polynomial time in $||\dep|| + ||Q||$ (see Theorem~\ref{the:deciding-safety} below).
We then proceed to show that safety exhausts the tractable side of the dichotomy (item (1)) stated in Theorem~\ref{the:fds-dichotomy-rfreq}. 
In particular, we show that if $Q$ is safe w.r.t.~$\dep$, then $\prob{RelFreq}(\dep,Q)$ is in \text{\rm FP} (see Theorem~\ref{the:safe-tractability} below), and  if $Q$ is not safe w.r.t.~$\dep$, then $\prob{RelFreq}(\dep,Q)$ is $\sharp \text{\rm P}$-hard (see Theorem~\ref{the:non-safe-hardness} below).
The rest of this section is devoted to introducing the notion of safety for an SJFCQ w.r.t.~a set of FDs, and establishing Theorems~\ref{the:deciding-safety}, \ref{the:safe-tractability}, and~\ref{the:non-safe-hardness}, which will imply Theorem~\ref{the:fds-dichotomy-rfreq}.
We start by introducing some auxiliary notions.

\subsection{Auxiliary Notions}

\noindent
\paragraph{Canonical Covers.}
\rev{A set $\dep$ of FDs is called {\em canonical} if (i) it is minimal in the sense that it does not contain redundant FDs (i.e.,~FDs $\phi$ such that every database $D$ that satisfies $\dep\setminus\{\phi\}$ also satisfies $\phi$) and its FDs do not mention redundant attributes (i.e.,~there is no FD $R:X\rightarrow Y$ with some $B\in X$ such that every database $D$ that satisfies $\dep$ also satisfies $R:(X\setminus\{B\})\rightarrow Y$), and (ii) every FD of $\dep$ is of the form $R:X\ra A$ for some attribute $A\in \att{R}$ (i.e.,~it has a single attribute on its right-hand side); for further details we refer the reader to~\cite{Maier80}. }
A {\em canonical cover} of a set $\dep$ of FDs is a canonical set of FDs that is equivalent to $\dep$. Here is a simple example that illustrates this technical notion.

\begin{example}
		Let $\dep_1$ be the set of FDs consisting of
		\begin{eqnarray*}
		\phi_1 &=& \rel{Schedule}\ :\ \{\attr{train},\attr{time}\} \rightarrow \attr{origin}\\
		\phi_2 &=& \rel{Schedule}\ :\ \{\attr{train},\attr{time},\attr{duration}\} \rightarrow \attr{destination}\\
		\phi_3 &=& \rel{Schedule}\ :\ \{\attr{train},\attr{time},\attr{duration}\} \rightarrow \attr{origin}.
		\end{eqnarray*}
		Observe that $\phi_3$ is redundant since every database $D$ with $D\models \dep_1\setminus\{\phi_3\}$ satisfies $\phi_1$; hence, it clearly also satisfies $\phi_3$.
		Now, let $\dep_2$ be the set of FDs $\{\phi_1,\phi_4\}$, where
		\[
		\phi_4\ =\ \rel{Schedule}\ :\ \{\attr{train},\attr{time},\attr{duration},\attr{origin}\} \rightarrow \attr{destination}.
		\]
		Here, the FD $\phi_4$ has a redundant attribute (that is, $\attr{origin}$) since the FD $\phi_1$ implies that every two facts $f_1,f_2$ that agree on the values of the attributes $\attr{train}$ and $\attr{time}$ also agree on the value of $\attr{origin}$. Hence, if two facts $f_1,f_2$ jointly satisfy the FDs of $\dep_2$, then they also satisfy the FD $\phi_2$. In particular, due to $\phi_1$, it cannot be the case that $f_1$ and $f_2$ disagree on the value of $\attr{origin}$ (hence, jointly satisfy $\phi_4$) while they agree on the values of $\attr{train}$ and $\attr{time}$ (hence, might violate the FD $\phi_2$). We conclude that for every database $D$ such that $D\models\dep_2$ it holds that $D\models \{\phi_2\}$. 
		From the above discussion, we get that the set of FDs $\dep$ of Example~\ref{example:train_fds} is a canonical cover of $\dep_1$ and $\dep_2$. \hfill\markfull
\end{example}

Assuming that a set $\dep$ of FDs has an LHS chain (up to equivalence), we know that it has a single canonical cover $\dep'$ that has an LHS chain~\cite{LiKW21}. Furthermore, for the canonical cover $\dep'$ of $\dep$, it can be verified that, for every relation name $R$ occurring in $\dep'$, there exists a unique sequence $R : X_1 \ra A_1,\ldots,R : X_n \ra A_n$ of the FDs of $\dep'_R$, which we call {\em the LHS chain} of $\dep'_R$, such that (i) $X_i \subsetneq X_{i+1}$ for each $i \in [n-1]$, (ii) $X_i \cap Y_j = \emptyset$ for each $i,j \in [n]$, and (iii) $Y_i \cap Y_j = \emptyset$ for each $i,j \in [n]$ with $i \neq j$.
Since a canonical cover of a set of FDs can be computed in polynomial time, to establish Theorem~\ref{the:fds-dichotomy-rfreq} it suffices to concentrate on sets of FDs with an LHS chain that are canonical. \rev{In what follows, for convenience, we group together all the FDs of a canonical FD set $\dep$ that share the same left-hand side. In other words, if $R: X \rightarrow A_1, \dots, R: X \rightarrow A_n$ are all the FDs of $\dep$ with $X$ on their left-hand side, we replace these FDs with the single FD $R: X \rightarrow \{A_1, \dots, A_n\}$. Clearly, this has no impact on the satisfaction of the FDs or the LHS chain.}

\medskip
\noindent
\paragraph{Primary FDs and Positions.}
Consider now an SJFCQ $Q$. Let $\alpha_R$ be the $R$-atom in $Q$, and let $\Lambda_R = R : X_1 \ra Y_1,\ldots,R : X_n \ra Y_n$ be the LHS chain of $\dep_R$. We call an FD $R : X_i \ra Y_i$, for some $i \in [n]$, the {\em primary FD of $\dep_R$ w.r.t.~$Q$} if $X_i \cup Y_i$ contains an attribute $A$ such that at position $(R,A)$ in $\alpha_R$ we have a variable, whereas at each position of $\{(R,B) \mid B \in X_j \cup Y_j \text{ for } j < i\}$ in $\alpha_R$ we have a constant.
Note that there is no guarantee that the primary FD of $\dep_R$ w.r.t.~$Q$ exists.
If the primary FD $R : X_i \ra Y_i$ of $\dep_R$ w.r.t.~$Q$ exists, for some $i \in [n]$, then the {\em primary-lhs positions of $\alpha_R$ (w.r.t.~$\dep$)} are the positions $\{(R,A) \mid A \in X_i\}$, while the {\em non-primary-lhs positions of $\alpha_R$ (w.r.t.~$\dep$)} are the positions $\{(R,B) \mid B \in \att{R} \setminus X_i\}$. We further call the sequence of FDs $R : X_1 \ra Y_1,\ldots,R : X_{i-1} \ra Y_{i-1}$ the {\em primary prefix of $\dep_R$ w.r.t.~$Q$}.
If the primary FD of $\dep_R$ w.r.t.~$Q$ does not exist, then, by convention, the primary-lhs positions of $\alpha_R$ are the positions $\{(R,A) \mid A \in \att{R}\}$, i.e., all the positions in $\alpha_R$ are primary-lhs, which in turn implies that $\alpha_R$ has no non-primary-lhs positions. Moreover, the primary prefix of $\dep_R$ w.r.t.~$Q$ is the sequence $\Lambda_R$ itself.
We denote by \rev{$\pvar{\alpha_R}{\dep}$} the set of variables in $\alpha_R$ at primary-lhs positions of $\alpha_R$.

\begin{example}\label{example:train_primary_fds}
Consider the FDs $\dep$ of Example~\ref{example:train_fds} and the CQs $Q_3,Q_4$ of Example~\ref{example:train_bool_queries}. Recall that 
\[
\phi_1 = \rel{Schedule} : \{\attr{train}, \attr{time}\} \rightarrow \attr{origin},\,\,\,\,\,
\phi_2 = \rel{Schedule} : \{\attr{train},\attr{time},\attr{duration}\} \rightarrow \attr{destination}
\]
is the LHS chain of $\dep_{\rel{Schedule}}$ and 
\[
\phi_3\ =\ \rel{Station}\ :\ \attr{name}\rightarrow \attr{city}
\]
is the LHS chain of $\dep_{\rel{Station}}$. In both $Q_3$ and $Q_4$, all the attributes of the FD $\phi_1$ are associated with constants ($\val{16}, \val{1030}$, and $\val{BBY}$, respectively). The attribute $\attr{destination}$, on the other hand, is associated with the variable $z$ in both queries. Hence, we have that $\phi_2$ is the primary FD of $\dep_{\rel{Schedule}}$, and $\{\phi_1\}$ is the primary prefix of $\dep_{\rel{Schedule}}$.
The primary-lhs positions
of the atom over $\rel{Schedule}$ in both queries are the first, fourth, and fifth positions, corresponding to the attributes $\attr{train},\attr{time}$, and $\attr{duration}$, respectively, that occur on the left-hand side of the primary FD.
As for the relation name $\rel{Station}$, in both queries the attribute $\attr{name}$ is associated with the variable $z$; hence, $\phi_3$ is the primary FD of $\dep_{\rel{Station}}$, and its primary prefix is empty. There is a single primary-lhs position
of the atom over $\rel{Station}$ in both queries, that is, the first position, corresponding to the attribute $\attr{name}$ that occurs on the left-hand side of the primary FD. \hfill\markfull
\end{example}

\medskip
\noindent
\paragraph{Query Complex Part.}
A variable $x \in \var{Q}$ is a {\em liaison variable (of $Q$)} if it occurs more than once in $Q$. The {\em complex part of $Q$ w.r.t.~$\dep$}, denoted $\comp{Q}{\dep}$, is the set of atoms $\alpha$ of $Q$ in which we have a constant or a liaison variable at a non-primary-lhs position $(R,A)$, with $\alpha$ being the $R$-atom of $Q$, such that, for every FD $R : X \ra Y$ in the primary prefix of $\dep_R$ w.r.t.~$Q$, $A \not\in X \cup Y$.

\begin{example}\label{example:train_complex}
Consider again the FD set $\dep$ of Example~\ref{example:train_fds} and the CQ $Q_3$ of Example~\ref{example:train_bool_queries}. The variable $z$ is a liaison variable of $Q_3$ as it occurs in both atoms of the query. As discussed in Example~\ref{example:train_primary_fds}, the primary-lhs positions of the atom $\rel{Schedule}(\val{16},\val{BBY},z,\val{1030},w)$ in $Q_3$ are the ones corresponding to the attributes $\attr{train},\attr{time}$, and $\attr{duration}$. The attribute $\attr{origin}$ occurs in the primary prefix, and so the only non-primary-lhs position associated with an attribute that does not occur in the primary prefix is the one corresponding to the attribute $\attr{destination}$. This position is associated with the liaison variable $z$; hence, this atom belongs to complex part. The only non-primary-lhs position of the atom $\rel{Station}(z,\val{Washington})$ is the one corresponding to the attribute $\attr{city}$ and it is associated with a constant; hence, this atom also belongs to the complex part of $Q_3$. We can similarly show that both atoms of $Q_4$ belong to its complex part. \hfill\markfull
\end{example}

Interestingly, whenever $\comp{Q}{\dep} = \emptyset$, we can count the number of repairs that entail $Q$ in polynomial time in the size of the database. This is shown by first observing that the number of repairs of a database $D$ w.r.t.~$\dep$ that entail $Q$ coincides with the number of repairs of the subset of $D$ obtained after removing the facts of $D$ that are somehow in a conflict with $Q$, and thus, they cannot appear in a repair that entails $Q$.
Let $R$ be a relation name in $Q$ with $\alpha_R$ being the $R$-atom of $Q$. We define the database $D_{\mathsf{conf}}^{\dep_R,Q}$ as the subset of $D$ that collects all the $R$-facts $f$ of $D$ such that, for some FD $R : X \ra Y$ in the primary prefix of $\dep_R$ w.r.t.~$Q$, $f[A] = \alpha_R[A]$ for each $A \in X$, and $f[B] \neq \alpha_R[B]$ for some $B \in Y$.
Now, assuming that $R_1,\ldots,R_m$ are the relation names in $Q$, we define the database
\[
D_{\mathsf{conf}}^{\dep,Q}\ =\ \bigcup_{R \in \{R_1,\ldots,R_m\}} D_{\mathsf{conf}}^{\dep_{R},Q}.
\]
For a database $D$ with $D \models Q$, $\comp{Q}{\dep} = \emptyset$ allows us to show that
$\card{\rep{D}{\dep,Q}} = \card{\repp{D \setminus D_{\mathsf{conf}}^{\dep,Q}}{\dep}}$.

Since $D_{\mathsf{conf}}^{\dep,Q}$ can be constructed in polynomial time in $||D||$, Proposition~\ref{pro:counting-repairs-lhs-fp} implies the following (the full proof can be found in the appendix):

\def\propnocomplex{
	\rev{Let $\dep$ be a set of FDs, and $Q$ an SJFCQ with $\comp{Q}{\dep} = \emptyset$. Given a database $D$, it holds that $\rfreq{Q}{D,\dep}$ is computable in polynomial time.}
}

\begin{proposition}\label{pro:counting-repairs-no-complex-part}
\propnocomplex
\end{proposition}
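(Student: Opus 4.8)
If $D \not\models Q$ then, by monotonicity of CQs and since every repair of $D$ is contained in $D$, no repair entails $Q$, so $\rfreq{Q}{D,\dep} = 0$; as $Q$ is fixed this is decidable in polynomial time in $||D||$. So assume $D \models Q$. The plan is to establish the set equality
\[
\rep{D}{\dep,Q}\ =\ \repp{D \setminus D_{\mathsf{conf}}^{\dep,Q}}{\dep},
\]
announced just before the statement. Granting it, the numerator of $\rfreq{Q}{D,\dep}$ equals $\card{\repp{D \setminus D_{\mathsf{conf}}^{\dep,Q}}{\dep}}$; since $\dep$ (being fixed, with an LHS chain) still has an LHS chain and $D_{\mathsf{conf}}^{\dep,Q}$ is computable in polynomial time in $||D||$, Proposition~\ref{pro:counting-repairs-lhs-fp} applied to $D$ and to $D \setminus D_{\mathsf{conf}}^{\dep,Q}$ yields the denominator and the numerator of $\rfreq{Q}{D,\dep}$ in polynomial time, hence the ratio.

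\textbf{Reducing the set equality to one claim.}
Both inclusions rest on a single observation: if a consistent $D' \subseteq D$ entails $Q$ through a homomorphism $h$, then for every relation name $R$ of $Q$ the fact $g_R := R(h(\bar y_R))$ belongs to $D'$ and agrees with the $R$-atom $\alpha_R$ on every attribute in the left- and right-hand side of every FD in the primary prefix of $\dep_R$ w.r.t.\ $Q$, since by definition all those positions carry constants in $\alpha_R$; therefore every fact of $D_{\mathsf{conf}}^{\dep_R,Q}$ is in conflict with $g_R$. Using this: (i) if $D' \in \rep{D}{\dep,Q}$ then $D' \cap D_{\mathsf{conf}}^{\dep,Q} = \emptyset$, so $D' \subseteq D \setminus D_{\mathsf{conf}}^{\dep,Q}$, and being a maximal consistent subset of $D$ it is also one of $D \setminus D_{\mathsf{conf}}^{\dep,Q}$; (ii) conversely, if $D' \in \repp{D \setminus D_{\mathsf{conf}}^{\dep,Q}}{\dep}$ and $D' \models Q$, then every $f \in D \setminus D'$ is in conflict with $D'$ --- by maximality inside $D \setminus D_{\mathsf{conf}}^{\dep,Q}$ when $f \notin D_{\mathsf{conf}}^{\dep,Q}$, and with the witness $g_R$ when $f \in D_{\mathsf{conf}}^{\dep_R,Q}$ --- so $D'$ is maximal in $D$ and thus $D' \in \rep{D}{\dep,Q}$. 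Everything therefore reduces to the claim: \emph{every} $D' \in \repp{D \setminus D_{\mathsf{conf}}^{\dep,Q}}{\dep}$ entails $Q$.

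\textbf{Proof of the claim (the crux).}
Fix a homomorphism $h_0 : Q \ra D$ and, for each relation name $R$ of $Q$, let $g_R^0 := R(h_0(\bar y_R))$; then $g_R^0 \in D$ agrees with $\alpha_R$ on all constant positions, so it is not removed, i.e.\ $g_R^0 \in D \setminus D_{\mathsf{conf}}^{\dep_R,Q}$. Work in the $(R,\Lambda_R)$-blocktree of $D_R \setminus D_{\mathsf{conf}}^{\dep_R,Q}$. Deleting $D_{\mathsf{conf}}^{\dep_R,Q}$ has a precise effect on the upper part of this tree: for each FD $R : X \ra Y$ of the primary prefix, inside the block whose $X$-value matches $\alpha_R$ only the subblock whose $Y$-value matches $\alpha_R$ survives. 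Hence, descending from the root along $\alpha_R$'s constants, every odd-level node down to level $2(i-1)$ --- where $R : X_i \ra Y_i$ is the primary FD of $\dep_R$ w.r.t.\ $Q$ (if it exists; otherwise the whole $\Lambda_R$ plays the role of the primary prefix and the forced descent reaches a leaf) --- has a single child; so the node reached at level $2(i-1)$, call it $v_R^{*}$, is selected in \emph{every} repair, and $g_R^0 \in \lambda(v_R^{*})$. The children of $v_R^{*}$ (at level $2i-1$) are the blocks w.r.t.\ $R : X_i \ra Y_i$; let $b_R$ be the one containing $g_R^0$. As $v_R^{*}$ sits at an even level, every repair selects all of its children, so a descent below $b_R$ contributes to $D'$ at least one fact $g_R \in b_R$. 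By construction $g_R$ agrees with $\alpha_R$ on all primary-prefix positions (constants) and with $g_R^0$ on all primary-lhs positions of $\alpha_R$. Finally, $\comp{Q}{\dep} = \emptyset$ ensures that a constant or a liaison variable occurs in $\alpha_R$ only at a primary-prefix position or a primary-lhs position, and that every other position carries a variable private to $\alpha_R$. Hence $h'(\alpha_R[A]) := g_R[A]$, for all $R$ and all attributes $A$, is a well-defined map on $\var{Q} \cup \const{Q}$: it is the identity on constants; a liaison variable, occurring only at primary-lhs positions, is assigned the value $h_0$ gave it (since there $g_R$ and $g_R^0$ coincide), consistently across all its occurrences; and private variables are mapped freely. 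Since $R(h'(\bar y_R)) = g_R \in D'$ for every $R$, $h'$ witnesses $D' \models Q$.

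\textbf{Main obstacle.}
The delicate part is the structural forcing argument: making rigorous that deleting $D_{\mathsf{conf}}^{\dep_R,Q}$ collapses the top of the blocktree to a unique root-to-$v_R^{*}$ path --- so that $v_R^{*}$, and hence a witness block below it, is hit by \emph{every} repair, courtesy of the ``even level $\Rightarrow$ all children selected'' rule --- and that the surviving block $b_R$ carries a witness simultaneously compatible with $h_0$ on all liaison variables. A point to keep in mind is that the primary-lhs positions of $\alpha_R$ are \emph{not} forced to be constants by the definition of the primary FD; this is handled by observing that $b_R$, the block through $g_R^0$, already realises whatever constants happen to occur there. The remaining pieces (polynomial-time computability of $D_{\mathsf{conf}}^{\dep,Q}$, and the maximality bookkeeping behind the two inclusions) are routine.
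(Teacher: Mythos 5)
Your proof is correct, and its outer skeleton coincides with the paper's: dismiss the case $D \not\models Q$, establish the set equality $\rep{D}{\dep,Q} = \repp{D \setminus D_{\mathsf{conf}}^{\dep,Q}}{\dep}$, and invoke Proposition~\ref{pro:counting-repairs-lhs-fp} on both $D$ and $D \setminus D_{\mathsf{conf}}^{\dep,Q}$. Where you genuinely depart from the paper is the crux, namely that \emph{every} repair of $D \setminus D_{\mathsf{conf}}^{\dep,Q}$ entails $Q$. The paper proves this by an extremal exchange argument: among all homomorphisms from $Q$ to $D \setminus D_{\mathsf{conf}}^{\dep,Q}$ it fixes one maximizing the number of atoms whose images land in the given repair $E$; if the image $f$ of some atom is missing from $E$, maximality of $E$ yields a conflicting $g \in E$, one shows that $f$ and $g$ must then agree on the left-hand side of the primary FD, and one re-routes the variables at non-primary-lhs positions (each occurring only once, by $\comp{Q}{\dep} = \emptyset$) to $g$'s values, contradicting the choice of homomorphism. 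You instead argue constructively through the blocktree: pruning $D_{\mathsf{conf}}^{\dep_R,Q}$ collapses the primary-prefix levels above the path of $g_R^0$ to single-child nodes, so the even-level node $v_R^{*}$ and hence the block $b_R$ through $g_R^0$ are hit by every repair, which therefore contains a witness agreeing with $\alpha_R$ on the primary prefix and with $g_R^0$ on all primary-lhs positions; the emptiness of the complex part then lets the per-atom witnesses assemble into a single homomorphism, exactly as in the paper's exchange step. Both routes are sound. Yours trades the extremal argument for structural forcing, which is arguably more transparent, but it relies on the \emph{completeness} of the blocktree characterisation of repairs (every repair of $D_R$ w.r.t.~$\dep_R$ arises from a root-to-leaves selection), whereas Section~\ref{sec:lhs-chain-fds} only asserts the soundness direction in the main text; completeness does hold (it underlies Proposition~\ref{pro:counting-repairs-lhs-fp} and is proved in the appendix for the sampling procedure), but you should state that dependency explicitly. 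The paper's exchange argument sidesteps it by appealing only to the maximality of the repair.
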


\begin{restatable}{algorithm}{AlgIsSafe}
	\LinesNumbered
	\KwIn{A set $\dep$ of FDs with an LHS chain and an SJFCQ $Q$}
	\KwOut{$\true$ if $Q$ is $\dep$-safe and $\false$ otherwise}
	\vspace{2mm}
	
	\If{$\comp{Q}{\dep} = \emptyset$}{\Return $\true$}
	
	\If{$Q = Q_1 \uplus Q_2$}{\Return{$\IsSafe(\dep,Q_1)\wedge \IsSafe(\dep,Q_2)$}}
	
	\If{$\comp{Q}{\dep} \neq \emptyset$, \text{\rm and there is a variable} $x \in \var{Q}$ \text{\rm with} $x \in \pvar{\alpha}{\dep}$ \text{\rm for every} $\alpha \in \comp{Q}{\dep}$}{
		\Return{$\IsSafe(\dep,Q_{x\mapsto c})$} for an arbitrary constant $c$}
	
	\If{\text{\rm there is} $\alpha \in \comp{Q}{\dep}$ \text{\rm of the form} $R(\bar y)$ \text{\rm such that} $\pvar{\alpha}{\dep} = \emptyset$ \text{\rm and the variable} $x$ \text{\rm occurs in} $\alpha$ \text{\rm at a position of} $\{(R,A) \mid A \in Y\}$ \text{\rm with} $R : X \ra Y$ \text{\rm being the primary FD of} $\dep_R$ \text{\rm w.r.t.} $Q$}{\Return{$\IsSafe(\dep,Q_{x\mapsto c})$} for an arbitrary constant $c$}
	
	{\Return{$\false$}}
	
	\caption{The recursive procedure $\mathsf{IsSafe}$}\label{alg:issafe}
\end{restatable}

\begin{example}\label{example:train_confdb}
Consider the database $D$ of Figure~\ref{fig:trains}, the FD set $\dep$ of Example~\ref{example:train_fds}, and the CQ 
\[
Q_5\ =\ \textrm{Ans}()\ \text{:-}\ \rel{Schedule}(\val{16},\val{BBY},z,\val{1030},w)
\]
asking whether train $16$ departs from the BBY station at 10:30. Note that the single atom of $Q_5$ also occurs in $Q_3$, and, as mentioned in Example~\ref{example:train_primary_fds}, the only non-primary-lhs position of this atom is the one associated with the variable $z$. In this case, the variable $z$ is not a liaison variable as it only occurs once in $Q_5$; hence, $\comp{Q_5}{\dep} = \emptyset$. Thus, by Proposition~\ref{pro:counting-repairs-no-complex-part}, $\rfreq{Q_5}{D,\dep}$ is computable in polynomial time. Note that every repair that entails $Q_5$ must contain at least one of the facts $f_6, f_7$, or $f_8$, as these are the only facts mentioning trains leaving the BBY station. All the other facts of $D_{\rel{Schedule}}$ are in conflict with each one of these facts due to the FD $\rel{Schedule} : \{\attr{train},\attr{time}\} \rightarrow \attr{origin}$ in the primary prefix of $\dep_{\rel{Schedule}}$. Each one of these facts disagrees with the atom of $Q_5$ on the value of the attribute $\attr{origin}$; hence, they will all be collected in the database $D_{\mathsf{conf}}^{\dep,Q_5}$. \hfill\markfull
\end{example}

\subsection{Query Safety}
We now proceed to introduce the central notion of safety for an SJFCQ $Q$ w.r.t.~a set $\dep$ of FDs with an LHS chain.
This notion essentially tells us that after simplifying $Q$ into basic subqueries by recursively applying three simplification steps, each such basic subquery has an empty complex part w.r.t.~$\dep$.
We write $Q_1 \uplus Q_2$ for the CQ consisting of the atoms of two non-empty CQs $Q_1,Q_2$ that share no atoms and variables, i.e., $Q_1 \cap Q_2 = \emptyset$ and $\var{Q_1} \cap \var{Q_2} = \emptyset$. 
We also write $Q_{x \mapsto t}$, where $x \in \var{Q}$ and \rev{$t \in \ins{C}\cup \ins{V}$}, for the CQ obtained from $Q$ after replacing $x$ with $t$.

\begin{definition}[\textbf{Query Safety}]\label{def:safety}
	Consider a set $\dep$ of FDs with an LHS chain, and an SJFCQ $Q$. We inductively define when $Q$ is {\em safe w.r.t.~$\dep$}, or simply {\em $\dep$-safe}, as follows:
	\begin{itemize}
		\item[-] If $\comp{Q}{\dep} = \emptyset$, then $Q$ is $\dep$-safe.
		\item[-] If $Q = Q_1 \uplus Q_2$ and $Q_1,Q_2$ are $\dep$-safe, then $Q$ is $\dep$-safe. \rev{(Note that such a decomposition is not necessarily unique, and we can choose an arbitrary one.)}
		\item[-] If $\comp{Q}{\dep} \neq \emptyset$, there exists a variable $x \in \var{Q}$ such that $x \in \pvar{\alpha}{\dep}$ for every $\alpha \in \comp{Q}{\dep}$, and $Q_{x \mapsto c}$ is $\dep$-safe for some arbitrary constant $c \in \ins{C}$, then $Q$ is $\dep$-safe.
		\item[-] If $\comp{Q}{\dep} \neq \emptyset$, there is an $R$-atom $\alpha \in \comp{Q}{\dep}$ such that $\pvar{\alpha}{\dep} = \emptyset$ and a variable $x$ occurs in $\alpha$ at a position of $\{(R,A) \mid A \in Y\}$ assuming that $R : X \ra Y$ is the primary FD of $\dep_R$ w.r.t.~$Q$, and $Q_{x \mapsto c}$ is $\dep$-safe for some arbitrary constant $c \in \ins{C}$, then $Q$ is $\dep$-safe. \hfill\markfull
	\end{itemize}
\end{definition}

\begin{example}
As mentioned in Example~\ref{example:train_complex}, we have that 
\[
\comp{Q_3}{\dep}\ =\ \{\rel{Schedule}(\val{16},\val{BBY},z,\val{1030},w),\rel{Station}(z,\val{Washington})\}
\]
for the FD set $\dep$ of Example~\ref{example:train_fds} and the query $Q_3$ of Example~\ref{example:train_bool_queries}. Hence, the first condition of Definition~\ref{def:safety} does not apply to $Q_3$. The second condition also does not apply to $Q_3$, as the two atoms of the query share the variable $z$. The only variable of $\rel{Schedule}(\val{16},\val{BBY},z,\val{1030},w)$ associated with a primary-lhs position is $w$ and the only variable of $\rel{Station}(z,\val{Washington})$ that occurs at a primary-lhs position is $z$; hence, the third condition does not hold. This also implies that the last condition does not hold, since both atoms use a variable at a primary-lhs position. Therefore, we conclude that the query $Q_3$ is not $\dep$-safe.

For the CQ $Q_4$ of Example~\ref{example:train_bool_queries} we have that
\[
\comp{Q_4}{\dep}\ =\ \{\rel{Schedule}(\val{16},\val{BBY},z,\val{1030},\val{420}),\rel{Station}(z,\val{Washington})\}.
\]
Here, we can apply the last condition of Definition~\ref{def:safety} because there are no variables at primary-lhs positions of $\rel{Schedule}(\val{16},\val{BBY},z,\val{1030},\val{420})$, and the variable $z$ occurs at a position corresponding to the attribute $\attr{destination}$ that occurs on the right-hand side of the primary FD of $\dep_{\rel{Schedule}}$ w.r.t.~$Q_4$. After replacing the variable $z$ with some constant, we obtain a CQ $Q_6$ with no variables. Thus, both $\dep_\rel{Schedule}$ and $\dep_\rel{Station}$ have no primary FD and no non-primary-lhs positions w.r.t.~$Q_6$. Hence, $Q_6$ has an empty complex part and so it is $\dep$-safe. We then conclude that $Q_4$ is $\dep$-safe. \hfill\markfull
\end{example}

We proceed to show that query safety can be checked in polynomial time.

\begin{theorem}\label{the:deciding-safety}
	Consider a set $\dep$ of FDs with an LHS chain and an SJFCQ $Q$. The problem of deciding whether $Q$ is $\dep$-safe is feasible in polynomial time in $||\dep||+||Q||$.
\end{theorem}

\begin{proof}
	It is easy to verify that the recursive procedure $\mathsf{IsSafe}$ (depicted in Algorithm~\ref{alg:issafe}), which accepts as input $\dep$ and $Q$, and simply implements Definition~\ref{def:safety}, it correctly determines whether $Q$ is $\dep$-safe. Thus, it remains to show that $\mathsf{IsSafe}(\dep,Q)$ runs in polynomial time in $||\dep|| + ||Q||$.
	To this end, we show via an inductive argument that $\mathsf{IsSafe}(\dep,Q)$ requires polynomially many recursive calls in $||Q||$, which, together with the fact that all the conditions of the if-then statements can be checked in polynomial time in $||\dep|| + ||Q||$, implies that $\mathsf{IsSafe}(\dep,Q)$ runs in polynomial time in $||\dep|| + ||Q||$, as needed.
	We proceed by induction on the value $n=|Q|+m$, where $|Q|$ is the number of atoms in $Q$ and $m$ is the number of variables in $Q$. We show that the total number of recursive calls in an execution of $\IsSafe(\dep,Q)$ is bounded by $2|Q|-1+m$. Note that the proof is similar to the corresponding proof for primary keys from~\cite{MaWi13}.
	
	\medskip
	
	\noindent
	\paragraph{Base Case.}
	The base case of the induction (for $n=1$) is the case where $Q$ has a single atom $\alpha$ and no variables. In this case, all the positions of $\alpha$ are primary-lhs positions and $\comp{Q}{\dep}=\emptyset$. Therefore, the algorithm is called once, and the condition in line~1 is satisfied. Here, $2|Q|-1+m=2\cdot 1-1+0=1$, and indeed there is a single call to $\IsSafe(\dep,Q)$.
	
	\medskip
	
	\noindent
	\paragraph{Inductive Step.}
	We now assume that the claim holds for $|Q|+m=n$, and we show that it also holds for $|Q|+m= n+1$. If none of the if-conditions is satisfied by $Q$, then there is a single call to the algorithm (that returns $\false$) and the claim follows. Otherwise, one of the following holds:
	\begin{enumerate}
		\item \underline{The if-condition in line~1 is satisfied.} In this case, this is again a single call to $\IsSafe(\dep,Q)$ and clearly $2|Q|-1+m>1$.
		
		\item \underline{The if-condition in line~3 is satisfied.} In this case, the number of recursive calls is $1+n_1+n_2$, where $n_1$ is the number of calls in an execution of $\IsSafe(\dep,Q_1)$ and $n_2$ is the number of calls in an execution of $\IsSafe(\dep,Q_2)$. By the inductive hypothesis, we have that $n_1\le 2|Q_1|-1+m_1$ and $n_2\le 2|Q_2|-1+m_2$, where $m_1$ and $m_2$ are the number of variables in $Q_1$ and $Q_2$, respectively. Since $Q_1$ and $Q_2$ have no shared atoms or variables, it holds that $|Q|=|Q_1|+|Q_2|$ and $m=m_1+m_2$. Hence, $1+n_1+n_2\le 1+2|Q_1|-1+m_1+2|Q_2|-1+m_2=1+2|Q|-2+m=2|Q|-1+m$.
		
		\item \underline{The condition in line~5 or the if-condition in line~7 is satisfied.} In this case, the number of recursive calls is $1+n_c$, where $n_c$ is the number of calls in an execution of $\IsSafe(\dep,Q_{x\mapsto c})$ for some arbitrary constant $c$. Since the query $Q_{x\mapsto c}$ has one less variable and the same number of atoms, by the inductive hypothesis, the number of calls in this case is $1+n_c\le 1+2|Q_{x\mapsto c}|-1+(m-1)=2|Q|-1+m$.
	\end{enumerate}
	This completes our proof.
\end{proof}

\subsection{The Tractable Side}
We are now ready to discuss the tractable side of the dichotomy stated in Theorem~\ref{the:fds-dichotomy-rfreq}. In particular, we show that safety ensures that the relative frequency can be computed in polynomial time:

\begin{theorem}\label{the:safe-tractability}
	Consider a set $\dep$ of FDs with an LHS chain, and an SJFCQ $Q$. If $Q$ is $\dep$-safe, then $\prob{RelFreq}(\dep,Q)$ is in \text{\rm FP}.
\end{theorem}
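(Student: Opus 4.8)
The plan is to prove Theorem~\ref{the:safe-tractability} by structural induction on a derivation witnessing that $Q$ is $\dep$-safe, following the four clauses of Definition~\ref{def:safety}; in each case we show $\rfreq{Q}{D,\dep}$ is computable in polynomial time in $\|D\|$. Since, as already observed after Definition~\ref{def:safety}, $\mathsf{IsSafe}(\dep,Q)$ terminates after polynomially many recursive calls and each case below costs only polynomial time per call, this places $\prob{RelFreq}(\dep,Q)$ in \text{\rm FP}. It is convenient to read the relative frequency probabilistically: $\rfreq{Q}{D,\dep} = \pr[D' \models Q]$ for $D'$ drawn uniformly from $\rep{D}{\dep}$; and since a repair of $D$ w.r.t.\ $\dep$ arises from independent choices inside the blocktrees of the distinct relation names (the construction preceding Proposition~\ref{pro:counting-repairs-lhs-fp}), this uniform distribution factors as a product, over $R \in \ins{S}$, of uniform distributions over per-relation repairs.

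The two base-like cases are quick. If $\comp{Q}{\dep} = \emptyset$, the statement is exactly Proposition~\ref{pro:counting-repairs-no-complex-part}. If $Q = Q_1 \uplus Q_2$ with $Q_1,Q_2$ both $\dep$-safe, then, as $Q$ is self-join-free and $Q_1,Q_2$ share no atoms, the relation names used by $Q_1$ and by $Q_2$ are disjoint; hence the events $D' \models Q_1$ and $D' \models Q_2$ depend on disjoint families of per-relation repairs and are therefore independent under the uniform distribution over $\rep{D}{\dep}$, giving $\rfreq{Q}{D,\dep} = \rfreq{Q_1}{D,\dep}\cdot\rfreq{Q_2}{D,\dep}$, with both factors available from the induction hypothesis.

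The substitution clauses are the technical heart. In both, we aim to write $\rfreq{Q}{D,\dep}$ as a polynomial-time-computable combination of the values $\rfreq{Q_{x\mapsto c}}{D_c,\dep}$, where $c$ ranges over the polynomially many constants occurring at the relevant positions of $D$ and $D_c$ is $D$ with the facts conflicting with the choice $x\mapsto c$ removed (in the sense of the sets $D_{\mathsf{conf}}^{\dep,Q}$); each $Q_{x\mapsto c}$ is $\dep$-safe by the precondition of the clause, so $\rfreq{Q_{x\mapsto c}}{D_c,\dep}$ comes from the induction hypothesis, and the coefficients are ratios of repair counts over sub-databases, computable via Proposition~\ref{pro:counting-repairs-lhs-fp}. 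For the fourth clause the combinatorial core is immediate: if a repair $D'$ satisfied both $Q_{x\mapsto c}$ and $Q_{x\mapsto c'}$ with $c\neq c'$, the two witnessing homomorphic images of the distinguished atom $\alpha$ would agree on $X$ (all primary-lhs positions of $\alpha$ carry constants, since $\pvar{\alpha}{\dep}=\emptyset$, and $x$ sits in $Y$) yet disagree at $x$'s position in $Y$, contradicting $D' \models (R:X\ra Y)$; hence $\rep{D}{\dep,Q}$ is the disjoint union over $c$ of the $\rep{D}{\dep,Q_{x\mapsto c}}$ and the combination is a plain sum. For the third clause this disjointness fails outright (distinct values of $x$ at a left-hand-side position only force the complex atoms into different blocks of the primary FD, which may coexist in one repair), so the same effect must be recovered after first conditioning on the blocktree ``addresses'' of the complex atoms --- which block of the primary FD each complex atom is required to use --- after which the value of $x$ is forced, the conflicting-facts removal isolates the relevant configurations, and per-block independence turns the count back into a sum of induction-hypothesis quantities weighted by per-block repair-count ratios.

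I expect the decomposition underlying the third clause to be the main obstacle: unlike the two-level block structure of primary keys in~\cite{MaWi13}, one must work inside the depth-$2n$ blocktree of Definition~\ref{def:blocktree}, simultaneously tracking the primary prefix (whose positions are constants in every complex atom) and the FDs lying below the primary FD in the chain, and argue that conditioning on the relevant block addresses makes $x$'s value unique while keeping the per-relation repair counts a product of independent block contributions. Beyond this, the plan includes the routine bookkeeping: that $D_c$ and the coefficient sub-databases are polynomial-time constructible; that substituting $x\mapsto c$ neither spuriously creates nor destroys complexity in a way that breaks the induction (it does not, since safety of $Q_{x\mapsto c}$ is part of each clause); that a liaison variable of $Q$ which stops being liaison in $Q_{x\mapsto c}$ is correctly accounted for by $\comp{\cdot}{\dep}$; and that the sum over $c$ ranges only over constants actually appearing at the relevant positions of $D$, keeping the whole recursion polynomial.
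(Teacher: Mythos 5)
Your overall strategy is the paper's: induct on the safety derivation, handle the empty-complex-part case via Proposition~\ref{pro:counting-repairs-no-complex-part}, the disjoint union via independence of the per-relation repair choices, and the two substitution clauses by reducing to the queries $Q_{x\mapsto c}$. Your treatment of the fourth clause is exactly the paper's Lemma~\ref{lem:aux-3}, and your disjointness argument for it is correct.

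The gap is in the third clause, which you rightly flag as the crux but then resolve with the wrong combination rule. Once disjointness fails, the count cannot be recovered as ``a sum of induction-hypothesis quantities weighted by per-block repair-count ratios'': the events $E_c = \{D'\models Q_{x\mapsto c}\}$ overlap, and a sum over $c$ (or over block addresses) either overcounts or, if you instead sum over which \emph{subset} of values of $c$ is realized, has exponentially many terms. The identity that actually closes this case is multiplicative, not additive. After discarding $D_{\mathsf{conf}}^{\dep,Q}$ and $D_{\mathsf{ind}}^{\dep,Q}$ (the latter removal leaves the relative frequency unchanged but is needed so that every surviving $R$-fact agrees with the query on the primary prefix), Lemma~\ref{lemma:basic} guarantees that two facts disagreeing at a primary-lhs position never conflict; hence the families of facts indexed by the value $c$ at $x$'s position are repair-independent, the events $E_c$ are mutually independent under the uniform repair distribution, and
\[
\rfreq{Q}{D^{\dep,Q},\dep}\ =\ 1-\prod_{c\in\adom{D}}\left(1-\rfreq{Q_{x\mapsto c}}{D^{\dep,Q},\dep}\right),
\]
with the single correction factor $\card{\repp{D\setminus D_{\mathsf{conf}}^{\dep,Q}}{\dep}}/\card{\rep{D}{\dep}}$ restoring the frequency relative to the original $D$. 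Note also that the paper recurses on the \emph{same} database $D^{\dep,Q}$ for every $c$, not on a $c$-dependent $D_c$. A further subtlety your sketch glosses over: the product formula needs that, restricted to $D^{\dep,Q}$, whether a joint repair entails $Q$ depends only on the blocks of the complex-part atoms, which uses that all non-primary-lhs occurrences of liaison variables and constants outside the complex part have already been neutralized (this is where $\comp{Q}{\dep}\setminus\{\text{atoms containing }x\}$ being empty after the substitution matters). Finally, a minor bookkeeping point: the polynomial bound on the number of recursive calls of the evaluation procedure is not the $\mathsf{IsSafe}$ bound you cite; each substitution clause branches over all of $\adom{D}$, so the call count is of order $|Q|\cdot|\adom{D}|^{m}$ with $m$ the number of variables --- polynomial in $||D||$ for fixed $Q$, but this requires its own induction.
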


The above result relies on a recursive procedure, called $\mathsf{Eval}$, that takes as input a database $D$, a set $\dep$ of FDs with an LHS chain, and a $\dep$-safe SJFCQ $Q$, runs in polynomial time in $||D||$, and computes the value $\rfreq{Q}{D,\dep}$. We proceed to discuss the details of this procedure.

\medskip
\noindent \paragraph{Three Auxiliary Lemmas.}
$\mathsf{Eval}$ exploits three auxiliary technical lemmas, which we present next (their proofs are deferred to the appendix), that essentially tell us how the relative frequency of an SJFCQ can be computed via the relative frequency of a simpler query obtained by applying one of the three simplification steps underlying the notion of query safety.

The first one, which deals with SJFCQs that are the disjoint union of two other queries, is an easy observation directly inherited from~\cite{MaWi13} since the given proof, which we recall below for the sake of completeness, works also for arbitrary FDs.

\def\lemmaauxfirst{
	Consider a database $D$, a set $\dep$ of FDs, and an SJFCQ $Q = Q_1 \uplus Q_2$. Then 
	\[
	\rfreq{Q}{D,\dep}\ =\ \rfreq{Q_1}{D,\dep} \times \rfreq{Q_2}{D,\dep}.
	\]
}

\begin{lemma}\label{lem:aux-1}
\lemmaauxfirst
\end{lemma}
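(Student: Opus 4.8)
\textbf{Proof plan for Lemma~\ref{lem:aux-1}.}

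The plan is to exploit the fact that $Q = Q_1 \uplus Q_2$ shares neither atoms nor variables between its two parts, so entailment of $Q$ by a repair factorizes over the two subqueries. First I would observe that, since $Q_1$ and $Q_2$ mention disjoint sets of relation names (self-join-freeness plus the $\uplus$-disjointness of atoms), the set of facts relevant to $Q_1$ and the set of facts relevant to $Q_2$ lie in databases $D_{R}$ for disjoint collections of relation names. Writing $\ins{S}_1$ (resp.\ $\ins{S}_2$) for the relation names appearing in $Q_1$ (resp.\ $Q_2$) and $\ins{S}_0$ for the remaining ones, every repair $D'$ of $D$ w.r.t.~$\dep$ decomposes uniquely as $D' = D'_1 \cup D'_2 \cup D'_0$, where $D'_i$ is a repair of $\bigcup_{R \in \ins{S}_i} D_R$ w.r.t.~the FDs of $\dep$ on those relations; this is because FDs are per-relation constraints, so consistency and maximality are checked relation-by-relation, and the choices in one relation do not interact with choices in another.

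Next I would note that for a Boolean CQ $Q_j$ ($j \in \{1,2\}$) and a repair $D'$, we have $D' \models Q_j$ iff $D'_j \models Q_j$, again because $Q_j$ only mentions relations in $\ins{S}_j$, and no homomorphism from $Q_j$ can use a fact outside $D'_j$. Hence $D' \models Q$ iff $D'_1 \models Q_1$ and $D'_2 \models Q_2$. Counting repairs then splits as a product: $\card{\rep{D}{\dep,Q}} = \card{\rep{D}{\dep,Q_1}}_{\!1} \cdot \card{\rep{D}{\dep,Q_2}}_{\!2} \cdot N_0$, where the first factor counts $\ins{S}_1$-repairs entailing $Q_1$, the second counts $\ins{S}_2$-repairs entailing $Q_2$, and $N_0 = \card{\rep{\bigcup_{R \in \ins{S}_0} D_R}{\dep}}$ counts the unconstrained choices on the remaining relations. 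Likewise $\card{\rep{D}{\dep}}$ factors as the product of the three total repair counts (over $\ins{S}_1$, $\ins{S}_2$, and $\ins{S}_0$).

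Finally, I would take the ratio: $\rfreq{Q}{D,\dep} = \frac{\card{\rep{D}{\dep,Q}}}{\card{\rep{D}{\dep}}}$ equals the product of (fraction of $\ins{S}_1$-repairs entailing $Q_1$) times (fraction of $\ins{S}_2$-repairs entailing $Q_2$), with the $N_0$ factors and the $\ins{S}_2$-total (resp.\ $\ins{S}_1$-total) canceling appropriately. It remains only to identify the fraction of $\ins{S}_j$-repairs entailing $Q_j$ with $\rfreq{Q_j}{D,\dep}$ itself: this holds because the repairs of $D$ w.r.t.~$\dep$ restricted to $\ins{S}_j$ are in bijection (via the product decomposition) with the repairs of $\bigcup_{R\in\ins{S}_j} D_R$, and $Q_j$ only sees those relations, so the $\ins{S}_0$- and cross-factors cancel in $\rfreq{Q_j}{D,\dep}$ as well. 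This yields $\rfreq{Q}{D,\dep} = \rfreq{Q_1}{D,\dep} \times \rfreq{Q_2}{D,\dep}$.

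The only mildly delicate point — and the place I would be most careful — is bookkeeping the common factors so that nothing divides by zero and the cancellations are legitimate; since $\card{\rep{D}{\dep}} \geq 1$ always and it factors into per-relation-group totals each $\geq 1$, every denominator is a positive integer, so the algebra goes through cleanly. As the statement notes, this is essentially the argument of~\cite{MaWi13}, which already works for arbitrary FDs since it only uses the per-relation nature of the constraints.
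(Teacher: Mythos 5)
Your proposal is correct and follows essentially the same route as the paper's proof: partition the facts of $D$ by relation-name groups (those of $Q_1$, those of $Q_2$, and the rest), factor both $\card{\rep{D}{\dep}}$ and $\card{\rep{D}{\dep,Q}}$ as products over these groups, and cancel in the ratio. The extra care you take in justifying that $\rfreq{Q_j}{D,\dep}$ equals the per-group relative frequency is exactly the (implicit) last step of the paper's argument.
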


\begin{proof}
	We consider the following three subsets of $D$:
	\begin{itemize}
	\item $D_0$ collects every fact of $D$ over a relation name that does not occur in $Q$.
	\item $D_1$ collects every fact of $D$ over a relation name that occurs in $Q_1$.
	\item $D_2$ collects every fact of $D$ over a relation name that occurs in $Q_2$. 
	\end{itemize}
	Since $Q_1$ and $Q_2$ do not share any atoms or variables, it is rather straightforward to see that
	\[
	\card{\rep{D}{\dep}}\ =\ \card{\rep{D_0}{\dep}}\times \card{\rep{D_1}{\dep}}\times \card{\rep{D_2}{\dep}}
	\quad
	\]
	and
	\[
	\card{\rep{D}{\dep,Q}}\ =\ \card{\rep{D_0}{\dep}}\times \card{\rep{D_1}{\dep,Q_1}}\times \card{\rep{D_2}{\dep,Q_2}}.
	\]
	Therefore, we have that
	$\rfreq{Q}{D,\dep} = \rfreq{Q_1}{D_1,\dep}\times \rfreq{Q_2}{D_2,\dep}$, as needed.
\end{proof}

The second auxiliary lemma deals with SJFCQs with a non-empty complex part that mentions a variable that occurs in every atom of the complex part at a primary-lhs position.
To this end, given a database $D$, a set $\dep$ of FDs with an LHS chain, and an SJFCQ $Q$, we need to collect the facts of $D \setminus D_{\mathsf{conf}}^{\dep,Q}$ that do not affect the relative frequency of $Q$ w.r.t.~$D \setminus D_{\mathsf{conf}}^{\dep,Q}$ and $\dep$.
Let $R$ be a relation name in $Q$ with $\alpha_R$ being the $R$-atom of $Q$. We define $D_{\mathsf{ind}}^{\dep_R,Q}$ as the subset of $D \setminus D_{\mathsf{conf}}^{\dep_R,Q}$ that collects all the $R$-facts $f$ of $D$ such that, for some FD $R : X \ra Y$ in the primary prefix of $\dep_R$ w.r.t.~$Q$, and attribute $A \in X$, $f[A] \neq \alpha_R[A]$.
Now, assuming that $R_1,\ldots,R_m$ are the relation names in $Q$, we define the database
\[
D_{\mathsf{ind}}^{\dep,Q}\ =\ \bigcup_{R \in \{R_1,\ldots,R_m\}} D_{\mathsf{ind}}^{\dep_{R},Q}.
\]

\begin{example}
Recall that the database $D_{\mathsf{conf}}^{\dep,Q_5}$ for the FD set $\dep$ of Example~\ref{example:train_fds} and the query $Q_5$ of Example~\ref{example:train_confdb}    contains all the facts over the relation name $\rel{Schedule}$ except $f_6,f_7$, and $f_8$; that is, $D\setminus D_{\mathsf{conf}}^{\dep,Q_5}=\{f_6,f_7,f_8,g_1,\ldots,g_7\}$. Now, assume that the database contains also the fact
\[
f_{10}\ =\ \rel{Schedule}(\val{17},\val{BBY},\val{NYP},\val{1030},\val{400}),
\]
which mentions train $\val{17}$. This fact does not belong to $D_{\mathsf{conf}}^{\dep,Q_5}$ as it disagrees with the query on the constant associated with the attribute $\attr{train}$ that occurs on the left-hand side of the only FD in the primary prefix of $\dep_{\rel{Schedule}}$ w.r.t.~$Q_5$; however, due to this reason, it does belong to the database $D_{\mathsf{ind}}^{\dep,Q_5}$. Since the query collects information about train number $\val{16}$, there is no homomorphism from $Q_5$ to $D$ that maps the only atom of the query to the fact $f_{10}$ that mentions train number $\val{17}$. Moreover, the fact $f_{10}$ is not in conflict with any of the facts $f_6,f_7$, or $f_8$ that are required to entail the query, as they disagree on the value of the attribute $\attr{train}$ that occurs on the left-hand side of every FD in $\dep_{\rel{Schedule}}$ (as this FD set has an LHS chain). Therefore, \rev{as we will show next}, this fact has no impact on the relative frequency of $Q_5$ w.r.t.~$D\setminus D_{\mathsf{conf}}^{\dep,Q_5}$ and $\dep$. \hfill\markfull
\end{example}

With $D^{\dep,Q} = D \setminus \left(D_{\mathsf{conf}}^{\dep,Q} \cup D_{\mathsf{ind}}^{\dep,Q}\right)$, we show in the appendix that
\[
\rfreq{Q}{D,\dep}\ =\ \rfreq{Q}{D^{\dep,Q},\dep}\ \times\ \underbrace{\frac{\card{\repp{D \setminus D_{\mathsf{conf}}^{\dep,Q}}{\dep}}}{\card{\rep{D}{\dep}}}}_{\mathsf{R}_{D,\dep,Q}},
\]
which tells us that $\rfreq{Q}{D,\dep}$ is the product of $\rfreq{Q}{D^{\dep,Q},\dep}$ with the ratio $\mathsf{R}_{D,\dep,Q}$. \rev{It is not immediate how to compute $\rfreq{Q}{D^{\dep,Q},\dep}$ in polynomial time in $||D||$, while $\mathsf{R}_{D,\dep,Q}$ is clearly computable in polynomial time in $||D||$ due to Proposition~\ref{pro:counting-repairs-lhs-fp}.}
\rev{The next lemma is shown using the above equation by providing a way to compute $\rfreq{Q}{D^{\dep,Q},\dep}$ via the relative frequency of a simplified version of $Q$ w.r.t.~$D^{\dep,Q}$ and $\dep$.}

\def\lemmaauxsecond{
	Consider a database $D$, a set $\dep$ of FDs with an LHS chain, and an SJFCQ $Q$. If $\comp{Q}{\dep} \neq \emptyset$, and there is $x \in \var{Q}$ such that $x \in \pvar{\alpha}{\dep}$ for every $\alpha \in \comp{Q}{\dep}$, then
	\[
	\rfreq{Q}{D,\dep}\ =\ \left(1 - \prod_{c \in \adom{D}} \left(1 - 	\rfreq{Q_{x \mapsto c}}{D^{\dep,Q},\dep} \right)\right)\ \times\ \mathsf{R}_{D,\dep,Q}.
	\]
}

\begin{lemma}\label{lem:aux-2}
\lemmaauxsecond
\end{lemma}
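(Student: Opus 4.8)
The plan is to reduce the identity to a statement of probabilistic independence and then read that statement off the blocktree structure of $D^{\dep,Q}$. Write $D' = D^{\dep,Q}$. First I would invoke the decomposition $\rfreq{Q}{D,\dep} = \rfreq{Q}{D',\dep} \times \mathsf{R}_{D,\dep,Q}$ recorded just above the lemma, so that it only remains to prove
\[
\rfreq{Q}{D',\dep}\ =\ 1 - \prod_{c \in \adom{D}} \left(1 - \rfreq{Q_{x \mapsto c}}{D',\dep}\right).
\]
Since $\comp{Q}{\dep} \neq \emptyset$ and $x \in \pvar{\alpha}{\dep} \subseteq \var{\alpha}$ for $\alpha \in \comp{Q}{\dep}$, the variable $x$ occurs in $Q$, so $Q_{x \mapsto c}$ mentions $c$; hence $\rep{D'}{\dep,Q_{x \mapsto c}} = \emptyset$ for $c \notin \adom{D'}$ and the product may equivalently range over $c \in \adom{D'}$. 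Reading $\rfreq{\cdot}{D',\dep}$ as the probability that a uniformly random repair $D''$ of $D'$ w.r.t.~$\dep$ satisfies the query, and noting that any homomorphism from $Q$ into $D''$ maps $x$ to some constant of $\adom{D'}$, we get $\{D'' \models Q\} = \bigcup_{c \in \adom{D'}} \{D'' \models Q_{x \mapsto c}\}$. The target identity then becomes exactly the assertion that the events $E_c := \{D'' \models Q_{x \mapsto c}\}$, $c \in \adom{D'}$, are \emph{mutually independent}, since in that case $\Pr\big[\bigcup_c E_c\big] = 1 - \prod_c (1 - \Pr[E_c])$.

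The core of the proof is thus the mutual independence of the $E_c$, which I would extract from the shape of $D'$ via the blocktrees of Section~\ref{sec:lhs-chain-fds}. By construction $D'$ omits every fact of $D_{\mathsf{conf}}^{\dep,Q} \cup D_{\mathsf{ind}}^{\dep,Q}$, so for each relation name $R$ of $Q$ every $R$-fact of $D'$ agrees with $\alpha_R$ on all attributes mentioned in the primary prefix of $\dep_R$ w.r.t.~$Q$ (attributes carrying constants in $\alpha_R$). Hence the part of the $(R,\Lambda_R)$-blocktree of $D'$ lying above the primary FD $R : X \ra Y$ of $\dep_R$ w.r.t.~$Q$ is a single path, and at the primary FD it branches into pairwise fact-disjoint subtrees, one per block of $R : X \ra Y$; by the recursive construction behind Proposition~\ref{pro:counting-repairs-lhs-fp}, a repair of $D'$ amounts to an independent choice inside each such subtree, and independently across relations. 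Grouping these subtrees by the value $c$ taken at $x$'s position of $\alpha_R$ gives, for each relation and each $c$, a ``$c$-part'' of its blocktree, and the $c$-parts for distinct $c$ are fact-disjoint and repaired independently.

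It then remains to show that each $E_c$ ``lives in'' the $c$-parts, where the hypothesis $x \in \pvar{\alpha}{\dep}$ for every $\alpha \in \comp{Q}{\dep}$ enters. The two facts to establish, both using the shape of $D'$, are: (i) every atom $\beta$ of $Q$ in which $x$ occurs has $x$ at a \emph{primary-lhs} position of $\beta$ --- for atoms in $\comp{Q}{\dep}$ this is the hypothesis, while an occurrence of $x$ at a non-primary-lhs position would make $\beta$ complex (primary-prefix positions of any atom carry constants in $D'$), whence the hypothesis again forces a primary-lhs occurrence --- so any homomorphism witnessing $E_c$ matches $\beta$ inside the $c$-part of $\beta$'s relation; and (ii) every atom of $Q$ not mentioning $x$ is \emph{not} in $\comp{Q}{\dep}$, hence carries only non-liaison variables at its non-primary-lhs positions not covered by its primary prefix, which forces the set of facts matching it inside a primary block to be completely determined by $D'$ once the liaison-variable values are fixed. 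Combining (i) and (ii), $E_c$ becomes a Boolean function of the choices made inside the $c$-parts alone, modulo conditions that are deterministic given the liaison-variable assignment; since the $c$-parts (over all relations of $Q$) are pairwise fact-disjoint for distinct $c$ and repaired independently, the $E_c$ are mutually independent, closing the argument (both sides being $0$ in the degenerate case where no repair of $D'$ matches $Q$).

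I expect this last step to be the main obstacle: making ``$E_c$ lives in the $c$-parts'' rigorous means tracking a homomorphism of $Q_{x \mapsto c}$ simultaneously across the blocktree decompositions of several relations, and in particular handling an atom that shares a non-$x$ liaison variable with a complex atom --- one must argue that such a variable is pinned to a primary-lhs position of every atom containing it, so the induced equalities respect the partition into $c$-parts, and that the ``simple'' atoms contribute only conditions that become deterministic once the liaison-variable assignment is fixed. Self-join-freeness of $Q$ is what makes this tractable, as it provides a unique atom per relation and rules out interaction between distinct subtrees beyond the variable assignment itself. The companion decomposition $\rfreq{Q}{D,\dep} = \rfreq{Q}{D',\dep} \times \mathsf{R}_{D,\dep,Q}$, used at the outset, would be cited or else established first by the same ``remove conflicting / independent facts'' reasoning given before the lemma together with Proposition~\ref{pro:counting-repairs-lhs-fp}.
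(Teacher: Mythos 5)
Your proposal is correct and follows essentially the same route as the paper's proof: the paper also first factors out $\mathsf{R}_{D,\dep,Q}$ via the helping corollary, partitions the facts of $D^{\dep,Q}$ over relations of the complex part into classes $E_3^c$ according to the value $c$ at the primary-lhs position of $x$, uses the fact that facts disagreeing at a primary-lhs position of $D^{\dep,Q}$ never conflict, and observes that the atoms outside the complex part are matched by every repair uniformly (so they contribute no dependence); your "mutual independence of the events $E_c$" is exactly the paper's identity $\rfreq{\neg Q}{D^{\dep,Q},\dep}=\prod_c\rfreq{\neg Q_{x\mapsto c}}{D^{\dep,Q},\dep}$, which it establishes by explicitly multiplying repair counts $m_c,n_c$ per class. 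The only difference is presentational (probabilistic independence over a uniform random repair versus direct products of cardinalities, and the blocktree versus the paper's Lemma on primary-lhs disagreement), not substantive.
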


The last auxiliary lemma considers SJFCQs with at least one atom in their complex part that does not have a variable at a primary-lhs position, but has a variable at a position associated with an attribute occurring in the right-hand side of the primary FD.

\def\lemmaauxthird{
	Consider a database $D$, a set $\dep$ of FDs with an LHS chain, and an SJFCQ $Q$. If there exists an $R$-atom $\alpha \in \comp{Q}{\dep}$ such that $\pvar{\alpha}{\dep} = \emptyset$ and a variable $x$ occurs in $\alpha$ at a position of $\{(R,A) \mid A \in Y\}$ with $R : X \ra Y$ being the primary FD of $\dep_R$ w.r.t.~$Q$, then
	\[
	\rfreq{Q}{D,\dep}\ =\ \sum_{c \in \adom{D}} \rfreq{Q_{x \mapsto c}}{D,\dep}.
	\]
}

\begin{lemma}\label{lem:aux-3}
\lemmaauxthird
\end{lemma}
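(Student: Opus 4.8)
The plan is to analyze, for a fixed database $D$ and the distinguished atom $\alpha \in \comp{Q}{\dep}$, how a repair can entail $Q$ in terms of the value it assigns to the variable $x$. Write $\alpha = \alpha_R$ for the $R$-atom of $Q$, and let $R : X \ra Y$ be the primary FD of $\dep_R$ w.r.t.~$Q$, with $x$ occurring in $\alpha_R$ at a position $(R,A)$ for some $A \in Y$. The key structural fact I would extract from the machinery of Section~\ref{sec:lhs-chain-fds} is this: since $\pvar{\alpha_R}{\dep} = \emptyset$, every position in $X$ of $\alpha_R$ carries either a constant or a liaison variable that does \emph{not} lie at a primary-lhs position of any atom; but more importantly, because the primary FD is $R : X \ra Y$ and $x$ sits on the $Y$-side, within any single block of $D_R$ w.r.t.~$R : X \ra Y$ all the facts that could serve as a homomorphic image of $\alpha_R$ and agree with the rest of the query on the $X$-positions will in fact pin down the value of $x$; equivalently, a repair $D'$ with $D' \models Q$ must fix, via the witnessing homomorphism $h$, a single value $h(x) = c$, and once that value is fixed the remaining constraints on $D'$ are exactly the constraints ``$D'$ entails $Q_{x \mapsto c}$.''

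The heart of the argument is therefore to show that the events ``$D' \models Q$ via a homomorphism sending $x$ to $c$'' are, as $c$ ranges over $\adom{D}$, \emph{mutually exclusive} among repairs. I would argue this by showing that if a repair $D'$ entails $Q$, then the value assigned to $x$ by \emph{any} satisfying homomorphism from $Q$ to $D'$ is uniquely determined. This is where the hypotheses are used in full: the relevant $R$-fact $f \in D'$ chosen as the image of $\alpha_R$ is forced — because $\alpha_R$ has only constants and liaison variables on its $X$-positions, and because $\dep_R$ has an LHS chain so that the block of $f$ w.r.t.~$R : X \ra Y$ is itself refined into a single subblock within any repair — and within that subblock the $Y$-value, hence the value of $x$, is constant. (For liaison variables appearing on $X$-positions, their values are in turn pinned by the atoms elsewhere in $Q$; a short induction over the connectedness of the witnessing assignment closes this.) Consequently
\[
\rep{D}{\dep,Q}\ =\ \biguplus_{c \in \adom{D}}\, \rep{D}{\dep,Q_{x \mapsto c}},
\]
a \emph{disjoint} union (note $h(x) \in \adom{D'} \subseteq \adom{D}$, so no value outside $\adom{D}$ is needed), which upon dividing by $\card{\rep{D}{\dep}}$ yields $\rfreq{Q}{D,\dep} = \sum_{c \in \adom{D}} \rfreq{Q_{x \mapsto c}}{D,\dep}$, as claimed.

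I expect the main obstacle to be the uniqueness-of-$x$-value claim, i.e.~establishing disjointness rigorously. It is tempting but wrong to think it follows merely from $x \in Y$ and $R:X\ra Y$ being an FD: a repair need not be a single block, and different satisfying homomorphisms could in principle route through different $R$-facts in different blocks, assigning different values to $x$. The real content is that the \emph{primary prefix} FDs $R : X_1 \ra Y_1,\dots,R : X_{i-1}\ra Y_{i-1}$ combined with the facts $D_{\mathsf{conf}}^{\dep,Q}$ already being outside any repair that entails $Q$ (as noted in the discussion preceding Proposition~\ref{pro:counting-repairs-no-complex-part}) forces all candidate witness facts for $\alpha_R$ into \emph{one} block w.r.t.~$R : X \ra Y$; the LHS-chain structure then collapses that block, inside a repair, to a single subblock, fixing $Y$ and hence $x$. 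Making this ``one block, then one subblock'' reasoning precise — and handling the bookkeeping for liaison variables on the $X$-positions of $\alpha_R$ and of the other complex-part atoms — is the delicate part; the rest is the routine division by $\card{\rep{D}{\dep}}$.
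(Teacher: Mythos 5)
Your overall decomposition is the same as the paper's: write $\rep{D}{\dep,Q}$ as the union over $c \in \adom{D}$ of $\rep{D}{\dep,Q_{x\mapsto c}}$ and prove that this union is disjoint, after which the identity for the relative frequencies follows by dividing by $\card{\rep{D}{\dep}}$. The problem is in how you justify disjointness, and it starts with a misreading of the hypothesis. You write that $\pvar{\alpha_R}{\dep}=\emptyset$ means every position in $X$ of $\alpha_R$ carries ``either a constant or a liaison variable.'' It does not: $\pvar{\alpha_R}{\dep}$ is by definition the set of \emph{all} variables occurring at primary-lhs positions of $\alpha_R$, and the primary-lhs positions are exactly $\{(R,A)\mid A\in X\}$. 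So $\pvar{\alpha_R}{\dep}=\emptyset$ forces every $X$-position of $\alpha_R$ to carry a constant, with no exceptions for liaison variables.

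Once this is in place, the argument you dismiss as ``tempting but wrong'' is precisely the correct one, and it is all that is needed. Suppose a repair $J$ lies in both $\rep{D}{\dep,Q_{x\mapsto a}}$ and $\rep{D}{\dep,Q_{x\mapsto b}}$ with $a\neq b$, witnessed by homomorphisms $h$ and $h'$, and let $f=R(h(\bar z))$ and $g=R(h'(\bar z))$ be the images of $\alpha_R$. Both $f$ and $g$ must match the constants of $\alpha_R$, hence $f[A]=g[A]$ for every $A\in X$; and they disagree at the $Y$-position occupied by $x$, since $f$ has $a$ there and $g$ has $b$. Thus $\{f,g\}\not\models R:X\ra Y$, contradicting $J\models\dep$. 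Your worry that different homomorphisms could route through different blocks cannot arise, because all candidate witness facts carry the same constants on $X$ and therefore lie in one block. The extra machinery you invoke --- the primary prefix, $D_{\mathsf{conf}}^{\dep,Q}$, the ``one block, then one subblock'' collapse, and the induction over the connectedness of the witnessing assignment --- is not needed here, and since you explicitly leave that machinery as the unfinished ``delicate part,'' the disjointness step is not actually established in your proposal.
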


\begin{algorithm}[t]
	\LinesNumbered
	\KwIn{A database $D$, a set $\dep$ of FDs with an LHS chain, and a $\dep$-safe SJFCQ $Q$}
	\KwOut{$\rfreq{Q}{D,\dep}$}
	\vspace{2mm}
	
	\If{$D \not\models Q$}{\Return{$0$}}
	\If{$\comp{Q}{\dep} = \emptyset$}{\Return{$\rfreq{Q}{D,\dep}$}}
	\If{$Q = Q_1 \uplus Q_2$}{\Return{$\mathsf{Eval}(D,\dep,Q_1) \times \mathsf{Eval}(D,\dep,Q_2)$}}
	
	\If{$\comp{Q}{\dep} \neq \emptyset$, \text{\rm and there is a variable} $x \in \var{Q}$ \text{\rm with} $x \in \pvar{\alpha}{\dep}$ \text{\rm for every} $\alpha \in \comp{Q}{\dep}$}{
		\Return{$\left(1 - \prod_{c \in \adom{D}} \left(1 - 	\mathsf{Eval}\left(D^{\dep,Q},\dep,Q_{x \mapsto c}\right) \right)\right) \times \mathsf{R}_{D,\dep,Q}$}}
	
	\If{\text{\rm there exists an} $R$\text{\rm-atom} $\alpha \in \comp{Q}{\dep}$ \text{\rm such that} $\pvar{\alpha}{\dep} = \emptyset$ \text{\rm and a variable} $x$ \text{\rm occurs in} $\alpha$ \text{\rm at a position of} $\{(R,A) \mid A \in Y\}$ \text{\rm with} $R : X \ra Y$ \text{\rm being the primary FD of} $\dep_R$ \text{\rm w.r.t.} $Q$}{\Return{$\sum_{c \in \adom{D}} \mathsf{Eval}(D,\dep,Q_{x \mapsto c})$}}
	
	\caption{The recursive procedure $\mathsf{Eval}$}\label{alg:eval}
\end{algorithm}

\noindent \paragraph{The Recursive Procedure.} We are now ready to introduced the procedure $\Eval$ that generalizes the corresponding procedure for primary keys~\cite{MaWi13}.
$\Eval$, depicted in Algorithm~\ref{alg:eval}, accepts as input a database $D$, a set $\dep$ of FDs with an LHS chain, and a $\dep$-safe SJFCQ $Q$, and performs the following steps. It first checks if $D \not\models Q$, in which case it returns $0$. It then checks if $\comp{Q}{\dep} = \emptyset$, in which case it returns $\rfreq{Q}{D,\dep}$. Then, it essentially implements the three auxiliary lemmas from above by checking which condition holds, and computing the relative frequency of the simplified queries via recursive calls.
We proceed to show that $\Eval$ is indeed correct, i.e., it computes the value $\rfreq{Q}{D,\dep}$, and runs in polynomial time in $||D||$.

\begin{proposition}\label{pro:eval-procedure}
	Consider a database $D$, a set $\dep$ of FDs with an LHS chain, and a $\dep$-safe SJFCQ $Q$. The following hold: 
	\begin{enumerate}
		\item $\Eval(D,\dep,Q) = \rfreq{Q}{D,\dep}$.
		\item $\Eval(D,\dep,Q)$ terminates in polynomial time in $||D||$.
	\end{enumerate}
\end{proposition}

\begin{proof}
	Item (1) can be easily shown using \rev{Proposition~\ref{pro:counting-repairs-no-complex-part}} and Lemmas~\ref{lem:aux-1},~\ref{lem:aux-2} and~\ref{lem:aux-3}. The interesting task is to show item (2). The proof is along the lines of the corresponding proof for primary keys~\cite{MaWi13} as the structure of the algorithms is similar. However, the computations in the algorithms are different.
	Let $|Q|$ be the number of atoms in $Q$, and $m$ be the number of variables that occur in $Q$. We prove, by induction on the value $n=|Q|+m$, that the total number of recursive calls in an execution of $\Eval(D,\dep,Q)$ is bounded by $\frac{(2|Q|-1)(N^{m+1}-1)}{N-1}$, where $N = |\adom{D}|$ (which is bounded by $||D||$).
		
	\medskip
		
	\noindent\paragraph{Base Case.}
	The base case of the induction (i.e., for $n=1$) is the case where $Q$ has a single atom and no variables. \rev{Thus, the algorithm is called once, and either the if-condition in line~1 (that is, $D\not\models Q$) or the if-condition in line~3 (that is, $Q$ has no complex part) is triggered}. Here, $\frac{(2|Q|-1)(N^{m+1}-1)}{N-1}=\frac{N-1}{N-1}=1$, and indeed there is a single call to $\Eval(D,\dep,Q)$.
		
	\medskip
		
	\noindent\paragraph{Inductive Step.}
	We now assume that the claim holds for $|Q|+m=n$, and show that it also holds for $|Q|+m=n+1$. One of the following holds:
	\begin{enumerate}
		\item \underline{The if-condition in line~1 or the if-condition in line~3 is satisfied.} In this case, there is a single call to $\Eval(D,\dep,Q)$ and clearly $\frac{(2|Q|-1)(N^{m+1}-1)}{N-1}\ge 1$.
		
		\item \underline{The if-condition in line~5 is satisfied.} In this case, the number of recursive calls is $1+n_1+n_2$, where $n_1$ is the number of recursive calls in an execution of $\Eval(D,\dep,Q_1)$ and $n_2$ is the number of calls in an execution of $\Eval(D,\dep,Q_2)$. By the inductive hypothesis, we have that $n_1\le \frac{(2|Q_1|-1)(N^{m_1+1}-1)}{N-1}$ and $n_2\le \frac{(2|Q_1|-1)(N^{m_2+1}-1)}{N-1}$, where $m_1$ and $m_2$ are the number of variables in $Q_1$ and $Q_2$, respectively. Since $Q_1$ and $Q_2$ do not share atoms or variables, we have that $|Q|=|Q_1|+|Q_2|$ and $m=m_1+m_2$. Therefore, 
		\begin{eqnarray*}
			1+n_1+n_2&\le& 1+\frac{(2|Q_1|-1)(N^{m_1+1}-1)}{N-1}+\frac{(2|Q_2|-1)(N^{m_2+1}-1)}{N-1}\\
			&\le& \frac{N-1}{N-1}      +\frac{(2|Q_1|-1)(N^{m+1}-1)+(2|Q_2|-1)(N^{m+1}-1)}{N-1}\\
			&\le& \frac{N^{m+1}-1}{N-1}+\frac{(2|Q|-2)(N^{m+1}-1)}{N-1}\\
			&=& \frac{(2|Q|-1)(N^{m+1}-1)}{N-1}.
		\end{eqnarray*}
		
		\item \underline{The if-condition in line~7 or the if-condition in line~9 is satisfied.} In this case, the number of recursive calls is $1+N\cdot n_c$, where $n_c$ is the number of calls in an execution of $\Eval(D,\dep,Q_{x\mapsto c})$ for some arbitrary constant $c\in\adom{D}$. Since the query $Q_{x\mapsto c}$ has one less variable and the same number of atoms, by the inductive hypothesis, the number of calls in this case is:
		\begin{eqnarray*}
			1+N\cdot n_c&\le& 1+N\cdot \left(\frac{(2|Q|-1)(N^{m}-1)}{N-1}\right)\\
			&=& \frac{N-1+N\cdot (2|Q|-1)(N^{m}-1)}{N-1}\\
			&\le& \frac{(2|Q|-1)(N-1)+ (2|Q|-1)(N^{m+1}-N)}{N-1}\\
			&\rev{=}& \frac{(2|Q|-1)(N^{m+1}-1)}{N-1}.
		\end{eqnarray*}
		\end{enumerate}
		We conclude that the number of recursive calls is polynomial in $||D||$. An easy observation is that all the if-conditions used in $\mathsf{Eval}$ can be tested in polynomial time in $||D||$. \rev{Moreover, Proposition~\ref{pro:counting-repairs-no-complex-part} implies that if $\comp{Q}{\dep} = \emptyset$, then $\rfreq{Q}{D,\dep}$ can be computed in polynomial time in $||D||$.
		Finally, since the FD set $\dep$ has an LHS chain, by Proposition~\ref{pro:counting-repairs-lhs-fp}, the number of repairs of a given database $D$ w.r.t.~$\dep$ can be computed in polynomial time in $||D||$; hence, $\mathsf{R}_{D,\dep,Q}$ can be computed in polynomial time in $||D||$. Therefore, $\Eval(D,\dep,Q)$ runs in polynomial in $||D||$.}
\end{proof}

It is clear that Theorem~\ref{the:safe-tractability} immediately follows from Proposition~\ref{pro:eval-procedure}.

\subsection{The Intractable Side}

We now proceed to discuss the intractable side of the dichotomy stated in Theorem~\ref{the:fds-dichotomy-rfreq}. In particular, we show that non-safety implies that computing the relative frequency is an intractable problem:

\begin{theorem}\label{the:non-safe-hardness}
	Consider a set $\dep$ of FDs with an LHS chain, and an SJFCQ $Q$. If $Q$ is not $\dep$-safe, then $\prob{RelFreq}(\dep,Q)$ is $\sharp$\text{\rm P}-hard.
\end{theorem}

We know from~\cite{MaWi13} that, for every set $\dep$ of primary keys, and an SJFCQ $Q$ that is not $\dep$-safe, $\prob{RelFreq}(\dep,Q)$ is $\sharp \text{\rm P}$-hard. Hence, to establish Theorem~\ref{the:non-safe-hardness}, it suffices to show the following:

\def\thmreductionpks{
	Consider a set $\dep$ of FDs with an LHS chain, and an SJFCQ $Q$ that is not $\dep$-safe. There exists a set $\dep_{PK}$ of primary keys, and an SJFCQ $Q_{PK}$ that is not $\dep_{PK}$-safe such that $\prob{RelFreq}(\dep_{PK},Q_{PK})$ is Cook reducible to $\prob{RelFreq}(\dep,Q)$.
}

\begin{theorem}\label{the:hard-side}
	\thmreductionpks
\end{theorem}

The rest of this section is devoted to discussing the rather involved proof of Theorem~\ref{the:hard-side}. This proof consists of the following three main steps:
\begin{enumerate}
	\item We first define certain rewrite rules for pairs consisting of a set $\dep$ of FDs with an LHS chain and an SJFCQ $Q$. We then use those rules to define when a pair $(\dep,Q)$ is final, namely $Q$ is not $\dep$-safe and $(\dep,Q)$ cannot be rewritten into some other pair $(\dep',Q')$ with $Q'$ being an SJFCQ that is not $\dep'$-safe. We then observe that from each pair $(\dep,Q)$ where $Q$ is not $\dep$-safe, we can reach a pair $(\dep',Q')$ that is final via a sequence of rewritings. Note that the rewrite rules in question are based on the corresponding ones for primary keys from~\cite{MaWi13}.
		
	\item We then show that whenever a pair $(\dep,Q)$ can be rewritten into a pair $(\dep',Q')$, then there is a polynomial-time \rev{Cook reduction} from $\prob{RelFreq}(\dep',Q')$ to $\prob{RelFreq}(\dep,Q)$.
	
	\item Finally, we show that, for every set $\dep$ of FDs with an LHS chain and an SJFCQ $Q$ such that $(\dep,Q)$ is final, there exists a set $\dep_{PK}$ of primary keys and a query $Q_{PK}$ that is not $\dep_{PK}$-safe such that $\prob{RelFreq}(\dep_{PK},Q_{PK})$ is Cook reducible to $\prob{RelFreq}(\dep,Q)$.
\end{enumerate}

Having the above notions and results in place, we can then establish Theorem~\ref{the:hard-side} as follows. According to step (1), from $(\dep,Q)$ we can reach a pair $(\dep',Q')$ that is final via a sequence of rewritings. Thus, due to step (2) and the fact that Cook reductions compose, we get that $\prob{RelFreq}(\dep',Q')$ is Cook reducible to $\prob{RelFreq}(\dep,Q)$. Then, according to step (3), there exists a set $\dep_{PK}$ of primary keys and a query $Q_{PK}$ that is not $\dep_{PK}$-safe such that $\prob{RelFreq}(\dep_{PK},Q_{PK})$ is Cook reducible to $\prob{RelFreq}(\dep',Q')$. Thus, by composing reductions, we get that $\prob{RelFreq}(\dep_{PK},Q_{PK})$ is Cook reducible to $\prob{RelFreq}(\dep,Q)$, as needed. We proceed to give details for each of the above three steps.

\renewcommand\theadalign{tl}
\renewcommand\cellalign{tl}

\begin{figure*}
	\begin{tabular}{ l l l l }
		R1: & $(\dep,Q)$ & $\lrhd \, (\dep, Q_{x\mapsto c})$ & \makecell{if there is an atom $\alpha \in Q$ with \\ $x\in\pvar{\alpha}{\dep}$} \\\\
		R2: & $(\dep, Q_1\cup Q_2)$ & $\lrhd \, (\dep,Q_1)$ & \makecell{if $Q_1\neq \emptyset$, $Q_2 \neq \emptyset$ and\\ $\var{Q_1}\cap\var{Q_2}=\emptyset$}\\\\
		R3: & $(\dep,Q)$ & $\lrhd \, (\dep,Q_{y\mapsto x})$ & \makecell{if there is an atom $\alpha \in Q$ with\\ $x,y\in\pvar{\alpha}{\dep}$} \\\\
		R4: & $(\dep,Q\cup\set{\alpha})$ & $\lrhd \, (\dep,Q)$ & if $(\var{Q}\cap\var{\alpha})\subseteq\pvar{\alpha}{\dep}$ \\\\ 
		R5: & $(\dep,Q\cup\set{\alpha_1})$ & $\lrhd \, ((\dep\setminus\dep_R)\cup\dep_{R'},Q\cup\set{\alpha_2})$ & \makecell{if $\alpha_1>_\dep\alpha_2$ and the relation name of\\ $\alpha_2$ does not occur in $Q$} \\\\ 
		R6: & $(\dep,Q)$ & $\lrhd \, (\dep,Q_{x\mapsto c})$ & \makecell{if there is an atom $\alpha \in Q$ with \\ $\pvar{\alpha}{\dep}=\emptyset$ and $x\in\prhsvar{\alpha}{\dep}$,\\ and one of the variables of $\var{\alpha}$\\ is a liaison variable of $Q$} \\\\
		R7: & $(\dep,Q)$ & $\lrhd \, (\dep,Q_{y\mapsto x})$ & \makecell{if there is an atom $\alpha\in Q$ with\\ $x,y\in\var{\alpha}$, none of $x,y$ is an\\ orphan variable, and $x$ or $y$ (or both)\\ is a  liaison variable of $Q$ that occurs\\ at some  non-primary-lhs \\ position of $\alpha$} \\\\
		R8: & $(\dep,Q\cup\set{R(\bar z,x,\bar y)})$ & $\lrhd \, ((\dep\setminus\dep_R)\cup\dep_{R'},Q\cup\set{R'(\bar z,\bar y)})$ & \makecell{if $x$ is an orphan variable of \\ $Q\cup\set{R(\bar z,x,\bar y)}$, and $R'$ does not \\ occur in $Q$}\\\\
		R9: & $(\dep,Q)$ & $\lrhd \, (\dep,Q')$ & \makecell{if a constant $c$ occurs in $Q$, \\some constant $c'\neq c$ occurs in $Q$, \\ and $Q'$ is obtained from $Q$ by\\  replacing all constants with $c$} \\\\
		R10: & $(\dep,Q\cup\set{R(\bar z,x,\bar y)})$ & $\lrhd \, (\dep,Q\cup\set{R(\bar z,c,\bar y)})$ & \makecell{if $x$ occurs at a non-primary-lhs\\ position and has another\\ occurrence at a primary-lhs\\ position of $R(\bar z,x,\bar y)$}\\
	\end{tabular}
	\caption{Rewrite rules for pairs consisting of an FD set $\dep$ with an LHS chain and an SJFCQs $Q$.}\label{fig:rules}
\end{figure*}

\medskip 
\noindent
\paragraph{Rewrite Rules and Final Pairs.}
Maslowski and Wijsen defined ten rewrite rules for CQs under primary keys~\cite{MaWi13}. In these rules, only the CQ is rewritten as the primary keys are implicitly encoded in the query. We generalize these rewrite rules for FDs with an LHS chain by rewriting pairs consisting of a set $\dep$ of FDs and a CQ $Q$.
Our rules are depicted in Figure~\ref{fig:rules}. We write $(\dep,Q)\lrhd (\dep',Q')$ for the fact that $(\dep',Q')$ can be obtained from $(\dep,Q)$ by applying one of the rewrite rules. We also write $(\dep,Q) \lrhd^+ (\dep',Q')$ for the fact that there is finite sequence of pairs $s = (\dep_0,Q_0),\ldots,(\dep_n,Q_n)$, where $(\dep,Q) = (\dep_0,Q_0)$ and $(\dep',Q') = (\dep_n,Q_n)$, such that $(\dep_i,Q_i) \lrhd (\dep_{i+1},Q_{i+1})$ for each $i \in \{0,\ldots,n-1\}$; we may say that $(\dep,Q) \lrhd^+ (\dep', Q')$ is witnessed by $s$, and call $s$ a rewriting sequence of $(\dep,Q)$.
Here are some further notes about the rewrite rules:
\begin{description}
\item[The Relation $>_\dep$.] It is a binary relation over atoms used in the rule R5 defined as follows:
\begin{enumerate}
	\item \rev{$R(\bar z,x,\bar y) >_\dep R'(\bar z,\bar y)$ if $x$ has two or more occurrences in $R(\bar z,x,\bar y)$}, and
	
	\item \rev{$R(\bar z,c,\bar y) >_\dep R'(\bar z,\bar y)$}.
\end{enumerate}
Note that $>_\dep$ is relative to a set $\dep$ of FDs with an LHS chain. However, since $\dep$ is always clear from the context, we will simply write $>$ without explicitly referring to a set of FDs.

\smallskip
\item[Primary-rhs Positions.] In the rule R6, we use the notation $\prhsvar{\alpha}{\dep}$ to denote the set of variables occurring in $\alpha$ at the positions $\{(R,A)\mid A\in Y_i\}$, where $R$ is the relation name of $\alpha$ and $R:X_i\rightarrow Y_i$ is the primary FD of $\dep_R$ w.r.t.~$Q$.

\smallskip 
\item[Orphan Variables.] In the rules R7 and R8, we use the notion of orphan variable. An \e{orphan} variable of a SJFCQ $Q$ (w.r.t.~a set $\dep$ of FDs) is a variable that occurs once in $Q$\rev{, }at a non-primary-lhs position (w.r.t.~$\dep$).

\smallskip
\item[Induced FD Set.] Observe that in the rules R5 and R8, we rewrite the pair $(\dep,Q)$ into a pair $(\dep',Q')$, where $Q'$ is obtained from $Q$ by replacing an atom over a relation $R$ of arity $n$ with another atom over a relation $R'$ of arity $n-1$. In this case, the rewriting induces a new FD set $\dep'$ (for all the other rules, we have $\dep'=\dep$). If we replace the atom $R(\bar z)$ in $Q$ with the atom $R'(\bar y)$ (over a fresh relation $R'$) to obtain $Q'$, then the FD set $\dep'$ is obtained from $\dep$ by removing all the $R$-FDs, and adding, for each $R$-FD $R:X\ra Y$, the FD $R':(X\setminus\set{A})\ra (Y\setminus\set{A})$, where $A$ is the attribute corresponding to the position of $R$ that was removed to obtain $R'$ (we assume that the other attribute names are unchanged in $R'$). We denote the FD set over $R'$ by $\dep_{R'}$, and so $\dep'=(\dep\setminus\dep_R)\cup\dep_{R'}$. If after removing the attribute $A$ some FD becomes trivial, i.e., has the empty set as its right-hand side, then we assume that this FD is removed from $\dep_{R'}$. Moreover, if two FDs in $\dep_{R'}$ have the same left-hand side (which can be the case if the attribute $A$ occurs on the left-hand side of some FD in $\dep_R$), then we assume that these FDs are merged into a single FD. Hence, the resulting FD set is canonical. It is easy to to see that, if the original FD set $\dep$ has an LHS chain, then $\dep'$ also has an LHS chain.
\end{description}

With the rewrite rules in place, we now generalize the definition of final queries of Maslowski and Wijsen~\cite{MaWi13} to pairs consisting of a set of FDs with an LHS chain and an SJFCQ.

\begin{definition}[\textbf{Final Pairs}]
	Consider a set $\dep$ of FDs with an LHS chain, and an SJFCQ $Q$. We say that the pair $(\dep,Q)$ is final if
	\begin{enumerate}
		\item $Q$ is not $\dep$-safe, and
		\item for every set $\dep'$ of FDs with an LHS chain, and an SJFCQ $Q'$ such that $(\dep,Q)\lrhd^+ (\dep',Q')$, the query $Q'$ is $\dep'$-safe. \hfill\markfull
	\end{enumerate}
\end{definition}

We now show that, starting from a pair of a set of FDs with an LHS chain and an SJFCQ, we can always reach a final pair by applying rewrite rules.

\begin{lemma}\label{lem:final-pairs-via-rewriting}
	Consider a set $\dep$ of FDs with an LHS chain, and an SJFCQ $Q$ such that $Q$ is not $\dep$-safe. There is a set $\hat{\dep}$ of FDs with an LHS chain, and an SJFCQ $\hat{Q}$ such that $(\hat{\dep},\hat{Q})$ is final and $(\dep,Q)\lrhd^+ (\hat{\dep},\hat{Q})$.
\end{lemma}

\begin{proof}
We first recall two simple technical properties, already proved by Maslowski and Wijsen~\cite{MaWi13} for primary keys, that can be easily extended to FDs with an LHS chain. \rev{In particular, for every set $\dep$ of FDs with an LHS chain,
\begin{itemize}
	\item it is {\em not} the case that $(\dep,Q) \lrhd^+ (\dep',Q)$ for some $\dep'$, and
	\item the set $S = \set{(\dep',Q')\mid (\dep,Q) \lrhd^+ (\dep',Q')}$ is finite.
	\end{itemize}
 }
To prove this, for a pair $(\dep',Q')\in S$, let $n,n'$ be the sum of arities of the relation names occurring in $Q$ and $Q'$, respectively. \rev{We also denote by $m$ (resp.,~$m'$) the sum over all the atoms $\alpha$ of $Q$ (resp.,~$Q'$) of the number of positions $(R,A)$ (where $R$ is the relation name of $\alpha$) such that $\alpha[A]\in \ins{V}$.
If $(\dep,Q) \lrhd (\dep',Q')$, then one of the following holds:
\begin{enumerate}
	\item $n'<n$, or
	\item $n'=n$ and $m'<m$, or
	\item $n'=n$, $m=m'$, and $\const{Q'}\subsetneq\const{Q}$,
\end{enumerate}}
Moreover, the set of FDs $\dep'$ is uniquely determined by the applied rewrite rule. This immediately implies the desired properties.
From the above discussion, we conclude that each rewriting sequence of $(\dep,Q)$ is finite, and there are finitely many distinct rewriting sequences of $(\dep,Q)$. Among all these rewriting sequences of $(\dep,Q)$ that lead to a pair $(\dep',Q')$ such that $Q'$ is not $\dep'$-safe, we consider the longest one denoted by $s = (\dep_0,Q_0),\ldots,(\dep_n,Q_n)$. It is clear that the pair $(\dep_n,Q_n)$ is final, and the claim follows with $(\hat{\dep},\hat{Q}) = (\dep_n,Q_n)$.
\end{proof}

Note that Lemma~\ref{lem:final-pairs-via-rewriting} only applies to non-safe queries as it is not possible to obtain a pair $(\dep',Q')$ such that $Q'$ is not $\dep'$-safe via a rewriting sequence of $(\dep,Q)$ for some query $Q$ that is $\dep$-safe.

\medskip 
\noindent
\paragraph{Rewrite Rules and Relative Frequency.}
We proceed to show that whenever a pair $(\dep,Q)$ can be rewritten via a rewrite rule into a pair $(\dep',Q')$, then there is a polynomial-time Turing reduction (i.e.,~Cook reduction) from $\prob{RelFreq}(\dep',Q')$ to $\prob{RelFreq}(\dep,Q)$. 

\def\lemmacombinedhelp{
\rev{For a database $D$, a set $\dep$ of FDs with an LHS chain, and an SJFCQ $Q$,
\[
\rfreq{Q}{D,\dep}\ =\ \rfreq{Q}{D\setminus (D_{\mathsf{conf}}^{\dep,Q}\cup D_{\mathsf{ind}}^{\dep,Q}),\dep}\times\frac{\card{\rep{D\setminus D_{\mathsf{conf}}^{\dep,Q}}{\dep}}}{\card{\rep{D}{\dep}}}.\]}
}

\def\lemmarewriterules{
	Consider a set $\dep$ of FDs with an LHS chain, and an SJFCQ $Q$. For every set $\dep'$ of FDs with an LHS chain, and an SJFCQ $Q'$, $(\dep,Q) \lrhd (\dep',Q')$ implies that $\prob{RelFreq}(\dep',Q')$ is Cook reducible to $\prob{RelFreq}(\dep,Q)$.
}

\begin{lemma}\label{lem:rewrite-rules-relfreq}
	\lemmarewriterules
\end{lemma}

\begin{proof}
	We prove this by considering each one of the rewrite rules given in Figure~\ref{fig:rules} separately. To this end, given a database $D$, we need to construct in polynomial time a database $E$ such that the value $\rfreq{Q'}{D,\dep'}$ can be computed from $\rfreq{Q}{E,\dep}$. 
	The proofs for the rewrite rules R1-R4, R8, and R9, which are deferred to the appendix, are similar to the corresponding proofs for primary keys~\cite{MaWi13}, but we have to take into account the FDs in the primary prefix. To this end, we use the databases $D_{\mathsf{conf}}^{\dep,Q}$ and $D_{\mathsf{ind}}^{\dep,Q}$.
	For the remaining rules (R5, R6, R7, and R10), our proofs, which we give below, are considerably more involved. We elaborate on the differences between the primary key setting and our setting in the corresponding proof.

\rev{In what follows, we assume the following order over attributes $A,A'$ of the same relation name:
\begin{itemize}
	\item[-] $A=A'$ if both $A$ and $A'$ do not occur in any FD of $\dep$, or if 
	$A$ and $A'$ occur in the same FD of $\dep$ on the same side (i.e., both on the left-hand side or both on the right-hand side), and	
	\item[-] $A>A'$ if $A'$ occurs in some FD of $\dep$ while $A$ does not occur in any FD, or if both $A$ and $A'$ occur in FDs of $\dep$, and 
	$A$ occurs later in the chain (i.e., in an FD that appears later in the chain or on the right-hand side of the FD in which $A'$ appears on the left-hand side).
\end{itemize}
Observe that an attribute might occur in more than one FD, in which case we consider the \e{first} occurrence of the attribute in the above conditions.}
	
\medskip

\rev{\noindent\paragraph{\underline{Rewrite Rule R5}: $(\dep,Q\cup\set{\alpha_1})\lrhd \, ((\dep\setminus\dep_R)\cup\dep_{R'},Q\cup\set{\alpha_2})$ if $\alpha_1>_\dep\alpha_2$ and the relation name of $\alpha_2$ does not occur in $Q$.} }
\smallskip

\noindent The proof for rule R5 in the case of primary keys is rather straightforward~\cite{MaWi13}, as all the variables or constants that occur at primary-lhs positions (resp., non-primary-lhs positions) correspond to attributes on the left-hand side (resp., right-hand side) of the key. This is not the case in our setting, because two non-primary-lhs positions might correspond to attributes that occur in two different FDs (possibly on two different sides -- one on the left-hand side of an FD and one on the right-hand side of another FD). Hence, here, when we remove an occurrence of some variable, we have to carefully choose which occurrence of this variable to remove. We proceed by considering all the possible cases according to the definition of the binary relation $>_\dep$.

\smallskip

\noindent \textbf{Case (1): $R(\bar z,x,\bar y)>R'(\bar z,\bar y)$ if $x$ has two or more occurrences in $R(\bar z,x,\bar y)$.} Let $\alpha_1=R(\bar z,x,\bar y)$ and $\alpha_2=R'(\bar z,\bar y)$. Assume that the variable $x$ occurs at positions $(R, A), (R,A')$ of $R(\bar z,x,\bar y)$. \rev{Assume also, without loss of generality, that $A\ge A'$. As mentioned above, an induced FD set might merge several FDs into a single FD; however, in the remainder of this proof for R5 we assume, for clarity of presentation, that the FDs are not merged. Clearly, this has no impact on the satisfaction of the constraints.}

Given a database $D$ over $\ins{S}'$, let $E$ be the database over $\ins{S}$ that contains all the facts of $D$ over relation names $R''$ such that $R''\neq R'$, and, for every $R'$-fact $f$, the $R$-fact $f'$ defined as follows:
\[
f'[B]\ =\ \begin{cases}
f[B] & \mbox{if $B\neq A$}\\ 
f[A'] & \mbox{otherwise.}
\end{cases}
\]
If $A$ does not occur in any FD of $\dep_{R}$, then clearly its addition does not affect the satisfaction of the constraints. Moreover, since the value of attribute $A$ in every $R$-fact of $E$ is the same as the value of attribute $A'$ that is associated with the same variable in the query, the additional attribute has no impact on the entailment of the query. Therefore, in this case, we have that
\[
\rfreq{Q\cup\{\alpha_2\}}{D,\dep'}\ =\ \rfreq{Q\cup\{\alpha_1\}}{E,\dep}.
\]

Assume now that $A$ occurs in an FD of $\dep_{R}$ and the first occurrence of $A$ is in the FD $R:X_k\ra Y_k$. Since $A\ge A'$, $A'$ occurs in an FD $R:X_\ell \ra Y_\ell$ of $\dep_{R}$, and one of the following holds:
\begin{itemize}
	\item[-] $\ell<k$,
	\item[-] $\ell=k$ and $A'\in X_\ell$, $A\in X_\ell$,
	\item[-] $\ell=k$ and $A'\in Y_\ell$, $A\in Y_\ell$,
	\item[-] $\ell=k$ and $A'\in X_\ell$, $A\in Y_\ell$.
\end{itemize}

Let $f,g$ be two $R'$-facts of $D$, and let $f',g'$ be the corresponding $R$-facts of $E$. \rev{We will show that $\{f,g\}\models \dep'_{R'}$ if and only if $\{f',g'\}\models \dep_R$.} If $f$ and $g$ violate an FD $R':X_p\rightarrow Y_p$ in $\dep'_{R'}$, then they agree on all the attributes of $X_p$ and disagree on at least one attribute $B\in Y_p$. \rev{Then:}
\begin{enumerate}
	\item If the FD $R:X_p\rightarrow Y_p$ occurs in $\dep_{R}$, then clearly $f'$ and $g'$ jointly violate this FD and $\{f',g'\}\not\models\dep_{R}$.
	
	\item If the FD $R:(X_p\cup\set{A})\rightarrow Y_p$ occurs in $\dep_{R}$ and $f'$ and $g'$ agree on the value of $A$, then $f'$ and $g'$ jointly violate this FD, and $\{f',g'\}\not\models\dep_{R}$. If $f'$ and $g'$ disagree on the value of attribute $A$, then they also disagree on the value of $A'$, which implies that $f[A']\neq g[A']$. In this case, the attribute $A'$ cannot appear on the left-hand side of an FD $R:X_\ell \ra Y_\ell$ with $\ell\le p$, as in this case, $f$ and $g$ do not agree on the left-hand side of $R':X_p\rightarrow Y_p$, which contradicts the fact that $f$ and $g$ jointly violate this FD. If $A'$ occurs on the right-hand side of $R:X_\ell \ra Y_\ell$ with $\ell< p$, then $f'$ and $g'$ agree on the left-hand side of this FD (since $X_\ell\subseteq X_p$) and disagree on its right-hand side, and $\{f',g'\}\not\models\dep_{R}$. \rev{Note that, in this case, $A\not\in X_\ell$ since $A'\in Y_\ell$ and $A\ge A'$.}
	
	\item If the FD $R:X_p\rightarrow (Y_p\cup\set{A})$ occurs in $\dep_{R}$, then clearly $f'$ and $g'$ agree on all the attributes of $X_p$ and disagree on at least one attribute of $Y_p\cup\set{A}$ (since $f$ and $g$ disagree on at least one attribute of $Y_p$), and $\{f',g'\}\not\models\dep_{R}$.
\end{enumerate}
We conclude that if $\{f,g\}\not\models\dep'_{R'}$, then $\{f',g'\}\not\models\dep_{R}$.

Next, assume that $\{f,g\}$ satisfies all the FDs of $\dep'_{R'}$. Let $R':X_p\ra Y_p$ be such an FD. \rev{Then:}
\begin{enumerate}
	\item If the FD $R:X_p\rightarrow Y_p$ occurs in $\dep_{R}$, then clearly $f'$ and $g'$ jointly satisfy this FD.
	
	\item If the FD $R:(X_p\cup\set{A})\rightarrow Y_p$ occurs in $\dep_{R}$, then again $f'$ and $g'$ jointly satisfy this FD, as they agree on the values of all the attributes in $Y_p$.
	
	\item If the FD $R:X_p\rightarrow (Y_p\cup\set{A})$ occurs in $\dep_{R}$ and $f'$ and $g'$ either disagree on some attribute of $X_p$ or agree on the value of attribute $A$, then $f'$ and $g'$ jointly satisfy this FD. If $f'$ and $g'$ agree on all the attributes of $X_p$ and disagree on the value of $A$, then they also disagree on the value of $A'$ and so $f[A']\neq g[A']$; hence, the attribute $A'$ does not appear on the left-hand side of an FD $R:X_\ell \ra Y_\ell$ with $\ell\le p$ (since $X_\ell\subseteq X_p$ and $f$ and $g$ also agree on the values of all the attributes in $X_p$). In this case, $A'$ occurs on the right-hand side of an FD $R:X_\ell \ra Y_\ell$ with $\ell< p$ or on the right-hand side of the FD $R:X_p\ra (Y_p\cup\set{A})$; thus, $f$ and $g$ agree on the left-hand side of the corresponding FD in $\dep'_{R'}$ and disagree on its right-hand side, which contradicts the fact that $f$ and $g$ satisfy all the FDs of $\dep'_{R'}$.
\end{enumerate}
If the set of FDs $\dep_{R}$ contains an additional FD $R:X_p\ra\set{A}$ that does not have a corresponding FD in $\dep'_{R'}$ (as after removing the attribute $A$ the FD becomes trivial), and $f'$ and $g'$ either disagree on the value of some attribute in $X_p$ or agree on the value of attribute $A$, then they clearly satisfy this FD. If $f'$ and $g'$ agree on the value of all the attributes in $X_p$ and $f'[A]\neq g'[A]$, then $f'[A']\neq g'[A']$ and so $f[A']\neq g[A']$. Again, it cannot be the case that $A'\in X_\ell$ for some $\ell\le p$, as $X_\ell\subseteq X_p$ and $f'$ and $g'$ agree on the value of all the attributes in $X_p$. Hence, $A'\in Y_\ell$ for some $\ell<p$. In this case, $f$ and $g$ agree on the values of all the attributes in $X_p$ (and so also in $X_\ell$) as $f[B]=f'[B]$ and $g[B]=g'[B]$ for every $B\in X_p$ (note that $A\not\in X_p$), and disagree on the value of the attribute $A'$ in $Y_\ell$; hence, $f$ and $g$ jointly violate the FD $R':X_\ell\ra Y_\ell$, which is a contradiction to the fact that $\{f,g\}\models\dep'_{R'}$.
We conclude that if $\{f,g\}\models\dep'_{R'}$, then $\{f',g'\}\models\dep_{R}$. Therefore, we have that
\[
\card{\rep{D}{\dep'}}\ =\ \card{\rep{E}{\dep}}.
\]
Moreover, since every $R$-fact in $E$ uses the same value at positions $(R,A)$ and $(R,A')$,
every $R'$-fact $f$ of $D$ is such that $f\models\alpha_2$ iff the corresponding $R$-fact $f'$ of $E$ is such that $f'\models\alpha_1$. Thus, the additional attribute has no impact on the entailment of the query, and we have that
\[
\card{\rep{D}{\dep',Q\cup\{\alpha_2\}}}\ =\ \card{\rep{E}{\dep,Q\cup\{\alpha_1\}}}.
\]
It then follows that
\[
\rfreq{Q\cup\{\alpha_2\}}{D,\dep'}\ =\ \rfreq{Q\cup\{\alpha_1\}}{E,\dep}.
\]

\smallskip

\noindent \textbf{Case (2): $R(\bar z,c,\bar y)>R'(\bar z,\bar y)$.}
Let $\alpha_1=R(\bar z,c,\bar y)$ and $\alpha_2=R'(\bar z,\bar y)$. As in the previous case, we denote by $\dep$ the set of FDs over the schema $\ins{S}$ that contains the relation name $R$, and by $\dep'$ the set of FDs over the schema $\ins{S}'$ that contains the relation name $R'$. Assume that the constant $c$ occurs at the (non-primary-lhs) position $(R,A)$ of $\alpha_1$.
Given a database $D$ over $\ins{S}'$, let $E$ be the database over $\ins{S}$ that contains all the facts of $D$ over relation names $R''$ such that $R''\neq R'$, and, for every $R'$-fact $f$ of $D$, the $R$-fact defined as follows:
\[
f'[B]\ =\ \begin{cases}
f[B] & \mbox{if $B\neq A$}\\ 
c & \mbox{otherwise}
\end{cases}
\]
where $c$ is a constant (the same constant is used for all facts).
Clearly,
\[
\card{\rep{D}{\dep'}}\ =\ \card{\rep{E}{\dep}}.
\]
This holds since all the $R$-facts of $E$ agree on the value of the additional attribute $A$; hence, this attribute does not affect the satisfaction of the FDs -- two $R'$-facts $f$ and $g$ of $D$ violate an FD of $\dep'_{R'}$ if and only if the corresponding $R$-facts $f'$ and $g'$ of $E$ violate an FD of $\dep_{R}$. Moreover, since all the $R$-facts of $E$ agree with the atom $\alpha_1$ on the constant at the position $(R,A)$, this additional attribute $A$ has no impact on the entailment of the query, and we have that
\[
\card{\rep{D}{\dep',Q\cup\{\alpha_2\}}}\ =\ \card{\rep{E}{\dep,Q\cup\{\alpha_1\}}}.
\]
We can then conclude that
\[
\rfreq{Q\cup\{\alpha_2\}}{D,\dep'}\ =\ \rfreq{Q\cup\{\alpha_1\}}{E,\dep}.
\]

\medskip

\rev{\noindent\paragraph{\underline{Rewrite Rule R6}: $(\dep,Q)\lrhd \, (\dep,Q_{x\mapsto c})$ if there is an atom $\alpha \in Q$ with $\pvar{\alpha}{\dep}=\emptyset$ and $x\in\prhsvar{\alpha}{\dep}$, and one of the variables of $\var{\alpha}$ is a liaison variable of $Q$.} }
	 \smallskip
	 
	 \noindent
  \rev{In the case of primary keys, the rewrite rule R6 applies to liaison variables of $Q$; that is, if $x$ is a liaison variable of $Q$, then we can rewrite $Q$ into $Q_{x\mapsto c}$. In our case, the requirement that one of the variables $y$ of $\alpha$ is a liaison variable of $Q$ remains, but we may consider a different variable $x$ (that we have to carefully choose) in the rewriting, and we need to add additional requirements on the positions corresponding to the attributes that occur in the primary FD of the relation name of $\alpha$. Assume $\alpha=R(\bar z)$. Assume also that the variable $x$ occurs at the position $(R,A)$ of $\alpha$, and the liaison variable $y$ of $\var{\alpha}$ occurs at the position $(R,A')$ (it can be the case that $x=y$ and so $A=A'$). We prove the following in the appendix:}
  
\begin{lemma}\label{lem:combined_help}
    \lemmacombinedhelp
\end{lemma}

Lemma~\ref{lem:combined_help} implies that:
  \rev{
  \[
\rfreq{Q_{x\mapsto c}}{D,\dep}\ =\ \rfreq{Q_{x\mapsto c}}{D\setminus (D_{\mathsf{conf}}^{\dep,Q_{x\mapsto c}}\cup D_{\mathsf{ind}}^{\dep,Q_{x\mapsto c}}),\dep}\times\frac{\card{\rep{D\setminus D_{\mathsf{conf}}^{\dep,Q_{x\mapsto c}}}{\dep}}}{\card{\rep{D}{\dep}}}.
\]
  Recall that the ratio 
  $\frac{\card{\rep{D\setminus D_{\mathsf{conf}}^{\dep,Q_{x\mapsto c}}}{\dep}}}{\card{\rep{D}{\dep}}}$
  can be computed in polynomial time due to Proposition~\ref{pro:counting-repairs-lhs-fp}. We denote $D'=D\setminus (D_{\mathsf{conf}}^{\dep,Q_{x\mapsto c}}\cup D_{\mathsf{ind}}^{\dep,Q_{x\mapsto c}})$ and show that 
  \[
	\rfreq{Q_{x\mapsto c}}{D',\dep}\ =\ \rfreq{Q}{E,\dep}.
	\]
  for some database $E$ that we define next.}

  \rev{Let $E$ be the database that contains all the facts of $D'$ over relation names $R'$ such that $R'\neq R$, and, for every $R$-fact $f$, the $R$-fact
	\[
	f'[B]\ =\ \begin{cases}
	f[B] & \mbox{if $B\neq A'$}\\ 
	f[B] & \mbox{if $B=A'$ and $f[A]=c$}\\
	\rho(f[B]) & \mbox{if $B=A'$ and $f[A]\neq c$}
	\end{cases}
	\]
	where $\rho$ is a function that maps constants $b\in\adom{D}$ to constants $\rho(b)\not\in\adom{D}$ such that $\rho(a)\neq \rho(b)$ for $a\neq b$. Recall that, for every FD $R:X\rightarrow Y$ in the primary prefix of $\dep_R$ w.r.t.~$Q_{x\mapsto c}$ and for all $R$-facts $f$ of $D'$, we have that $f[A]=\alpha[A]$ for each $A\in X\cup Y$. Hence, every pair $f,g$ of $R$-facts of $D'$ satisfies all the FDs in the primary prefix of $\dep_R$ w.r.t.~$Q_{x\mapsto c}$. We conclude that every pair $f,g$ of $R$-facts of $D'$ satisfies all the FDs in the primary prefix of $\dep_R$ w.r.t.~$Q$, as the primary prefix of $\dep_R$ w.r.t.~$Q$ is contained in the primary prefix of $\dep_R$ w.r.t.~$Q_{x\mapsto c}$. Since in $E$ we only modify values in the attribute associated with the liaison variable $y$, we also have that every pair $f',g'$ of $R$-facts of $E$ satisfies all the FDs in the primary prefix of $\dep_R$ w.r.t.~$Q$.} 

 \rev{Clearly, two $R$-facts $f$ and $g$ in $D'$ that disagree on the value of some attribute on the left-hand side of the primary FD of $\dep_R$ w.r.t.~$Q$, jointly satisfy $\dep_R$, as this attribute belongs to the left-hand side of every FD that appears later in the chain. Since $\pvar{\alpha}{\dep}=\emptyset$, we have that $\alpha[A]=a$ for some constant $a$, for every attribute $A$ that occurs on the left-hand side of this FD. Since in $E$ we do not modify the values of attributes associated with constants, we also have that the corresponding $R$-facts $f'$ and $g'$ in $E$ jointly satisfy $\dep_R$. If $f$ and $g$ agree on the values of all the attributes on the left-hand side of the primary FD of $\dep_R$ w.r.t.~$Q$, and disagree on the value of attribute $A$, then they jointly violate the primary FD of $\dep_R$ w.r.t.~$Q$, and so do the corresponding facts $f'$ and $g'$ in $E$, since we do not modify the value of attribute $A$ if $A\neq A'$, and if $A=A'$, then we have that $f[A]=g[A]$ if and only if $f'[A]=g'[A]$ as we are only renaming constants. Finally, if $f$ and $g$ agree on the values of all the attributes on the left-hand side of the primary FD of $\dep_R$ w.r.t.~$Q$, and also agree on the value of attribute $A$, it is easy to verify that $f[A']=g[A']$ if and only if $f'[A']=g'[A']$, and since we only modify the value of $A'$ in $E$, we have that $f$ and $g$ satisfy $\dep_R$ if and only if $f'$ and $g'$ satisfy $\dep_R$. We conclude that 
	\[
	\card{\rep{D'}{\dep}}\ =\ \card{\rep{E}{\dep}}.
	\]
	We now prove that a repair $J$ of $D'$ entails $Q_{x\mapsto c}$ if and only if the corresponding repair $J'$ of $E$ (that contains the same facts over the relation names $R$ such that $R'\neq R$ as $J$, and a fact $f'$ over $R$ if and only if the corresponding $R$-fact $f$ is in $J$) entails $Q$.}
	
	\rev{Let $J$ be a repair of $D'$ that entails $Q_{x\mapsto c}$. 
	Let $h$ be a homomorphism from $Q_{x\mapsto c}$ to $J$.
	It must be the case that $f=R(h(\bar z))$ uses the value $c$ at the position $(R,A)$. Therefore, the corresponding fact $f'\in J'$ is such that $f'=f$. Moreover, $J'$ also contains every fact $R'(h(\bar z'))$ for $R'\neq R$; thus, $h$ is a homomorphism from $Q_{x\mapsto c}$ to $J'$. For $h'$ such that $h'(y)=h(y)$ for all variables $y$ that occur in $Q_{x\mapsto c}$ and $h'(x)=c$, we then have that $h'$ is a homomorphism from $Q$ to $J'$, and that concludes our proof of the first direction.}
	
	\rev{Next, let $J'$ be a repair of $E$ that entails $Q$. Let $h$ be a homomorphism from $Q$ to $J'$. Since $y$ is a liaison variable, it has at least one more occurrence in $Q$, at the position $(R',A'')$ of some atom $\alpha'=R'(\bar z')$ (it can be the case that $\alpha'=\alpha$ with $A'\neq A''$). Moreover, the fact $R'(h(\bar z'))$ uses a constant $b\in\adom{D}$ in the attribute corresponding to this position, as we have only changed values in the position $(R,A')$. Therefore, the fact $R(h(\bar z))$ also uses a constant $b\in\adom{D}$ at the position $(R,A')$. The only $R$-facts $f$ of $E$ that use a constant from $\adom{D}$ at the position $(R,A')$ are those satisfying $f[A]=c$; hence, we have that $h(x)=c$, and it follows that $h$ is a homomorphism from $Q_{x\mapsto c}$ to $J$.
	We conclude that
	\[
	\card{\rep{D'}{\dep,Q_{x\mapsto c}}}\ =\ \card{\rep{E}{\dep,Q}}
	\]
	and therefore
	\[
	\rfreq{Q_{x\mapsto c}}{D',\dep}\ =\ \rfreq{Q}{E,\dep}.
	\]}

	\medskip

\rev{\noindent\paragraph{\underline{Rewrite Rule R7}: $(\dep,Q)\lrhd \, (\dep,Q_{y\mapsto x})$ if there is an atom $\alpha\in Q$ with $x,y\in\var{\alpha}$, none of $x,y$ is an orphan variable, and $x$ or $y$ (or both) is a  liaison variable of $Q$ that occurs at some  non-primary-lhs position of $\alpha$.}}
\smallskip

\noindent Similarly to the case of the rule R5, the fact that the variables $x$ and $y$ might appear in two different FDs makes the proof more involved than the corresponding proof for primary keys~\cite{MaWi13}. Moreover, we have added an additional condition on the variables $x$ and $y$ that does not appear in the corresponding rule for primary keys -- none of the variables $x,y$ is an orphan variable. In our case, this additional condition is necessary for the reduction.

Let $\alpha=R(\bar z)$ be an atom of $Q$ such that $x,y\in\var{\alpha}$. Assume that the variables $x$ and $y$ occur at positions $(R,A)$ and $(R,A')$, respectively.
As mentioned in the proof for Case 1 of R5, if one of the positions is a non-primary-lhs position and $A\ge A'$, then the position $(R,A)$ is a non-primary-lhs position. Hence, we assume, without loss of generality, that $A\ge A'$, and we know that the position $(R,A)$ is a non-primary-lhs position. Since none of $x,y$ is an orphan variable, it holds that $x$ is a liaison variable of $Q$. 

Let $E$ be the database that contains all the facts of $D$ over relation names $R'$ such that $R'\neq R$, and, for every $R$-fact $f$ of $D$, the following $R$-fact:
\[
f'[B]\ =\ \begin{cases}
f[B] & \mbox{if $B\neq A$}\\ 
c & \mbox{if $B=A$, $f[A]=f[A']$, and $f[A]=c$}\\
\rho(c) & \mbox{if $B=A$, $f[A]\neq f[A']$, and $f[A]=c$}
\end{cases}
\]
where $\rho$ is a function that maps constants $a\in\adom{D}$ to constants $\rho(a)\not\in\adom{D}$ such that $\rho(a)\neq \rho(b)$ for $a\neq b$.

If $A$ does not occur in any FD of $\dep$, then clearly its value does not affect the satisfaction of the constraints, and we have that
\[
\card{\rep{D}{\dep}}\ =\ \card{\rep{E}{\dep}}.
\]

Now, assume that the attribute $A$ does occur in some FD $R:X_k\ra Y_k$ of $\dep_{R}$. In this case, since $A\ge A'$, the attribute $A'$ also occurs in an FD $R:X_\ell \ra Y_\ell$ of $\dep_{R}$, and, as mentioned in Case 1 of the proof for rule R5, one of the following holds: 
\begin{itemize}
	\item[-] $\ell<k$, 
	\item[-] $\ell=k$ and $A'\in X_\ell$, $A\in X_\ell$, \item[-] $\ell=k$ and $A'\in Y_\ell$, $A\in Y_\ell$,
	\item[-] $\ell=k$ and $A'\in X_\ell$, $A\in Y_\ell$.
\end{itemize}

Let $f$ and $g$ be two $R$-facts of $D$, and let $f'$ and $g'$ be the corresponding $R$-facts in $E$. \rev{We will show that $\{f,g\}\models \dep_{R}$ if and only if $\{f',g'\}\models \dep_R$.} If $f$ and $g$ violate an FD $R:X_p\rightarrow Y_p$ in $\dep_{R}$, then they agree on all the attributes of $X_p$ and disagree on at least one attribute $B\in Y_p$. \rev{Then:}
\begin{enumerate}
	\item If $A\not\in (X_p\cup Y_p)$, then clearly $\set{f',g'}\not\models\dep_R$.
	
	\item If $f[A]=f[A']$ and $g[A]=g[A']$ then, by definition, $f'=f$ and $g'=g$, and $\set{f',g'}\not\models\dep_R$.
	
	\item If $f[A]\neq f[A']$ and $A\in Y_p$, then $\set{f',g'}\not\models\dep_R$ because $f',g'$ agree on the values of all the attributes in $X_p$. Moreover, since $f$ and $g$ disagree on the value of attribute $B$, the facts $f'$ and $g'$ also disagree on the value of $B$. If $B\neq A$, then this is clearly the case, whereas if $B=A$, then this holds by the definition of $\rho$.
	
	\item If $f[A]\neq f[A']$, $A\in X_p$, and $f[A']\neq g[A']$, then it cannot be the case that $A'\in X_\ell$ for some $\ell\le p$. Otherwise, since $X_\ell\subseteq X_p$, the facts $f,g$ disagree on the value of at least one attribute of $X_p$, which contradicts the fact that they jointly violate the FD $R:X_p\rightarrow Y_p$. Thus, $A'\in Y_\ell$ for some $\ell<p$. In this case, $f',g'$ jointly violate the FD $R:X_\ell\ra Y_\ell$ and $\set{f',g'}\not\models\dep_R$. Note that $A\not\in X_\ell$ because $A\ge A'$.
	
	\item If $f[A]\neq f[A']$, $A\in X_p$, and $f[A']=g[A']$, then $g[A]\neq g[A']$ because $f, g$ agree on every value of every attribute in $X_p$, including the value of attribute $A$. Hence, if $f[A]\neq f[A']$, then $g[A]\neq g[A']$, and by the definition of $\rho$, the fact that $f[A]=g[A]$ implies that $f'[A]=g'[A]$. Therefore, $f',g'$ agree on the values of all the attributes in $X_p$ and disagree on the value of attribute $B\in Y_p$, and thus $\set{f',g'}\not\models\dep_R$.
\end{enumerate}
We conclude that if $\{f,g\}\not\models\dep_{R}$, then $\{f',g'\}\not\models\dep_{R}$.

Next, assume that $\{f,g\}$ satisfies all the FDs of $\dep_{R}$. Let $R:X_p\ra Y_p$ be such an FD. \rev{Then:}
\begin{enumerate}
	\item If $A\not\in (X_p\cup Y_p)$, then clearly $f'$ and $g'$ also satisfy this FD.
	
	\item If $f[A]=f[A']$ and $g[A]=g[A']$ then, by definition, $f'=f$ and $g'=g$, and again $f'$ and $g'$ jointly satisfy this FD.
	
	\item If $f[A]\neq f[A']$, $A\in X_p$, and $f[B]\neq g[B]$ for some $B\in X_p$, then $f'$ and $g'$ jointly satisfy this FD, as they also disagree on the value of attribute $B$. This clearly holds if $B\neq A$, and if $B=A$, then $f[A]\neq g[A]$, which implies that $f'[A]\neq g'[A]$ according to the definition of $\rho$.
	
	\item If $f[A]\neq f[A']$, $A\in X_p$, and $f[B]=g[B]$ for every $B\in X_p$, then $f'$ and $g'$ jointly satisfy this FD, as they agree on the values of all the attributes in $Y_p$.
	
	\item If $f[A]\neq f[A']$, $A\in Y_p$, and $f[B]\neq g[B]$ for some $B\in X_p$, then $f'$ and $g'$ also disagree on the value of attribute $B\in X_p$ and satisfy this FD.
	
	\item If $f[A]\neq f[A']$, $A\in Y_p$, and $f[B]= g[B]$ for every $B\in X_p$, then $f[A']=g[A']$ as $A'$ is either in $X_p$ (in which case clearly $f[A']=g[A']$) or in $Y_\ell$ for some $\ell\le p$ (in which case $f[A']=g[A']$, as otherwise, $f$ and $g$ jointly violate the FD $R:X_\ell\rightarrow Y_\ell$). Since $f$ and $g$ agree on the value of all the attributes in $X_p$, they also agree on the value of all the attributes in $Y_p$, including $A$. Hence, $f[A]=g[A]$ and $f[A']=g[A']$, which implies that $f'[A]=g'[A]$. Thus, $f'$ and $g'$ agree on all the values of the attributes in $X_p\cup Y_p$, and satisfy this FD.
\end{enumerate}
We conclude that if $\{f,g\}\models\dep_{R}$, then $\{f',g'\}\models\dep_{R}$. Therefore,
\[
\card{\rep{D}{\dep}}\ =\ \card{\rep{E}{\dep}}.
\]
We now prove that a repair $J$ of $D$ entails $Q_{y\mapsto x}$ iff the corresponding repair $J'$ of $E$ entails $Q$.

Let $J$ be a repair of $D$ that entails $Q_{y\mapsto x}$.
Let $h$ be a homomorphism from $Q_{y\mapsto x}$ to $J$. It must be the case that $f=R(h(\bar z))$ uses the same value for attributes $A$ and $A'$, as both are associated with the same variable in $Q_{y\mapsto x}$. Therefore, the corresponding fact $f'\in J'$ is such that $f'=f$. Moreover, $J'$ also contains every fact $R'(h(\bar z'))$ for $R'\neq R$; thus, $h$ is also a homomorphism from $Q_{y\mapsto x}$ to $J'$. For $h'$ that satisfies $h'(z)=h(z)$ for every variable $z$ that occurs in $Q_{y\mapsto x}$ and $h'(y)=h(x)$, we then have that $h'$ is a homomorphism from $Q$ to $J'$, and that concludes our proof of the first direction.

Next, let $J'$ be a repair of $E$ that entails $Q$. Let $h$ be a homomorphism from $Q$ to $J'$. Since $x$ is a liaison variable it has at least one more occurrence in $Q$, at position $(R',A'')$ of some atom $\alpha'=R'(\bar z')$ (it can be the case that $\alpha'=\alpha$ with $A\neq A''\neq A'$). Moreover, the fact $R'(h(\bar z'))$ uses a constant $b\in\adom{D}$ in the attribute $A''$. Therefore, the fact $R(h(\bar z))$ also uses a constant $b\in\adom{D}$ at the position $(R,A)$. The only 
$R$-facts in $E$ that use constants from $\adom{D}$ at position $(R,A)$ are those that use the same value for attributes $A$ and $A'$; hence, we have that $h(x)=h(y)$, and it follows that $h$ is a homomorphism from $Q_{y\mapsto x}$ to $J$.
We conclude that
\[
\card{\rep{D}{\dep,Q_{y\mapsto x}}}\ =\ \card{\rep{E}{\dep,Q}}
\]
and thus
\[
\rfreq{Q_{y\mapsto x}}{D,\dep}\ =\ \rfreq{Q}{E,\dep}.
\]

\medskip

\rev{\noindent\paragraph{\underline{Rewrite Rule R10}: $(\dep,Q\cup\set{R(\bar z,x,\bar y)})\lrhd \, (\dep,Q\cup\set{R(\bar z,c,\bar y)})$ if $x$ occurs at a non-primary-lhs position and has another occurrence at a primary-lhs position of $R(\bar z,x,\bar y)$.} }
\smallskip

\noindent This rule is a generalization of the corresponding rule for primary keys~\cite{MaWi13}. While the rule R10 for primary keys considers binary relation names where the same variable occurs at both positions, our rule considers arbitrary relation names where the variable $x$ occurs both at a primary-lhs position and a non-primary-lhs position.

Suppose that the variable $x$ occurs at the (non-primary-lhs) position $(R,A)$ of $R(\bar z,x,\bar y)$. Assume also that $x$ occurs at the position $(R,A')$, which is a primary-lhs position. Since $R(\bar z,x,\bar y)$ has non-primary-lhs positions, the set of FDs $\dep_R$ has a primary FD. Moreover, since $(R,A')$ is a primary-lhs position, $A'$ occurs on the left-hand side of this primary FD. Now, let $E$ be the database that contains all the $R'$-facts of $D$ with $R'\neq R$, and, for every $R$-fact $f$ of $D$, the $R$-fact
\[
f'[B]\ =\ \begin{cases}
f[B] & \mbox{if $B\neq A$}\\ 
f[A'] & \mbox{if $B=A$ and $f[A]=c$}\\ 
\rho(f[A]) & \mbox{if $B=A$ and $f[A]\neq c$}\\ 
\end{cases}
\]
where $\rho$ is a function that maps constants $b\in\adom{D}$ to constants $\rho(b)\not\in\adom{D}$ such that $\rho(b)\neq \rho(c)$ for $b\neq c$. 

\rev{Let $f$ and $g$ be two $R$-facts in $D'$, and let $f'$ and $g'$ be the corresponding $R$-facts in $E$. If $f,g$ jointly violate (resp.,~satisfy) an FD $R:X_p\rightarrow Y_p$ in $\dep_{R}$, and $A\not\in (X_p\cup Y_p)$, then clearly $f',g'$ also jointly violate (resp.~satisfy) this FD as $f'[B]=f[B]$ and $g'[B]=g[B]$ for every attribute $B\neq A$. If $A\in X_p$ or $A\in Y_p$, then the FD $R:X_p\rightarrow Y_p$ is not in the primary prefix of $\dep_R$, and $A'\in X_p$. In this case, if $f[A']\neq g[A']$, then  clearly $f'[A']\neq g'[A']$; hence, $f,g$ jointly satisfy $R:X_p\rightarrow Y_p$ and the same holds for $f',g'$. 
If $f[A']=g[A']$, by the definition of $\rho$, it holds that $f'[A]=g'[A]$ if and only if $f[A]=g[A]$; therefore, $f,g$ jointly violate $R:X_p\rightarrow Y_p$ if and only if $f',g'$ jointly violate $R:X_p\rightarrow Y_p$.} We conclude that $\{f,g\}\models\dep_{R}$ if and only if $\{f',g'\}\models\dep_{R}$. Hence,
\[
\card{\rep{D'}{\dep}}\ =\ \card{\rep{E}{\dep}}.
\]

We now prove that a repair $J$ of $D$ entails $Q\cup\{R(\bar z,c,\bar y)\}$ if and only if the corresponding repair $J'$ of $E$ (that contains the same facts over the relations $R'\neq R$ as $J$, and a fact $f'$ over $R$ if and only if the corresponding fact $f$ is in $J$) entails $Q\cup\{R(\bar z,x,\bar y)\}$.

Let $J$ be a repair of $D$ that entails $Q\cup\{R(\bar z,c,\bar y)\}$. Let $h$ be a homomorphism from $Q\cup\{R(\bar z,c,\bar y)\}$ to $J$. It must be the case that $f=R(h(\bar z,c,\bar y))$ uses the value $c$ at the position $(R,A)$. Thus, the corresponding fact $f'\in J'$ is such that $f'[A]=f'[A']$. Moreover, $J'$ also contains every fact $R'(h(\bar z'))$ for other atoms $R'(\bar z')\in Q$; thus, $h$ is a homomorphism from $Q\cup\{R(\bar z,x,\bar y)\}$ to $J'$.

Next, let $J'$ be a repair of $E$ that entails $Q\cup\{R(\bar z,x,\bar y)\}$. Let $h$ be a homomorphism from $Q\cup\{R(\bar z,x,\bar y)\}$ to $J$. Clearly, $f'=R(h(\bar z,x,\bar y))$ is a fact of $E$ that satisfies $f'[A]=f'[A']$. The only $R$-facts $f'$ in $E$ that satisfy $f'[A]=f'[A']$ are those for which the corresponding fact $f$ in $D$ satisfies $f[A]=c$. Therefore, $h$ is a homomorphism from $Q\cup\{R(\bar z,c,\bar y)\}$ to $J$.
We conclude that
\[
\card{\rep{D}{\dep,Q\cup\{R(\bar z,c,\bar y)\}}}\ =\ \card{\rep{E}{\dep,Q\cup\{R(\bar z,x,\bar y)\}}}
\]
and therefore
\[
\rfreq{D}{\dep,Q\cup\{R(\bar z,c,\bar y)\}}\ =\ \rfreq{E}{\dep,Q\cup\{R(\bar z,x,\bar y)\}}.
\]

\smallskip

This completes the proof of Lemma~\ref{lem:rewrite-rules-relfreq} for rules R5, R7, and R10; the proofs for all the other rules can be found in the appendix.
\end{proof}

\noindent
\paragraph{Final Pairs are Hard.} We now show that, for every set $\dep$ of FDs with an LHS chain and an SJFCQ $Q$ such that $(\dep,Q)$ is final, there exists a set $\dep_{PK}$ of primary keys and a query $Q_{PK}$ that is not $\dep_{PK}$-safe such that $\prob{RelFreq}(\dep_{PK},Q_{PK})$ is Cook reducible to $\prob{RelFreq}(\dep,Q)$. 
To this end, we first present four technical lemmas (Lemmas~\ref{lemma:12} - \ref{lemma:final-orphan}) that essentially tell us the following: given a set $\dep$ of FDs with an LHS chain, and an SJFCQ $Q$ such that $(\dep,Q)$ is final, for every relation name $R$ that occurs in $Q$, the set $\dep_R$ of FDs either consists of a single key, or is a set of FDs of one of six specific forms; the proofs are deferred to the appendix. \rev{In what follows, for every $R$-atom $\alpha_R$ of $Q$, assuming that $R:X_i\rightarrow Y_i$ is the primary FD of $\dep_R$ w.r.t.~$Q$, we refer to the positions $\{(R,A)\mid A\in Y_i\}$ as the \e{primary-rhs} positions of $\alpha_R$. We refer to the rest of the positions of $\alpha_R$ as \e{non-primary-rhs} positions, and to the positions outside $\{(R,A)\mid A\in (X_i\cup Y_i)\}$ as \e{non-primary-FD} positions.}
In the statements of Lemmas~\ref{lemma:12} - \ref{lemma:final-orphan}, for brevity, let $\dep$ be a set of FDs with an LHS chain, and $Q$ be an SJFCQ such that $(\dep,Q)$ is final.


\def\lemmasinglekey{
	If the $R$-atom of $Q$ occurs in $Q \setminus\comp{Q}{\dep}$, then $\dep_R$ consists of a single key.
}

\begin{lemma}\label{lemma:12}
	\lemmasinglekey
\end{lemma}

\def\lemmafinalliaison{
		If the $R$-atom of $Q$ has a liaison variable $x$ at a primary-rhs position $(R,A)$, then
	\begin{enumerate}
		\item $\dep_R$ has a single key, or
		\item $\dep_R$ has a single FD $R:X\ra\set{A}$, and there is only one non-primary-FD position $(R,A')$ that is also associated with the variable $x$.
	\end{enumerate}
}

\begin{lemma}\label{lemma:final-liaison}
	\lemmafinalliaison
\end{lemma}

\def\lemmafinalconst{
	If the $R$-atom $\alpha_R$ of $Q$ has a constant $c$ at a primary-rhs position $(R,A)$, then
\begin{enumerate}
	\item $\dep_R$ has a single key, or
	\item $\dep_R$ has a single FD $R:X\ra\set{A}$, and there is only one non-primary-FD position $(R,A')$ that is associated with a liaison variable $x$.
\end{enumerate}
}

\begin{lemma}\label{lemma:final-const}
	\lemmafinalconst
\end{lemma}

\def\lemmafinalorphan{
		If the $R$-atom $\alpha$ of $Q$ has an orphan variable $x$ at a primary-rhs position $(R,A)$, then
	\begin{enumerate}
		\item $\dep_R$ has a single key, or
		\item $\dep_R$ has a single FD $R:X\ra\set{A}$, and there is only one non-primary-FD position $(R,A')$ that is associated with a liaison variable $y$, or
		\item $\dep_R$ has a single FD $R:X\ra\set{A}$, and there is only one non-primary-FD position $(R,A')$ that is associated with a constant $c$, or
		\item $\dep_R$ has a single FD $R:X\ra\set{A}$, and there are only two non-primary-FD positions $(R,A')$ and $(R,A'')$ that are associated with the same liaison variable $y$ that occurs only in $\alpha$, or
		\item $\dep_R$ has two FDs $R:X\ra\set{A}$ and $R:(X\cup \set{A'})\ra\set{A''}$ such that the first is the primary FD, there are no non-primary-lhs positions besides $(R,A)$, $(R,A')$, $(R,A'')$ (where $(R,A')$ and $(R,A'')$ are also non-primary-rhs positions), and the positions $(R,A')$ and $(R,A'')$ are associated with the same liaison variable $y$ that occurs only in the atom $\alpha$.
	\end{enumerate}
}

\begin{lemma}\label{lemma:final-orphan}
	\lemmafinalorphan
\end{lemma}

Having the six specific forms of sets of FDs provided by Lemmas~\ref{lemma:12} - \ref{lemma:final-orphan}, we present a series of six technical lemmas (Lemmas~\ref{lemma:red_first} - \ref{lemma:red_last}), one for each specific form, that will allow us to devise the desired reduction; the proofs of those lemmas are deferred to the appendix. In the statements of Lemmas~\ref{lemma:red_first} - \ref{lemma:red_last}, we assume that $\dep$ is a set of FDs with an LHS chain, and $Q$ is an SJFCQ that is not $\dep$-safe. We further assume that $n > 0$ atoms of $Q$ are {\em not} associated with a single key in $\dep$, and we refer to such atoms as \e{non-single-key} atoms w.r.t.~$\dep$.

\def\lemmaredfirst{
	    If for the $R$-atom of $Q$ for some relation name $R$ it holds that
	\begin{enumerate}
		\item $\dep_R$ has a single FD $R:X\ra\set{A}$,
		\item there is only one non-primary-FD position $(R,A')$, and
		\item the positions $(R,A)$ and $(R,A')$ are both associated with a liaison variable $x$,
	\end{enumerate}
	then $\prob{RelFreq}(\dep',Q')$ is Cook reducible to $\prob{RelFreq}(\dep,Q)$ for some $\dep'$ and an SJFCQ $Q'$ that is not $\dep'$-safe, where $n-1$ atoms of $Q'$ are non-single-key atoms w.r.t.~$\dep'$.
}

\begin{lemma}\label{lemma:red_first} 
	\lemmaredfirst
\end{lemma}

\def\lemmaredsecond{
	 If for the $R$-atom of $Q$ for some relation name $R$ it holds that
	\begin{enumerate}
		\item $\dep_R$ has a single FD $R:X\ra\set{A}$,
		\item there is only one non-primary-FD position $(R,A')$, and
		\item the position $(R,A)$ is associated with a constant $c$ and $(R,A')$ with a liaison variable $x$,
	\end{enumerate}
	then $\prob{RelFreq}(\dep',Q')$ is Cook reducible to $\prob{RelFreq}(\dep,Q)$ for some $\dep'$ and an SJFCQ $Q'$ that is not $\dep'$-safe, where $n-1$ atoms of $Q'$ are non-single-key atoms w.r.t.~$\dep'$.
}
\begin{lemma}\label{lemma:red_second} 
	\lemmaredsecond
\end{lemma}

\def\lemmaredthird{
If for the $R$-atom of $Q$ for some relation name $R$ it holds that
\begin{enumerate}
	\item $\dep_R$ has a single FD $R:X\ra\set{A}$,
	\item there is only one non-primary-FD position $(R,A')$, and
	\item $(R,A)$ is associated with an orphan variable $x$ and $(R,A')$ with a liaison variable $y$,
\end{enumerate}
then $\prob{RelFreq}(\dep',Q')$ is Cook reducible to $\prob{RelFreq}(\dep,Q)$ for some $\dep'$ and an SJFCQ $Q'$ that is not $\dep'$-safe, where $n-1$ atoms of $Q'$ are non-single-key atoms w.r.t.~$\dep'$.
}

\begin{lemma}\label{lemma:red_third}
	\lemmaredthird
\end{lemma}

\def\lemmaredfourth{
If for the $R$-atom of $Q$ for some relation name $R$ it holds that
\begin{enumerate}
	\item $\dep_R$ has a single FD $R:X\ra\set{A}$,
	\item there is only non-primary-FD position $(R,A')$, and
	\item the position $(R,A)$ is associated with an orphan variable $x$ and $(R,A')$ with a constant $c$,
\end{enumerate}
then $\prob{RelFreq}(\dep',Q')$ is Cook reducible to $\prob{RelFreq}(\dep,Q)$ for some $\dep'$ and an SJFCQ $Q'$ that is not $\dep'$-safe, where $n-1$ atoms of $Q'$ are non-single-key atoms w.r.t.~$\dep'$.
}

\begin{lemma}\label{lemma:red_fourth}
	\lemmaredfourth
\end{lemma}

\def\lemmaredfifth{
If for the $R$-atom of $Q$ for some relation name $R$ it holds that
\begin{enumerate}
	\item $\dep_R$ has a single FD $R:X\ra\set{A}$,
	\item there are only two non-primary-FD positions $(R,A')$ and $(R,A'')$, and
	\item the position $(R,A)$ is associated with an orphan variable $x$ and $(R,A'),(R,A'')$ with the same liaison variable $y$ that occurs only in $\alpha$,
\end{enumerate}
then $\prob{RelFreq}(\dep',Q')$ is Cook reducible to $\prob{RelFreq}(\dep,Q)$ for some $\dep'$ and an SJFCQ $Q'$ that is not $\dep'$-safe, where $n-1$ atoms of $Q'$ are non-single-key atoms w.r.t.~$\dep'$.
}

\begin{lemma}\label{lemma:red_fifth} 
	\lemmaredfifth
\end{lemma}

\def\lemmaredlast{
If for the $R$-atom $\alpha$ of $Q$ for some relation name $R$ it holds that
\begin{enumerate}
	\item $\dep_R$ has two FDs $R:X\ra\set{A}$, $R:(X\cup \set{A'})\ra\set{A''}$ with the first being the primary FD,
	\item  there are no non-primary-lhs positions besides $(R,A)$, $(R,A')$, $(R,A'')$ (where $(R,A')$ and $(R,A'')$ are also non-primary-rhs positions), and
	\item the position $(R,A)$ is associated with an orphan variable $x$ and $(R,A'),(R,A'')$ with the same liaison variable $y$ that occurs only in the atom $\alpha$,
\end{enumerate}
then $\prob{RelFreq}(\dep',Q')$ is Cook reducible to $\prob{RelFreq}(\dep,Q)$ for some $\dep'$ and an SJFCQ $Q'$ that is not $\dep'$-safe, where $n-1$ atoms of $Q'$ are non-single-key atoms w.r.t.~$\dep'$.
}

\begin{lemma}\label{lemma:red_last} 
	\lemmaredlast
\end{lemma}

Having the above technical lemmas in place, we are now ready to establish the desired result concerning final pairs consisting of a set of FDs and an SJFCQ.

\begin{lemma}\label{lemma:reduction-final}
	Consider a set $\dep$ of FDs with an LHS chain, and an SJFCQ $Q$ such that $(\dep,Q)$ is final. There exists a set $\dep_{PK}$ of primary keys, and an SJFCQ $Q_{PK}$ that is not $\dep_{PK}$-safe such that $\prob{RelFreq}(\dep_{PK},Q_{PK})$ is Cook reducible to $\prob{RelFreq}(\dep,Q)$.
\end{lemma}

\begin{proof}
	The proof is by a simple induction on the number $n \geq 0$ of non-single-key atoms of $Q$.

	\medskip
	\noindent\paragraph{Base Case.}	For $n=0$, we have that every relation name that occurs in $Q$ is associated with a single key, in which case the claim follows with $\dep_{PK}=\dep$ and $Q_{PK}=Q$.
	
	\medskip
	\noindent\paragraph{Inductive Step.} For $n>0$, we know that for every non-single-key $R$-atom of $Q$ w.r.t.~$\dep$, the set $\dep_R$ is of one of six forms (Lemmas~\ref{lemma:12} - \ref{lemma:final-orphan}), and once we apply the reduction of the corresponding lemma (among Lemmas~\ref{lemma:red_first}-\ref{lemma:red_last}), we obtain a query $Q'$ that has $n-1$ non-single-key atoms w.r.t.~the FD set $\dep'$. By the inductive hypothesis, there is a set $\dep_{PK}$ of primary keys, and an SJFCQ $Q_{PK}$ that is not $\dep_{PK}$-safe such that $\prob{RelFreq}(\dep_{PK},Q_{PK})$ is Cook reducible to $\prob{RelFreq}(\dep',Q')$. The composition of the reduction from $\prob{RelFreq}(\dep_{PK},Q_{PK})$ to $\prob{RelFreq}(\dep',Q')$ and the reduction from $\prob{RelFreq}(\dep',Q')$ to $\prob{RelFreq}(\dep,Q)$ is a Cook reduction from $\prob{RelFreq}(\dep_{PK},Q_{PK})$ to $\prob{RelFreq}(\dep,Q)$, and the claim follows.
	It is important to say that in the queries $Q'$ and sets $\dep'$ of FDs obtained in Lemmas~\ref{lemma:red_first} -\ref{lemma:red_last}, all the non-single-key atoms remain in one of these six specific forms. That is, every liaison (resp., orphan) variable remains a liaison (resp., an orphan) variable.
\end{proof}

Having Lemmas~\ref{lem:final-pairs-via-rewriting}, \ref{lem:rewrite-rules-relfreq} and \ref{lemma:reduction-final} in place, it is easy to establish Theorem~\ref{the:hard-side}.
Consider a set $\dep$ of FDs with an LHS chain, and an SJFCQ $Q$ such that $Q$ is not $\dep$-safe. By Lemma~\ref{lem:final-pairs-via-rewriting}, from $(\dep,Q)$ we can reach a pair $(\dep',Q')$ that is final via a rewriting sequence. Thus, by Lemma~\ref{lem:rewrite-rules-relfreq} and the fact that Cook reductions compose, we get that $\prob{RelFreq}(\dep',Q')$ is Cook reducible to $\prob{RelFreq}(\dep,Q)$. Since, by Lemma~\ref{lemma:reduction-final}, there exists a set $\dep_{PK}$ of primary keys and a query $Q_{PK}$ that is not $\dep_{PK}$-safe such that $\prob{RelFreq}(\dep_{PK},Q_{PK})$ is Cook reducible to $\prob{RelFreq}(\dep',Q')$, by composing reductions, we get that $\prob{RelFreq}(\dep_{PK},Q_{PK})$ is Cook reducible to $\prob{RelFreq}(\dep,Q)$, and the claim follows.
\section{Approximate Counting}\label{sec:apx-counting}

We now focus on approximate counting. As already discussed in Section~\ref{sec:problem-definition}, the ultimate goal is whenever $\sharp \prob{Repairs}(\dep)$ or $\sharp \prob{Repairs}(\dep,Q)$ is intractable, where $\dep$ is a set of FDs and $Q$ an SJFCQ, to classify it as approximable (i.e., it admits an FPRAS), or as inapproximable (i.e., it does not admit an FPRAS, possibly under some widely accepted complexity assumptions).
However, as we explain below, obtaining an approximability/inapproximability dichotomy is a highly non-trivial task that remains open. We provide results that form crucial steps towards such a dichotomy. We show that FDs with an LHS chain (up to equivalence) form an island of approximability. On the other hand, unlike primary keys, there are simple cases for which the existence of an FPRAS is not guaranteed, unless $\text{\rm NP} \subseteq \text{\rm BPP}$; \text{\rm BPP} is the class of decision problems that are efficiently solvable via a randomized algorithm with a bounded two-sided error~\cite{ArBa09}.

\def\thmapxcount{
	The following hold:
	\begin{enumerate}
		\item Consider a set $\dep$ of FDs with an LHS chain (up to equivalence), and a CQ $Q$. $\sharp \prob{Repairs}(\dep,Q)$ admits an FPRAS.
		\item 	Let $\dep = \{R : A_1 \ra A_2, R: A_3 \ra A_4\}$, where $(A_1,A_2,A_3,A_4)$ is the tuple of attribute names of $R/4$. $\sharp \prob{Repairs}(\dep)$ does not admit an FPRAS, unless $\text{\rm NP} \subseteq \text{\rm BPP}$.
		\item Let $\dep = \{R : A_1 \ra A_2, R: A_3 \ra A_4\}$, where $(A_1,A_2,A_3,A_4)$ is the tuple of attribute names of $R/4$. For every SJFCQ $Q$, $\sharp \prob{Repairs}(\dep,Q)$ does not admit an FPRAS, unless $\text{\rm NP} \subseteq \text{\rm BPP}$.
	\end{enumerate}
}

\begin{theorem}\label{the:apx-main-result}
	\thmapxcount
\end{theorem}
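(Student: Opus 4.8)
The plan is to use the blocktree machinery from Section~\ref{sec:lhs-chain-fds} to reduce the problem to sampling repairs almost uniformly, and then apply a standard Monte-Carlo estimator. First I would observe that, by Proposition~\ref{pro:counting-repairs-lhs-fp}, the total number of repairs $\card{\rep{D}{\dep}}$ is computable exactly in polynomial time; moreover the recursive structure of the blocktree (select all children at even levels, one child at each odd level, take the union of leaf labels) gives a natural way to sample a repair \emph{uniformly at random} in polynomial time — at each odd-level node we pick one child, but we must weight that choice by the number of repairs in the subtree rooted at that child, and these subtree counts are exactly what the counting recursion of~\cite{LiKW21} computes. So I get an exact uniform sampler over $\rep{D}{\dep}$. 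Given the uniform sampler, the estimator for $\card{\rep{D}{\dep,Q}}$ is: draw $N$ repairs, let $\hat p$ be the fraction that entail $Q$ (checkable in polynomial time since $Q$ is fixed), and output $\hat p \cdot \card{\rep{D}{\dep}}$. The obstacle here, and the reason this only yields an FPRAS rather than exact counting, is that $\hat p$ may be exponentially small, so naive Monte-Carlo does not give a multiplicative approximation with polynomially many samples.

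To fix this I would use the standard self-reducibility / union-of-events trick. Since $Q$ is a fixed Boolean CQ with $k$ atoms, $D \models_{D'} Q$ iff some homomorphism image of $Q$ is contained in $D'$; there are polynomially many (in $\card{\adom{D}}$, with exponent $|\var{Q}|$) candidate homomorphism images $h(Q)$, each a fixed-size consistent set of facts $F_h \subseteq D$. Then $\rep{D}{\dep,Q} = \bigcup_h \{D' \in \rep{D}{\dep} : F_h \subseteq D'\}$, a union of polynomially many sets. For each $h$, $\card{\{D' \in \rep{D}{\dep} : F_h \subseteq D'\}}$ is again computable exactly in polynomial time (this is the $\sharp\prob{Repairs}(\dep,F_h)$ quantity discussed in the commented-out Lemma of Section~\ref{sec:lhs-chain-fds}: restrict the blocktree to leaves whose labels are compatible with $F_h$ and re-run the count), and one can sample uniformly from that set. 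Applying the Karp--Luby coverage estimator to this union of polynomially many efficiently-countable, efficiently-sampleable sets yields an FPRAS for $\card{\rep{D}{\dep,Q}}$. The only genuinely delicate point is verifying that the ``repairs containing $F_h$'' count and sampler go through for arbitrary (not self-join-free) $Q$ — but $F_h$ is just a set of facts, so the blocktree argument is insensitive to self-joins.

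**Part (2): inapproximability for $\dep = \{R:A_1\to A_2,\ R:A_3\to A_4\}$.** The plan is gap amplification. I would first exhibit a single instance-level gap: a polynomial-time construction that, from an input to an \text{\rm NP}-hard decision problem (e.g.\ a SAT instance, or directly $3$-colorability / independent-set in a graph), builds a database $D$ such that $\card{\rep{D}{\dep,Q}}$ is one value when the answer is ``yes'' and a strictly smaller value when ``no'' — even a gap of the form ``$\ge c_1$ vs.\ $\le c_2$'' with $c_1 > c_2$ constant-ratio, or ``nonzero vs.\ zero'', suffices as a seed. The schema $R/4$ with $A_1 \to A_2$ and $A_3 \to A_4$ is exactly the structure whose repairs are choices of one fact per ``$A_1$-block'' intersected with one fact per ``$A_3$-block'', i.e.\ it encodes a grid/product of independent choices that is rich enough to simulate, say, matchings or independent sets in a graph via the two independent FD conflict relations; this is the same structure the paper elsewhere connects to counting maximal matchings in bipartite graphs. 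For the query $Q$, since the statement says \emph{every} SJFCQ $Q$ is inapproximable, I would handle $Q$ generically: using Lemma~\ref{lem:cook-reduction}, $\sharp\prob{Repairs}(\dep)$ reduces to $\sharp\prob{Repairs}(\dep,Q)$ for any SJFCQ $Q$ by padding $D$ with a consistent witness subdatabase $D_Q$ for $Q$ sharing no domain values with the gadget part; then an FPRAS for $\sharp\prob{Repairs}(\dep,Q)$ would yield an FPRAS for $\sharp\prob{Repairs}(\dep)$, so it suffices to prove the latter (or its decision-gap version) inapproximable.

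Then I would amplify: given the seed gap, take $m$ disjoint copies of the gadget (on disjoint domains) so that the repair count becomes the $m$-th power of the single-copy count, turning a constant-ratio gap into a ratio of $(c_1/c_2)^m$, which for $m$ polynomial in the input size blows the gap up past any polynomial factor $1+\epsilon$; an FPRAS with parameter $\epsilon$ set appropriately would then decide the underlying \text{\rm NP}-hard problem with bounded error in randomized polynomial time, giving $\text{\rm NP} \subseteq \text{\rm BPP}$. The main obstacle I anticipate is designing the seed gadget so that (a) it genuinely uses \emph{two} independent single-attribute FDs (a single FD over $R$ with an LHS chain would be in the island of approximability, so the hardness must exploit the non-chain structure $A_1 \to A_2$ together with $A_3 \to A_4$), and (b) the ``yes''/``no'' repair counts differ multiplicatively and not just additively, since additive gaps do not amplify under disjoint union. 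Encoding an \text{\rm NP}-hard counting-with-gap problem — plausibly via a reduction through (monotone) SAT or through independent sets, exploiting that the two FD-conflict relations over $R$ let one relation pin ``variable choices'' and the other enforce ``clause satisfaction'' — is where the real work lies; once a constant multiplicative gap is in hand, the amplification and the conclusion $\text{\rm NP}\subseteq\text{\rm BPP}$ are routine.
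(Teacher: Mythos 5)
Your treatment of item (1) is essentially the paper's proof: recast $\card{\rep{D}{\dep,Q}}$ as the size of the union $\bigcup_{H}\rep{D}{\dep,H}$ over the consistent homomorphic images $H$ of $Q$ in $D$, observe that each $\rep{D}{\dep,H}$ equals $\rep{D\setminus D'}{\dep}$ for the polynomial-time computable set $D'$ of facts conflicting with $H$ (so it can be counted exactly and sampled uniformly via the blocktree recursion), and invoke the Karp--Luby union-of-sets estimator; your observation that naive Monte Carlo fails because the relative frequency can be exponentially small is also made, and proved, in the paper.

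For item (2) there is a genuine gap, and it sits exactly where you admit ``the real work lies'': the seed. Your amplification step is sound --- disjoint copies multiply repair counts, so a constant multiplicative ratio between yes- and no-counts becomes super-polynomial --- and it is a legitimate alternative to the paper's in-reduction amplification via a clause-replication factor $k$. But the existence of a polynomial-time reduction from a \emph{plain} NP-hard problem to databases with a constant \emph{multiplicative} gap in $\card{\rep{D}{\dep}}$ cannot be taken for granted, and the natural attempt fails. With the encoding you sketch (one FD pins variable values, the other enforces clause satisfaction), the repair count comes out as roughly $\sum_\tau 2^{c_\tau}$, where $\tau$ ranges over all $2^n$ truth assignments and $c_\tau$ is the number of clauses $\tau$ satisfies. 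For a satisfiable 3CNF this is at least $2^m$, but for an unsatisfiable one it can still be as large as $2^n\cdot 2^{m-1}$: the exponentially many near-satisfying assignments swamp the loss of a single clause, and the yes/no ratio is $2^{1-n}<1$, i.e., no gap at all. The paper escapes this by not starting from plain SAT: it starts from the promise problem $\mathsf{Gap3SAT}_\gamma$ (NP-complete by Hastad's 3-bit PCP theorem), whose no-instances have every assignment satisfying at most a $(\frac{7}{8}+\gamma)$-fraction of clauses, so that after replicating each clause $k$ times the no-count is at most $2^n\cdot 2^{(\frac{7}{8}+\gamma)km}$ against a yes-count of at least $2^{km}$, and a suitable constant $k$ (using $n\le 3m$) pushes the ratio past $\frac{1+\epsilon}{1-\epsilon}$. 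So the missing idea is that the multiplicative gap must be imported from a hardness-of-approximation result rather than manufactured by the gadget. Two smaller points: the ``nonzero vs.\ zero'' fallback you mention is unavailable, since $\card{\rep{D}{\dep}}\ge 1$ for every database and, after applying Lemma~\ref{lem:cook-reduction} as you correctly do to handle arbitrary SJFCQs, you are precisely in the no-query setting; and your requirement that the seed gadget exploit both FDs $A_1\ra A_2$ and $A_3\ra A_4$ is right and matches the paper's construction.
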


Notice that item (1) holds for arbitrary CQs (even with self-joins); in fact, we can easily extend item (1) to unions of CQs. 
Let us also observe that, due to Lemma~\ref{lem:cook-reduction}, to establish item (3) it suffices to establish item (2).
The rest of the section is devoted to discussing the proofs of items (1) and (2) of Theorem~\ref{the:apx-main-result}. But let us first briefly discuss the difficulty underlying a pure dichotomy result. In our discussion, we focus on the problem $\sharp \prob{Repairs}(\dep,Q)$, where $\dep$ is a set of FDs and $Q$ a SJFCQ, but it will be clear that the same argument applies also for the problem $\sharp \prob{Repairs}(\dep)$.

\medskip
\noindent
\paragraph{The Difficulty Underlying a Dichotomy.}
Let $\sharp\prob{MaxMatch}$ be the counting problem that accepts as input a bipartite graph $G$, and asks for the number of maximal matchings in $G$. It is known that $\sharp\prob{MaxMatch}$ is $\sharp\text{\rm P}$-complete~\cite{Vadhan01}, whereas the question whether it admits an FPRAS is a challenging open problem~\cite{JiRa18}.
Due to item (1) of Theorem~\ref{the:apx-main-result}, to get a dichotomy result it suffices to classify, for every set $\dep$ of FDs {\em without} an LHS chain (up to equivalence), and an SJFCQ $Q$, $\sharp \prob{Repairs}(\dep,Q)$ as approximable or inapproximable. \rev{We proceed to show the following result, which essentially states that an approximability/inapproximability dichotomy for $\sharp \prob{Repairs}(\dep,Q)$ will resolve the open question whether $\sharp\prob{MaxMatch}$ admits an FPRAS.}

\rev{\begin{proposition}\label{prop:maxmatch_fpras}
    The following are equivalent.
    \begin{enumerate}
        \item There exists a set $\dep$ of FDs without an LHS chain (up to equivalence), and an SJFCQ $Q$ such that $\sharp \prob{Repairs}(\dep,Q)$ admits an FPRAS.
        \item $\sharp\mathsf{MaxMatch}$ admits an FPRAS.
    \end{enumerate}
\end{proposition}}

\begin{proof}
\rev{
$(1) \Rightarrow (2)$. Assume that there is a set $\dep$ of FDs without an LHS chain (up to equivalence), and an SJFCQ $Q$ such that $\sharp \prob{Repairs}(\dep,Q)$ admits an FPRAS. We know from~\cite{LiKW21} that $\sharp \prob{Repairs}(\dep)$ is $\sharp$P-hard, which is shown via a polynomial-time parsimonious reduction from $\sharp\mathsf{MaxMatch}$. 
We further know, by Lemma~\ref{lem:cook-reduction}, that there is a polynomial-time parsimonious reduction from $\sharp \prob{Repairs}(\dep)$ to $\sharp \prob{Repairs}(\dep,Q)$. Therefore, by composing reductions, we get a polynomial-time parsimonious reduction from $\sharp\mathsf{MaxMatch}$ to $\sharp \prob{Repairs}(\dep,Q)$. But since, by hypothesis, $\sharp \prob{Repairs}(\dep,Q)$ admits an FPRAS, we conclude that $\sharp\mathsf{MaxMatch}$ also admits an FPRAS.}

\rev{
$(2) \Rightarrow (1)$. Assume now that, for every set $\dep$ of FDs without an LHS chain (up to equivalence), and an SJFCQ $Q$, $\sharp \prob{Repairs}(\dep,Q)$ does not admit an FPRAS. Therefore, this will be also the case for $\dep^\star = \{R: A \ra B, R: B \ra A\}$ with $(A,B)$ being the tuple of attribute names of $R/2$, and every SJFCQ $Q^\star$ that is trivially entailed by every repair. It is implicit, however, in~\cite{LiKW21} that there exists a polynomial-time parsimonious reduction from $\prob{Repairs}(\dep^\star,Q^\star)$ to $\sharp\mathsf{MaxMatch}$, which in turn implies that $\sharp\mathsf{MaxMatch}$ does not admit an FPRAS.}
\end{proof}

\subsection{Approximability Result}

We proceed with item (1) of Theorem~\ref{the:apx-main-result}: given a set $\dep$ of FDs with an LHS chain (up to equivalence), and a CQ $Q$, $\sharp \prob{Repairs}(\dep,Q)$ admits an FPRAS. This is done by exploiting classical results from~\cite{Karp1989}.

\medskip
\noindent
\paragraph{FPRAS via Union of Sets.}
We observe that $\sharp \prob{Repairs}(\dep,Q)$ is an instantiation of the so-called {\em union of sets} problem~\cite{Karp1989}, which takes as input a succinct encoding of $n \geq 1$ sets $S_1,\ldots,S_n$, and asks for $| \bigcup_{i \in [n]} S_i|$. \rev{Let us first recall the formal definition of the union of sets problem; as usual, for a set $S$, we write $\mathcal{P}(S)$ for the powerset of $S$.}
\rev{Consider some alphabet $\Gamma$ (i.e., a finite set of symbols). A \emph{union of sets specification} is a pair $\mathcal{U} = (f,\mu)$, where $f$ is a function $\Gamma^* \rightarrow \mathbb{N}$, and $\mu$ is a function $\Gamma^* \times \mathbb{N} \rightarrow \mathcal{P}(\Gamma^*)$. Roughly speaking, using a string $x \in \Gamma^*$ we can succinctly encode $n = f(x)$ sets $S_1,\ldots,S_n$, and each set $S_i$ coincides with $\mu(x,i)$.
}
\rev{
We are now ready to recall the union of sets problem; in what follows, let $\mathcal{U} = (f,\mu)$ be a union of sets specification:}

\rev{
\begin{center}
	\fbox{\begin{tabular}{ll}
			{\small PROBLEM} : & $\prob{UnionOfSets}(\mathcal{U})$
			\\
			{\small INPUT} : & A string $x \in \Gamma^*$
			\\
			{\small OUTPUT} : &  $\left| \bigcup_{i \in [f(x)]} \mu(x,i) \right|$
	\end{tabular}}
\end{center}
}
\medskip

\rev{Interestingly, for certain union of sets specifications, the above problem admits an FPRAS. Such union of sets specifications are the so-called tractable ones, which we now recall. 
The union of sets specification $\mathcal{U} = (f,\mu)$ is \emph{tractable} if the following conditions on $f$ and $\mu$ hold:
\begin{itemize}
	\item[-] $f$ is polynomial-time computable, and there is a polynomial function $\mathsf{pol} : \mathbb{N} \rightarrow \mathbb{N}$ such that $f(x) \le \mathsf{pol}(|x|)$, for each $x \in \Gamma^*$, and 
	\item[-] for each $x \in \Gamma^*$ and $i \in [f(x)]$, (1) $|\mu(x,i)|$ can be computed in polynomial time in $|x|$, (2) an element from $\mu(x,i)$ can be sampled uniformly at random in polynomial time in $|x|$, and (3) checking whether $y \in \mu(x,i)$, for a given $y \in \Gamma^*$, can be done in polynomial time in $|x|$.
\end{itemize}
The following is well-known approximability result:}
\def\thmunionofsetsfpras{
	For a tractable union of sets specification $\mathcal{U}$, $\prob{UnionOfSets}(\mathcal{U})$ admits an FPRAS.
}

\rev{
\begin{theorem}[\cite{Karp1989}]\label{thm:union-of-sets-fpras}
	\thmunionofsetsfpras
\end{theorem}
}

\rev{
Hence, to prove that for every set $\dep$ of FDs with an LHS chain (up to equivalence) and CQ $Q$, $\sharp \prob{Repairs}(\dep,Q)$ admits an FPRAS, it suffices to show that for every set $\dep$ of FDs with an LHS chain (up to equivalence) and CQ $Q$, there exists a tractable union of sets specification $\mathcal{U}_{\dep,Q}=(f,\mu)$ such that, for every database $D$, with $n = f(D)$, $\card{\rep{D}{\dep,Q}} = |\bigcup_{i \in [n]} \mu(D,i)|$.
}

\rev{
Fix a set $\dep$ of FDs with an LHS chain (up to equivalence) and a CQ $Q$. Now, consider the set} $\homs{Q}{D}{\dep}$  of all homomorphic images of $Q$ in $D$ that are consistent w.r.t.~$\dep$. Formally, $\homs{Q}{D}{\dep}$ is the set of sets of facts
\[
\{ h(Q) \mid h \text{ is a homomorphism from } Q \text{ to } D \text{ such that } h(Q) \models \dep\}.
\]
Clearly, if $\homs{Q}{D}{\dep}$ is empty, then $\card{\rep{D}{\dep,Q}} = 0$. On the other hand, if $\homs{Q}{D}{\dep}$ is non-empty, then $\card{\rep{D}{\dep,Q}} = \left| \bigcup_{i \in [n]} \rep{D}{\dep,H_i}\right|$ assuming that $\homs{Q}{D}{\dep} = \{H_1,\ldots,H_n\}$; by abuse of terminology, we treat each set $H_i$ as a CQ consisting of only facts.
It is not difficult to show that checking whether $\homs{Q}{D}{\dep}$ is empty or not is feasible in polynomial time in $||D||$ \rev{(recall that, in data complexity, $||Q||$ is an integer constant)}.
Therefore, \rev{w.l.o.g., we focus on the problem of computing } $|\bigcup_{i \in [n]} \rep{D}{\dep,H_i}|$ when $\homs{Q}{D}{\dep} \neq \emptyset$. \rev{The definition of the union of sets specification $\mathcal{U}_{\dep,Q}=(f,\mu)$, in this case, is straightforward. In particular, $\mathcal{U}_{\dep,Q}=(f,\mu)$ is such that, for every database $D$, assuming $\homs{Q}{D}{\dep}=\{H_1,\ldots,H_n\}$, $f(D)=n$, and $\mu(D,i) = \rep{D}{\dep,H_i}$, for each $i \in [n]$. Clearly, $\card{\rep{D}{\dep,Q}} = \left| \bigcup_{i \in [n]} \rep{D}{\dep,H_i}\right| = \left| \bigcup_{i \in [n]} \mu(D,i)\right|$. Moreover, it is easy to see that $f$ is computable in polynomial time and that $f(D)$ is bounded by a polynomial.}
From the above discussion, and from \rev{Theorem~\ref{thm:union-of-sets-fpras}}, we conclude that showing the existence of an FPRAS for $\sharp \prob{Repairs}(\dep,Q)$ boils down to showing the following result:

\def\prounionofsets{
	Consider a set $\dep$ of FDs with an LHS chain (up to equivalence), and a CQ $Q$. Given a database $D$ with $\homs{Q}{D}{\dep} \neq \emptyset$, the following hold for each $H \in \homs{Q}{D}{\dep}$:
\begin{enumerate}
	\item We can compute $\card{\rep{D}{\dep,H}}$ in polynomial time in $||D||$.
	\item We can sample elements of $\rep{D}{\dep,H}$ uniformly at random in polynomial time in $||D||$.
	\item Given a database $D'$, we can check whether $D' \in \rep{D}{\dep,H}$ in polynomial time in $||D||$.
\end{enumerate}
}

\begin{proposition}\label{pro:union-of-sets-properties}
	\prounionofsets
\end{proposition}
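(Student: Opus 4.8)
The plan is to reduce all three claims, via a simple preprocessing step, to the corresponding statements about counting, uniformly sampling, and recognizing \emph{unconstrained} repairs of a database w.r.t.\ a set of FDs with an LHS chain, and then to dispatch those using the blocktree machinery of Section~\ref{sec:lhs-chain-fds}. First I would replace $\dep$ by its canonical cover, which is computable in polynomial time, has a genuine LHS chain, and -- being equivalent to $\dep$ -- induces the same repairs and the same set $\homs{Q}{D}{\dep}$; so $\dep$ may be assumed canonical. Fix $H \in \homs{Q}{D}{\dep}$; by definition $H \subseteq D$ and $H \models \dep$. Let $\hat{D}$ be obtained from $D$ by deleting every fact $f \in D \setminus H$ that conflicts with some fact of $H$, i.e.\ with $\{f,g\} \not\models \dep$ for some $g \in H$; clearly $\hat{D}$ is polynomial-time computable and $H \subseteq \hat{D}$. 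The key observation is that $\rep{D}{\dep,H} = \rep{\hat{D}}{\dep}$. Indeed, every $D'' \in \rep{\hat{D}}{\dep}$ contains $H$: if $f \in H \setminus D''$, then $D'' \cup \{f\}$ is still consistent, because every fact of $\hat{D} \setminus H$ is compatible with every fact of $H$ and the facts of $H$ are mutually compatible, contradicting maximality; and such a $D''$ is maximal in $D$ too, since any $f \in D \setminus \hat{D}$ conflicts with a fact of $H \subseteq D''$. The reverse inclusion is immediate.

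Thus it suffices to show, for an arbitrary database $E$ and a canonical set $\dep$ of FDs with an LHS chain, that (i) $\card{\rep{E}{\dep}}$ is computable in polynomial time, (ii) one can sample from $\rep{E}{\dep}$ uniformly at random in polynomial time, and (iii) membership in $\rep{E}{\dep}$ is decidable in polynomial time. Claim (i) is exactly Proposition~\ref{pro:counting-repairs-lhs-fp}, which (after the reduction above) gives part~1. Claim (iii) is routine, and gives part~3: $D' \in \rep{D}{\dep,H}$ iff $H \subseteq D'$, $D' \subseteq D$, $D' \models \dep$ (check every FD against every pair of facts of $D'$), and $D' \cup \{f\} \not\models \dep$ for every $f \in D \setminus D'$ (for each such $f$, test whether it conflicts with some fact of $D'$) -- all decidable in polynomial time in $||D||$.

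For part~2 I would turn the recursive counting procedure underlying Proposition~\ref{pro:counting-repairs-lhs-fp} into a sampler. Since a repair of $E$ restricts, on each relation name $R$, to a repair of $E_R$ w.r.t.\ $\dep_R$, and conversely every choice of one repair per relation name unions to a repair of $E$, it suffices to sample from $\rep{E_R}{\dep_R}$ for each $R$ independently. Fixing $R$ and an LHS chain $\Lambda$ of $\dep_R$, I would build the $(R,\Lambda)$-blocktree $T$ of $E_R$ (Definition~\ref{def:blocktree}; it has polynomial size) and compute, bottom-up, for each node $v$ the number $N(v) = \card{\rep{\lambda(v)}{\dep_R}}$ via $N(v) = 1$ for a leaf, $N(v) = \prod_{u \in \child{v}} N(u)$ for $v$ at an even level, and $N(v) = \sum_{u \in \child{v}} N(u)$ for $v$ at an odd level. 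These are exactly the recurrences that make the counting correct: by the observation following Definition~\ref{def:blocktree}, distinct children of an even-level node carry label sets that are pairwise mutually consistent, so the repairs chosen below them combine freely and tuples of child-repairs biject with repairs of $\lambda(v)$; whereas distinct children of an odd-level node carry pairwise mutually inconsistent label sets, so each repair of $\lambda(v)$ lies entirely below exactly one child and $\rep{\lambda(v)}{\dep_R}$ is the disjoint union of the child repair sets. Consequently the standard self-reducibility sampler is uniform: descend from the root, recursing independently into every child at an even-level node and into a single child $u$ chosen with probability $N(u)/\sum_{u' \in \child{v}} N(u')$ at an odd-level node, and output the union of the leaf labels reached. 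All the $N(v)$ are precomputed once in polynomial time, after which each sample costs polynomial time; combining the per-relation samplers yields a uniform sampler for $\rep{E}{\dep}$, hence (by the reduction) for $\rep{D}{\dep,H}$.

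The only step needing real care is the correctness of the part~2 sampler: one must verify that the product decomposition at even-level nodes and the disjoint-union decomposition at odd-level nodes of the blocktree are genuine bijections, so that the weighted top-down descent realizes the uniform distribution. This is precisely the content of the observation recalled above, and it is the same structural fact that makes the polynomial-time counting of Proposition~\ref{pro:counting-repairs-lhs-fp} work; the $\hat{D}$ reduction and the membership test are straightforward.
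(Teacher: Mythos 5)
Your proposal is correct and follows essentially the same route as the paper: the reduction $\rep{D}{\dep,H}=\rep{\hat{D}}{\dep}$ with $\hat{D}=D\setminus\{f \mid H\cup\{f\}\not\models\dep\}$ is exactly the paper's argument for items (1) and (3), and your blocktree-based self-reducibility sampler (product/free combination at even levels, weighted disjoint-union choice at odd levels) is precisely the procedure $\mathsf{RSample}$ used in the appendix for item (2).
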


\textit{Proof (Sketch).} We prove the rather easy items (1) and (3), whereas for item (2) we sketch the high-level idea and the full proof is deferred to  the appendix.
	\begin{description}
		\item[Item (1).] We observe that $H$ is a set of facts such that $H \models \dep$ and $H \subseteq D$. Hence, $\rep{D}{\dep,H} \neq \emptyset$, and each of its repairs contains $H$; recall that we may treat $H$ as a CQ.
		Thus, $\rep{D}{\dep,H}$ is precisely the set of repairs of $\rep{D}{\dep}$ that do not contain any fact $f \in D$ such that $H \cup \{f\} \not \models \dep$, i.e., $\rep{D}{\dep,H} = \rep{D \setminus D'}{\dep}$, where $D' = \{f \in D \mid H \cup \{f\} \not \models \dep\}$. The database $D'$ is clearly computable in polynomial time in $||D||$, and thus, by Proposition~\ref{pro:counting-repairs-lhs-fp}, we can compute $\card{\rep{D}{\dep,H}} = \card{\rep{D \setminus D'}{\dep}}$ in polynomial time in $||D||$, as needed.
		
		\item[Item (2).] We first show that we can sample elements of $\rep{D}{\dep}$ uniformly at random in polynomial time in $||D||$.
		In particular, we show that there is a randomized algorithm $\mathsf{Sample}$ that takes as input $D$ and $\dep$, runs in polynomial time in $||D||$, and produces a random variable $\mathsf{Sample}(D,\dep)$ such that 
		$
		\Pr( \mathsf{Sample}(D,\dep) = D' ) = \frac{1}{\card{\rep{D}{\dep}}},
		$
		for every $D' \in \rep{D}{\dep}$.
		This exploits the fact that $\card{\rep{D}{\dep}}$ can be computed in polynomial time in $||D||$, and that for each relation name $R$, the set $\rep{D_R}{\dep_R}$ has a convenient recursive definition based on the $(R,\Lambda)$-blocktree of $D$ w.r.t.~$\dep$, where $\Lambda$ is the LHS chain of $\dep_R$; we can assume that $\dep$ is canonical, and thus, $\Lambda$ is well-defined.
		%
		Now, by using the same idea as for item (1), we convert the sampler for $\rep{D}{\dep}$ into a sampler for $\rep{D}{\dep,H}$. The details are in the appendix.
		
		\item[Item (3).] This item holds trivially since checking whether $D' \in \rep{D}{\dep,H}$ boils down to checking whether $D' \in \rep{D}{\dep}$ and $H \subseteq D'$. \qed
	\end{description}

\rev{As previously discussed, our blocktrees align with the cotrees of cographs. Consequently, the proof of point (2) of Proposition~\ref{pro:union-of-sets-properties} establishes that we can efficiently sample, uniformly at random, maximal independent sets from a cograph.}

\medskip
\noindent \paragraph{Monte Carlo Sampling.}
At this point, one may wonder whether Monte Carlo sampling would lead to an FPRAS for $\sharp \prob{Repairs}(\dep,Q)$. In other words, given a database $D$, $\epsilon > 0$, and $0 < \delta < 1$, sample uniformly at random, by using the sampler for $\rep{D}{\dep}$ discussed above, a certain number $N$ of repairs from $\rep{D}{\dep}$, and output the number $\frac{S}{N} \cdot \card{\rep{D}{\dep}}$, where $S$ is the number of sampled repairs that entail $Q$. For this approach to lead to an FPRAS, the number of samples $N$ must be large enough to provide the desired error and probabilistic guarantees, but should be also bounded by a polynomial of $||D||$ in order to ensure that the running time is polynomial.
To this end, we need a polynomial $\mathsf{pol}$ such that, for every database $D$, $\rfreq{Q}{D,\dep} \ge \frac{1}{\mathsf{pol}(||D||)}$ whenever $\rfreq{Q}{D,\dep} > 0$. However, we can show that this is not always the case: 

\begin{theorem}\label{the:rfreq-bound}
	There exists a set of FDs $\dep$ with an LHS chain, an atomic CQ $Q$, and a family of databases $\{D_n\}_{n > 0}$ with $|D_n| = 2n + 1$ such that $\rfreq{Q}{D_n,\dep} = \frac{1}{2^n+1}$.
\end{theorem}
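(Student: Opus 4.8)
The plan is to construct an explicit single-relation schema with a two-FD LHS chain, an atomic Boolean query, and a family of databases $\{D_n\}_{n>0}$ that mimics the structure used in the inapproximability proof (the interplay between the "variable-assignment" FD and the "literal-value" FD), but tuned so that exactly one repair out of $2^n+1$ entails the query. Concretely, I would take a ternary relation $R/3$ with attributes $(A,B,C)$ and the FD set $\dep = \{R : A \ra B,\ R : \{A,B\} \ra C\}$, which trivially has an LHS chain since $\{A\} \subsetneq \{A,B\}$; indeed this is essentially the canonical cover of a key on $\{A,B\}$ together with a key on $\{A\}$. The database $D_n$ will consist of $n$ independent "gadgets" $G_1,\dots,G_n$, each over a distinct $A$-value $a_i$ and each contributing two facts of the form $R(a_i,0,0)$ and $R(a_i,1,\star)$ for some fixed constant $\star$, plus one extra "conflict" fact $f^\ast = R(\star,\star,1)$. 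The query is the atomic CQ $Q = \mathrm{Ans}()\ \text{:-}\ R(\star,\star,1)$, which holds precisely when $f^\ast$ survives in a repair.

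First I would analyze the repair structure. Because the $A$-values $a_1,\dots,a_n$ are pairwise distinct, the blocks of $D_n$ w.r.t.\ $R:A\ra B$ are the $n$ gadgets together with the singleton block $\{f^\ast\}$, and these blocks form a $\dep_R$-independent partition (Lemma~\ref{lem:partition-repairs}), so $\card{\rep{D_n}{\dep}}$ multiplies across gadgets and the $f^\ast$-block. Within a gadget $G_i$, the two facts $R(a_i,0,0)$ and $R(a_i,1,\star)$ conflict on $R:A\ra B$ (same $A$, different $B$), so each gadget contributes exactly two repairs: "pick the $B=0$ fact" or "pick the $B=1$ fact". The key point is the fact $f^\ast = R(\star,\star,1)$: it has $A$-value $\star$, $B$-value $\star$; it conflicts with a gadget fact only via $R:\{A,B\}\ra C$, i.e.\ only with a fact that also has $A=\star$ and $B=\star$ but $C\neq 1$. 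So I arrange the gadgets so that exactly the fact $R(a_i,1,\star)$ — hmm, I need the $\star$ to appear in the $A,B$ positions of a gadget fact, so I would instead make the gadget facts be $R(\star,\star,0)$ together with $R(a_i,1,1)$ or similar; the precise values need a small amount of fiddling. The design goal is: $f^\ast$ is consistent with a gadget-repair choice in exactly one of the two choices for each gadget, so $f^\ast$ survives in a repair iff every gadget made the "good" choice — giving $1$ surviving-$f^\ast$ repair out of $2^n$ gadget-choice combinations, plus the repairs where $f^\ast$ is excluded. Counting: $\card{\rep{D_n}{\dep}} = 2^n + 1$ (the $2^n$ combinations where $f^\ast$ is kept-or-dropped according to consistency, reorganized), and $\card{\rep{D_n}{\dep,Q}} = 1$, hence $\rfreq{Q}{D_n,\dep} = \tfrac{1}{2^n+1}$ with $|D_n| = 2n+1$.

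Then I would verify the two numerical claims carefully: that $|D_n| = 2n+1$ (two facts per gadget plus $f^\ast$), and the exact repair counts. The cleanest route for the counts is to use the blocktree/partition machinery: after fixing the gadget choices, whether $f^\ast$ can be added is determined, so the set of repairs splits into (a) the unique choice-combination under which $f^\ast$ is kept (1 repair containing $f^\ast$), versus (b) all $2^n$ combinations with $f^\ast$ dropped — but care is needed because maximality forces $f^\ast$ to be included whenever it is consistent. So the honest count is: $2^n$ gadget-choice combinations, in $2^n - 1$ of which $f^\ast$ is inconsistent with the chosen facts (dropped, giving $2^n-1$ repairs without $f^\ast$), in $1$ of which $f^\ast$ is consistent and hence forced in by maximality (1 repair with $f^\ast$), plus — and here is where the "$+1$" must come from — I actually want the gadget design to yield, per gadget, a binary choice that only sometimes conflicts with $f^\ast$, engineered so the total is $2^n+1$. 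The main obstacle is precisely getting these arithmetic constraints to line up exactly: I expect to need a slightly asymmetric gadget (e.g.\ one of the two gadget facts introduces a fresh conflict-partner for $f^\ast$, the other does not), and possibly one auxiliary fact that is always kept, so that the repair count is $2^{n}+1$ rather than $2^n$ or $2^{n+1}$. Once the gadget is pinned down, the verification that $\dep$ has an LHS chain, that $Q$ is atomic, that $|D_n|=2n+1$, and that the frequency equals $1/(2^n+1)$ is routine.
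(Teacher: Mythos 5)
Your proposal does not yet contain a working construction, and the two places where you yourself flag ``fiddling'' are exactly where it breaks. First, with pairwise distinct $A$-values $a_1,\dots,a_n$ for the gadgets and a fresh value $\star$ for $f^\ast$, the fact $f^\ast$ lies in its own block w.r.t.\ $R:A\ra B$ and therefore disagrees on $A$ with every gadget fact; since both of your FDs have $A$ on the left-hand side, $f^\ast$ conflicts with nothing, survives in every repair, and the relative frequency is $1$, not $1/(2^n+1)$. Your attempted repair of this --- putting $\star$ into the $A,B$ positions of some gadget facts --- collapses those facts across gadgets (databases are sets) and merges the gadgets into a single block, destroying both the $|D_n|=2n+1$ count and the independent-factor-of-$2$ analysis. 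Second, even granting a fixed-up gadget in which $f^\ast$ conflicts with exactly one choice per gadget, your own honest count shows the mechanism yields $2^n$ repairs (each of the $2^n$ gadget configurations extends by maximality to exactly one repair, with $f^\ast$ forced in or out), so the frequency would be $1/2^n$ and the total would be $2^n$, not $2^n+1$; you leave the ``$+1$'' unresolved.

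The paper's construction gets the ``$+1$'' from a structurally different place, and it is worth seeing why. All $2n+1$ facts share the \emph{same} value in the first attribute, so the FD $R:A_1\ra A_2$ forces every repair to agree on the second attribute; the special fact is the unique fact with second attribute $a$ while the $2n$ gadget facts all have second attribute $b$. Hence the repair set is a disjoint union over a $\dep$-conflicting split at the top level: the singleton repair $\{R(a,a,a,a)\}$ (the only one entailing the query $\mathrm{Ans}()\ \text{:-}\ R(x,x,y,z)$) plus the $2^n$ repairs obtained by resolving, via the second FD $R:\{A_1,A_3\}\ra A_4$, the binary conflict within each of the $n$ pairs $\{R(a,b,c_i,d_1),R(a,b,c_i,d_2)\}$. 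This yields $2^n+1$ repairs and frequency exactly $1/(2^n+1)$ with no case analysis on whether the special fact is ``forced in.'' If you want to salvage your approach, the minimal change is to make $f^\ast$ conflict with \emph{all} gadget facts via the first FD (so it can only appear alone), rather than with one configuration via the second FD --- at which point you have essentially reconstructed the paper's proof.
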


\begin{proof}
	Let $R$ be a relation name with attributes $(A_1,A_2,A_3,A_4)$. We define the set of FDs
	\[
	\dep\ =\ \{R: A_1 \ra A_2, R: \{A_1, A_3\} \ra A_4\},
	\]
	which clearly has an LHS chain, and the atomic Boolean query
	\[
	Q\ =\ \textrm{Ans}() \ \text{:-}\  \exists x \exists y \exists z\, R(x,x,y,z).
	\]
	For each $n>0$, we define $D_n$ as the following database:
	\[
	\{R(a,a,a,a)\} \cup \{R(a,b,c_i,d_1),R(a,b,c_i,d_2)\}_{i \in [n]},
	\]
	where $a$, $b$, $d_1$, $d_2$, and $c_i$, for $i \in [n]$, are distinct constants. Clearly, the database $D_n$ contains $2n+1$ facts. Moreover, it is not difficult to verify that the set of repairs $\rep{D_n}{\dep}$ contains only the database $D' = \{R(a,a,a,a)\}$, and a database of the form
	\[
	\{R(a,b,c_i,d_{\mu(i)})\}_{i \in [n]},
	\]
	for each function $\mu : [n] \rightarrow \{1,2\}$. Hence, $\card{\rep{D_n}{\dep}} = 2^n+1$. Moreover, $D'$ is the only repair in $\rep{D_n}{\dep}$ such that $D' \models Q$. Therefore, $\rfreq{Q}{D_n,\dep} = \frac{1}{1 + 2^n}$, and the claim follows.
\end{proof}

\subsection{Inapproximability Result}

We now discuss the proof of item (2) of Theorem~\ref{the:apx-main-result}. Recall that by proving item (2), we immediately get item (3) of Theorem~\ref{the:apx-main-result} due to Lemma~\ref{lem:cook-reduction}.
The proof employs the technique of {\em gap amplification} used to prove inapproximability of optimization problems. The idea is to consider an \text{\rm NP}-complete decision problem $\Pi$, and define a polynomial-time computable function $\mathsf{db}$ that maps instances of $\Pi$ to databases such that the following holds: for any \emph{yes} instance $I_Y$ of $\Pi$, and any \emph{no} instance $I_N$ of $\Pi$, the ``gap'' between the numbers $\card{\rep{\mathsf{db}(I_Y)}}{\dep}$ and $\card{\rep{\mathsf{db}(I_N)}}{\dep}$ is so large that an FPRAS for $\sharp \prob{Repairs}(\dep)$ could be used, by employing a small enough error $\epsilon$, to distinguish between \emph{yes} and \emph{no} instances of $\Pi$ with high probability, and thus, placing the \text{\rm NP}-complete problem $\Pi$ in \text{\rm BPP}.

\medskip
\noindent
\paragraph{The GapSAT Problem.} We are going to exploit the problem $\mathsf{Gap3SAT}_\gamma$ for some fixed $\gamma \in (0,\frac{1}{8})$. This is a promise problem that takes as input a Boolean formula $\varphi$ in 3CNF with exactly three literals in each clause, where each variable appears at most once in each clause, and asks whether $\varphi$ is satisfiable.
The promise is that if $\varphi$ is unsatisfiable, then every truth assignment makes at most $\frac{7}{8}+\gamma$ of the clauses of $\varphi$ true.
We know that, for $\gamma \in (0,\frac{1}{8})$, $\mathsf{Gap3SAT}_\gamma$ is \text{\rm NP}-complete. This is a consequence of H{\aa}stad's 3-bit PCP Theorem, and it is explicitly stated in~\cite[Theorem 6.5]{Has01}.

\medskip
\noindent\paragraph{From 3CNF Formulae to Databases.}
Our goal is to show, via the technique of gap amplification, that the existence of an FPRAS for $\sharp \prob{Repairs}(\dep)$ implies that $\mathsf{Gap3SAT}_\gamma$ is in $\mathsf{BPP}$, for some $\gamma \in (0, \frac{1}{8})$, which in turn establishes item (2) of Theorem~\ref{the:apx-main-result}.
To this end, we first define a polynomial-time computable function $\mathsf{db}_k$, for some $k>0$, that maps 3CNF formulas to databases. The integer $k$ is what we call the ``growing factor'', which we are going to choose later.

Consider a 3CNF formula $\varphi = C_1 \wedge \cdots \wedge C_m$, where $C_i = \ell^1_i \vee \ell^2_i \vee \ell^3_i$, for each $i \in [m]$. The intention underlying the database $\mathsf{db}_k(\varphi)$ is as follows: the number of repairs of $\mathsf{db}_k(\varphi)$ w.r.t.~$\dep$ should be proportional to the number of clauses of $\phi$ that can be satisfied by a truth assignment. We would further like to control by ``how much'' the number of repairs grows for each clause that is satisfied. Let $\var{\ell^j_i}$ be the Boolean variable of $\ell^j_i$. Moreover, for a truth value $v \in \{0,1\}$, we write $\litval{\ell^j_i}{v}$ for the truth value of $\ell^j_i$, when its variable is assigned the value $v$; for example, if $v = 0$, then $\litval{\neg x}{v} = 1$.
The database $\mathsf{db}_k(\varphi)$ is defined as the union of the two databases $D^k_T$ and $D^k_C$ over $R$ shown below.
Note that, for the sake of presentation, we rename the attribute names of $R$; $A_1$ becomes \text{\rm Var}, $A_2$ becomes \text{\rm VValue}, $A_3$ becomes \text{\rm Clause}, and $A_4$ becomes \text{\rm LValue}. Under this new naming scheme, the FDs of $\dep$ are 
\[
R : \text{\rm Var} \ra \text{\rm VValue} \qquad R : \text{\rm Clause} \ra \text{\rm LValue}.
\]
The databases $D^k_T$ and $D^k_C$  follow, where $\star$ is an arbitrary constant:

\medskip

{\renewcommand{\arraystretch}{1.2}
	\begin{center}\small
		\begin{tabular}{@{}cccccl@{}}
			\toprule
			\normalsize Var & \normalsize VValue & \normalsize Clause & \normalsize LValue & \\ \midrule
			$\var{\ell^1_i}$ & 0 & $\angletup{C^j_i,\litval{\ell^1_i}{0}}$ & $\litval{\ell^1_i}{0}$ & \rdelim\}{6}{8em}[
			{
				$i \in [m]$, $j \in [k]$
			}
			]\\
			$\var{\ell^1_i}$ & 1 & $\angletup{C^j_i,\litval{\ell^1_i}{1}}$ & $\litval{\ell^1_i}{1}$ &\\
			$\var{\ell^2_i}$ & 0 & $\angletup{C^j_i,\litval{\ell^2_i}{0}}$ & $\litval{\ell^2_i}{0}$ &\\
			$\var{\ell^2_i}$ & 1 & $\angletup{C^j_i,\litval{\ell^2_i}{1}}$ & $\litval{\ell^2_i}{1}$ &\\
			$\var{\ell^3_i}$ & 0 & $\angletup{C^j_i,\litval{\ell^3_i}{0}}$ & $\litval{\ell^3_i}{0}$ &\\
			$\var{\ell^3_i}$ & 1 & $\angletup{C^j_i,\litval{\ell^3_i}{1}}$ & $\litval{\ell^3_i}{1}$ &\\ \bottomrule	
			
			\multicolumn{5}{c}{\normalsize \textbf{The Database} $D^k_T$}\\\ \\
			
			\toprule
			\normalsize Var & \normalsize VValue & \normalsize Clause & \normalsize LValue & \\ \midrule 
			$\star$ & $\star$ & $\angletup{C^j_i,1}$ & 0 & \rdelim\}{1}{8em}[
			{
				$i \in [m]$, $j \in [k]$
			}
			]\\[1pt] \bottomrule
			\multicolumn{5}{c}{\normalsize \textbf{The Database} $D^k_C$} \\
		\end{tabular}
	\end{center}
}

Intuitively, for each clause $C_i$ of $\varphi$, the database $D^k_T$ stores $k$ copies $C^1_i,\ldots,C^k_i$ of $C_i$. Each copy is stored via 6 tuples, where, for each literal of $C_i$, its variable is stored twice, specifying one of two possibles values it can get assigned (i.e., $0$ or $1$). Moreover, the actual truth value that the literal has w.r.t.\ to the particular value given to its variable is also stored, together with the clause copy. This same literal value is replicated under the attribute \text{\rm LValue}; its purpose is explained below. The set of 6 tuples in $D^k_T$ mentioning a particular copy $C^j_i$ of a clause $C_i$ is called a \emph{$C^j_i$-gadget}.

Regarding $D^k_C$, it stores one tuple for each copy $C^j_i$ of a clause $C_i$. Each tuple in $D^k_C$ containing $C^j_i$ causes a conflict when a truth assignment of $\varphi$ ``chooses'' to assign $1$ to at least one literal of the $C^j_i$-gadget. Due to this, the number of repairs of $\mathsf{db}_k(\varphi)$ w.r.t.~$\dep$ is proportional to the number of clauses satisfied by the truth assignment, and the growing factor is controlled by $k$.

\medskip
\noindent\paragraph{Inapproximabilty via Gap Amplification.}
We now discuss that there are $\gamma \in (0,\frac{1}{8})$ and $k > 0$ such that, for every two 3CNF formulae $\varphi_Y$ and $\varphi_N$ that are {\em yes} and {\em no} instances of $\mathsf{Gap3SAT}_\gamma$, respectively, the ``gap'' between the numbers $\card{\rep{\mathsf{db}_k(\varphi_Y)}}{\dep}$ and $\card{\rep{\mathsf{db}_k(\varphi_N)}}{\dep}$ is indeed large enough that allows us to place $\mathsf{Gap3SAT}_\gamma$ in \text{\rm BPP} by exploiting an FPRAS for $\sharp \prob{Repairs}(\dep)$. This relies on a crucial property that intuitively can be described as follows.

Consider a 3CNF formula $\varphi$. The FD $R: \text{Var} \ra \text{VValue}$ of $\dep$ simply states that a Boolean variable of $\varphi$ must be assigned only one of the two values $0$ or $1$. If only this FD was considered, we would have an one-to-one correspondence between the truth assignments of $\varphi$ and the repairs of $\mathsf{db}_k(\varphi)$ w.r.t.~$\dep$. The FD $R : \text{\rm Clause} \ra \text{\rm LValue}$ forces a truth assignment $\tau$ to induce an exponential number of repairs in the number of clauses that are satisfied by $\tau$ due to the inconsistencies between the tuples of $D^k_T$ and $D^k_C$ w.r.t.~the FD $R : \text{\rm Clause} \ra \text{\rm LValue}$. 
This property is formalized by the next technical lemma; its proof can be found in the appendix. For a truth assignment $\tau$ of $\varphi$, we denote by $c_\tau$ the number of clauses of $\varphi$ that are true w.r.t. $\tau$.

\def\lemmapping{
	Consider a 3CNF formula $\varphi$. There exists a function $\mathsf{Map}$ from truth assignments of $\varphi$ to sets of databases such that the following hold:
\begin{enumerate}
	\item For every truth assignment $\tau$ of $\varphi$, $\mathsf{Map}(\tau) \subseteq \rep{D^k_\varphi}{\dep}$.
	\item For every $D' \in \rep{D^k_\varphi}{\dep}$, there is a truth assignment $\tau$ of $\varphi$ such that $D' \in \mathsf{Map}(\tau)$.
	\item For every truth assignment $\tau$ of $\varphi$, $|\mathsf{Map}(\tau)| = 2^{k \cdot c_\tau}$.
\end{enumerate}
}

\begin{lemma}\label{lem:mapping}
	\lemmapping
\end{lemma}

Consequently, each truth assignment $\tau$ of $\varphi$ corresponds to $2^{k \cdot c_\tau}$ repairs of $D^k_\varphi$, and every repair has a corresponding truth assignment. With Lemma~\ref{lem:mapping} in place, it is easy to establish the following key technical result.
We write $\varphi \in \mathsf{Gap3SAT}_\gamma$ (resp., $\varphi \not \in \mathsf{Gap3SAT}_\gamma$) to indicate that $\varphi$ is a \emph{yes} (resp., \emph{no}) instance of $\mathsf{Gap3SAT}_\gamma$.

\def\lembounds{
	Consider a 3CNF formula $\varphi$ with $n>0$ variables and $m>0$ clauses. For every $\gamma  \in (0,\frac{1}{8})$ and $k>0$:
\begin{enumerate}
	\item If $\varphi \in \mathsf{Gap3SAT}_\gamma$, then $\card{\rep{\mathsf{db}_k(\varphi)}{\dep}} \ge 2^{k \cdot m}$.
	\item If $\varphi \not \in \mathsf{Gap3SAT}_\gamma$, then $\card{\rep{\mathsf{db}_k(\varphi)}{\dep}} \le 2^n \cdot 2^{(\frac{7}{8} + \gamma) \cdot k \cdot m}$.
\end{enumerate}
}

\begin{lemma}[\textbf{Bounds}]\label{lem:gap}
	\lembounds
\end{lemma}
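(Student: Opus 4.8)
The plan is to prove Lemma~\ref{lem:gap} by a structural analysis of the repairs of $\mathsf{db}_k(\varphi) = D^k_T \cup D^k_C$ with respect to $\dep$. Since $\dep$ has a single relation name (and, note, is \emph{not} an LHS-chain set, so the machinery of Section~\ref{sec:lhs-chain-fds} does not apply), I would argue directly on its conflict graph. The heart of the argument is a mapping $\mathsf{Map}$ from truth assignments of $\varphi$ to families of repairs of $\mathsf{db}_k(\varphi)$ with the properties: (i) every element of $\mathsf{Map}(\tau)$ is a repair; (ii) every repair belongs to $\mathsf{Map}(\tau)$ for at least one $\tau$; and (iii) $|\mathsf{Map}(\tau)| = 2^{k\cdot c_\tau}$, where $c_\tau$ is the number of clauses of $\varphi$ satisfied by $\tau$. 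Granting this, both bounds are immediate. For item (1): if $\varphi \in \mathsf{Gap3SAT}_\gamma$ then $\varphi$ is satisfiable, so some $\tau^\star$ has $c_{\tau^\star}=m$ and $\card{\rep{\mathsf{db}_k(\varphi)}{\dep}} \ge |\mathsf{Map}(\tau^\star)| = 2^{km}$. For item (2): if $\varphi \notin \mathsf{Gap3SAT}_\gamma$, the promise gives $c_\tau \le (\tfrac{7}{8}+\gamma)m$ for each of the $2^n$ assignments $\tau$, and since the $\mathsf{Map}(\tau)$ cover all repairs, $\card{\rep{\mathsf{db}_k(\varphi)}{\dep}} \le \sum_{\tau}|\mathsf{Map}(\tau)| \le 2^n\cdot 2^{(\frac{7}{8}+\gamma)km}$.

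To build $\mathsf{Map}$ I would first exploit $R:\text{\rm Var}\ra\text{\rm VValue}$: in any repair $D'$ the kept $D^k_T$-facts with a fixed $\text{\rm Var}$-value $x$ must agree on $\text{\rm VValue}$, and a short maximality argument (a $D^k_T$-fact with $\text{\rm LValue}=0$ in any gadget mentioning $x$ is in no conflict once no $\text{\rm Var}$-$x$ fact is present) shows at least one $\text{\rm Var}$-$x$ fact is kept. This yields a well-defined assignment $\tau_{D'}$ and the inclusion $D'\cap D^k_T \subseteq T_\tau$, where $T_\tau$ is the (consistent) set of $\tau$-selected $D^k_T$-facts. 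Then, using $R:\text{\rm Clause}\ra\text{\rm LValue}$, I would show that within each gadget copy $C^j_i$ the conflict graph restricted to $T_\tau\cup D^k_C$ is a star $K_{1,t_i}$ whose center is the single $D^k_C$-fact for that copy and whose leaves are the $t_i$ facts of the gadget coming from literals true under $\tau$ (with the remaining gadget facts isolated), and that these stars are pairwise vertex-disjoint because distinct gadget copies use distinct $\text{\rm Clause}$ constants and $D^k_C$ is itself consistent. Hence a repair $D'$ with $\tau_{D'}=\tau$ is determined by a choice, in each of the $k\cdot c_\tau$ satisfied gadget copies, of a maximal independent set of the associated star, i.e.\ either the center or all its leaves. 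I would then define $\mathsf{Map}(\tau)$ as the set of repairs obtained by taking, for each of the $2^{k c_\tau}$ such patterns $\sigma$, the database dictated by $\sigma$ together with all forced facts, and — if it fails to be maximal — extending it to a repair by a fixed canonical rule.

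Property (iii) follows because the center/leaves pattern can be read back off the resulting repair: extending a database to maximality can neither insert a center fact conflicting with already-present leaves nor insert leaves conflicting with an already-present center fact, so $\sigma$ is recovered unchanged and the assignment of a repair to each pattern is injective. Property (ii) follows because any repair $D'$ equals the database dictated by its own pattern plus the forced facts and is already maximal, hence lies in $\mathsf{Map}(\tau_{D'})$. The main obstacle — and the reason the $\mathsf{Map}(\tau)$ genuinely overlap rather than partition the repairs — is that the dictated database need not itself be a repair: if a Boolean variable $x$ has all of its occurrences satisfied by $\tau$, it contributes no false-literal fact, so the pattern that places every $x$-gadget at its center leaves $x$ with no $\text{\rm Var}$-$x$ fact kept and the database strictly extendable (the extension being a repair whose induced assignment flips $x$). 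Since overlap only helps the upper bound in item (2) and is irrelevant to the lower bound in item (1), this causes no difficulty; a few routine verifications (that $\star$ is a fresh constant, that $D^k_C$ and each $T_\tau$ are consistent, and that conflicts between distinct gadget copies and between $D^k_T$ and $D^k_C$ arise only where claimed) complete the proof.
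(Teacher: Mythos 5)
Your proposal is correct and follows essentially the same route as the paper: the paper derives the two bounds from an auxiliary lemma establishing exactly your mapping $\mathsf{Map}$ with properties (i)--(iii), and your per-gadget binary choice (keep the $D^k_C$ ``center'' fact versus keep the true-literal facts of the copy) is precisely the paper's parametrization by subsets $\mathcal{C}'$ of the satisfied clause copies. Your observation that the sets $\mathsf{Map}(\tau)$ may overlap when a variable has all its occurrences satisfied, and that this is harmless for both bounds, matches the paper's treatment as well.
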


\begin{proof}
		We first establish item (1). Assume that $\varphi \in \mathsf{Gap3SAT}_\gamma$, and let $\tau$ be a truth assignment that satisfies $\varphi$. By item (1) of Lemma~\ref{lem:mapping}, $|\mathsf{Map}(\tau)| \le \card{\rep{\mathsf{db}_k(\varphi)}{\dep}}$, and since $\tau$ satisfies all $m$ clauses of $\varphi$, by item (3) of Lemma~\ref{lem:mapping}, $|\mathsf{Map}(\tau)| = 2^{k \cdot m}$. Hence, $\card{\rep{\mathsf{db}_k(\varphi)}{\dep}} \ge 2^{k \cdot m}$.
		
		We now establish item (2). Assume that $\varphi \not\in \mathsf{Gap3SAT}_\gamma$. By items (1) and (2) of Lemma~\ref{lem:mapping}, $\rep{\mathsf{db}_k(\varphi)}{\dep} = \bigcup\limits_\tau \mathsf{Map}(\tau)$. Hence, by item (3) of Lemma~\ref{lem:mapping}, it holds that
		\[
		\card{\rep{\mathsf{db}_k(\varphi)}{\dep}}\ \le\ \sum\limits_\tau 2^{k \cdot c_\tau}.
		\]
		Since $\varphi$ is a no instance, for any truth assignment $\tau$, $c_\tau \le (\frac{7}{8} + \gamma) \cdot m$. Consequently, we that
		\[
		\card{\rep{\mathsf{db}_k(\varphi)}{\dep}}\ \le\ \sum\limits_\tau 2^{k \cdot (\frac{7}{8} + \gamma) \cdot m} = 2^n \cdot 2^{(\frac{7}{8} + \gamma) \cdot k \cdot m},
		\]
		and the claim follows.
\end{proof}

We now show that the``gap'' between the bounds from Lemma~\ref{lem:gap} can be sufficiently large.

\def\lemgapratio{
	Consider a 3CNF formula with $n>0$ variables and $m>0$ clauses. There are $\gamma \in (0,\frac{1}{8})$, $\epsilon \in (0,1)$, and an integer $k>0$ such that $(\frac{7}{8} + \gamma) \cdot k$ is an integer, and
	\[
	\dfrac{2^{k \cdot m}}{2^n \cdot 2^{(\frac{7}{8} + \gamma) \cdot k \cdot m}} > \dfrac{1+\epsilon}{1-\epsilon}.
	\]
}

\begin{lemma}[\textbf{Gap}]\label{lem:gap-ratio}
	\lemgapratio
\end{lemma}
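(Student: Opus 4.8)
The plan is to prove this purely arithmetically: take logarithms so that the inequality becomes linear in the exponents, and then exhibit one choice of $\gamma,\epsilon,k$ that works \emph{uniformly} for every 3CNF formula (which is what the downstream gap-amplification argument needs). Rewriting the left-hand side as
\[
\frac{2^{k\cdot m}}{2^{n}\cdot 2^{(\frac{7}{8}+\gamma)\cdot k\cdot m}}\;=\;2^{\,k\cdot m-n-(\frac{7}{8}+\gamma)\cdot k\cdot m}\;=\;2^{\,(\frac{1}{8}-\gamma)\cdot k\cdot m-n},
\]
the target reduces to showing $(\frac{1}{8}-\gamma)\cdot k\cdot m-n>\log_2\!\big(\tfrac{1+\epsilon}{1-\epsilon}\big)$. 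The one structural fact I would invoke is that in a 3CNF formula each clause has exactly three literals whose variables are pairwise distinct, so at most $3m$ distinct variables occur across the $m$ clauses; hence $n\le 3m$ for every such $\varphi$ (if preferred, one may also note that unused variables can be deleted without affecting membership in $\mathsf{Gap3SAT}_\gamma$, so $n\le 3m$ holds w.l.o.g.). Using this together with $m\ge 1$, it suffices to choose $\gamma$ and $k$ so that $(\frac{1}{8}-\gamma)\cdot k-3$ is a positive integer large enough, since then $(\frac{1}{8}-\gamma)\cdot k\cdot m-n\ge\big((\frac{1}{8}-\gamma)\cdot k-3\big)\cdot m\ge(\frac{1}{8}-\gamma)\cdot k-3$.

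For the concrete constants I would take $\gamma=\frac{1}{16}$, so that $\frac{7}{8}+\gamma=\frac{15}{16}\in(0,1)$ and $\frac{1}{8}-\gamma=\frac{1}{16}$; choosing $k$ to be any multiple of $16$ then makes $(\frac{7}{8}+\gamma)\cdot k$ an integer, as required by the statement. For instance $k=96$ gives $(\frac{7}{8}+\gamma)\cdot k=90\in\mathbb{Z}$ and $(\frac{1}{8}-\gamma)\cdot k-3=6-3=3$, so the left-hand side of the claimed inequality is at least $2^{3m}\ge 8$. Finally take $\epsilon=\frac{1}{2}$, so $\frac{1+\epsilon}{1-\epsilon}=3<8$. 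All three side-conditions $\gamma\in(0,\frac18)$, $\epsilon\in(0,1)$, $(\frac78+\gamma)k\in\mathbb{Z}$ are met, and the strict inequality holds for every 3CNF formula with $m\ge1$ clauses; larger multiples of $16$ for $k$ give arbitrarily wider gaps (allowing correspondingly larger $\epsilon$).

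I do not expect any genuine obstacle here; the lemma is elementary and its role is bookkeeping. The only points requiring a little care are (i) respecting the integrality constraint $(\frac78+\gamma)k\in\mathbb{Z}$ while keeping $\gamma$ strictly inside $(0,\frac18)$ — handled by picking $\gamma$ dyadic and $k$ a multiple of the resulting denominator — and (ii) ensuring the constants do not depend on the particular input formula, which is exactly what the bound $n\le 3m$ provides. The substantive work of the inapproximability argument lies in Lemma~\ref{lem:gap} (the analysis of the reduction $\varphi\mapsto\mathsf{db}_k(\varphi)$) and in converting the separation into a $\mathsf{BPP}$ algorithm for $\mathsf{Gap3SAT}_\gamma$ by running the hypothetical FPRAS on $\mathsf{db}_k(\varphi)$ with error parameter $\epsilon$ and confidence $\delta<\frac13$; the present lemma merely certifies that the numeric gap guaranteed by Lemma~\ref{lem:gap} survives the multiplicative $(1\pm\epsilon)$ slack of an FPRAS, which is what that final step consumes.
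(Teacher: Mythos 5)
Your proof is correct and follows essentially the same route as the paper, which simply asserts that $\gamma=\frac{1}{16}$, $\epsilon=\frac{1}{3}$, $k=112$ work while noting $0<n\le 3m$; your choice $\gamma=\frac{1}{16}$, $\epsilon=\frac{1}{2}$, $k=96$ satisfies all the side conditions and yields $2^{km-n-(\frac{7}{8}+\gamma)km}\ge 2^{3m}\ge 8>3=\frac{1+\epsilon}{1-\epsilon}$, so the verification is sound. The only difference is that you spell out the elementary arithmetic the paper leaves implicit.
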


\begin{proof}
	The claim holds, e.g., for $\gamma = \frac{1}{16}$, $\epsilon = \frac{1}{3}$, and $k = 112$; note that $0 < n \le 3 \cdot m$.
\end{proof}

We now have all the ingredients for concluding the proof of item (2) of Theorem~\ref{the:apx-main-result}.
Let $\gamma$, $\epsilon$ and $k$ be the numbers provided by Lemma~\ref{lem:gap-ratio}.
Assume $\mathsf{A}$ is an FPRAS for $\sharp \prob{Repairs}(\dep)$. We can place $\mathsf{Gap3SAT}_\gamma$ in $\mathsf{BPP}$ via the randomized algorithm $\mathsf{BPPAlgo}_{\gamma,\epsilon,k}$ that takes as input a 3CNF formula $\varphi$ with $n > 0$ variables and $m > 0$ clauses, and does the following:
\begin{itemize}
	\item[-] if $\mathsf{A}(\mathsf{db}_k(\varphi),\epsilon,\frac{1}{3}) > (1+\epsilon) \cdot 2^n \cdot 2^{(\frac{7}{8}+ \gamma) \cdot k \cdot m}$, then \textbf{Accept}; \item[-] otherwise, \textbf{Reject}.
\end{itemize}
It is clear that $\mathsf{BPPAlgo}_{\gamma,\epsilon,k}(\varphi)$ runs in polynomial time in $||\varphi||$ since both the database $\mathsf{db}_k(\varphi)$ and $2^n \cdot 2^{(\frac{7}{8}+ \gamma) \cdot k \cdot m}$ can be computed in polynomial time in $||\varphi||$; recall that, by Lemma~\ref{lem:gap-ratio}, $(\frac{7}{8} + \gamma) \cdot k$ is an integer.
It remains to prove that the algorithm is correct, for which we exploit Lemmas~\ref{lem:gap} and~\ref{lem:gap-ratio}.
We only consider the case when $\varphi \in \mathsf{Gap3SAT}_\gamma$, as the case $\varphi \not \in \mathsf{Gap3SAT}_\gamma$ is similar. The probability that $\mathsf{BPPAlgo}_{\gamma,\epsilon,k}$ is wrong, i.e., \emph{rejects} $\varphi$ is
\[
p\ =\ \Pr\left(\mathsf{A}(\mathsf{db}_k(\varphi),\epsilon,\frac{1}{3}) \le (1 + \epsilon) \cdot 2^n \cdot 2^{(\frac{7}{8}+ \gamma) \cdot k \cdot m}\right).
\]
We can conclude, by combining Lemma~\ref{lem:gap-ratio} and Lemma~\ref{lem:gap}, that \[
(1+\epsilon) \cdot 2^n \cdot 2^{(\frac{7}{8} + \gamma) \cdot k \cdot m}\ <\ (1-\epsilon) \cdot \card{\rep{\mathsf{db}_k(\varphi)}{\dep}},
\]
and therefore,
\[
p\ \le\
\Pr\left(\mathsf{A}(\mathsf{db}_k(\varphi),\epsilon,\frac{1}{3}) < (1 - \epsilon) \cdot \card{\rep{\mathsf{db}_k(\varphi)}{\dep}} \right).
\]
Recall that $\Pr(E_1) \le \Pr(E_1 \cup E_2)$, for any two events $E_1,E_2$. Thus, $p \le \Pr(E_1 \cup E_2)$, where $E_1$ is the event $\mathsf{A}(\mathsf{db}_k(\varphi),\epsilon,\frac{1}{3}) < (1 - \epsilon) \cdot \card{\rep{\mathsf{db}_k(\varphi)}{\dep}}$, and $E_2$ is the event $\mathsf{A}(\mathsf{db}_k(\varphi),\epsilon,\frac{1}{3}) > (1 + \epsilon) \cdot \card{\rep{\mathsf{db}_k(\varphi)}{\dep}}$.
Thus, $\Pr(E_1 \cup E_2)$ is the probability that the event
\[
(1-\epsilon) \cdot \card{\rep{\mathsf{db}_k(\varphi)}{\dep}} \le \mathsf{A}(\mathsf{db}_k(\varphi),\epsilon,\frac{1}{3})\ \le\ (1+\epsilon) \cdot \card{\rep{\mathsf{db}_k(\varphi)}{\dep}}
\]
does not hold. Therefore, by definition of $\mathsf{A}$, $\mathsf{BPPAlgo}_{\gamma,\epsilon,k}$ rejects $\varphi$ with probability at most $\frac{1}{3}$, i.e., accepts with probability at least $\frac{2}{3}$, and the claim follows.

\subsection{The Relative Frequency Problem}

Before concluding this section, let us discuss the relative frequency problem, which is interesting in its own right.
In Section~\ref{sec:exact-counting}, for establishing Theorem~\ref{the:fds-dichotomy}, we focused on the problem of computing the relative frequency of a CQ w.r.t.~a database and a set of FDs.
Recall that the relative frequency of a CQ $Q$ w.r.t.~a database $D$ and a set $\dep$ of FDs is the ratio that computes the percentage of repairs that entail it, that is, the ratio
\[
\rfreq{Q}{D,\dep}\ =\ \frac{\card{\rep{D}{\dep,Q}}}{\card{\rep{D}{\dep}}},
\]
and the relative frequency problem for a set $\dep$ of FDs and a CQ $Q$ is defined as follows:

\medskip

\begin{center}
	\fbox{\begin{tabular}{ll}
			{\small PROBLEM} : & $\prob{RelFreq}(\dep,Q)$
			\\
			{\small INPUT} : & A database $D$
			\\
			{\small OUTPUT} : &  $\rfreq{Q}{D,\dep}$
	\end{tabular}}
\end{center}

\medskip

\noindent Unlike Theorem~\ref{the:fds-dichotomy}, Theorem~\ref{the:apx-main-result} (the main result of Section~\ref{sec:apx-counting}) was shown by directly considering the problem $\sharp \prob{Repairs}(\dep,Q)$, without relying on the relative frequency problem.
An interesting question that comes up is whether the relative frequency problem is approximable or not; the notion of FPRAS for $\prob{RelFreq}(\dep,Q)$ is defined in the obvious way.
Interestingly, by exploiting Theorem~\ref{the:apx-main-result}, we can show the following result concerning the approximability of the relative frequency problem:

\def\thmapxrelfreq{
	The following hold:
	\begin{enumerate}
		\item Consider a set $\dep$ of FDs with an LHS chain (up to equivalence), and a CQ $Q$. $\prob{RelFreq}(\dep,Q)$ admits an FPRAS.
		
		\item There exists a set $\dep$ of FDs and an SJFCQ $Q$ such that $\prob{RelFreq}(\dep,Q)$ does not admit an FPRAS, unless $\text{\rm NP} \subseteq \text{\rm BPP}$.
	\end{enumerate}
}

\begin{theorem}\label{the:apx-relfreq}
	\thmapxrelfreq
\end{theorem}

\textit{Proof (Sketch).}  Item (1) is a consequence of item (1) of Theorem~\ref{the:apx-main-result}, which states that $\prob{Repairs}(\dep,Q)$ admits an FPRAS, and Proposition~\ref{pro:counting-repairs-lhs-fp}, which states that $\card{\rep{D}{\dep}}$ is computable in polynomial time in $||D||$. Indeed, since the problem of computing the numerator admits an FPRAS, and the problem of computing the denominator can be solved in polynomial time, it is easy to convert the FPRAS for computing the numerator (i.e., $\prob{Repairs}(\dep,Q)$) to an FPRAS for $\prob{RelFreq}(\dep,Q)$.

We now present the main ideas underlying the proof of item (2); the details can be found in the appendix. By item (2) of Theorem~\ref{the:apx-main-result},  $\sharp \prob{Repairs}(\hat{\dep})$ with $\hat{\dep} = \{R : A_1 \ra A_2, R: A_3 \ra A_4\}$, where $(A_1,A_2,A_3,A_4)$ is the tuple of attribute names of $R/4$, does not admit an FPRAS (unless $\text{\rm NP} \subseteq \text{\rm BPP}$). Therefore, to establish item (2), it suffices to provide an approximation-preserving reduction from $\sharp \prob{Repairs}(\hat{\dep})$ to $\prob{RelFreq}(\dep,Q)$ for some set $\dep$ of FDs and an SJFCQ $Q$. The latter means that $\dep$ and $Q$ should be devised in such a way that $\prob{RelFreq}(\dep,Q)$ admits an FPRAS implies $\sharp \prob{Repairs}(\hat{\dep})$ admits an FPRAS.
To this end, we are going to define $\dep$ and $Q$ over $\{R'/6\}$ in a way that the following holds: for every database $D$ over $\{R\}$, we can construct in polynomial time in $||D||$ a database $D'$ over $\{R'/6\}$ such that the following equation (which we dubbed $\blacklozenge$) holds:
\[
\rfreq{Q}{D',\dep}\ =\ \frac{1}{1 + \card{\rep{D}{\hat{\dep}}}}.
\]
Then, by exploiting the equation $\blacklozenge$, the fact that $D'$ can be constructed in polynomial time, and the assumption that $\prob{RelFreq}(\dep,Q)$ admits in FPRAS, we are going to devise an FPRAS for the problem $\sharp \prob{Repairs}(\hat{\dep})$, which will complete the proof of item (2).

Assume that $(A,B,A_1,A_2,A_3,A_4)$ is the tuple of attribute names of $R'/6$. Let $\dep$ be the set
\[
\{R' : A_1 \ra A_2,\,\, R' : A_3 \ra A_4,\,\, R' : A \ra B\}.
\]
Now, the SJFCQ $Q$ is defined as the Boolean query
\[
\textrm{Ans}\ \text{:-}\ R'(x,x,x,x,x,x).
\]
In simple words, $Q$ asks whether there exists a fact such that all the attributes have the same value.
We proceed to show that $\dep$ and $Q$ enjoy the desired property.

Consider an arbitrary database $D$ over $\{R\}$. We define the database $D'$ over $\{R'\}$ as 
\[
\{R'(a,b,a_1,a_2,a_3,a_4) \mid R(a_1,a_2,a_3,a_4) \in D\}\ \cup\ \{R'(a,a,a,a,a,a)\}
\]
with $a,b$ being constants not in $\adom{D}$.
It is clear that $D'$ can be constructed in polynomial time in $||D||$. Moreover, it is easy to verify that, by construction, $\card{\rep{D'}{\dep}} = 1 + \card{\rep{D}{\hat{\dep}}}$, and only one of the repairs of $D'$ w.r.t.~$\dep$ entails the Boolean CQ $Q$, that is, the repair consisting only of the fact $R'(a,a,a,a,a,a)$. Therefore, the desired equation $\blacklozenge$ holds.

We proceed to devise an FPRAS for $\sharp \prob{Repairs}(\hat{\dep})$ by exploiting the equation $\blacklozenge$, the fact that $D'$ can be constructed in polynomial time, and the FPRAS $\mathsf{A}'$ for $\prob{RelFreq}(\dep,Q)$, which we assume it exists.
Given a database $D$ over $\{R\}$, $\epsilon > 0$, and $0 < \delta < 1$, we define $\mathsf{A}$ as the procedure:
\begin{itemize}
	\item[-] compute the database $D'$ from $D$;
	\item[-] let $\epsilon' = \frac{\epsilon}{2+\epsilon}$;
	\item[-] let $r = \max\left\{\frac{1-\epsilon'}{2 \cdot (1+2^{|D|})}, \mathsf{A}'(D',\epsilon',\delta)\right\}$;
	\item[-] output $\frac{1}{r} - 1$.
\end{itemize}
It clear that the above procedure runs in polynomial time in $||D||$, $1/\epsilon$, and $\log(1/\delta)$.
It can be also shown that $\mathsf{A}$ correctly approximates the integer $\card{\rep{D}{\hat{\dep}}}$, which in turn implies that $\mathsf{A}$ is an FPRAS for $\sharp \prob{Repairs}(\hat{\dep})$, as needed.
The latter heavily exploits the fact that $\mathsf{A}'$ is an FPRAS for $\prob{RelFreq}(\dep,Q)$, and the equation $\blacklozenge$ shown above; the full proof can be found in the appendix. \qed

\OMIT{
We show that if an FPRAS exists for $\sharp \prob{Repairs}(\dep_{H})$, then $\mathsf{gap3SAT}_\gamma$ is in $\mathsf{BPP}$, for some $\gamma \in (0, \frac{1}{8})$.
In what follows, fix a 3CNF formula $\varphi = C_1 \wedge \cdots \wedge C_m$, with Boolean variables $x_1, \ldots, x_n$, and let $C_i = \ell^1_i \vee \ell^2_i \vee \ell^3_i$, for each $i \in [m]$. We write $\varphi \in \mathsf{gap3SAT}_\gamma$ (resp., $\varphi \not \in \mathsf{gap3SAT}_\gamma$) to specify that $\varphi$ is a \emph{yes} (resp., \emph{no}) instance of $\mathsf{gap3SAT}_\gamma$. We let $\var{\ell^j_i}$ be the Boolean variable of literal $\ell^j_i$.  Moreover, for a truth value $v \in \{0,1\}$, we let $\litval{\ell^j_i}{v}$ be the truth value of $\ell^j_i$, when its \emph{variable} is assigned the truth value $v$. For example, for the literal $\neg x$, and $v = 0$, $\litval{\neg x}{v} = 1$.

\medskip
\paragraph{The database.} We construct a database from $\varphi$ such that the more clauses of $\varphi$ can be satisfied by a truth assignment, the more its repairs. Moreover, we would like to control by "how much" the number of repairs grows, for each clause that is satisfied.

We consider an integer $k > 0$ as the "growing factor", which we are going to choose later, and define $D^k_\varphi$ as the union of two databases $D^k_T$ and $D^k_C$ over relation $R$, as shown below. 
{\renewcommand{\arraystretch}{1.2}
\begin{center}\small
\begin{tabular}{@{}cccccl@{}}
	\toprule
	\normalsize Var & \normalsize Value & \normalsize Clause & \normalsize D & \\ \midrule
	$\var{\ell^1_i}$ & 0 & $\angletup{C^j_i,\litval{\ell^1_i}{0}}$ & $\litval{\ell^1_i}{0}$ & \rdelim\}{6}{8em}[
	{
		$i \in [m]$, $j \in [k]$
	}
]\\
	$\var{\ell^1_i}$ & 1 & $\angletup{C^j_i,\litval{\ell^1_i}{1}}$ & $\litval{\ell^1_i}{1}$ &\\
	$\var{\ell^2_i}$ & 0 & $\angletup{C^j_i,\litval{\ell^2_i}{0}}$ & $\litval{\ell^2_i}{0}$ &\\
	$\var{\ell^2_i}$ & 1 & $\angletup{C^j_i,\litval{\ell^2_i}{1}}$ & $\litval{\ell^2_i}{1}$ &\\
	$\var{\ell^3_i}$ & 0 & $\angletup{C^j_i,\litval{\ell^3_i}{0}}$ & $\litval{\ell^3_i}{0}$ &\\
	$\var{\ell^3_i}$ & 1 & $\angletup{C^j_i,\litval{\ell^3_i}{1}}$ & $\litval{\ell^3_i}{1}$ &\\ \bottomrule	
	
	\multicolumn{5}{c}{\normalsize \textbf{The database} $D^k_T$}\\\ \\
	
	\toprule
	\normalsize Var & \normalsize Value & \normalsize Clause & \normalsize D & \\ \midrule 
	$\star$ & $\star$ & $\angletup{C^j_i,1}$ & 0 & \rdelim\}{1}{8em}[
	{
		$i \in [m]$, $j \in [k]$
	}
	]\\[1pt] \bottomrule
	\multicolumn{5}{c}{\normalsize \textbf{The database} $D^k_C$} \\
\end{tabular}
\end{center}
}
To make the discussion more clear, we give names to the attributes $A, B, C, D$ of the relation $R$; $\star$ is some arbitrary constant.

Intuitively, for each clause $C_i$ of the formula $\varphi$, the database $D^k_T$ stores $k$ copies $C^1_i,\ldots,C^k_i$ of clause $C_i$. Each copy is stored in $D^k_T$ via 6 tuples, where for each literal of $C_i$, its variable is stored twice, specifying one of two possibles values it can get assigned (i.e., 0 or 1). Moreover, the actual truth value that the literal has w.r.t.\ to the particular value given to its variable is also stored, together with the clause copy. This same literal value is replicated on column D, and its meaning will be explained next. The set of 6 tuples in $D^k_T$ mentioning a particular copy $C^j_i$ of a clause $C_i$ is called a \emph{$C^j_i$-gadget}.

Regarding the database $D^k_C$, it stores one tuple for each copy $C^j_i$ of a clause $C_i$. The goal of each tuple in $D^k_C$ containing $C^j_i$ is to cause a conflict whenever a truth assignment of $\varphi$ "chooses" to assign $1$ to at least one literal of the $C^j_i$-gadget. Thanks to this, the more clauses are satisfied by the truth assignment, the more repairs $D^k_\varphi$ will have. Moreover, the larger $k$, the higher the growing factor.

\medskip
\paragraph{The set of FDs.} The set of functional dependencies $\dep_H$ under the current naming scheme is as follows:
\[
\begin{array}{ll}
	\phi_1 = & R: \text{Var} \ra \text{Value}\\
	\phi_2 = & R: \text{Clause}\ra \text{D}.
\end{array}
\]
The first FD $\phi_1$ simply states that a Boolean variable must be assigned only one of the two values $0$ or $1$. If only this FD was considered, we would have a one-to-one correspondence between truth assignments and repairs of $D^k_\varphi$. Introducing the
FD $\phi_2$, we make a truth assignment $\tau$ actually induce an exponential number of repairs in the number of clauses that are satisfied by $\tau$, thanks to the inconsistencies between the tuples of $D^k_T$ and $D^k_C$ w.r.t.\ $\phi_2$.

\medskip
This property is stated in the following crucial lemma, which we prove in the appendix; for a truth assignment $\tau$, we use $c_\tau$ to denote the number of clauses of $\varphi$ that are true w.r.t.\ $\tau$.

\begin{lemma}\label{lem:mapping}
	There exists a mapping $\mathsf{Map}$ from truth assignments of $\varphi$ to sets of databases, such that the following hold:
	\begin{itemize}
		\item For every truth assignment $\tau$, $\mathsf{Map}(\tau) \subseteq \rep{D^k_\varphi}{\dep_H}$.
		\item For every $D' \in \rep{D^k_\varphi}{\dep}$, there is $\tau$ such that $D' \in \mathsf{Map}(\tau)$.
		\item For every truth assignment $\tau$, $|\mathsf{Map}(\tau)| = 2^{k \cdot c_\tau}$.
	\end{itemize}
\end{lemma}

Hence, each truth assignment $\tau$ of $\varphi$ corresponds to $2^{k \cdot c_\tau}$ repairs of $D^k_\varphi$, and every repair has a corresponding truth assignment. With the above lemma in place, the following key property of the database $D^k_\varphi$ is straightforward.

\begin{lemma}[Bounds]\label{lem:gap}
	For every integer $k>0$, and every rational $\gamma  \in (0,\frac{1}{8})$, the following hold:
	\begin{itemize}
		\item If $\varphi \in \mathsf{gap3SAT}_\gamma$, then $|\rep{D^k_\varphi}{\dep_H}| \ge 2^{k \cdot m}$.
		\item If $\varphi \not \in \mathsf{gap3SAT}_\gamma$, then $|\rep{D^k_\varphi}{\dep_H}| \le 2^n \cdot 2^{(\frac{7}{8} + \gamma) \cdot k \cdot m}$.
	\end{itemize}
\end{lemma}

%

Finally, we can prove that the "gap" between the bounds of the above lemma is sufficiently large.

\begin{lemma}[Gap]\label{lem:gap-ratio}
	There exist rationals $\gamma \in (0,\frac{1}{8})$ and $\epsilon \in (0,1)$, and an integer $k>0$ such that
	$$ \dfrac{2^{k \cdot m}}{2^n \cdot 2^{(\frac{7}{8} + \gamma) \cdot k \cdot m}} > \dfrac{1+\epsilon}{1-\epsilon},$$
	and such that $(\frac{7}{8} + \gamma) \cdot k$ is an integer.
\end{lemma}
\begin{proof}
	It is not difficult to verify that the claim holds, e.g., for $\gamma = \frac{1}{16}$, $\epsilon = \frac{1}{3}$ and $k = 112$; recall that $0 < n \le 3 \cdot m$.
\end{proof}

We can now finally combine the above lemmas to prove our main result. Let $\gamma$, $\epsilon$ and $k$ be the numbers from Lemma~\ref{lem:gap-ratio}.

Assume $\mathcal{A}$ is an FPRAS for $\sharp \prob{Repairs}(\dep_{H})$. We now show that $\mathsf{gap3SAT}_\gamma$ is in $\mathsf{BPP}$. The algorithm, which we dub $\mathsf{BPPAlgo}_{\gamma,\epsilon,k}$ is as follows; $\varphi$ is the input 3CNF formula of the algorithm.
\begin{enumerate}
	\item Construct the database $D^k_\varphi$.
	\item If $\mathcal{A}(D^k_\varphi,\epsilon,\frac{1}{3}) > (1+\epsilon) \cdot 2^n \cdot 2^{(\frac{7}{8}+ \gamma) \cdot k \cdot m}$, then \textbf{Accept}
	\item Else \textbf{Reject}.
\end{enumerate}
The above algorithm clearly runs in polynomial time w.r.t.\ $||\varphi||$, in particular, $2^n \cdot 2^{(\frac{7}{8}+ \gamma) \cdot k \cdot m}$ can be computed in polynomial time.\footnote{Recall that for the numbers $\gamma, \epsilon, k$ of Lemma~\ref{lem:gap-ratio}, $(\frac{7}{8} + \gamma) \cdot k$ is an integer.}

Regarding the correctness, we consider the case when $\varphi \in \mathsf{gap3SAT}_\gamma$, as the case $\varphi \not \in \mathsf{gap3SAT}_\gamma$ is similar.
Assume  $\varphi \in \mathsf{gap3SAT}_\gamma$. The probability that $\mathsf{BPPAlgo}_{\gamma,\epsilon,k}$ \emph{rejects} $\varphi$ is:
$$ p = \Pr\left(\mathcal{A}(D^k_\varphi,\epsilon,\frac{1}{3}) \le (1 + \epsilon) \cdot 2^n \cdot 2^{(\frac{7}{8}+ \gamma) \cdot k \cdot m}\right).$$
We conclude, by combining Lemma~\ref{lem:gap-ratio} and Lemma~\ref{lem:gap}, that $(1+\epsilon) \cdot 2^n \cdot 2^{(\frac{7}{8} + \gamma) \cdot k \cdot m} < (1-\epsilon) \cdot |\rep{D^k_\varphi}{\dep_H}|$.
Hence,
$p \le
\Pr\left(\mathcal{A}(D^k_\varphi,\epsilon,\frac{1}{3}) < (1 - \epsilon) \cdot |\rep{D^k_\varphi}{\dep_H}| \right)$.
Recalling that $\Pr(E_1) \le \Pr(E_1 \cup E_2)$, for any two events $E_1,E_2$, we obtain $p \le \Pr(E_1 \cup E_2)$, where $E_1$ is the event $\mathcal{A}(D^k_\varphi,\epsilon,\frac{1}{3}) < (1 - \epsilon) \cdot |\rep{D^k_\varphi}{\dep_H}|$ and $E_2$ is the event $\mathcal{A}(D^k_\varphi,\epsilon,\frac{1}{3}) > (1 + \epsilon) \cdot |\rep{D^k_\varphi}{\dep_H}|$.
Thus, $\Pr(E_1 \cup E_2)$ is precisely the probability that the event
$$(1-\epsilon) \cdot |\rep{D^k_\varphi}{\dep_H}| \le \mathcal{A}(D^k_\varphi,\epsilon,\frac{1}{3}) \le (1+\epsilon) \cdot |\rep{D^k_\varphi}{\dep_H}|$$
does not hold. Thus, by definition of $\mathcal{A}$, $\mathsf{BPPAlgo}_{\gamma,\epsilon,k}$ rejects $\varphi$ with probability at most $\frac{1}{3}$, i.e., accepts with probability at least $\frac{2}{3}$.

\OMIT{
It is now easy to reduce $\sharp \prob{Repairs}(\dep_{H})$ to $\sharp \prob{Repairs}(\dep_{H},Q)$, for every SJFCQ $Q$, by exploiting Lemma~\ref{lem:cook-reduction}:

\begin{theorem}\label{the:no-fpras-fds-cq}
	Consider $R/4 \in \ins{S}$, with attributes $(A,B,C,D)$, the set $\dep_H = \{R: A \ra B, R: C \ra D\}$ and an SJFCQ $Q$. $\sharp \prob{Repairs}(\dep_{H},Q)$ admits no FPRAS, unless NP $\subseteq$ BPP.
\end{theorem}
}
}
\section{Conclusions and Future Work}\label{sec:future-work}

This work focused on the exact and approximate counting of database repairs.
Concerning the exact version of the problem, we have lifted the \text{\rm FP}/$\sharp$\text{\rm P}-complete dichotomy in the case of primary keys and self-join-free CQs from~\cite{MaWi13} to the more general case of FDs and self-join-free CQs (Theorem~\ref{the:fds-dichotomy}).
Concerning the approximate version of the problem, although we have not provided a complete approximability/inapproximability classification, we established that (i) the problem admits an FPRAS in the case of FDs with an LHS chain (up to equivalence) and CQs (even with self-joins), but (ii) it does not admit an FPRAS (under a standard complexity assumption) in the case of arbitrary FDs; note that the latter holds for {\em every} self-join-free CQ (Theorem~\ref{the:apx-main-result}). These are crucial steps towards a complete approximability/inapproximability classification. Recall that establishing such a classification is highly non-trivial as it will resolve the challenging open problem of whether counting maximal matchings in a bipartite graph admits an FPRAS. 
Note that on our way to understand the problem of counting database repairs, we also established analogous result for the relative frequency problem, which is interesting in its own right.

The research problems in the context of  counting repairs that remain open are the following: 
\begin{enumerate}
	\item Lift the dichotomy of Theorem~\ref{the:fds-dichotomy} to arbitrary CQs with self-joins.
	\item 
Establish an approximability/inapproximability dichotomy for FDs and CQs (with or without self-joins).  
\end{enumerate}
These are highly non-trivial problems that deserve our attention.

\medskip


\bibliographystyle{plain}
\bibliography{references}

\newpage
\appendix

\section*{Appendix}

We provide full proofs and additional explanations for the technical results given in the main body of the paper that need to be proven. For the sake of readability, we repeat the statements.

\newenvironment{repeatresult}[2]
{\vskip0.5em\par\textbf{#1} #2.\em}
{\vskip1em}
\newenvironment{replemma}[1]{\begin{repeatresult}{Lemma}{#1}}{\end{repeatresult}}
\newenvironment{repproposition}[1]{\begin{repeatresult}{Proposition}{#1}}{\end{repeatresult}}
\newenvironment{reptheorem}[1]{\begin{repeatresult}{Theorem}{#1}}{\end{repeatresult}}
\section{Proofs of Section~\ref{sec:exact-counting}}\label{sec:appendix-exact}

In this section, we provide the missing proofs of Section~\ref{sec:exact-counting}. To this end, we first establish some auxiliary technical lemmas that will be used throughout the proofs.
Recall that we consider canonical FD sets $\dep$ that have a single LHS chain $R : X_1 \ra Y_1,\ldots,R : X_n \ra Y_n$ for every relation name $R$ of the schema such that:
\begin{enumerate}
    \item $X_i \subsetneq X_{i+1}$ for each $i \in [n]$,
    \item $X_i \cap Y_j = \emptyset$ for each $i,j \in [n]$, and
    \item $Y_i \cap Y_j = \emptyset$ for each $i,j \in [n]$ with $i \neq j$.
\end{enumerate}
We use the following notation. Let $\dep$ be an FD set with an LHS chain and let $Q$ be an SJFCQ. For every atom $R(\bar z)$ of $Q$, and for every FD $R:X_{j}\ra Y_{j}$ in the LHS chain $R:X_1\ra Y_1,\dots, R:X_n\ra Y_n$ of $\dep_R$, we denote by $I_{R,j}^\lhs$ the positions corresponding to the attributes of $X_j$, i.e., $I_{R,j}^\lhs=\{(R,A)\mid A\in X_j\}$, and by $I_{R,j}^\rhs$ the positions that correspond to the attributes of $Y_j$, i.e., $I_{R,j}^\rhs=\{(R,A)\mid A\in Y_j\}$. We also denote by $R:X_{i_R}\ra Y_{i_R}$ the primary FD of $\dep_R$ w.r.t.~$Q$.

\subsection{Auxiliary Technical Lemmas }\label{sec:appendix-exact-aux}
We first show that the facts of $D_{\mathsf{conf}}^{\dep,Q}$ have no impact on the number of repairs that entail $Q$. Recall that for a database $D$, a set $\dep$ of FDs with an LHS chain, and an SJFCQ $Q$, the database $D_{\mathsf{conf}}^{\dep,Q}$ contains all the $R$-facts $f$ of $D$ such that for some FD $R:X_j\ra Y_j$ in the primary prefix of $\dep_R$:
\begin{enumerate}
    \item $f[A]=\alpha[A]$ for every $A\in X_j$,
    \item $f[A]\neq\alpha[A]$ for some $A\in Y_j$,
\end{enumerate}
where $\alpha$ is the $R$-atom of $Q$.

\begin{lemma}\label{lemma:help}
Consider a database $D$, a set $\dep$ of FDs with an LHS chain, and an SJFCQ $Q$. Then,
\[
\card{\rep{D}{\dep,Q}}\ =\ \card{\rep{D\setminus D_{\mathsf{conf}}^{\dep,Q}}{\dep,Q}}.
\]
\end{lemma}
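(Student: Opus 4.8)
The plan is to show that removing the facts in $D_{\mathsf{conf}}^{\dep,Q}$ does not change the set of repairs that entail $Q$. The key observation is that a fact $f \in D_{\mathsf{conf}}^{\dep,Q}$ is, by construction, an $R$-fact (for some relation name $R$ in $Q$) that agrees with the $R$-atom $\alpha$ of $Q$ on all attributes of $X_j$ but disagrees with $\alpha$ on some attribute of $Y_j$, for some FD $R : X_j \ra Y_j$ in the primary prefix of $\dep_R$ w.r.t.~$Q$. The first step is therefore to argue that such a fact $f$ can never appear in a repair $D'$ with $D' \models Q$: if $D' \models Q$, there is a homomorphism $h$ from $Q$ to $D'$, and since $Q$ is self-join-free and $R$ occurs in $Q$, $h$ maps $\alpha$ to some $R$-fact $g = h(\alpha) \in D'$ with $g[A] = \alpha[A]$ for every constant position and, crucially, for every attribute $A$ that is a constant in $\alpha$. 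By the definition of the primary prefix, all positions in $X_j \cup Y_j$ for $j$ in the primary prefix carry constants in $\alpha$, so $g$ agrees with $\alpha$ on all of $X_j \cup Y_j$; in particular $g[A] = \alpha[A] = f[A]$ for $A \in X_j$, while $g$ and $f$ differ on some attribute of $Y_j$. Hence $\{f, g\} \not\models R : X_j \ra Y_j$, so $\{f,g\}$ — and therefore $D'$ — violates $\dep$, contradicting $D' \models \dep$. Thus $f \notin D'$.

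The second step turns this into the claimed equality of cardinalities by exhibiting a bijection between $\rep{D}{\dep,Q}$ and $\rep{D \setminus D_{\mathsf{conf}}^{\dep,Q}}{\dep,Q}$. For the forward direction, take $D' \in \rep{D}{\dep,Q}$. By the first step $D' \cap D_{\mathsf{conf}}^{\dep,Q} = \emptyset$, so $D' \subseteq D \setminus D_{\mathsf{conf}}^{\dep,Q}$. It is consistent and entails $Q$; maximality within $D \setminus D_{\mathsf{conf}}^{\dep,Q}$ follows from maximality within $D$, since any fact of $D \setminus D_{\mathsf{conf}}^{\dep,Q}$ consistently addable to $D'$ is in particular a fact of $D$ consistently addable to $D'$. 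So $D' \in \rep{D \setminus D_{\mathsf{conf}}^{\dep,Q}}{\dep,Q}$. For the backward direction, take $D'' \in \rep{D \setminus D_{\mathsf{conf}}^{\dep,Q}}{\dep,Q}$; I claim $D''$ is already a repair of $D$ w.r.t.~$\dep$. It is consistent; for maximality, suppose some $f \in D \setminus D''$ had $D'' \cup \{f\} \models \dep$. If $f \notin D_{\mathsf{conf}}^{\dep,Q}$ this contradicts maximality of $D''$ in $D \setminus D_{\mathsf{conf}}^{\dep,Q}$; if $f \in D_{\mathsf{conf}}^{\dep,Q}$, then since $D'' \models Q$ the same argument as in the first step (applied with $g = h(\alpha) \in D''$) shows $\{f,g\} \not\models \dep$, so $D'' \cup \{f\} \not\models \dep$, a contradiction. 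Hence $D'' \in \rep{D}{\dep}$, and since it entails $Q$, $D'' \in \rep{D}{\dep,Q}$. The two maps are mutually inverse (both are the identity on database sets), so the cardinalities coincide.

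I expect the main obstacle to be the careful bookkeeping around the primary prefix: one must verify that every attribute in $X_j \cup Y_j$ for an FD $R : X_j \ra Y_j$ of the primary prefix is indeed assigned a constant in $\alpha$ — this is exactly the defining condition of the primary FD $R : X_{i_R} \ra Y_{i_R}$ (every position in $\bigcup_{j < i_R}(X_j \cup Y_j)$ carries a constant), so one should invoke that definition explicitly and handle both the case where the primary FD exists and the degenerate case where it does not (in which the primary prefix is all of $\Lambda_R$ and $\alpha$ has only primary-lhs positions, making $D_{\mathsf{conf}}^{\dep_R,Q}$ empty, so there is nothing to prove for that $R$). Everything else is routine: the monotonicity of maximality under passing to a subset of the database, and the observation that a homomorphism image of $\alpha$ lies in any repair entailing $Q$ because $Q$ is self-join-free. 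No appeal to the LHS-chain structure beyond the existence of the primary prefix is needed, so the argument is essentially the same as in the primary-key case of~\cite{MaWi13}, adapted to chains.
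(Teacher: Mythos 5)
Your proof is correct and follows essentially the same route as the paper's: both directions hinge on the observation that the homomorphic image $h(\alpha_R)$ of the $R$-atom agrees with $\alpha_R$ on every attribute of $X_j \cup Y_j$ for each FD $R : X_j \ra Y_j$ in the primary prefix (those positions carry constants by definition), and hence conflicts with every fact of $D_{\mathsf{conf}}^{\dep_R,Q}$ while no fact of $D_{\mathsf{conf}}^{\dep,Q}$ can sit in a repair entailing $Q$. One small correction to your closing remark: when $\dep_R$ has no primary FD, the primary prefix is all of $\Lambda_R$ but $D_{\mathsf{conf}}^{\dep_R,Q}$ need \emph{not} be empty, since an $R$-fact of $D$ can still agree with $\alpha_R$ on the constants at the $X_j$-positions and disagree at a $Y_j$-position; this does not create a gap, because your main argument never uses that claim --- it relies only on the fact that all positions of $X_j \cup Y_j$ carry constants for primary-prefix FDs, which holds in the degenerate case as well.
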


\begin{proof}
If $D\not\models Q$, then $(D\setminus D_{\mathsf{conf}}^{\dep,Q})\not\models Q$, and therefore, $\card{\rep{D}{\dep,Q}}=\card{\rep{D\setminus D_{\mathsf{conf}}^{\dep,Q}}{\dep,Q}}=0$. Hence, from now on, we focus on the case where $D\models Q$. We show that a database $E$ is a repair of $D$ w.r.t.~$\dep$ that entails $Q$ if and only if $E$ is a repair of $D\setminus D_{\mathsf{conf}}^{\dep,Q}$ w.r.t.~$\dep$ that entails $Q$.

($\Rightarrow$) Let $E$ be a repair of $D$ w.r.t.~$\dep$ that entails $Q$. Thus, there exists a homomorphism $h$ from $Q$ to $D$ such that $R(h(\bar z)) \in E$ for every atom $R(\bar z)$ of $Q$. Let $f=R(h(\bar z))$ for some atom $R(\bar z)$ of $Q$. Clearly, $f$ agrees with $R(\bar z)$ on the constants in the positions of $I_{R,j}^\lhs$ and $I_{R,j}^\rhs$ for every FD $R:X_j\ra Y_j$ in the primary prefix of $\dep_R$. On the other hand, for every $R$-fact $g\in D_{\mathsf{conf}}^{\dep,Q}$, there exists an FD $R:X_j\ra Y_j$ in the primary prefix of $\dep_R$ such that $g$ agrees with $R(\bar z)$ on the constants in the positions of $I_{R,j}^\lhs$, but disagrees with $R(\bar z)$ on a constant in some position of $I_{R,j}^\rhs$. Therefore, $f$ and $g$ agree on all the values of the attributes in $X_j$ but disagree on the value of at least one attribute in $Y_j$. We conclude that $\{f,g\}\not\models\dep_R$. Since $D_{\mathsf{conf}}^{\dep,Q}$ contains only facts over relations that occur in $Q$ and the above argument holds for each such relation, it holds that $E\cap D_{\mathsf{conf}}^{\dep,Q}=\emptyset$, and $E$ is also a consistent subset of $D\setminus D_{\mathsf{conf}}^{\dep,Q}$. Since every fact of $D\setminus D_{\mathsf{conf}}^{\dep,Q}$ also occurs in $D$, it is rather straightforward that if $E$ is maximal in $D$ then it is also maximal in $D\setminus D_{\mathsf{conf}}^{\dep,Q}$. Thus, $E$ is a repair of $D\setminus D_{\mathsf{conf}}^{\dep,Q}$ that entails $Q$.

($\Leftarrow$) Conversely, let $E$ be a repair of $D\setminus D_{\mathsf{conf}}^{\dep,Q}$ w.r.t.~$\dep$ that entails $Q$. Clearly, $E$ is also a consistent subset of $D$ that entails $Q$; hence, it is only left to show that it is maximal. If we could add a fact from $D\setminus D_{\mathsf{conf}}^{\dep,Q}$ to $E$ without violating consistency, we would get a contradiction to the fact that $E$ is a repair of $D\setminus D_{\mathsf{conf}}^{\dep,Q}$. Moreover, the same argument used in the proof of the first direction shows that we cannot add a fact of $D_{\mathsf{conf}}^{\dep,Q}$ to $E$ without violating consistency; since $E$ entails $Q$, there exists a homomorphism $h$ from $Q$ to $D$ such that $R(h(\bar z))\subseteq E$ for every atom $R(\bar z)$ of $Q$, and every $R$-fact $g\in D_{\mathsf{conf}}^{\dep,Q}$ violates $\dep_R$ jointly with the fact $R(h(\bar z))$. Hence, $E$ is a maximal consistent subset of $D$.
\end{proof}

We next show that the facts of $D_{\mathsf{ind}}^{\dep,Q}$ have no impact on the relative frequency of $Q$ w.r.t.~$D\setminus D_{\mathsf{conf}}^{\dep,Q}$ and $\dep$.  Recall that for a database $D$, a set $\dep$ of FDs with an LHS chain, and an SJFCQ $Q$, the database $D_{\mathsf{ind}}^{\dep,Q}$ contains all the $R$-facts $f$ of $D\setminus D_{\mathsf{conf}}^{\dep,Q}$ such that for some FD $R:X_j\ra Y_j$ in the primary prefix of $\dep_R$, $f[A]\neq\alpha[A]$ for an attribute $A\in X_j$,
where $\alpha$ is the $R$-atom of $Q$.

\begin{lemma}\label{lemma:help2}
Consider a database $D$, a set $\dep$ of FDs with an LHS chain, and an SJFCQ $Q$. Then,
\[
\rfreq{Q}{D\setminus D_{\mathsf{conf}}^{\dep,Q},\dep}\ =\ \rfreq{Q}{D\setminus (D_{\mathsf{conf}}^{\dep,Q}\cup D_{\mathsf{ind}}^{\dep,Q}),\dep}.
\]
\end{lemma}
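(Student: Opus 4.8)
The plan is to mimic the structure of the proof of Lemma~\ref{lemma:help}, but now working with the database $D\setminus D_{\mathsf{conf}}^{\dep,Q}$ and tracking relative frequencies rather than just the count of repairs entailing $Q$. Write $D' = D\setminus D_{\mathsf{conf}}^{\dep,Q}$ and $D'' = D'\setminus D_{\mathsf{ind}}^{\dep,Q}$. The key observation is that every fact $f\in D_{\mathsf{ind}}^{\dep,Q}$ is "harmless": it disagrees with the $R$-atom $\alpha$ of $Q$ on some attribute in the left-hand side $X_j$ of an FD in the primary prefix, so $f$ is never in conflict (w.r.t.\ $\dep_R$) with any fact $g$ that agrees with $\alpha$ on all of $X_j\cup Y_j$ — in particular it is not in conflict with the witnessing facts $R(h(\bar z))$ of any homomorphism $h$ from $Q$ into $D'$. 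I would first make this precise: using the LHS-chain structure (conditions (1)--(3) on the canonical cover and the defining property of the primary prefix), show that for $f\in D_{\mathsf{ind}}^{\dep_R,Q}$ and any $R$-fact $g$ with $g[A]=\alpha[A]$ for every $A\in X_j\cup Y_j$ for all $j$ in the primary prefix, we have $\{f,g\}\models\dep_R$; and moreover $f$'s only potential conflicts are among facts that also disagree with $\alpha$ on $X_j$, i.e.\ other facts of $D_{\mathsf{ind}}^{\dep_R,Q}$ or of $D_{\mathsf{conf}}$ — but those are already excluded from $D'$.

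Next I would set up a bijection-style argument relating $\rep{D'}{\dep}$ and $\rep{D''}{\dep}$ that simultaneously preserves the property of entailing $Q$. The natural map is restriction: $E\mapsto E\cap D''$ for $E\in\rep{D'}{\dep}$. I would argue (i) this is well-defined, i.e.\ $E\cap D''\in\rep{D''}{\dep}$ — consistency is inherited, and maximality follows since any fact of $D''$ consistent with $E\cap D''$ would, by the harmlessness observation above applied in the reverse direction, already have been addable to $E$ in $D'$; (ii) each fiber of this map has a uniform size, namely each $E''\in\rep{D''}{\dep}$ is obtained from exactly $N_{E''}$ repairs $E$ of $D'$, where $N_{E''}$ is a product, over the relevant block/subblock structure, of local choices among $D_{\mathsf{ind}}$-facts — crucially, this multiplicity is the same regardless of whether $E''$ entails $Q$; and (iii) restriction preserves entailment of $Q$ in both directions, because the witnessing facts of a homomorphism all agree with $\alpha$ on the primary prefix and hence lie outside $D_{\mathsf{ind}}^{\dep,Q}$ by definition, so they survive in $E\cap D''$, and conversely a homomorphism into $D''$ is a homomorphism into $D'$. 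From (i)--(iii), $\card{\rep{D'}{\dep}} = \sum_{E''}N_{E''}$ and $\card{\rep{D'}{\dep,Q}} = \sum_{E''\models Q}N_{E''}$, and since the $N_{E''}$ factor cancels in the ratio once we observe it is "query-oblivious", we get $\rfreq{Q}{D',\dep}=\rfreq{Q}{D'',\dep}$.

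The cleanest way to make step (ii) rigorous, and the part I expect to be the main obstacle, is to pin down exactly the multiplicity $N_{E''}$. Rather than arguing it is literally constant across all $E''$ (which it need not be), I would argue the weaker but sufficient statement that the same multiset of multiplicities appears among the $Q$-entailing repairs of $D''$ as appears overall, in a proportional sense; equivalently, that for each fixed "primary-prefix assignment" the $D_{\mathsf{ind}}$-choices are independent of whether $Q$ is satisfied. Concretely: $D_{\mathsf{ind}}^{\dep,Q}$ only contains facts disagreeing with $\alpha$ on the LHS of a primary-prefix FD, so these facts live in blocks of the blocktree that are "siblings" of the block containing $\alpha$'s witnesses; by the sibling-disjointness property of the blocktree (the easy observation after Definition~\ref{def:blocktree}), the repairs decompose as a product over sibling blocks, and the factor coming from the $D_{\mathsf{ind}}$-blocks is exactly the number of repairs of $D_{\mathsf{ind}}^{\dep,Q}$ (restricted appropriately) w.r.t.\ $\dep$ — a quantity depending only on $D$ and $\dep$, not on the homomorphism. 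This product decomposition, together with Lemma~\ref{lemma:help}-style maximality bookkeeping, yields $\card{\rep{D'}{\dep}} = \card{\rep{D''}{\dep}}\cdot K$ and $\card{\rep{D'}{\dep,Q}} = \card{\rep{D''}{\dep,Q}}\cdot K$ for the \emph{same} constant $K$, whence the claimed equality of relative frequencies follows immediately.
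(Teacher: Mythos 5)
Your proposal is correct and follows essentially the same route as the paper: show that facts of $D_{\mathsf{ind}}^{\dep,Q}$ can never be homomorphic images of atoms of $Q$ and, by the case analysis on the primary prefix, never conflict with facts of $D\setminus(D_{\mathsf{conf}}^{\dep,Q}\cup D_{\mathsf{ind}}^{\dep,Q})$, so both repair counts factor through $\card{\rep{D_{\mathsf{ind}}^{\dep,Q}}{\dep}}$ and the ratio is unchanged. Your hedging in step (ii) is unnecessary — once the cross-conflict-freeness is established, the repairs of $D\setminus D_{\mathsf{conf}}^{\dep,Q}$ are exactly the unions of a repair of $D\setminus(D_{\mathsf{conf}}^{\dep,Q}\cup D_{\mathsf{ind}}^{\dep,Q})$ with a repair of $D_{\mathsf{ind}}^{\dep,Q}$, so the fiber size is literally the constant $\card{\rep{D_{\mathsf{ind}}^{\dep,Q}}{\dep}}$.
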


\begin{proof}
Since every $R$-fact $f$ of $D_{\mathsf{ind}}^{\dep,Q}$ disagrees with the atom $R(\bar z)$ of $Q$ on at least one constant (that occurs in a position of $I_{R,j}^\lhs$ for some FD $R:X_j\ra Y_j$ in the primary prefix of $\dep_R$), there is no homomorphism $h$ from $Q$ to $D\setminus D_{\mathsf{conf}}^{\dep,Q}$ such that $f=R(h(\bar z))$. We now show that no two $R$-facts $g\in (D\setminus (D_{\mathsf{conf}}^{\dep,Q}\cup D_{\mathsf{ind}}^{\dep,Q}))$ and $f
\in D_{\mathsf{ind}}^{\dep,Q}$ jointly violate $\dep_R$. Assume, towards a contradiction, that $\{f,g\}\not\models\dep_R$.
Since $f\in D_{\mathsf{ind}}^{\dep,Q}$, for some FD $R:X_j\ra Y_j$ in the primary prefix of $\dep_R$, it holds that $f$ disagrees with $R(\bar z)$ on a constant at a position of $I_{R,j}^\lhs$. Since $g\not\in (D_{\mathsf{conf}}^{\dep,Q}\cup D_{\mathsf{ind}}^{\dep,Q})$, the fact $g$ agrees with the atom $R(\bar z)$ on all the constants at the positions of $I_{R,j}^\lhs$. Therefore, $f$ and $g$ disagree on the value of at least one attribute of $X_j$, and for every $k\ge j$, it holds that $\{f,g\}\models R:X_k\ra Y_k$, as $X_j\subseteq X_k$. Therefore, the only possible case is that $\{f,g\}\not\models R:X_k\ra Y_k$ for some $k<j$. Note that the FD $R:X_k\ra Y_k$ also belongs to the primary prefix of $\dep_R$.

As said above, $g$ agrees with the atom $R(\bar z)$ on all the constants at the positions of $I_{R,j}^\lhs$. Since $X_k\subset X_j$, $g$ agrees with $R(\bar z)$ on all the constants at the positions of $I_{R,k}^\lhs$. Since $g\not\in D_{\mathsf{conf}}^{\dep,Q}$, we get that $g$ also agrees with $R(\bar z)$ on all the constants at the positions of $I_{R,k}^\rhs$. Hence, $f$ agrees with $R(\bar z)$ on all the constants at the positions of $I_{R,k}^\lhs$, but disagrees with $R(\bar z)$ on a constant at a position of $I_{R,k}^\rhs$, which implies that $f\in D_{\mathsf{conf}}^{\dep,Q}$. This is a contradiction since $f\in D_{\mathsf{ind}}^{\dep,Q}$ and $D_{\mathsf{conf}}^{\dep,Q}\cap D_{\mathsf{ind}}^{\dep,Q}=\emptyset$.

We have shown that for every atom $R(\bar z)$ of $Q$ and every homomorphism $h$ from $Q$ to $D\setminus D_{\mathsf{conf}}^{\dep,Q}$, the fact $R(h(\bar z))$ belongs to $D\setminus (D_{\mathsf{conf}}^{\dep,Q}\cup D_{\mathsf{ind}}^{\dep,Q})$. Moreover, no fact of $D_{\mathsf{ind}}^{\dep,Q}$ is in conflict with a fact of $D\setminus (D_{\mathsf{conf}}^{\dep,Q}\cup D_{\mathsf{ind}}^{\dep,Q})$. Therefore, it holds that:
\[
\card{\rep{D\setminus D_{\mathsf{conf}}^{\dep,Q}}{\dep,Q}}\ =\ \card{\rep{D\setminus (D_{\mathsf{conf}}^{\dep,Q}\cup D_{\mathsf{ind}}^{\dep,Q})}{\dep,Q}}\times \card{\rep{D_{\mathsf{ind}}^{\dep,Q}}{\dep}}
\]
and
\[
\card{\rep{D\setminus D_{\mathsf{conf}}^{\dep,Q}}{\dep}}\ =\ \card{\rep{D\setminus (D_{\mathsf{conf}}^{\dep,Q}\cup D_{\mathsf{ind}}^{\dep,Q})}{\dep}}\times \card{\rep{D_{\mathsf{ind}}^{\dep,Q}}{\dep}}.
\]
We can then conclude that
\[
\rfreq{Q}{D\setminus D_{\mathsf{conf}}^{\dep,Q},\dep}\ =\ \rfreq{Q}{D\setminus (D_{\mathsf{conf}}^{\dep,Q}\cup D_{\mathsf{ind}}^{\dep,Q}),\dep},
\]
and the claim follows.
\end{proof}

The following is an immediate corollary of Lemmas~\ref{lemma:help} and~\ref{lemma:help2}.

\begin{replemma}{\ref{lem:combined_help}}
\lemmacombinedhelp
\end{replemma}

The next lemma generalizes a property of primary keys: two $R$-facts that disagree on the value of an attribute in the left-hand side of the key of $R$ cannot jointly violate the constraints.

\begin{lemma}\label{lemma:basic}
Consider a database $D$, a set $\dep$ of FDs with an LHS chain, and an SJFCQ $Q$. For every relation name $R$ that occurs in $Q$, if two $R$-facts $f,g\in D\setminus (D_{\mathsf{conf}}^{\dep,Q}\cup D_{\mathsf{ind}}^{\dep,Q})$ are such that $f[A]\neq g[A]$ for some primary-lhs position $(R,A)$, then $\set{f,g}\models\dep_R$.
\end{lemma}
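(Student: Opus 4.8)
The plan is to unwind the definitions of $D_{\mathsf{conf}}^{\dep,Q}$ and $D_{\mathsf{ind}}^{\dep,Q}$ and combine them with the chain structure of $\dep_R$. Fix the LHS chain $R : X_1 \ra Y_1, \ldots, R : X_n \ra Y_n$ of $\dep_R$, and let $R : X_{i_R} \ra Y_{i_R}$ be the primary FD of $\dep_R$ w.r.t.~$Q$ if it exists; recall that in this case the primary prefix consists of the FDs with index $j < i_R$, and the primary-lhs positions of the $R$-atom $\alpha$ of $Q$ are $\set{(R,A) \mid A \in X_{i_R}}$, whereas if no primary FD exists the primary prefix is the whole chain and every position of $\alpha$ is primary-lhs.

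First I would show that $f$, $g$ and $\alpha$ pairwise agree on every attribute occurring in an FD of the primary prefix. Since $f$ is an $R$-fact, $f \notin D_{\mathsf{ind}}^{\dep,Q}$ is equivalent to $f \notin D_{\mathsf{ind}}^{\dep_R,Q}$, and likewise for $D_{\mathsf{conf}}$; so the hypothesis gives $f \notin D_{\mathsf{conf}}^{\dep_R,Q} \cup D_{\mathsf{ind}}^{\dep_R,Q}$. By definition of $D_{\mathsf{ind}}^{\dep_R,Q}$, being outside it while also outside $D_{\mathsf{conf}}^{\dep_R,Q}$ forces $f[A] = \alpha[A]$ for every FD $R : X_j \ra Y_j$ of the primary prefix and every $A \in X_j$. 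Given this, $f \notin D_{\mathsf{conf}}^{\dep_R,Q}$ forces, for each such $j$, that $f$ also agrees with $\alpha$ on all of $Y_j$, since otherwise $f$ would witness membership in $D_{\mathsf{conf}}^{\dep_R,Q}$ via $R : X_j \ra Y_j$. The same argument applies to $g$. Hence $f[C] = g[C] = \alpha[C]$ for every attribute $C$ lying in $X_j \cup Y_j$ for some FD $R : X_j \ra Y_j$ of the primary prefix.

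Next I would check that $\set{f,g} \models R : X_k \ra Y_k$ for every $k \in [n]$, distinguishing whether this FD belongs to the primary prefix. If it does, then $X_k \cup Y_k$ is contained in the set of attributes covered by the previous paragraph, so $f$ and $g$ agree on $X_k$ and on $Y_k$, and the FD holds. If it does not, then the primary FD necessarily exists (otherwise the primary prefix is all of $\dep_R$ and there is nothing left to check) and $k \ge i_R$, so $X_{i_R} \subseteq X_k$; since the primary-lhs attributes are exactly those of $X_{i_R}$, the hypothesis $f[A] \ne g[A]$ at some primary-lhs position $(R,A)$ supplies an attribute $A \in X_{i_R} \subseteq X_k$ on which $f$ and $g$ disagree, so the antecedent of $R : X_k \ra Y_k$ fails on $\set{f,g}$ and the FD holds vacuously. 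Concluding $\set{f,g} \models \dep_R$ is then immediate.

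I do not expect a real obstacle here: the statement essentially follows from unwinding the definitions, and the only point requiring care is the ``no primary FD'' convention, which must be threaded through both cases so the argument stays uniform; it is also worth keeping in mind that, the chain being increasing, agreement ``on each $X_j$ of the primary prefix'' just amounts to agreement on the largest such $X_j$.
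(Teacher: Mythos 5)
Your proof is correct and follows essentially the same route as the paper's: first deduce from $f,g\notin D_{\mathsf{conf}}^{\dep,Q}\cup D_{\mathsf{ind}}^{\dep,Q}$ that both facts agree with the query atom (hence with each other) on all attributes of $X_j\cup Y_j$ for every FD in the primary prefix, which settles those FDs, and then observe that every remaining FD $R:X_k\ra Y_k$ has $X_{i_R}\subseteq X_k$, so the disagreement at a primary-lhs position makes it hold vacuously. The explicit handling of the ``no primary FD'' convention matches the paper's treatment as well.
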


\begin{proof}
By the definition of $D_{\mathsf{conf}}^{\dep,Q}$ and $D_{\mathsf{ind}}^{\dep,Q}$, the $R$-facts $f$ and $g$ agree with the atom $R(\bar z)$ of $Q$ on all the constants that occur at the positions of $I_{R,j}^\lhs\cup I_{R,j}^\rhs$ for every FD $R:X_j\ra Y_j$ in the primary prefix of $\dep_R$.
Hence, $f$ and $g$ do not jointly violate any of these FDs. If $\dep_R$ has no primary FD, these are the only FDs in $\dep_R$, and the claim follows.
If $\dep_R$ does have a primary FD, for every FD $R:X_k\ra Y_k$ outside the primary prefix, i.e., every FD that occurs after the primary FD $R:X_{i_R}\ra Y_{i_R}$, $f$ and $g$ disagree on the value of at least one attribute of $X_k$ since $X_{i_R}\subseteq X_k$ and $f$ and $g$ disagree on the value of some primary-lhs position, and thus, $\set{f,g}\models\dep_R$.
\end{proof}

\rev{We finish the section with the following lemma.}

\begin{lemma}\label{lemma:no-complex-no-conflicts}
 \rev{Consider a database $D$, a set $\dep$ of FDs with an LHS chain, and an SJFCQ $Q$ with $\comp{Q}{\dep} = \emptyset$. If $D\models Q$, then
\[
\card{\rep{D\setminus D_{\mathsf{conf}}^{\dep,Q}}{\dep,Q}}=\card{\rep{D\setminus D_{\mathsf{conf}}^{\dep,Q}}{\dep}}.
\]}   
\end{lemma}

\begin{proof}
\rev{We will show that a database $E$ is a repair of $D\setminus D_{\mathsf{conf}}^{\dep,Q}$ w.r.t.~$\dep$ that entails $Q$ if and only if $E$ is a repair of $D\setminus D_{\mathsf{conf}}^{\dep,Q}$ w.r.t.~$\dep$. The ``only if'' direction is straightforward; hence, we focus on the other direction.
Let $E$ be a repair of $D\setminus D_{\mathsf{conf}}^{\dep,Q}$ w.r.t.~$\dep$. By contradiction, assume that $E\not\models Q$. Since $D\models Q$, there is a homomorphism $h$ from $Q$ to $D$. Let $R(\bar z)$ be an atom of $Q$. By definition, $R(\bar z)$ associates a constant with every position of $I_{R,j}^\lhs\cup I_{R,j}^\rhs$ for the FDs $R:X_j\ra Y_j$ in the primary prefix of $\dep_R$. Therefore, the fact
$R(h(\bar z))$ of $D$ uses these constants for the corresponding attributes. Since every fact of $D_{\mathsf{conf}}^{\dep,Q}$ disagrees with $R(\bar z)$ on at least one of those constants, $R(h(\bar z))\in (D\setminus D_{\mathsf{conf}}^{\dep,Q})$, and $h$ is a homomorphism from $Q$ to $D\setminus D_{\mathsf{conf}}^{\dep,Q}$.}

 \rev{
Now, assume that the atoms of $Q$ are $R_1(\bar z_1),\dots, R_n(\bar z_n)$ and let $h$ be a homomorphism from $Q$ to $D\setminus D_{\mathsf{conf}}^{\dep,Q}$ such that $|\set{R_t(\bar z_t)\mid R_t(h(\bar z_t))\in E}|=k$ for some $k$, and there is no other homomorhism $h'$ from $Q$ to $D\setminus D_{\mathsf{conf}}^{\dep,Q}$ such that $|\set{R_t(\bar z_t)\mid R_t(h'(\bar z_t))\in E}|>k$. Note that $k<n$ due to our assumption that $E\not\models Q$. Let $R_t(\bar z_t)$ be an atom of $Q$ such that $R_t(h(\bar z_t))\not\in E$. We denote $f=R_t(h(\bar z_t))$. Since we know that $f\in (D\setminus D_{\mathsf{conf}}^{\dep,Q})$ and $f\not\in E$, and \e{$E$ is a maximal consistent subset of $D\setminus D_{\mathsf{conf}}^{\dep,Q}$}, there exists a fact $g\in E$ such that $\{f,g\}\not\models\dep_{R_t}$.}

\rev{
For every $R_t:X_j\ra Y_j$ in the primary prefix of $\dep_{R_t}$, we know that $f$ agrees with the atom $R_t(\bar z_t)$ of $Q$ on the constants associated with the attributes of $X_j\cup Y_j$. Since $g\not\in D_{\mathsf{conf}}^{\dep,Q}$, it is either the case that $g$ also agrees with $R_t(\bar z_t)$ on these constants, or $g$ disagrees with $R_t(\bar z_t)$ on some value in an attribute of $X_j$. In both cases, we have that $\{f,g\}\models (R_t:X_j\ra Y_j)$. Note that if $R_t$ has no primary FD, we can conclude at this point that there is no $g\in E$ such that $\{f,g\}\not\models\dep_{R_t}$ (because all the FDs of $\dep_{R_t}$ belong to the primary prefix), and we immediately get a contradiction to the fact that $E$ is a repair (hence, maximal). Thus, in the rest of the proof, we consider the case where $R_t$ has a primary FD. In this case, $\{f,g\}\not\models (R_t:X_k\ra Y_k)$ for some FD outside the primary prefix, and since $X_{i_{R_t}}\subseteq X_k$ for each such FD, we conclude that $f$ and $g$ agree on the values of all the attributes in $X_{i_{R_t}}$. Note that in this case, $f$ and $g$ also agree on the values of all the attributes of $X_j$ for each FD $R_t:X_j\ra Y_j$ in the primary prefix of $\dep_{R_t}$ since $X_j\subset X_{i_{R_t}}$. Therefore, $g$ agrees with the atom $R_t(\bar z_t)$ on the constants associated with the attributes of $X_j\cup Y_j$ for each such FD. }

\rev{
Let $h' : \var{Q} \cup \const{Q} \ra \adom{D}$ be a function such that $h'(c)=c$ for every constant $c$ and $h'(x)=h(x)$ for every variable $x$ except those that occur at non-primary-lhs positions of $R_t(\bar z_t)$. For every variable $y$ that occurs at a non-primary-lhs position $(R_t,A)$ of $R_t(\bar z_t)$, we define $h'(y)=g[A]$. Since $\comp{Q}{\dep}=\emptyset$, each such variable occurs only once in the query. Therefore, for every $\ell\neq t$, it holds that $R_\ell(h'(\bar z_\ell))=R_\ell(h(\bar z_\ell))$ and $R_\ell(h'(\bar z_\ell))\in (D\setminus D_{\mathsf{conf}}^{\dep,Q})$. Moreover, $R_t(h'(\bar z_t))\in (D\setminus D_{\mathsf{conf}}^{\dep,Q})$ because $R_t(h'(\bar z_t))=g$; thus, $h'$ is a homomorphism from $Q$ to $D\setminus D_{\mathsf{conf}}^{\dep,Q}$. Clearly, we have that $R_\ell(h'(\bar z_\ell))\in E$ whenever $R_\ell(h(\bar z_\ell))\in E$, and it also holds that $R_t(h'(\bar z_t))\in E$ because $g\in E$. Therefore, we obtained a homomorphism from $Q$ to $D\setminus D_{\mathsf{conf}}^{\dep,Q}$ such that $|\set{R_t(\bar z_t)\mid R_t(h'(\bar z_t))\in E}|= k+1>k$, which is a contradiction to our assumption.}
\end{proof}

\subsection{Proof of Proposition~\ref{pro:counting-repairs-no-complex-part}}

\begin{repproposition}{\ref{pro:counting-repairs-no-complex-part}}
\propnocomplex
\end{repproposition}
\begin{proof}
Since $\dep$ has an LHS chain, $\card{\rep{D}{\dep}}$ can be computed in polynomial time in $||D||$, and the computation of $\rfreq{Q}{D,\dep}$ boils down to the computation of $\card{\rep{D}{\dep,Q}}$. Hence, it remains to show that $\card{\rep{D}{\dep,Q}}$ can be computed in polynomial time if $\comp{Q}{\dep} = \emptyset$. If $D\not\models Q$, then clearly no repair of $D$ entails $Q$, and $\card{\rep{D}{\dep,Q}}=0$. Therefore, in the rest of the proof, we focus on the case where $D\models Q$. Our proof for this case is by a reduction to the problem of counting repairs w.r.t.~$\dep$. \rev{In particular, 
Lemma~\ref{lemma:help} implies that \[\card{\rep{D}{\dep,Q}}=\card{\rep{D\setminus D_{\mathsf{conf}}^{\dep,Q}}{\dep,Q}}.\] Moreover, Lemma~\ref{lemma:no-complex-no-conflicts} implies that 
\[
\card{\rep{D\setminus D_{\mathsf{conf}}^{\dep,Q}}{\dep,Q}}=\card{\rep{D\setminus D_{\mathsf{conf}}^{\dep,Q}}{\dep}}.
\]
Hence, we conclude that 
\[
\card{\rep{D}{\dep,Q}}\ =\ \card{\rep{D\setminus D_{\mathsf{conf}}^{\dep,Q}}{\dep}}.
\]
The claim then follows from Proposition~\ref{pro:counting-repairs-lhs-fp}. }
\end{proof}

Observe that Proposition~\ref{pro:counting-repairs-no-complex-part} also captures the case where the atoms of $Q$ use only constants (and no variables), since, by definition, all the positions are primary-lhs positions, and $Q$ has no complex part. In the corresponding algorithm for primary keys, the case of a single atom with no variables is considered separately~\cite{MaWi13}.

\subsection{Proof of Lemma~\ref{lem:aux-2}}

Let us first recall that 
\[ 
D^{\dep,Q}\ =\ D\setminus(D_{\mathsf{conf}}^{\dep,Q}\cup D_{\mathsf{ind}}^{\dep,Q}) \qquad \text{and} \qquad \mathsf{R}_{D,\dep,Q}=\frac{\card{\repp{D \setminus D_{\mathsf{conf}}^{\dep,Q}}{\dep}}}{\card{\rep{D}{\dep}}}.
\]

\begin{replemma}{\ref{lem:aux-2}}
\lemmaauxsecond
\end{replemma}
\begin{proof}
We prove that
\[
\rfreq{Q}{D^{\dep,Q},\dep}\ =\ 1 - \prod_{c \in \adom{D}} \left(1 - 	\rfreq{Q_{x\mapsto c}}{D^{\dep,Q},\dep} \right)
\]
and then the desired result follows from Lemma~\ref{lem:combined_help}.

Let $Q_1=Q\setminus\comp{Q}{\dep}$, and let $E_1$ be the subset of $D^{\dep,Q}$ that contains all the facts over the relations of $Q_1$. Let $E_2$ be the subset of $D^{\dep,Q}$ that contains all the facts over the relations that do not occur in $Q$. Finally, let $E_3$ be the subset of $D^{\dep,Q}$ that contains all the facts over the relations that occur in $\comp{Q}{\dep}$. We denote these relations by $R_1,\dots,R_k$. For simplicity, we assume that the variable $x$ occurs at the left-most position of each one of the atoms $R_1(\bar z_1),\dots,R_k(\bar z_k)$ of $Q$, and that this position is a primary-lhs position. (Recall that $x \in \pvar{R_t(\bar z_t)}{\dep}$ for every $t \in [k]$.)

For every $c\in\adom{D}$, let $E_3^{c}$ be the subset of $E_3$ that contains all the facts over the relations $R_1,\dots,R_k$ that use the value $c$ at the left-most position. Note that two $R_t$-facts $f\in E_3^{c}$ and $g\in E_3^{b}$ for $c\neq b$ cannot jointly violate $\dep_{R_t}$. This is because $f,g\in D \setminus \left(D_{\mathsf{conf}}^{\dep,Q} \cup D_{\mathsf{ind}}^{\dep,Q}\right)$ and they use a different constant at some primary-lhs position; hence, Lemma~\ref{lemma:basic} implies that $\set{f,g}\models\dep_{R_t}$.

Now, consider the query $Q_1$. Since it holds that $\comp{Q_1}{\dep}=\emptyset$ and $D^{\dep,Q}$ contains no facts of $D_{\mathsf{conf}}^{\dep,Q}$, \rev{Lemma~\ref{lemma:no-complex-no-conflicts}} implies that all the repairs of $E_1$ entail $Q_1$. Moreover, if $J_1,\dots,J_m$ are the repairs of $E_1$ and $K$ is a repair of $E_3$, it holds that if $J_\ell\cup K$ entails $Q$ for some $\ell\in\set{1,\dots,m}$, then $J_\ell\cup K$ entails $Q$ for all $\ell\in\set{1,\dots,m}$. This holds since $\comp{Q_1}{\dep}=\emptyset$; thus, the only variables of $Q_1$ that occur more than once in $Q$ appear at primary-lhs positions of the atoms in $Q_1$, and the same holds for the constants that occur in $Q_1$. Moreover, Lemma~\ref{lemma:basic} implies that for every relation name $R$ in $Q$, two $R$-facts in $D^{\dep,Q}$ are in conflict only if they agree on all the attributes corresponding to primary-lhs positions of the atom $R(\bar z)$ of $Q$. Hence, if there exists a homomorphism $h$ from $Q$ to $J_\ell\cup K$ for some $\ell\in\set{1,\dots,m}$, then for every $\ell\in\set{1,\dots,m}$, there exists a homomorphism $h'$ from $Q$ to $J_\ell\cup K$ (that only differs from $h$ on the values of the variables that occur once in $Q_1$). Therefore, we first limit our discussion to the repair $J_1$.

In what follows, for a database $K$, we denote by $\card{\rep{K}{\dep,\neg Q}}$ the number of repairs of $K$ w.r.t.~$\dep$ that do not entail $Q$ (i.e., $\card{\rep{K}{\dep,\neg Q}}=\card{\rep{K}{\dep}}-\card{\rep{K}{\dep, Q}}$), and by $\rfreq{\neg Q}{K,\dep}$ the ratio $\frac{\card{\rep{K}{\dep,\neg Q}}}{\card{\rep{K}{\dep}}}$. For every $c\in\adom{D}$, let
$m_c = \card{\rep{E_3^c\cup J_1}{\dep,\neg Q}}$ and $n_c = \card{\rep{E_3^c}{\dep}}.$ Since all the facts of $E_3^c$ use the constant $c$ at their left-most position, associated with the variable $x$ in the corresponding atom, we have that:
\[
\card{\rep{E_3^c\cup J_1}{\dep,\neg Q}}\ =\ \card{\rep{E_3^c\cup J_1}{\dep,\neg Q_{x\mapsto c}}}.
\]
Clearly, it holds that:
\[
D^{\dep,Q}\ =\ E_1\cup E_2\cup\left(\bigcup_{c\in\adom{D}}E_3^c\right).
\]
Therefore, if $\card{\rep{E_1}{\dep}}=n_1$ and $\card{\rep{E_2}{\dep}}=n_2$, then we have that:
\[
\card{\rep{D^{\dep,Q}}{\dep}}\ =\ n_1\times n_2\times \Pi_{c\in\adom{D}}n_c
\]
since, as explained above, there are no conflicts among facts from $E_3^c$ and $E_3^b$ for $c\neq b$. Moreover,
\[
\card{\rep{D^{\dep,Q}}{\dep,\neg Q_{x\mapsto c}}}\ =\ m_c\times n_1\times n_2\times \Pi_{b\in\adom{D},b\neq c}n_b.
\]
This holds since the repairs of $E_2$ do not affect the entailment of the query $Q_{x\mapsto c}$ (as $E_2$ contains only facts over relation names that do not occur in $Q$), and so do not the repairs of $E_3^b$ for $b\neq c$. Moreover, as aforementioned, if $K\cup J_1$ entails $Q_{x\mapsto c}$ for some repair $K$ of $E_3^c$, then $K\cup J_\ell$ also entails $Q_{x\mapsto c}$ for all $\ell\in\set{2,\dots,m}$.

By employing similar arguments we can show that
\[
\card{\rep{D^{\dep,Q}}{\dep,\neg Q}}\ =\ n_1\times n_2\times \Pi_{c\in\adom{D}}m_c.
\]
Thus, we have that 
\[
\rfreq{\neg Q_{x\mapsto c}}{D^{\dep,Q},\dep}\ =\ \frac{m_c\times n_1\times n_2\times \Pi_{b\in\adom{D},b\neq c}n_b}{n_1\times n_2\times \Pi_{b\in\adom{D}}n_b}=\frac{m_c}{n_c}
\]
and
\[
\rfreq{\neg Q}{D^{\dep,Q},\dep}\ =\ \frac{n_1\times n_2\times \Pi_{c\in\adom{D}}m_c}{n_1\times n_2\times \Pi_{c\in\adom{D}}n_c}=\Pi_{c\in\adom{D}}\frac{m_c}{n_c}.
\]
By combining these two equations we get that
\[
\rfreq{\neg Q}{D^{\dep,Q},\dep}\ =\ \Pi_{c\in\adom{D}}\rfreq{\neg Q_{x\mapsto c}}{D^{\dep,Q},\dep}.
\]
Now,
\begin{eqnarray*}
\rfreq{Q}{D^{\dep,Q},\dep}&=&\frac{\card{\rep{D^{\dep,Q}}{\dep, Q}}}{\card{\rep{D^{\dep,Q}}{\dep}}}\\
&=& \frac{\card{\rep{D^{\dep,Q}}{\dep}}-\card{\rep{D^{\dep,Q}}{\dep, \neg Q}}}{\card{\rep{D^{\dep,Q}}{\dep}}}\\
&=& 1-\frac{\card{\rep{D^{\dep,Q}}{\dep,\neg Q}}}{\card{\rep{D^{\dep,Q}}{\dep}}}\\
&=&1-\rfreq{\neg Q}{D^{\dep,Q},\dep}\\
&=& 1-\Pi_{c\in\adom{D}}\rfreq{\neg Q_{x\mapsto c}}{D^{\dep,Q},\dep}\\
&=& 1-\Pi_{c\in\adom{D}}\left(1-\rfreq{Q_{x\mapsto c}}{D^{\dep,Q},\dep}\right).
\end{eqnarray*}
This concludes our proof.
\end{proof}

\subsection{Proof of Lemma~\ref{lem:aux-3}}
Recall that this lemma deals with SJFCQs with a non-empty complex part that contains at least one atom $R(\bar z)$ such that: \e{(1)} $\pvar{R(\bar z)}{\dep}=\emptyset$, and \e{(2)} $R(\bar z)$ has a variable $x$ at a position of $I_{R,i_R}^\rhs$.
Note that we have added here a condition on the variable $x$ that does not occur in the corresponding lemma of Maslowski and Wijsen~\cite{MaWi13}---the variable $x$ is associated with an attribute on the right-hand side of the primary FD. In the case of primary keys, every position of an $R$-atom $\alpha$ in $Q$ is associated with either an attribute that occurs on the left-hand side of the key of $R$ or an attribute that occurs on the right-hand side of the key. In our setting, a position of $\alpha$ may be associated with an attribute that does not occur at all in the primary FD; hence, this additional condition is needed to establish the correctness of the lemma in question.

\begin{replemma}{\ref{lem:aux-3}}
\lemmaauxthird
\end{replemma}
\begin{proof}
Assume that $\alpha=R(\bar z)$. It clearly holds that
\[
\rep{D}{\dep, Q}\ =\ \bigcup_{c\in\adom{D}}\rep{D}{\dep, Q_{x\mapsto c}}.
\]
We now prove that for all $a\neq b$ we have that
\[
\rep{D}{\dep, Q_{x\mapsto a}}\cap \rep{D}{\dep, Q_{x\mapsto b}}\ =\ \emptyset.
\]
Assume, towards a contradiction, that there is some repair $J$ of $D$ such that $J\in (\rep{D}{\dep, Q_{x\mapsto a}}\cap \rep{D}{\dep, Q_{x\mapsto b}})$ for some $a\neq b$. Let $h$ be a homomorphism from $Q_{x\mapsto a}$ to $J$ and let $h'$ be a homomorphism from $Q_{x\mapsto b}$ to $J$.
Let $f=R(h(\bar z))$ and $g=R(h'(\bar z))$.

Since $h$ maps the atoms of $Q_{x\mapsto a}$ to facts of $J$, the fact $f$ uses the constant $a$ at each position of $\alpha$ that uses the variable $x$. Similarly, the fact $g$ uses the constant $b$ at each position that uses the variable $x$. Moreover, the facts $f$ and $g$ agree with the atom $\alpha$ on all the constants that occur in this atom. Since for each $A\in X$ (where $X$ is the left-hand side of the primary FD of $\dep_R$) the position $(R,A)$ is a primary-lhs position and all these positions are associated with constants in $\alpha$, the facts $f$ and $g$ agree on the values of all the attributes in $X$, but disagree on the value of at least one attribute in $Y$ (the attribute corresponding to the position of $x$ in $\alpha$). We conclude that $\{f,g\}\not\models\dep_R$, which is a contradiction to the fact that $J$ is a repair.
Hence, it holds that
\[
\card{\rep{D}{\dep, Q}}\ =\ \sum_{c\in\adom{D}}\card{\rep{D}{\dep, Q_{x\mapsto c}}}
\]
and
\[
\rfreq{Q}{D,\dep}\ =\frac{\rep{D}{\dep, Q}}{\rep{D}{\dep}}\ =\ \sum_{c\in\adom{D}}\frac{\rep{D}{\dep, Q_{x\mapsto c}}}{\rep{D}{\dep}}=\sum_{c\in\adom{D}}\rfreq{Q_{x\mapsto c}}{D,\dep},
\]
and the claim follows.
\end{proof}

\subsection{Proof of Lemma~\ref{lem:rewrite-rules-relfreq}}

We first prove two auxiliary lemmas that will be used in the proof of Lemma~\ref{lem:rewrite-rules-relfreq}. Similar lemmas have been used in the proof of hardness for primary keys, but here we have to take into account not only the primary FDs, but all the FDs of $\dep$. To this end, we use $D_{\mathsf{conf}}^{\dep,Q}$ and $D_{\mathsf{ind}}^{\dep,Q}$. 

\begin{lemma}\label{lemma:hard_help1}
	Consider a database $D$, a set $\dep$ of FDs with an LHS chain, and an SJFCQ $Q$. For an atom $R(\bar z)$ of $Q$, and a set $E$ of $R$-facts such that (1) $E\not\models R(\bar z)$, and (2) for all $R$-facts $f\in E$ and $g\in D$, \rev{there is a primary-lhs position $(R,A)$ such that $f[A]\neq g[A]$}, it holds that
	\begin{equation*}
	\rfreq{Q}{D\cup E,\dep}\ =\ \rfreq{Q}{D,\dep}\times\frac{\card{\rep{D}{\dep}}}{\card{\rep{D\setminus D_{\mathsf{conf}}^{\dep,Q}}{\dep}}}
	\times \frac{\card{\rep{(D\cup E)\setminus (D_{\mathsf{conf}}^{\dep,Q}\cup E_{\mathsf{conf}}^{\dep,Q})}{\dep}}}{\card{\rep{D\cup E}{\dep}}}.
	\end{equation*}
\end{lemma}

\begin{proof}
	Let $D^{\dep,Q}=D\setminus (D_{\mathsf{conf}}^{\dep,Q}\cup D_{\mathsf{ind}}^{\dep,Q})$ and $E^{\dep,Q}=E\setminus (E_{\mathsf{conf}}^{\dep,Q}\cup E_{\mathsf{ind}}^{\dep,Q})$.
	Lemma~\ref{lemma:basic} implies that there are no conflicts among $R$-facts $f\in D^{\dep,Q}$ and $g\in E^{\dep,Q}$, and therefore:
	$$\card{\rep{D^{\dep,Q}\cup E^{\dep,Q}}{\dep}}\ =\ \card{\rep{D^{\dep,Q}}{\dep}}\times \card{\rep{E^{\dep,Q}}{\dep}}.$$
	Since $E\not\models R(\bar z)$ (and so $E^{\dep,Q}\not\models R(\bar z)$ as well), it also holds that:
	$$\card{\rep{D^{\dep,Q}\cup E^{\dep,Q}}{\dep,Q}}\ =\ \card{\rep{D^{\dep,Q}}{\dep,Q}}\times \card{\rep{E^{\dep,Q}}{\dep}}.$$
	We can then conclude that:
	$$\rfreq{Q}{D^{\dep,Q}\cup E^{\dep,Q},\dep}\ =\ \rfreq{Q}{D^{\dep,Q},\dep}$$
	Now, Lemma~\ref{lem:combined_help} implies that:
	\[
	\rfreq{Q}{D,\dep}\ =\ \rfreq{Q}{D^{\dep,Q},\dep}\times\frac{\card{\rep{D\setminus D_{\mathsf{conf}}^{\dep,Q}}{\dep}}}{\card{\rep{D}{\dep}}}
	\]
	and
	\begin{equation*}
	\rfreq{Q}{D\cup E,\dep} = \rfreq{Q}{D^{\dep,Q}\cup E^{\dep,Q},\dep}
	\times\frac{\card{\rep{(D\cup E)\setminus (D_{\mathsf{conf}}^{\dep,Q}\cup E_{\mathsf{conf}}^{\dep,Q})}{\dep}}}{\card{\rep{D\cup E}{\dep}}}.
	\end{equation*}
	By combining these three results we obtain that:
	\begin{align*}
	\rfreq{Q}{D\cup E,\dep}&=\rfreq{Q}{D^{\dep,Q}\cup E^{\dep,Q},\dep}\times\frac{\card{\rep{(D\cup E)\setminus (D_{\mathsf{conf}}^{\dep,Q}\cup E_{\mathsf{conf}}^{\dep,Q})}{\dep}}}{\card{\rep{D\cup E}{\dep}}}\\
	&=\rfreq{Q}{D^{\dep,Q},\dep}\times\frac{\card{\rep{(D\cup E)\setminus (D_{\mathsf{conf}}^{\dep,Q}\cup E_{\mathsf{conf}}^{\dep,Q})}{\dep}}}{\card{\rep{D\cup E}{\dep}}}\\
	&=\rfreq{Q}{D,\dep}\times \frac{\card{\rep{D}{\dep}}}{\card{\rep{D\setminus D_{\mathsf{conf}}^{\dep,Q}}{\dep}}}\times\frac{\card{\rep{(D\cup E)\setminus (D_{\mathsf{conf}}^{\dep,Q}\cup E_{\mathsf{conf}}^{\dep,Q})}{\dep}}}{\card{\rep{D\cup E}{\dep}}}
	\end{align*}
	This concludes our proof.
\end{proof}

The next technical lemma is rather straightforward.

\begin{lemma}\label{lemma:hard_help2}
	Consider a database $D$, a set $\dep$ of FDs with an LHS chain, and an SJFCQ $Q$. For $E\subseteq D$ that contains all the facts of $D$ over relations that occur in $Q$ we have that:
	$$\rfreq{Q}{D,\dep}\ =\ \rfreq{Q}{E,\dep}$$
\end{lemma}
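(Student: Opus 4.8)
Set $F = D \setminus E$. By hypothesis every fact of $D$ over a relation name occurring in $Q$ lies in $E$, so $F$ consists exclusively of facts over relation names that do \emph{not} occur in $Q$. Moreover, by the choice of $E$, no relation name has facts in both $E$ and $F$: either all $D$-facts over that relation are in $E$ (if the relation occurs in $Q$) or they are split according to the stated rule, but in any case the relation names appearing in $F$ are disjoint from those appearing in $E$ — indeed $F$ only uses relations not in $Q$, and for such a relation $R$ we may simply declare all its facts to go into $F$; if the statement intends $E$ to also contain facts over some relations not in $Q$, the argument is unchanged as long as we take $F$ to be the part over a set of relation names disjoint from those of $E$ and of $Q$. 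The plan is to use this disjointness to factor both counting quantities.

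First I would observe that, since $\dep_R$ constrains only $R$-facts and no relation name is shared between $E$ and $F$, no two facts $f \in E$, $g \in F$ can jointly violate $\dep$; hence the conflict structure of $D$ decomposes, and
\[
\rep{D}{\dep}\ =\ \left\{ E' \cup F' \mid E' \in \rep{E}{\dep},\ F' \in \rep{F}{\dep} \right\},
\]
so that $\card{\rep{D}{\dep}} = \card{\rep{E}{\dep}} \cdot \card{\rep{F}{\dep}}$ (this is the same kind of product decomposition used implicitly in Lemma~\ref{lem:aux-1} and Lemma~\ref{lemma:help2}).

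Next I would argue that entailment of $Q$ by a repair $E' \cup F'$ depends only on $E'$: a homomorphism $h$ from $Q$ to $E' \cup F'$ maps every atom of $Q$ to a fact over a relation occurring in $Q$, and all such facts lie in $E'$; conversely any homomorphism witnessing $E' \models Q$ also witnesses $E' \cup F' \models Q$. Hence $E' \cup F' \in \rep{D}{\dep,Q}$ iff $E' \in \rep{E}{\dep,Q}$, which gives $\card{\rep{D}{\dep,Q}} = \card{\rep{E}{\dep,Q}} \cdot \card{\rep{F}{\dep}}$.

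Finally, dividing the two displayed equalities, and using that $\card{\rep{F}{\dep}} \ge 1$ (every database has at least one repair, so the factor is nonzero and cancels), yields
\[
\rfreq{Q}{D,\dep}\ =\ \frac{\card{\rep{D}{\dep,Q}}}{\card{\rep{D}{\dep}}}\ =\ \frac{\card{\rep{E}{\dep,Q}}}{\card{\rep{E}{\dep}}}\ =\ \rfreq{Q}{E,\dep},
\]
as required. There is no real obstacle here; the only point requiring a line of care is justifying the product decomposition of repairs from the disjointness of relation names between $E$ and $F$, which follows because an FD in $\dep$ can be violated only by two facts over the same relation name.
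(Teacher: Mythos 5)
Your proof is correct and essentially identical to the paper's: both factor $\card{\rep{D}{\dep}}$ and $\card{\rep{D}{\dep,Q}}$ as products over the relation-name-disjoint parts $E$ and $D\setminus E$, observe that entailment of $Q$ by a repair depends only on its facts over relations occurring in $Q$, and cancel the common nonzero factor $\card{\rep{D\setminus E}{\dep}}$. The wrinkle you flag---that $E$ might contain some but not all facts over a relation outside $Q$, breaking the disjointness needed for the product decomposition---is real but is glossed over by the paper's proof as well; in every application $E$ is exactly the set of facts over the relations of $Q$ (and the fully general statement follows by comparing both $D$ and $E$ to that set), so nothing is lost.
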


\begin{proof}
	Clearly, it holds that:
	\[
	\card{\rep{D}{\dep}}\ =\ \card{\rep{E}{\dep}}\times \card{\rep{D\setminus E}{\dep}}
	\]
	Moreover, since the relations of the facts in $D\setminus E$ do not occur in $Q$, we have that:
	\[
	\card{\rep{D}{\dep,Q}}\ =\ \card{\rep{E}{\dep,Q}}\times \card{\rep{D\setminus E}{\dep}}
	\]
	We conclude that:
	\[
	\rfreq{Q}{D,\dep}\ =\ \rfreq{Q}{E,\dep},
	\]
	and the claim follows.
\end{proof}

We are now ready to give the proof of the lemma in question, which we recall here:

\begin{replemma}{\ref{lem:rewrite-rules-relfreq}}
	\lemmarewriterules
\end{replemma}

\begin{proof}
	We consider each one of the rewrite rules given in Figure~\ref{fig:rules} separately. To this end, given a database $D$, we need to construct in polynomial time a database $E$ such that the value $\rfreq{Q'}{D,\dep'}$ can be computed from $\rfreq{Q}{E,\dep}$. 
	The proofs for the rewrite rules R5, R7, and R10, have been already given in the main body of the paper. We now proceed to give the proofs for the remaining rewrite rules, which are similar to the corresponding proofs for primary keys~\cite{MaWi13}, but we have to take into account the FDs in the primary prefix. To this end, we use $D_{\mathsf{conf}}^{\dep,Q}$ and $D_{\mathsf{ind}}^{\dep,Q}$.
	
	\medskip
	
\rev{\noindent\paragraph{\underline{Rewrite Rule R1}: $(\dep,Q)\lrhd \, (\dep, Q_{x\mapsto c})$ if there is an atom $\alpha \in Q$ with $x\in\pvar{\alpha}{\dep}.$}}
	\smallskip
	
	\noindent Let $\alpha=R(\bar z)$ be an atom of $Q$, and let $x\in\pvar{\alpha}{\dep}$. Assume, without loss of generality, that $x$ occurs at the first position of $\alpha$. Let $D'\subseteq D$ be the set of $R$-facts that use a constant $b\neq c$ at the first position. Let $E=D\setminus D'$.
	Lemma~\ref{lemma:hard_help1} implies that
	\[
	\rfreq{Q_{x\mapsto c}}{D,\dep}\ =\ \rfreq{Q_{x\mapsto c}}{E,\dep}\times\frac{\card{\rep{E}{\dep}}}{\card{\rep{E\setminus E_{\mathsf{conf}}^{\dep,Q_{x\mapsto c}}}{\dep}}}\times \frac{\card{\rep{D\setminus D_{\mathsf{conf}}^{\dep,Q_{x\mapsto c}}}{\dep}}}{\card{\rep{D}{\dep}}}.
	\]
	Note that $\dep_R$ may have one primary FD w.r.t.~$Q$ and another primary FD w.r.t.~$Q_{x\mapsto c}$. This is the case, for example, if the position $(R,A)$ of the variable $x$ corresponds to the only attribute of the primary FD that is associated with a variable. In this case, the primary FD of $\dep_R$ w.r.t.~$Q_{x\mapsto c}$ might be another FD that occurs later in the LHS chain of $\dep_R$ or $\dep_R$ might have no primary FD w.r.t.~$Q_{x\mapsto c}$. However, if $(R,A)$ is a primary-lhs position w.r.t.~$Q$, then it is also a primary-lhs position w.r.t.~$Q_{x\mapsto c}$.
	Since every $R$-fact of $E$ uses the value $c$ at the first position, we get that
	\[
	\rfreq{Q_{x\mapsto c}}{E,\dep}\ =\ \rfreq{Q}{E,\dep}.
	\]
	Therefore, we can conclude that
	\[
	\rfreq{Q_{x\mapsto c}}{D,\dep}\ =\ \rfreq{Q}{E,\dep}\times\frac{\card{\rep{E}{\dep}}}{\card{\rep{E\setminus E_{\mathsf{conf}}^{\dep,Q_{x\mapsto c}}}{\dep}}}\times \frac{\card{\rep{D\setminus D_{\mathsf{conf}}^{\dep,Q_{x\mapsto c}}}{\dep}}}{\card{\rep{D}{\dep}}}.
	\]
 \rev{(The ratios in the above equation can be computed in polynomial time due to Proposition~\ref{pro:counting-repairs-lhs-fp}.)}
	
	\medskip
	
\rev{\noindent\paragraph{\underline{Rewrite Rule R2}: $(\dep, Q_1\cup Q_2)\lrhd \, (\dep,Q_1)$ if $Q_1\neq \emptyset$, $Q_2 \neq \emptyset$ and $\var{Q_1}\cap\var{Q_2}=\emptyset$.} }
	\smallskip
	
	\noindent Let $D'\subseteq D$ be the subset of $D$ that contains every fact over a relation name $R$ that occurs in $Q_1$. Let $D''$ be a set of facts over the relation names that occur in $Q_2$ that is obtained via some function $h$ from the variables of $Q_2$ to constants (that is, for every atom $R(\bar z)$ of $Q_2$, the database $D''$ will contain the $R$-fact $R(h(\bar z))$). Clearly, $D''$ is a consistent database (as it contains one fact per relation name) that entails $Q_2$. Let $E = D'\cup D''$. Lemma~\ref{lemma:hard_help2} implies the following:
	\begin{eqnarray*}
		&\rfreq{Q_1}{D,\dep}\ =\ \rfreq{Q_1}{D',\dep}&\\
		&\rfreq{Q_1}{E,\dep}\ =\ \rfreq{Q_1}{D',\dep}&\\
		&\rfreq{Q_2}{E,\dep}\ =\ \rfreq{Q_2}{D'',\dep}\ =\ 1.&
	\end{eqnarray*}
	By Lemma~\ref{lem:aux-1}, we get that
	\[
	\rfreq{Q}{E,\dep}\ =\ \rfreq{Q_1}{E,\dep}\times \rfreq{Q_2}{E,\dep}=\rfreq{Q_1}{D',\dep}\times \rfreq{Q_2}{D'',\dep}
	\]
	By combining these above results, we conclude that
	\[
	\rfreq{Q_1}{D,\dep}\ =\ \rfreq{Q_1}{D',\dep}\ =\ \rfreq{Q_1}{D',\dep}\times \rfreq{Q_2}{D'',\dep}\ =\ \rfreq{Q}{E,\dep}.
	\]
	
	\medskip

\rev{\noindent\paragraph{\underline{Rewrite Rule R3}: $(\dep,Q)\lrhd \, (\dep,Q_{y\mapsto x})$ if there is an atom $\alpha \in Q$ with $x,y\in\pvar{\alpha}{\dep}$.} }
	\smallskip

	\noindent Let $\alpha=R(\bar z)$ be an atom of $Q$, and let $x,y\in\pvar{\alpha}{\dep}$. Assume, without loss of generality, that $x$ and $y$ occur at the first and second positions of $\alpha$, respectively. Let $D'\subseteq D$ be the set of $R$-facts that use different constants at the first and second positions (i.e., the facts of the form $R(a,b,\bar w)$ for $a\neq b$). Let $E=D\setminus D'$. Lemma~\ref{lemma:hard_help1} implies that
	\[
	\rfreq{Q_{y\mapsto x}}{D,\dep}\ =\ \rfreq{Q_{y\mapsto x}}{E,\dep}\times\frac{\card{\rep{E}{\dep}}}{\card{\rep{E\setminus E_{\mathsf{conf}}^{\dep,Q_{y\mapsto x}}}{\dep}}}\times \frac{\card{\rep{D\setminus D_{\mathsf{conf}}^{\dep,Q_{y\mapsto x}}}{\dep}}}{\card{\rep{D}{\dep}}}.
	\]
	(Note that, since every $R$-fact of $D\setminus E$ uses different constants for the first two positions, and every $R$-fact of $E$ uses the same constant for both positions, for all $R$-facts $f\in E$ and $g\in D\setminus E$, the facts $f$ and $g$ disagree on at least one primary-lhs position.)
	Since all the $R$-facts of $E$ are of the form $R(a,a,\bar w)$ for some constant $a$, we have that
	\[
	\rfreq{Q_{y\mapsto x}}{E,\dep}\ =\ \rfreq{Q}{E,\dep}.
	\]
	Therefore, we can conclude that
	\[
	\rfreq{Q_{y\mapsto x}}{D,\dep}\ =\ \rfreq{Q}{E,\dep}\times\frac{\card{\rep{E}{\dep}}}{\card{\rep{E\setminus E_{\mathsf{conf}}^{\dep,Q_{y\mapsto x}}}{\dep}}}\times \frac{\card{\rep{D\setminus D_{\mathsf{conf}}^{\dep,Q_{y\mapsto x}}}{\dep}}}{\card{\rep{D}{\dep}}}.
	\]
	
	 \medskip
	 
\rev{	 \noindent\paragraph{\underline{Rewrite Rule R4}: $(\dep,Q\cup\set{\alpha})\lrhd \, (\dep,Q)$ if $(\var{Q}\cap\var{\alpha})\subseteq\pvar{\alpha}{\dep}$.} }
	 \smallskip
	 
	 \noindent
	\rev{ Assume $\alpha=R(\bar z)$.} Let $D' \subseteq D$ be the database that contains all the facts of $D$ over the relation names that occur in $Q$. If $\pvar{\alpha}{\dep} = \emptyset$, then $\var{Q}\cap\var{\alpha}=\emptyset$, and the rule R2 implies
	\[
	\rfreq{Q}{D,\dep}\ =\ \rfreq{Q\cup\{\alpha\}}{E,\dep},
	\]
	where $E=D'\cup D''$, and $D''$ has a single fact that is obtained via a function $h$ from the variables of $\alpha$ to constants. If $\pvar{\alpha}{\dep}\neq\emptyset$, let $K$ be the database that contains, for every function $h$ from $\pvar{\alpha}{\dep}$ to $\adom{D}$, an $R$-fact $f$ such that:
	\[
	f[A]=\begin{cases}
	h(x) & \rev{\mbox{if $\alpha[A]=x$ and $x\in \pvar{\alpha}{\dep}$}}\\
	c & \mbox{if $\alpha[A]=c$ (where $c$ is a constant)} \\ 
	\odot & \mbox{otherwise.}
	\end{cases}
	\]
	Note that $K$ is a consistent database since all the facts of $K$ agree on the values of all attributes in $X_j\cup Y_j$ for the FDs $R:X_j\ra Y_j$ in the primary prefix of $\dep_R$ (as all these attributes are associated with constants in $\alpha$). Moreover, no two facts of $K$ agree on the values of all primary-lhs positions; hence, no two facts violate an FD that occurs after the primary FD in the LHS chain of $\dep_R$. In addition, for every $f\in K$, we clearly have that $f\models \alpha$. Let $E=D'\cup K$.
	Lemma~\ref{lemma:hard_help2} implies that
	\[
	\rfreq{Q}{E,\dep}\ =\ \rfreq{Q}{D',\dep}
	\]
	and
	\[
	\rfreq{Q}{D',\dep}\ =\ \rfreq{Q}{D,\dep}.
	\]
	By the definition of the database $K$, we have that
	$$\rfreq{Q}{E,\dep}\ =\ \rfreq{Q\cup\{\alpha\}}{E,\dep}.$$
	This is because the only shared variables among $Q$ and $\alpha$ occur in $\pvar{\alpha}{\dep}$, and $K$ contains a corresponding fact for every possible function from $\pvar{\alpha}{\dep}$ to $\adom{D}$. Moreover, the database $K$ is consistent; hence, every repair of $E$ contains all the facts of $K$.
	From the above results,
	\[
	\rfreq{Q}{D,\dep}\ =\ \rfreq{Q\cup\{\alpha\}}{E,\dep}.
	\]
	
	 	\medskip
	
\rev{\noindent\paragraph{\underline{Rewrite Rule R8}: $(\dep,Q\cup\set{R(\bar z,x,\bar y)})\lrhd \, ((\dep\setminus\dep_R)\cup\dep_{R'},Q\cup\set{R'(\bar z,\bar y)})$ if $x$ is an orphan variable of $Q\cup\set{R(\bar z,x,\bar y)}$, and $R'$ does not occur in $Q$.} }
	\smallskip
	
	\noindent
	Assume that the variable $x$ occurs at the position $(R,A)$ of $R(\bar z,x,\bar y)$. As in rule R5, we denote by $\dep$ the set of FDs over the schema $\ins{S}$ that contains the relation $R(\bar z,x,\bar y)$, and by $\dep'$ the set of FDs over the schema $\ins{S}'$ that contains the relation $R'(\bar z,\bar y)$. We have that $\dep_{R''}=\dep'_{R''}$ for every relation name of $\ins{S} \cap \ins{S}'$, and $\dep'_{R'}$ is obtained from $\dep_{R}$ by removing the attribute $A$ from every FD of $\dep_{R}$, and then removing trivial FDs and merging FDs with the same left-hand side.
	Given a database $D$ over $\ins{S}$, let $E$ be the database that contains all the facts of $D$ over relation names $R''$ such that $R''\neq R'$, and, for every fact $f$ over $R'$, the following $R$-fact
	\[
	f'[B]\ =\ \begin{cases}
	f[B] & \mbox{if $B\neq A$}\\
	\odot & \mbox{otherwise.}
	\end{cases}
	\]
	Clearly, we have that
	\[
	\card{\rep{D}{\dep'}}\ =\ \card{\rep{E}{\dep}}.
	\]
	This holds since all the $R$-facts of $E$ agree on the value of the additional attribute $A$; hence, this attribute does not affect the satisfaction of the FDs. Moreover, since the variable $x$ occurs only once in the query, the value of this variable has no impact on the entailment of the query, and thus
	\[
	\card{\rep{D}{\dep',Q\cup\{R'(\bar z,\bar y)\}}}\ =\ \card{\rep{E}{\dep,Q\cup\{R(\bar z,x,\bar y)\}}}.
	\]
	We conclude that
	\[
	\rfreq{D}{\dep', Q\cup\{R'(\bar z,\bar y)\}}\ =\ \rfreq{E}{\dep,Q\cup\set{R(\bar z,x,\bar y)}}.
	\]

	\medskip
	
\rev{	\noindent\paragraph{\underline{Rewrite Rule R9}: $(\dep,Q)\lrhd \, (\dep,Q')$ if a constant $c$ occurs in $Q$, some constant $c'\neq c$ occurs in $Q$, and $Q'$ is obtained from $Q$ by replacing all constants with $c$.} }
	\smallskip
	
	\noindent The proof for this rule is straightforward, as we are only renaming constants.
\end{proof}

\subsection{Proofs of Lemmas~\ref{lemma:12} -- \ref{lemma:final-orphan}}
We start by proving a series of basic properties of final pairs (Lemmas~\ref{lemma:final-two-atoms} -- \ref{lemma:final-orphan-liaison-and-constant}) that are crucial for showing Lemmas~\ref{lemma:12} -- \ref{lemma:final-orphan}. Note that similar properties have been proved for primary keys~\cite{MaWi13}. \rev{Recall that a pair $(\dep,Q)$, where $\dep$ is an FD set with an LHS chain and $Q$ is an SJFCQ, is final if: (1) $Q$ is not $\dep$-safe, and (2) for every set $\dep'$ of FDs with an LHS chain, and an SJFCQ $Q'$ such that $(\dep,Q)\lrhd^+ (\dep',Q')$, the query $Q'$ is $\dep'$-safe. As the proofs rely on the conditions of the procedure $\mathsf{IsSafe}$, we give the algorithm again below for the sake of readability.}

 \AlgIsSafe*

 The first property applies to any unsafe SJFCQ.

\begin{lemma}\label{lemma:final-two-atoms}
Consider a set $\dep$ of FDs with an LHS chain, and an SJFCQ $Q$ that is not $\dep$-safe. It holds that $\comp{Q}{\dep}$ has at least two atoms.
\end{lemma}

\begin{proof}
We give a proof by contradiction. Assume that $\comp{Q}{\dep}$ is empty or consists of a single atom.
If $\comp{Q}{\dep}=\emptyset$, then $Q$ is clearly $\dep$-safe, which contradicts our hypothesis that $Q$ is not $\dep$-safe.
If $Q$ is such that $\comp{Q}{\dep}=\set{\alpha}$, we proceed to show, by induction on the number of variables occurring in $\alpha$, that it is $\dep$-safe, which again contradicts our hypothesis that $Q$ is not $\dep$-safe.
Note that, since $\alpha\in \comp{Q}{\dep}$, it cannot be the case that $\alpha$ has no variables; otherwise, all the positions of $\alpha$ are primary-lhs positions by definition.

\medskip

\noindent
\paragraph{Base Case.}
If $\alpha$ mentions a single variable $x$ at a primary-lhs position, then the condition of line~5 is satisfied by $Q$, and since $Q_{x\mapsto c}$ is $\dep$-safe (as it has no complex part), $Q$ is $\dep$-safe as well. If $\pvar{\alpha}{\dep}=\emptyset$ and $Q$ uses a single variable $x$ at a non-primary-lhs position, then the condition of line~7 is satisfied by $Q$, and, again, since $Q_{x\mapsto c}$ is $\dep$-safe, $Q$ is $\dep$-safe.

\medskip

\noindent
\paragraph{Inductive Step.}
We now prove that if $\alpha$ uses $n+1$ variables, then $Q$ is $\dep$-safe. We proceed by considering the following two cases:
\begin{enumerate}
    \item $\pvar{\alpha}{\dep}\neq\emptyset$. In this case, the condition of line~5 of $\IsSafe$ is satisfied. By the inductive hypothesis, the query $Q_{x\mapsto c}$ that contains $n$ variables is $\dep$-safe; hence, $Q$ is $\dep$-safe as well.
    \item $\pvar{\alpha}{\dep}=\emptyset$. In this case, the condition of line~7 of $\IsSafe$ is satisfied. By the inductive hypothesis, the query $Q_{x\mapsto c}$ that contains $n$ variables is $\dep$-safe; hence, $Q$ is $\dep$-safe as well.
\end{enumerate}
We can therefore conclude that $\comp{Q}{\dep}$ contains at least two atoms.
\end{proof}

We next prove that the queries of final pairs falsify all the conditions of $\IsSafe$. This is a crucial property that will be heavily used in our proofs.

\begin{lemma}\label{lemma:final-no-condition}
Consider a set $\dep$ of FDs with an LHS chain, and an SJFCQ $Q$ such that $(\dep,Q)$ is final. It holds that $Q$ falsifies all the conditions of $\IsSafe$.
\end{lemma}

\begin{proof}
Assume, towards a contradiction, that $Q$ satisfies one of the conditions of the algorithm.
\begin{enumerate}
    \item If $Q$ satisfies the condition of line~1, then $Q$ is $\dep$-safe.
    
    \item If $Q$ satisfies the condition of line~3, then $Q$ is $\dep$-safe if both $Q_1$ and $Q_2$ are $\dep$-safe. By the rewrite rule R2, we have that $(\dep,Q) \lrhd (\dep,Q_1)$ and $(\dep,Q) \lrhd (\dep,Q_2)$, and since $(\dep,Q)$ is final, we get that both $Q_1$ and $Q_2$ are $\dep$-safe.
    
    \item If $Q$ satisfies the condition of line~5, then $Q$ is $\dep$-safe if $Q_{x\mapsto c}$ is $\dep$-safe  for an arbitrary constant $c$. By the rule R1, $(\dep,Q) \lrhd (\dep,Q_{x\mapsto c})$, and since $(\dep,Q)$ is final, $Q_{x\mapsto c}$ is $\dep$-safe.
    
    \item If $Q$ satisfies the condition of line~7, then there is an $R$-atom $\alpha$ in $Q$ with $\pvar{\alpha}{\dep}=\emptyset$, and a variable $x$ that occurs at a position of $\{(R,A)\mid A\in Y\}$, with $R:X\rightarrow Y$ being the primary FD of $\dep_R$ w.r.t.~$Q$.
    If there exists a liaison variable $y\in\var{\alpha}$, then by the rewrite rule R6, we have that $(\dep,Q)\lrhd (\dep,Q_{x\mapsto c})$. Since $(\dep,Q)$ is final, $Q_{x\mapsto c}$ is $\dep$-safe, and according to the condition of line~7, $Q$ is also $\dep$-safe.
    If there is no liaison variable $y\in\var{\alpha}$, Lemma~\ref{lemma:final-two-atoms} implies that $Q\setminus\set{\alpha}$ is not empty, and since $\pvar{\alpha}{\dep}=\emptyset$ and there are no liaison variables in $\alpha$, $\var{Q\setminus\set{\alpha}}\cap\var{\alpha}=\emptyset$. By the rewrite rule R2, we have that $(\dep,Q)\lrhd (\dep,Q\setminus\set{\alpha})$ and $(\dep,Q)\lrhd (\dep,\set{\alpha})$. Since $(\dep,Q)$ is $\dep$-final, both $Q\setminus\set{\alpha}$ and $\set{\alpha}$ are $\dep$-safe; hence, $Q$ is $\dep$-safe according to the condition of line~3.
\end{enumerate}
Since in all the cases we obtain a contradiction to the fact that $(\dep,Q)$ is final, we conclude that $Q$ is not $\dep$-safe, and the claim follows.
\end{proof}

For the next lemma, recall that an orphan variable of $Q$ is a variable that occurs once in $Q$, at a non-primary-lhs position.

\begin{lemma}\label{lemma:no-empty-pvars}
Consider a set $\dep$ of FDs with an LHS chain, and a an SJFCQ $Q$ such that $(\dep,Q)$ is final. For every atom $\alpha\in Q$, $\pvar{\alpha}{\dep}\neq\emptyset$.
\end{lemma}

\begin{proof}
If $\alpha\not\in\comp{Q}{\dep}$ and $\pvar{\alpha}{\dep}=\emptyset$, then the condition of line~3 of $\IsSafe$ is satisfied by $Q$, since a variable at a non-primary-lhs position of an atom outside the complex part is an orphan variable. Note that, by Lemma~\ref{lemma:final-two-atoms}, $Q\setminus\set{\alpha} \neq \emptyset$.
If $\alpha\in\comp{Q}{\dep}$ and $\pvar{\alpha}{\dep}=\emptyset$, then $Q$ satisfies the condition of line~7 of $\IsSafe$. \rev{This holds since every atom in $\comp{Q}{\dep}$ has a primary FD (as it has a non-primary-lhs position), and if all the primary-lhs positions of the atom are associated with constants, then at least one of the primary-rhs positions must be associated with a variable. } In both cases, we get a contradiction to Lemma~\ref{lemma:final-no-condition}.
\end{proof}

We now use the above lemma to prove the following technical property of final pairs. Recall that, for every relation name $R$, we denote by $R:X_{i_R}\ra Y_{i_R}$ the primary FD of $\dep_R$, and by $I_{R,i_R}^\lhs$ (resp.,~$I_{R,i_R}^\rhs$) the positions corresponding to the attributes of $X_{i_R}$ (resp.,~$Y_{i_R}$).

\begin{lemma}\label{lemma:orphan}
Consider a set $\dep$ of FDs with an LHS chain, and an SJFCQ $Q$ such that $(\dep,Q)$ is final. There is no $R$-atom of $Q$ that uses an orphan variable $x$ at a position $(R,A)\not\in I_{R,i_R}^\rhs$.
\end{lemma}

\begin{proof}
Assume, towards a contradiction, that $Q$ contains an atom $\alpha=R(\bar y,x,\bar z)$ such that $x$ is an orphan variable that occurs at position $(R,A)\not\in I_{R,i_R}^\rhs$. Let $Q'$ be the query obtained from $Q$ by replacing the atom $\alpha$ with an atom $\alpha'=R'(\bar y,\bar z)$. Let $\dep'$ be the FD set that is obtained from $\dep$ by removing all the FDs of $\dep_R$ and adding, for each such FD $R:X\ra Y$, the FD $R':(X\setminus\{A\})\ra (Y\setminus\{A\})$ to $\dep'$ (unless it is trivial). Note that it cannot be the case that $A\in X_{i_R}$ or $A\in Y_{i_R}$; hence, the FD $R':X_{i_R}\ra Y_{i_R}$ is the primary FD of $\dep'_{R'}$. 
Also note that since the variable $x$ occurs only once in $Q$, we have that for every atom $\beta\in (Q\cap Q')$, $\beta\in\comp{Q}{\dep}$ if and only if $\beta\in\comp{Q'}{\dep'}$. Moreover, it holds that $\alpha\in\comp{Q}{\dep}$ if and only if $\alpha'\in\comp{Q'}{\dep'}$, as we have only removed an orphan variable from $\alpha$ to obtain $\alpha'$.

By the rule R8, $(\dep,Q) \lrhd (\dep',Q')$ for some $\dep'$. Since $(\dep,Q)$ is final, the query $Q'$ is $\dep'$-safe, and one of the conditions of $\IsSafe$ is satisfied by $Q'$. We proceed by considering all possible cases:
\begin{enumerate}
    \item As explained above, we have that $\comp{Q}{\dep}=\emptyset$ if and only if $\comp{Q'}{\dep'}=\emptyset$. Lemma~\ref{lemma:final-no-condition} implies that the condition of line~1 is falsified by $Q$; hence, $\comp{Q}{\dep}\neq\emptyset$, and this condition is falsified by $Q'$ as well.
    
    \item If the condition of line~3 is satisfied by $Q'$, the same condition is satisfied by $Q$, since the only additional variable in $Q$ is an orphan variable. Thus, if $Q'=Q_1'\cup Q_2'$ with $Q_1'\neq\emptyset\neq Q_2'$ and $\var{Q_1'}\cap \var{Q_2'}=\emptyset$, it also holds that $Q=Q_1\cup Q_2$ with $Q_1\neq\emptyset\neq Q_2$ and $\var{Q_1}\cap \var{Q_2}=\emptyset$, where $Q_1=Q_1'$ and $Q_2=Q_2'\setminus\set{\alpha'}\cup\set{\alpha}$, assuming, without loss of generality, that $\alpha'\in Q_2'$. This cannot be the case due to Lemma~\ref{lemma:final-no-condition}; thus, the condition of line~3 is falsified by $Q'$.
    
    \item If the condition of line~5 is satisfied by $Q'$, then there is a variable $z$ that occurs in $\pvar{\beta}{\dep}$ for every atom $\beta\in\comp{Q'}{\dep'}$. As said above, $\comp{Q}{\dep}$ also contains all such atoms with $\beta\in (Q\cap Q')$, and it contains $\alpha$ if and only if $\comp{Q'}{\dep}$ contains $\alpha'$. If this is the case, then $z\in\pvar{\alpha'}{\dep'}$ implies that $z\in\pvar{\alpha}{\dep}$ (as we have not changed variables that occur at primary-lhs positions), and the same condition is satisfied by $Q$. Therefore, by Lemma~\ref{lemma:final-no-condition}, we get that the condition of line~5 is falsified by $Q'$.
    
    \item If the condition of line~7 is satisfied by $Q'$, then there exists $\beta \in \comp{Q'}{\dep'}$ such that $\pvar{\beta}{\dep'}=\emptyset$ and $z\in\var{\beta}$. If $\beta\neq\alpha'$, then clearly the same holds for $Q$. If $\beta=\alpha'$, then it also holds that $\pvar{\alpha}{\dep}=\emptyset$ (as we have not changed the primary-lhs positions of this atom), and $z\in\var{\alpha}$. Lemma~\ref{lemma:no-empty-pvars} implies that $\pvar{\alpha}{\dep}\neq\emptyset$; hence, this cannot be the case, and the condition of line~7 is falsified by $Q'$.
\end{enumerate}
We have shown that all the conditions are falsified by $Q'$, which is a contradiction to the fact that $Q'$ is $\dep'$-safe, and the claim follows.
\end{proof}

We next prove another technical property of final pairs $(\dep,Q)$ that will allow us to show that atoms outside the complex part of $Q$ are associated with a single key in $\dep$.

\begin{lemma}\label{lemma:final-no-constant-primary}
\rev{Consider a set $\dep$ of FDs with an LHS chain, and an SJFCQ $Q$ such that $(\dep,Q)$ is final. For every $R$-atom of $Q$, it holds that $\dep_R$ has no primary prefix w.r.t.~$Q$.}
\end{lemma}

\begin{proof}
\rev{Assume, towards a contradiction, that for some $R$-atom $\alpha$ of $Q$, $\dep_R$ has a primary prefix w.r.t.~$Q$. Let $R:X_i\rightarrow Y_i$ be an FD in the primary prefix of $\dep_R$ w.r.t.~$Q$. By the definition of primary prefix, we have that $\alpha[A]\in\ins{C}$ for every $A\in (X_i\cup Y_i)$; hence, at least once constant $c$ occurs in $\alpha$. Assume that $\alpha=R(\bar x,c,\bar y)$. If $\alpha[A]=c$ for some $A\in Y_i$, then $c$ occurs at a non-primary-lhs position, and we can apply case (2) of R5. If $\alpha[A]=c$ for some $A\in X_i$, then $c$ also occurs on the left-hand side of the primary FD, that is, it occurs at a primary-lhs position. In this case, Lemma~\ref{lemma:no-empty-pvars} implies that there is at least one primary-lhs position in $\alpha$ that is associated with a variable; hence, we can apply case (4) of R5. In both cases, by rule R5, we have that $(\dep,Q) \lrhd (\dep',Q\setminus\set{R(\bar x,c,\bar y)}\cup\set{R'(\bar x,\bar y)})$, where $R'$ is a new relation name that does not occur in $Q$, and $\dep'$ is the FD set induced by the rewriting.}

\rev{Let $Q'=Q\setminus\set{R(\bar x,c,\bar y)}\cup\set{R'(\bar x,\bar y)}$. 
Since $(\dep,Q)$ is final, $Q'$ is $\dep'$-safe; hence, one of the conditions of $\IsSafe$ is satisfied by $Q'$. The position of $c$ in $\alpha$ is a non-primary-rhs position, as we consider canonical FD sets, thus it cannot appear in some FD in the primary prefix and on the right-hand side of the primary FD; hence, if $R:X_{i_R}\ra Y_{i_R}$ is the primary FD of $\dep_R$, then $R':(X_{i_R}\setminus\set{A})\ra Y_{i_R}$ is the primary FD of $\dep'_{R'}$ (and if $\dep_R$ has no primary FD, then so does $\dep'_{R'}$). Moreover, we have that $\beta\in\comp{Q}{\dep}$ if and only if $\beta\in\comp{Q'}{\dep'}$ for every $\beta\in (Q\cap Q')$, and  $R(\bar x,c,\bar y)\in\comp{Q}{\dep}$ if and only if $R'(\bar x,\bar y)\in\comp{Q'}{\dep'}$. This is because constants at positions associated with attributes of the primary prefix have no impact on the complex part of the query. Hence, a condition satisfied by $Q'$ is also satisfied by $Q$, which contradicts Lemma~\ref{lemma:final-no-condition}.}
\end{proof}

At this point, we have all the ingredients for showing Lemma~\ref{lemma:12}.

\begin{replemma}{\ref{lemma:12}}
	\lemmasinglekey
\end{replemma}

\begin{proof}
	By Lemma~\ref{lemma:final-no-constant-primary}, $\dep_R$ has no primary prefix.
	Since the $R$-atom $\alpha$ of $Q$ is not in $\comp{Q}{\dep}$, there are no constants or liaison variables at non-primary-lhs positions of $\alpha$. Moreover, Lemma~\ref{lemma:orphan} implies that every atom of $Q$ uses orphan variables only at positions of $I_{R,i_R}^\rhs$.
	Since there are no variables at non-primary-lhs positions of $\alpha$, $\dep_R$ has no FDs, i.e., it contains a single trivial key $R:\att{R}\ra\emptyset$. If there are (orphan) variables at non-primary-lhs positions of $\alpha$, then they occur on the right-hand side of the primary FD of $\dep_R$, and $\dep_R$ has a single key $R:X_{i_R}\ra Y_{i_R}$.
\end{proof}

We now know that for relation names $R$ that occur in atoms outside the complex part of a final query, the set of FDs $\dep_R$ consists of a single key. We proceed to show that for atoms over a relation name $R$ in the complex part of the query, $\dep_R$ either consists of a single key, or it is a set of FDs of one of six specific forms (Lemmas~\ref{lemma:final-liaison} -- \ref{lemma:final-orphan}). To this end, we continue proving crucial technical properties of final pairs ((Lemmas~\ref{lemma:final-no-variable-prim-nonprim} -- \ref{lemma:final-orphan-liaison-and-constant})). The proofs of most properties are similar and take the following general form. We assume, towards a contradiction, that the property is not satisfied by a final pair $(\dep,Q)$, where $\dep$ is a set of FDs with an LHS chain and $Q$ an SJFCQ. We then show that $Q$ can be rewritten into another query $Q'$ using one of the rewrite rules, and then show that $Q'$ is not $\dep'$-safe, which contradicts the fact that $(\dep,Q)$ is final.

In most proofs, we use the rules R5 and R8, and the proofs are mainly based on the following observations (assuming that in the rewriting we remove an atom over a relation name $R$ and add an atom over $R'$). Whenever we rewrite a pair $(\dep,Q)$ into a pair $(\dep',Q')$ using one of these two rules, the set of FDs changes accordingly as we have already explained.  It might be the case that the set of FDs $\dep_R$ has a certain primary FD, while the set $\dep'_{R'}$ has a different primary FD or no primary FD at all. This is the case, for example, when the $R'$-atom of $Q'$ is obtained from the $R$-atom of $Q$ by removing an attribute $A$, and the primary FD of $\dep_R$ is of the form $R:X\ra\set{A}$. In this case, the corresponding FD in $\dep'_{R'}$ is trivial; hence, it is removed from $\dep'_{R'}$. This is also the case when we remove the only attribute $A$ in the primary FD that is associated with a variable in the atom (Lemma~\ref{lemma:no-empty-pvars} implies that $A$ corresponds to a primary-lhs position). Here, the FD in $\dep'_{R'}$ that corresponds to the primary FD of $\dep_R$ becomes part of the primary prefix. 

In such cases, the primary-lhs (and non-primary-lhs) positions might significantly change, which, in turn, heavily affects the complex part of the query.
However, as long as we make sure that we do not remove in the rewriting the only attribute in the right-hand side of the primary FD of $\dep_R$ or the only attribute of the primary FD that is associated with a variable, the FD of $\dep'_{R'}$ that is obtained from the primary FD of $\dep_R$ remains the primary FD, and the set of non-primary-lhs positions of $R'$ is the same as the set of non-primary-lhs positions of $R$ (except for the position corresponding to the removed attribute $A$). If we also make sure that all the liaison variables in the atoms of $Q$ remain liaison variables in the corresponding atoms of $Q'$, then both queries will have a similar complex part. This will allow us to prove that if $Q$ is not $\dep$-safe, then $Q'$ is also not $\dep'$-safe and obtain the desired contradiction.

\begin{lemma}\label{lemma:final-no-variable-prim-nonprim}
Consider a set $\dep$ of FDs with an LHS chain, and an SJFCQ $Q$ where $(\dep,Q)$ is final. No atom of $Q$ has a variable that occurs at a primary-lhs position and a non-primary-lhs position.
\end{lemma}

\begin{proof}
Assume, towards a contradiction, that $Q$ has an atom $R(\bar z,x,\bar y)$ that uses the variable $x$ at both a primary-lhs position and a non-primary-lhs position. Then, by the rewrite rule R10, we have that $(\dep,Q)\lrhd (\dep,Q')$, where $Q'=Q\setminus\set{R(\bar z,x,\bar y)}\cup\set{R(\bar z,c,\bar y)}$, i.e., $Q'$ is obtained from $Q$ by replacing the occurrence of the variable $x$ at a non-primary-lhs position with the constant $c$. Since $(\dep,Q)$ is final, $Q'$ is $\dep$-safe, and one of the conditions of $\IsSafe$ is satisfied by $Q'$.
\begin{enumerate}
    \item The condition of line~1 is not satisfied by $Q'$ since the atom $R(\bar z,c,\bar y)$ uses a constant at a non-primary-lhs position; hence, $\comp{Q'}{\dep}\neq\emptyset$.
    
    \item If $Q'$ satisfies the condition of line~3, then $Q$ also satisfies the condition of line~3, since the variable $x$ still occurs at $R(\bar z,c,\bar y)$, which contradicts Lemma~\ref{lemma:final-no-condition}.
    
    \item For every atom $\alpha\neq R(\bar z,x,\bar y)$ in $\comp{Q}{\dep}$ we have that $\alpha\in\comp{Q'}{\dep}$. This holds since the variable $x$ still occurs at $R(\bar z,c,\bar y)$; hence, if it is a liaison variable of $\alpha$ in $Q$ it is also a liaison variable of $\alpha$ in $Q'$. Moreover, $R(\bar z,x,\bar y)\in\comp{Q}{\dep}$ because it uses the liaison variable $x$ at a non-primary-lhs position, and $R(\bar z,c,\bar y)\in\comp{Q'}{\dep}$ because it uses a constant at a non-primary-lhs position. Therefore, if $Q'$ satisfies the condition of line~5, then clearly $Q$ also satisfies this condition (with the same variable that occurs at a primary-lhs position of every atom in $\comp{Q}{\dep}$), which contradicts Lemma~\ref{lemma:final-no-condition}.
    
    \item It is rather straightforward that if $Q'$ satisfies the condition of line~7, then so does $Q$, which again contradicts Lemma~\ref{lemma:final-no-condition}, as we have not changed the primary-lhs positions of any atom of $Q$, and, as said above, for every $\alpha\in\comp{Q'}{\dep}$ we have that $\alpha\in\comp{Q}{\dep}$. Note that it cannot be the case that $Q'$ satisfies this condition due to the atom $R(\bar z,c,\bar y)$ as it has the variable $x$ at a primary-lhs position.
\end{enumerate}
That concludes our proof.
\end{proof}

\begin{lemma}\label{lemma:final-no-two-constants-non-prim}
Consider a set $\dep$ of FDs with an LHS chain, and an SJFCQ $Q$ such that $(\dep,Q)$ is final. No atom of $Q$ associates two non-primary-lhs positions with constants.
\end{lemma}

\begin{proof}
Assume, towards a contradiction, that there is an atom $\alpha$ of $Q$ over a relation name $R$ that uses a constant $c$ at position $(R,A)$ and a constant $c'$ (that might be the constant $c$) at position $(R,A')$; both are non-primary-lhs positions. Lemma~\ref{lemma:final-no-constant-primary} implies that $\dep_R$ has no primary prefix w.r.t.~$Q$. Assume, without loss of generality, that $A'\ge A$. Thus, if one of these positions occurs in $I_{R,i_R}^\rhs$, then $(R,A)\in I_{R,i_R}^\rhs$. Also, assume that $\alpha=R(\bar z,c',\bar y)$. By the rewrite rule R5, we have that $(\dep,Q) \lrhd (\dep',Q')$, where $Q'=Q\setminus\set{R(\bar z,c',\bar y)}\cup\set{R'(\bar z,\bar y)}$ for a relation name $R'$ not in $Q$. Since $(\dep,Q)$ is final, $Q'$ is $\dep'$-safe, and one of the conditions of $\IsSafe$ is satisfied by $Q'$. Observe that if the primary FD of $\dep_R$ is $R:X_{i_R}\ra Y_{i_R}$, then the primary FD of $\dep'_{R'}$ is $R':X_{i_R}\ra (Y_{i_R}\setminus\set{A'})$ because $Y_{i_R}\setminus\set{A'}$ is not empty (if $A\in Y_{i_R}$, then $A\in Y_{i_R}\setminus\set{A'}$, and if $A\not\in Y_{i_R}$ then, by our assumption, $A'\not\in Y_{i_R}$ and $Y_{i_R}$ has an attribute $B$ such that $B\neq A$ and $B\neq A'$).
\begin{enumerate}
    \item The condition of line~1 is not satisfied by $Q'$ because $R'(\bar z,\bar y)$ uses the constant $c$ at a non-primary-lhs position; thus, $R'(\bar z,\bar y)\in\comp{Q'}{\dep'}$.
    
    \item Clearly, if $Q'$ satisfies the condition of line~3, then so does $Q$, as we have only removed a constant from an atom of $Q$ to obtain $Q'$, which contradicts Lemma~\ref{lemma:final-no-condition}.
    
    \item For every $\beta\in (Q\cap Q')$, we have that $\beta\in\comp{Q'}{\dep'}$ if and only if $\beta\in\comp{Q}{\dep}$, since every liaison variable of $Q'$ is also a liaison variable of $Q$ and vice versa. Moreover, $R'(\bar z,\bar y)\in\comp{Q'}{\dep'}$ and $R(\bar z,c,\bar y)\in\comp{Q}{\dep}$ since both atoms use the constant $c$ at a non-primary-lhs position. Thus, it is clear that if $Q'$ satisfies the condition of line~5 or the condition of line~7, then $Q$ satisfies the same condition, which contradicts Lemma~\ref{lemma:final-no-condition}.
\end{enumerate}
That concludes our proof.
\end{proof}

\begin{lemma}\label{lemma:final-no-two-orphan-non-prim}
Consider a set $\dep$ of FDs with an LHS chain, and an SJFCQ $Q$ such that $(\dep,Q)$ is final. No atom of $Q$ associates two non-primary-lhs positions with orphan variables.
\end{lemma}

\begin{proof}
Assume, towards a contradiction, that there is an atom $\alpha$ of $Q$ over a relation name $R$ that uses orphan variables $x$ and $w$ at positions $(R,A)$ and $(R,A')$, respectively; both are non-primary-lhs positions. Assume, without loss of generality, that $A'\ge A$; hence, if one of these positions occurs in $I_{R,i_R}^\rhs$, then $(R,A)\in I_{R,i_R}^\rhs$. Also assume that $\alpha=R(\bar z,w,\bar y)$. By rule R8, we have that $(\dep,Q) \lrhd (\dep',Q')$, where $Q'=Q\setminus\set{R(\bar z,w,\bar y)}\cup\set{R'(\bar z,\bar y)}$ for a relation name $R'$ that does not occur in $Q$. Since $(\dep,Q)$ is final, $Q'$ is $\dep'$-safe, and one of the conditions of $\IsSafe$ is satisfied by $Q'$. As in Lemma~\ref{lemma:final-no-two-constants-non-prim}, if the primary FD of $\dep_R$ is $R:X_{i_R}\ra Y_{i_R}$, then the primary FD of $\dep'_{R'}$ is $R':X_{i_R}\ra (Y_{i_R}\setminus\set{A'})$ since $Y_{i_R}\setminus\set{A'} \neq \emptyset$, and Lemma~\ref{lemma:no-empty-pvars} implies that an attribute of $X_{i_R}$ is associated with a variable in $\alpha$.
\begin{enumerate}
    \item If the condition of line~1 is satisfied by $Q'$, then it is also satisfied by $Q$ as we have only removed an orphan variable, which does not affect the complex part of the query. This is a contradiction to Lemma~\ref{lemma:final-no-condition}.
    
    \item For the same reason, if $Q'$ satisfies the condition of line~3, then so does $Q$, which contradicts Lemma~\ref{lemma:final-no-condition}.
    
    \item For every $\beta\in (Q\cap Q')$, we have that $\beta\in\comp{Q'}{\dep'}$ if and only if $\alpha\in\comp{Q}{\dep}$, since every liaison variable of $Q'$ is also a liaison variable of $Q$ and vice versa. Moreover, $R'(\bar z,\bar y)\in\comp{Q'}{\dep'}$ if and only if $R(\bar z,w,\bar y)\in\comp{Q}{\dep}$, as $w$ is an orphan variable and for all the non-primary-lhs positions of $R(\bar z,w,\bar y)$ besides $(R,A')$, the corresponding positions of $R'(\bar z,\bar y)$ are also non-primary-lhs positions. This holds because, as aforementioned, $R':X_{i_R}\ra (Y_{i_R}\setminus\set{A'})$ is the primary FD of $\dep'_{R'}$. Hence, if $Q'$ satisfies the condition of line~5 or the condition of line~7, then $Q$ satisfies the same condition, which contradicts Lemma~\ref{lemma:final-no-condition}.
\end{enumerate}
That concludes our proof.
\end{proof}

\begin{lemma}\label{lemma:final-no-two-liaison-shared}
Consider a set $\dep$ of FDs with an LHS chain, and an SJFCQ $Q$ such that $(\dep,Q)$ is final. No atom $\alpha$ of $Q$ associates two non-primary-lhs positions with the same liaison variable $x$ if $x$ also occurs in another atom $\alpha'$ of $Q$.
\end{lemma}

\begin{proof}
Assume, towards a contradiction, that there is an atom $\alpha$ of $Q$ over a relation name $R$ that uses the same liaison variable $x$ at two non-primary-lhs positions $(R,A)$ and $(R,A')$, and there is another atom $\alpha'$ that also uses the variable $x$. Assume, without loss of generality, that $A'\ge A$; thus, if one of these positions occurs in $I_{R,i_R}^\rhs$, then $(R,A)\in I_{R,i_R}^\rhs$. Also assume that $\alpha=R(\bar z,x,\bar y)$. By the rewrite rule R5, we have that $(\dep,Q) \lrhd (\dep',Q')$, where $Q'=Q\setminus\set{R(\bar z,x,\bar y)}\cup\set{R'(\bar z,\bar y)}$ for a relation name $R'$ that does not occur in $Q$ (and we have removed the occurrence of $x$ in the position $(R,A')$ to obtain $R'(\bar z,\bar y)$). Since $(\dep,Q)$ is final, $Q'$ is $\dep'$-safe, and one of the conditions of $\IsSafe$ is satisfied by $Q'$. If the primary FD of $\dep_R$ is $R:X_{i_R}\ra Y_{i_R}$, then the primary FD of $\dep'_{R'}$ is $R':X_{i_R}\ra (Y_{i_R}\setminus\set{A'})$ due to Lemma~\ref{lemma:no-empty-pvars} and the fact that $Y_{i_R}\setminus\set{A'}$ is not empty.
\begin{enumerate}
    \item The condition of line~1 is not satisfied by $Q'$ as the atom $R'(\bar z,\bar y)$ uses the liaison variable $x$ at a non-primary-lhs position (and $x$ remains a liaison variable as it has another occurrence in the atom $\alpha'$).
    
    \item Clearly, if $Q'$ satisfies the condition of line~3, then so does $Q$, as the variable $x$ still occurs in $R'(\bar z,\bar y)$, which contradicts Lemma~\ref{lemma:final-no-condition}.
    
    \item For every $\beta\in (Q\cap Q')$, we have that $\beta\in\comp{Q'}{\dep'}$ if and only if $\alpha\in\comp{Q}{\dep}$, since every liaison variable of $Q'$ is also a liaison variable of $Q$ and vice versa. Moreover, $R'(\bar z,\bar y)\in\comp{Q'}{\dep'}$ and $R(\bar z,x,\bar y)\in\comp{Q}{\dep}$, as $x$ is a liaison variable that occurs at a non-primary-lhs position in both atoms. Hence, if $Q'$ satisfies the condition of line~5 or the condition of line~7, then $Q$ satisfies the same condition, which contradicts Lemma~\ref{lemma:final-no-condition}.
\end{enumerate}
That concludes our proof.
\end{proof}

\begin{lemma}\label{lemma:final-no-two-liaison-unique}
Consider a set $\dep$ of FDs with an LHS chain, and an SJFCQ $Q$ such that $(\dep,Q)$ is final. No atom of $Q$ associates three non-primary-lhs positions with the same liaison variable $x$.
\end{lemma}

\begin{proof}
Assume, towards a contradiction, that there is an atom $\alpha$ of $Q$ over a relation name $R$ that uses the same liaison variable $x$ at three non-primary-lhs positions $(R,A)$, $(R,A')$, $(R,A'')$. Assume, without loss of generality, that $A'\ge A$; hence, if one of these positions occurs in $I_{R,i_R}^\rhs$, then $(R,A)\in I_{R,i_R}^\rhs$. Also assume that $\alpha=R(\bar z,x,\bar y)$. By the rewrite rule R5, we have that $(\dep,Q) \lrhd (\dep',Q')$, where $Q'=Q\setminus\set{R(\bar z,x,\bar y)}\cup\set{R'(\bar z,\bar y)}$ for a relation name $R'$ that does not occur in $Q$ (and we have removed the occurrence of $x$ in position $(R,A')$ to obtain $R'(\bar z,\bar y)$). Since $(\dep,Q)$ is final, $Q'$ is $\dep'$-safe, and one of the conditions of $\IsSafe$ is satisfied by $Q'$. If the primary FD of $\dep_R$ is $R:X_{i_R}\ra Y_{i_R}$, then the primary FD of $\dep'_{R'}$ is $R':X_{i_R}\ra (Y_{i_R}\setminus\set{A'})$ due to Lemma~\ref{lemma:no-empty-pvars} and since $Y_{i_R}\setminus\set{A'} \neq \emptyset$.
\begin{enumerate}
    \item The condition of line~1 is not satisfied by $Q'$ as $R'(\bar z,\bar y)$ uses the liaison variable $x$ at a non-primary-lhs position, and $x$ remains a liaison as it has two more occurrences in $\alpha$.
    \item Clearly, if $Q'$ satisfies the condition of line~3, then so does $Q$, as the variable $x$ still occurs in $R'(\bar z,\bar y)$, which contradicts Lemma~\ref{lemma:final-no-condition}.
    
    \item For every $\beta\in (Q\cap Q')$, we have that $\beta\in\comp{Q'}{\dep'}$ if and only if $\alpha\in\comp{Q}{\dep}$, since every liaison variable of $Q'$ is also a liaison variable of $Q$ and vice versa. Moreover, $R'(\bar z,\bar y)\in\comp{Q'}{\dep'}$ and $R(\bar z,x,\bar y)\in\comp{Q}{\dep}$, as $x$ is a liaison variable in both cases. Hence, if $Q'$ satisfies the condition of line~5 or the condition of line~7, then $Q$ satisfies the same condition, which contradicts Lemma~\ref{lemma:final-no-condition}.
\end{enumerate}
That concludes our proof.
\end{proof}

\begin{lemma}\label{final-no-lvar-and-const}
Consider a set $\dep$ of FDs with an LHS chain, and an SJFCQ $Q$ such that $(\dep,Q)$ is final. No atom of $Q$ has a liaison variable at a position of $I_{R,i_R}^\rhs$ and a constant at a non-primary-lhs position.
\end{lemma}

\begin{proof}
Assume, towards a contradiction, that there is an atom $\alpha$ of $Q$ over a relation name $R$ that uses a liaison variable $x$ at the position $(R,A)$ of $I_{R,i_R}^\rhs$, and a constant $c$ at a non-primary-lhs position $(R,A')$. Assume also that $\alpha=R(\bar z,c,\bar y)$. By the rewrite rule R5, we have that $(\dep,Q) \lrhd (\dep',Q')$, where $Q'=Q\setminus\set{R(\bar z,c,\bar y)}\cup\set{R'(\bar z,\bar y)}$ for a relation name $R'$ not in $Q$. Since $(\dep,Q)$ is final, $Q'$ is $\dep'$-safe, and one of the conditions of $\IsSafe$ is satisfied by $Q'$. Note that if the primary FD of $\dep_R$ is $R:X_{i_R}\ra Y_{i_R}$, then the primary FD of $\dep'_{R'}$ is $R':X_{i_R}\ra (Y_{i_R}\setminus\set{A'})$ because $Y_{i_R}\setminus\set{A'}$ contains the attribute $A$ that is associated with a variable.
\begin{enumerate}
    \item The condition of line~1 is not satisfied by $Q'$ because $R'(\bar z,\bar y)$ uses the liaison variable $x$ at a non-primary-lhs position; thus, $R'(\bar z,\bar y)\in\comp{Q'}{\dep'}$.
    
    \item Clearly, if $Q'$ satisfies the condition of line~3, then so does $Q$, as we have only removed a constant from an atom of $Q$ to obtain $Q'$, which contradicts Lemma~\ref{lemma:final-no-condition}.
    
    \item For every $\beta\in (Q\cap Q')$, we have that $\beta\in\comp{Q'}{\dep'}$ if and only if $\beta\in\comp{Q}{\dep}$, since every liaison variable of $Q'$ is also a liaison variable of $Q$ and vice versa. Moreover, $R'(\bar z,\bar y)\in\comp{Q'}{\dep'}$ and $R(\bar z,c,\bar y)\in\comp{Q}{\dep}$ since both atoms use the liaison variable $x$ at a non-primary-lhs position. Thus, if $Q'$ satisfies the condition of line~5 or the condition of line~7, then $Q$ satisfies the same condition, which contradicts Lemma~\ref{lemma:final-no-condition}.
\end{enumerate}
That concludes our proof.
\end{proof}

\begin{lemma}\label{final-no-lvar-and-orphan}
Consider a set $\dep$ of FDs with an LHS chain, and an SJFCQ $Q$ such that $(\dep,Q)$ is final. No atom of $Q$ has a liaison variable or a constant at a position of $I_{R,i_R}^\rhs$ and an orphan variable at a non-primary-lhs position.
\end{lemma}

\begin{proof}
Assume, towards a contradiction, that there is an atom $\alpha$ of $Q$ over a relation name $R$ that uses a liaison variable $x$ or a constant $c$ at the position $(R,A)$ of $I_{R,i_R}^\rhs$ and an orphan variable $w$ at a non-primary-lhs position $(R,A')$. Also, assume that $\alpha=R(\bar z,w,\bar y)$. By the rewrite rule R8, we have that $(\dep,Q) \lrhd (\dep',Q')$, where $Q'=Q\setminus\set{R(\bar z,w,\bar y)}\cup\set{R'(\bar z,\bar y)}$ for a relation name $R'$ that does not occur in $Q$. Since $(\dep,Q)$ is final, $Q'$ is $\dep'$-safe, and one of the conditions of $\IsSafe$ is satisfied by $Q'$. If the primary FD of $\dep_R$ is $R:X_{i_R}\ra Y_{i_R}$, then the primary FD of $\dep'_{R'}$ is $R':X_{i_R}\ra (Y_{i_R}\setminus\set{A'})$ because $A\in Y_{i_R}\setminus\set{A'}$, and Lemma~\ref{lemma:no-empty-pvars} implies that an attribute of $X_{i_R}$ is associated with a variable.
\begin{enumerate}
    \item The condition of line~1 is not satisfied by $Q'$ because $R'(\bar z,\bar y)$ uses the liaison variable $x$ or the constant $c$ at a non-primary-lhs position; thus, $R'(\bar z,\bar y)\in\comp{Q'}{\dep'}$.
    \item Clearly, if $Q'$ satisfies the condition of line~3, then so does $Q$, as we have only removed an orphan variable from an atom of $Q$ to obtain $Q'$, which contradicts Lemma~\ref{lemma:final-no-condition}.
    
    \item For every $\beta\in (Q\cap Q')$, we have that $\beta\in\comp{Q'}{\dep'}$ if and only if $\beta\in\comp{Q}{\dep}$, since every liaison variable of $Q'$ is also a liaison variable of $Q$ and vice versa. Moreover, $R'(\bar z,\bar y)\in\comp{Q'}{\dep'}$ and $R(\bar z,c,\bar y)\in\comp{Q}{\dep}$ since both atoms use the liaison variable $x$ or the constant $c$ at a non-primary-lhs position. Thus, if $Q'$ satisfies the condition of line~5 or the condition of line~7, then $Q$ satisfies the same condition, which contradicts Lemma~\ref{lemma:final-no-condition}.
\end{enumerate}
That concludes our proof.
\end{proof}

\begin{lemma}\label{final-no-lvar-and-lvar}
Consider a set $\dep$ of FDs with an LHS chain, and an SJFCQ $Q$ such that $(\dep,Q)$ is final. No atom $\alpha$ of $Q$ has two distinct liaison variables \rev{with at least one of them occurring at a non-primary-lhs position of $\alpha$}.
\end{lemma}

\begin{proof}
Assume, towards a contradiction, that there is an atom $\alpha=R(\bar z)$ of $Q$ that uses a liaison variable $x$ at position $(R,A)$, and a liaison variable $y\neq x$ at position $(R,A')$. Since none of the variables is an orphan variable and at least one of them occurs at a non-primary-lhs position of $\alpha$, by the rewrite rule R7 we have that $(\dep,Q) \lrhd (\dep,Q_{y\mapsto x})$. Let $Q'=Q_{y\mapsto x}$. For every atom $\beta\in Q$, we have that $\beta\in\comp{Q}{\dep}$ if and only if $\beta_{y\mapsto x}\in\comp{Q'}{\dep}$, where $\beta_{y\mapsto x}$ is obtained from $\beta$
by replacing every occurrence of the variable $y$ with the variable $x$. This is clearly the case if $\beta$ has a constant or a liaison variable $z\neq y$ at a non-primary-lhs position. This is also the case if $y$ is a liaison variable of $\beta$, since $x$ is a liaison variable of $\beta_{y\mapsto x}$.
Since $(\dep,Q)$ is final, $Q'$ is $\dep$-safe, and one of the conditions of $\IsSafe$ is satisfied by $Q'$.
\begin{enumerate}
    \item The condition of line~1 is not satisfied by $Q'$ because the variable $x$ is a liaison variable that occurs at a non-primary-lhs position of $\alpha_{y\mapsto x}$; hence, $\comp{Q'}{\dep}\neq\emptyset$.
    
    \item If $Q'$ satisfies the condition of line~3, then $Q'=Q_1'\cup Q_2'$ for $Q_1'\neq\emptyset\neq Q_2'$ and $\var{Q_1'}\cap\var{Q_2'}=\emptyset$. Assume, without loss of generality, that all the atoms that use the variable $x$ are in $Q_1'$. Then, we have that $Q=Q_1\cup Q_2$, where $Q_1$ contains every atom that uses either the variable $x$ or the variable $y$ and $Q_2=Q_2'$. Clearly, we have that     $Q_1\neq\emptyset\neq Q_2$ and $\var{Q_1}\cap\var{Q_2}=\emptyset$; thus, $Q$ satisfies the condition of line~3, which contradicts Lemma~\ref{lemma:final-no-condition}.
    
    \item As said above, we have that $\beta\in\comp{Q}{\dep}$ if and only if $\beta_{y\mapsto x}\in\comp{Q'}{\dep}$. If there is a variable $z$ that appears at a primary-lhs position of every atom in $\comp{Q'}{\dep}$, then it must be the case that $z\neq x$, since Lemma~\ref{lemma:final-no-variable-prim-nonprim} implies that none of the variables $x$ or $y$ occurs at a primary-lhs position of $\alpha$ and $\alpha_{y\mapsto x}\in\comp{Q'}{\dep}$. Therefore, we have that $z\in\pvar{\beta}{\dep}$ for every $\beta\in\comp{Q}{\dep}$ and $Q$ satisfies the condition of line~5, which contradicts Lemma~\ref{lemma:final-no-condition}.
    
    \item If the condition of line~7 is satisfied by $Q'$, then it is also satisfied by $Q$ as we have not changed constants or removed variables from primary-lhs positions, which contradicts Lemma~\ref{lemma:final-no-condition}.
\end{enumerate}
That concludes our proof.
\end{proof}

\begin{lemma}\label{lemma:final-twice-liaison}
Consider a set $\dep$ of FDs with an LHS chain, and an SJFCQ $Q$ such that $(\dep,Q)$ is final. If the $R$-atom of $Q$ uses a liaison variable $x$ or a constant $c$ at a position of $I_{R,i_R}^\rhs$, then any liaison variable has at most one occurrence at a non-primary-lhs position outside $I_{R,i_R}^\rhs$.
\end{lemma}

\begin{proof}
Assume, towards a contradiction, that there is an atom $\alpha$ of $Q$ over a relation name $R$ that uses a liaison variable $x$ or a constant $c$ at the position $(R,A)$ of $I_{R,i_R}^\rhs$, and there is a liaison variable $w$ that occurs at two distinct non-primary-lhs positions $(R,A')$ and $(R,A'')$ of $\alpha$ outside $I_{R,i_R}^\rhs$. Assume, without loss of generality, that $A'\ge A''$. Also assume that $\alpha=R(\bar z, w,\bar y)$. By the rewrite rule R5, we have that $(\dep,Q) \lrhd (\dep',Q')$, where $Q'=Q\setminus\set{R(\bar z,w,\bar y)}\cup\set{R'(\bar z,\bar y)}$ for a relation name $R'$ that does not occur in $Q$ (and we have removed the occurrence of $w$ at position $(R,A')$ to obtain $R'(\bar z,\bar y)$). Since $(\dep,Q)$ is final, $Q'$ is $\dep'$-safe, and one of the conditions of $\IsSafe$ is satisfied by $Q'$. If the primary FD of $\dep_R$ is $R:X_{i_R}\ra Y_{i_R}$, then the primary FD of $\dep'_{R'}$ is $R':X_{i_R}\ra (Y_{i_R}\setminus\set{A'})$ because $A\in Y_{i_R}\setminus\set{A'}$, and Lemma~\ref{lemma:no-empty-pvars} implies that an attribute of $X_{i_R}$ is associated with a variable.
\begin{enumerate}
    \item The condition of line~1 is not satisfied by $Q'$ because $R'(\bar z,\bar y)$ uses the liaison variable $x$ or the constant $c$ at a non-primary-lhs position; thus, $R'(\bar z,\bar y)\in\comp{Q'}{\dep'}$. Note that $x$ is a liaison variable of $R'(\bar z,\bar y)$ even if $w=x$, as in this case, there is another occurrence of the variable $x$ at position $(R,A'')$.
    
    \item Clearly, if $Q'$ satisfies the condition of line~3, then so does $Q$, as the variable $w$ still occurs in $R'(\bar z,\bar y)$, which contradicts Lemma~\ref{lemma:final-no-condition}.
    
    \item For every $\beta\in (Q\cap Q')$, we have that $\beta\in\comp{Q'}{\dep'}$ if and only if $\beta\in\comp{Q}{\dep}$, since every liaison variable of $Q'$ is also a liaison variable of $Q$ and vice versa (if the variable $w$ occurs at a non-primary-lhs position of $\beta$, then it is still a liaison variable as it has another occurrence in $R'(\bar z,\bar y)$). Moreover, $R'(\bar z,\bar y)\in\comp{Q'}{\dep'}$ and $\alpha\in\comp{Q}{\dep}$ since both atoms use the liaison variable $x$ or the constant $c$ at a non-primary-lhs position. Therefore, if $Q'$ satisfies the condition of line~5 or the condition of line~7, then $Q$ satisfies the same condition, which contradicts Lemma~\ref{lemma:final-no-condition}.
\end{enumerate}
That concludes our proof.
\end{proof}

\begin{lemma}\label{lemma:final-orphan-liaison-and-constant}
Consider a set $\dep$ of FDs with an LHS chain, and an SJFCQ $Q$ such that $(\dep,Q)$ is final. If the $R$-atom of $Q$ uses an orphan variable $x$ at a position of $I_{R,i_R}^\rhs$, then non-primary-lhs positions outside $I_{R,i_R}^\rhs$ are associated only with constants or only with liaison variables.
\end{lemma}

\begin{proof}
Assume, towards a contradiction, that there is an atom $\alpha$ of $Q$ over a relation name $R$ that uses an orphan variable $x$ at the position $(R,A)$ of $I_{R,i_R}^\rhs$, and there is another non-primary-lhs position $(R,A')$ outside $I_{R,i_R}^\rhs$ that is associated with a constant $c$ and a non-primary-lhs position $(R,A'')$ outside $I_{R,i_R}^\rhs$ that is associated with a liaison variable $y$. Assume that $\alpha=R(\bar z, c, \bar y)$. By the rewrite rule R5, $(\dep,Q) \lrhd (\dep',Q')$, where $Q'=Q\setminus\set{R(\bar z,c,\bar y)}\cup\set{R'(\bar z,\bar y)}$ for a relation name $R'$ not in $Q$. Since $(\dep,Q)$ is final, $Q'$ is $\dep'$-safe, and one of the conditions of $\IsSafe$ is satisfied by $Q'$. If the primary FD of $\dep_R$ is $R:X_{i_R}\ra Y_{i_R}$, then the primary FD of $\dep'_{R'}$ is $R':X_{i_R}\ra (Y_{i_R}\setminus\set{A'})$ because $Y_{i_R}\setminus\set{A'}$ contains the attribute $A$ that is associated with a variable in $\alpha$.
\begin{enumerate}
    \item The condition of line~1 is not satisfied by $Q'$ because $R'(\bar z,\bar y)$ uses the liaison variable $y$ at a non-primary-lhs position; thus, $R'(\bar z,\bar y)\in\comp{Q'}{\dep'}$.
    
    \item Clearly, if $Q'$ satisfies the condition of line~3, then so does $Q$, as we have only removed a constant from an atom of $Q$ to obtain $Q'$, which contradicts Lemma~\ref{lemma:final-no-condition}.
    
    \item For every $\beta\in (Q\cap Q')$, we have that $\beta\in\comp{Q'}{\dep'}$ if and only if $\beta\in\comp{Q}{\dep}$, since every liaison variable of $Q'$ is also a liaison variable of $Q$ and vice versa. Moreover, $R'(\bar z,\bar y)\in\comp{Q'}{\dep'}$ and $R(\bar z,c,\bar y)\in\comp{Q}{\dep}$ since both atoms use the liaison variable $y$ at a non-primary-lhs position. Thus, it is clear that if $Q'$ satisfies the condition of line~5 or the condition of line~7, then $Q$ satisfies the same condition, which contradicts Lemma~\ref{lemma:final-no-condition}.
\end{enumerate}
That concludes our proof.
\end{proof}

We now have all the ingredients for proving Lemmas~\ref{lemma:final-liaison} -- \ref{lemma:final-orphan}, which essentially tell us that, for a set $\dep$ of FDs with an LHS chain and an SJFCQ $Q$ such that $(\dep,Q)$ is final, all the $R$-atoms of $\comp{\dep}{Q}$ are such that $\dep_R$ either has a single key or is a set of FDs of one of six specific forms.

\begin{replemma}{\ref{lemma:final-liaison}}
	\lemmafinalliaison
\end{replemma}

\begin{proof}
	Let $\alpha$ be the $R$-atom of $Q$. Lemma~\ref{lemma:final-no-constant-primary} implies that $\dep_R$ has no primary prefix.
	Lemmas~\ref{final-no-lvar-and-const}, ~\ref{final-no-lvar-and-orphan} and~\ref{final-no-lvar-and-lvar} imply that there are no constants, orphan variables, or other liaison variables at non-primary-lhs positions of $\alpha$. Moreover, Lemma~\ref{lemma:final-twice-liaison} implies that there are at most two occurrences of the variable $x$ at non-primary-lhs positions of $\alpha$. If there is one occurrence of the variable $x$, then clearly $\dep_R$ has a single key (with one attribute on the right-hand side, i.e., the attribute $A$), and if there are two occurrences, then either $\dep_R$ has a single key (with two attributes on the right-hand side) or $\dep_R$ has a single FD and the second occurrence of $x$ is at a position corresponding to an attribute that does not occur in any FD. This is because there is only one attribute that does not occur at the primary FD (as there are only two occurrences of $x$), and if there is another FD in $\dep_R$, then this attribute occurs in its left-hand side, which means that its right-hand side is empty. In this case, the additional FD is a trivial FD that can be removed from $\dep_R$.
\end{proof}

\begin{replemma}{\ref{lemma:final-const}}
	\lemmafinalconst
\end{replemma}

\begin{proof}
	The proof is similar to the proof of Lemma~\ref{lemma:final-liaison}. Let $\alpha$ be the $R$-atom of $Q$.
	Lemma~\ref{lemma:final-no-constant-primary} implies that $\dep_R$ has no primary prefix. Lemmas~\ref{lemma:final-no-two-constants-non-prim}, ~\ref{final-no-lvar-and-orphan} and~\ref{final-no-lvar-and-lvar} imply that there are no constants, orphan variables, or two distinct liaison variables at non-primary-lhs positions of $\alpha$ (besides the position of $I_{R,i_R}^\rhs$ that is associated with the constant $c$). Moreover, Lemma~\ref{lemma:final-twice-liaison} implies that there is at most one occurrence of some liaison variable $x$ at non-primary-lhs positions of $\alpha$. If there are no occurrences of a liaison variable $x$, then clearly $\dep_R$ has a single key (with the attribute $A$ in its right-hand side). If there is one occurrence of such a variable, then Lemma~\ref{final-no-lvar-and-const} implies that it occurs in a position outside $I_{R,i_R}^\rhs$, in which case $\dep_R$ has a single FD and the occurrence of $x$ is at a position corresponding to an attribute that does not occur in any FD. As for Lemma~\ref{lemma:final-liaison}, there is no additional FD in the set, as there is only one non-primary-lhs position outside $I_{R,i_R}^\rhs$.
\end{proof}

\begin{replemma}{\ref{lemma:final-orphan}}
	\lemmafinalorphan
\end{replemma}

\begin{proof}
	Lemma~\ref{lemma:final-no-constant-primary} implies that $\dep_R$ has no primary prefix.
	Lemma~\ref{final-no-lvar-and-orphan} implies that there are no constants or liaison variables at positions of $I_{R,i_R}^\rhs$. Lemma~\ref{lemma:final-no-two-orphan-non-prim} implies that there are no other orphan variables at the positions of $I_{R,i_R}^\rhs$. Hence, there is a single position in $I_{R,i_R}^\rhs$ and this position is associated with the orphan variable $x$. Then, the primary FD of $\dep_R$ is $R:X\ra\set{A}$.
	
	Lemma~\ref{lemma:final-no-two-orphan-non-prim} implies that no non-primary-lhs positions outside $I_{R,i_R}^\rhs$ are associated with orphan variables. Lemma~\ref{lemma:final-orphan-liaison-and-constant} implies that the non-primary-lhs positions outside $I_{R,i_R}^\rhs$ are associated either with constants or liaison variables, but not both. If the non-primary-lhs positions outside $I_{R,i_R}^\rhs$ are associated with constants, then Lemma~\ref{lemma:final-no-two-constants-non-prim} implies that there is only one such position $(R,A')$. Thus, the only possible case is that this position is associated with an attribute that does not occur in any FD. There cannot be an additional FD that occurs after the primary FD in the chain as $(R,A')$ is the only position that does not correspond to an attribute of the primary FD, and as explained in the previous lemmas, at least two additional positions are required to add a non-trivial FD. Moreover, as aforementioned, the primary prefix is empty.
	
	If the non-primary-lhs positions outside $I_{R,i_R}^\rhs$ are associated with liaison variables, by Lemma~\ref{final-no-lvar-and-lvar} they are all associated with the same liaison variable $x$. Lemma~\ref{lemma:final-no-two-liaison-shared} implies that if $x$ occurs also at another atom of the query, then it has only one occurrence at a non-primary-lhs position outside $I_{R,i_R}^\rhs$, in which case the corresponding attribute does not occur in any FD (as explained above, there are not enough positions that are not associated with attributes of the primary FD for an additional FD). Lemma~\ref{lemma:final-no-two-liaison-unique} implies that if $x$ occurs only in one atom, then it has two occurrences at non-primary-lhs positions $(R,A')$ and $(R,A'')$ outside $I_{R,i_R}^\rhs$. In this case, either the two corresponding attributes do not occur in any FD or there is an FD $R:(X\cup\set{A'})\ra \set{A''}$.
\end{proof}

\subsection{Proofs of Lemmas~\ref{lemma:red_first} -- \ref{lemma:red_last}}

For a set $\dep$ of FDs with an LHS chain, and an SJFCQ $Q$ such that $(\dep,Q)$ is final, Lemmas~\ref{lemma:12} -- \ref{lemma:final-orphan} imply that for every relation name $R$ that occurs in $Q$, $\dep_R$ consists of a single key, a single FD, or two FDs. We now prove that if $Q$ has $n>0$ non-single-key atoms w.r.t.~$\dep$, then there is another set $\dep'$ of FDs and query $Q'$ such that:
\begin{itemize}
    \item $n-1$ atoms of $Q'$ are non-single-key atoms w.r.t.~$\dep'$,
    \item $Q'$ is not $\dep'$-safe, and
    \item $\prob{RelFreq}(\dep',Q')$ is Cook reducible to $\prob{RelFreq}(\dep,Q)$.
\end{itemize}
We prove this separately for each one of the six cases that we have identified in Lemmas~\ref{lemma:final-liaison}, ~\ref{lemma:final-const}, and~\ref{lemma:final-orphan}.
The proofs for the different cases are similar; hence, we provide the full proof once, and only the essential details for all the other cases. The idea is the following. We replace an atom $R(\bar z)$ of $Q$ for which $\dep_R$ does not consist of a single key with two atoms $P(\bar y)$ and $T(\bar w)$ in $Q'$. The atom $P(\bar y)$ has the primary FD of $\dep_R$ as its single key, and the atom $T(\bar w)$ is similar to the original atom, except that its associated set of FDs is empty, or equivalently, consists of a single trivial key (hence, $T(\bar w)$ does not belong to the complex part of $Q'$). Using the atom $P(\bar y)$ we ensure that there is a one-to-one correspondence between repairs w.r.t.~$\dep$ and repairs w.r.t.~$\dep'$, and using the atom $T(\bar w)$ we ensure that: \e{(1)} there is one-to-one correspondence between repairs w.r.t.~$\dep$ that entail $Q$ and repairs w.r.t.~$\dep'$ that entail $Q'$, and \e{(2)} $P(\bar y)$ is in the complex part of $Q'$ (as $R(\bar z)$ is in the complex part of $Q$). This implies that since $Q$ is not $\dep$-safe, the query $Q'$ is not $\dep'$-safe.
In what follows, we write $\ins{S}$ for the schema of $\dep$ and $Q$, and $\ins{S'}$ for the schema of $\dep'$ and $Q'$.

\begin{replemma}{\ref{lemma:red_first}}
	\lemmaredfirst
\end{replemma}

\begin{proof}
Assume, without loss of generality, that the $R$-atom of $Q$ is $R(\bar z,x,x)$. Let $Q'=Q\setminus\set{R(\bar z,x,x)}\cup\set{P(\bar z, x),T(\bar z,x,x)}$, where $P,T$ are fresh relation names not in $Q$. We denote by $\dep'$ the set of FDs obtained from $\dep$ by removing the FDs of $\dep_R$ and adding the FDs $P:(\att{P}\setminus\set{B})\ra\set{B}$ and $T:\att{T}\ra\emptyset$, where $B$ is the attribute associated with the variable $x$ in $P(\bar z, x)$.

Let $D$ be a database over the schema $\ins{S}'$. Lemma~\ref{lemma:hard_help2} implies that the only relations that affect the relative frequency of $Q'$ w.r.t.~$D$ and $\dep'$ are those that occur in $Q'$. Hence, we assume, without loss of generality, that $D$ contains only such relations. Given the database $D$, we construct a database $E$ over the schema $\ins{S}$ as follows. For every relation $R'$ that occurs in both $\ins{S}$ and $\ins{S}'$, we add all the $R'$-facts of $D$ to $E$. Moreover, for every fact $P(\bar a,b)$ in $D$ we add to $R$ the fact $R(\bar a, b, \odot)$, where $\odot$ is a constant not in $D$, if $T(\bar a,b,b)\not\in D$ and the fact $R(\bar a, b, b)$ if $T(\bar a,b,b)\in D$.

Given a subset $J$ of $D$, the corresponding subset $J'$ of $E$ is defined as follows. For every relation $R'$ that occurs in both $\ins{S}$ and $\ins{S}'$, the subsets $J$ and $J'$ contain the same $R'$-facts. Moreover, for every fact $P(\bar a,b)\in J$, the subset $J'$ contains the fact $R(\bar a,b,*)$, where $*$ is either $\odot$ or $b$. It is rather straightforward that $J$ is a repair of $D$ if and only if $J'$ is a repair of $E$. Clearly, for every relation name $R'$ that occurs in both $\ins{S}$ and $\ins{S}'$, we have that $J$ satisfies $\dep'_{R'}$ if and only if $J$ satisfies $\dep_{R'}$ as both repairs contain the same facts over these relation names and the sets of FDs are the same. Moreover, $J$ satisfies $\dep'_P$ if and only if $J'$ satisfies $\dep_R$ because for every $\bar a$, $J$ contains a single fact of the form $P(\bar a,b)$ (and satisfies $\dep'_P$) if and only if $J'$ contains a single fact of the form $P(\bar a,b,*)$ for the same $b$ (hence, it satisfies $\dep_R$). Clearly, $J$ is maximal if and only if $J'$ is maximal, since we could add another fact $R(\bar a,b,*)$ to $J'$ without violating consistency (if no fact of the form $R(\bar a,*,*)$ already occurs in $J'$) if and only if we could add the fact $R(\bar a,b)$ to $J$ without violating consistency (if no fact of the form $R(\bar a,*)$ already occurs in $J$). Note that the only FD in $\dep'_{T}$ is trivial; hence, $J$ contains all the $T$-facts of $D$. We conclude that there is one-to-one correspondence between the repairs of $D$ w.r.t.~$\dep'$ and the repairs of $E$ w.r.t.~$\dep$. Therefore,
\[
\card{\rep{D}{\dep'}}\ =\ \card{\rep{E}{\dep}}.
\]

Next, we show that a repair $J$ of $D$ entails $Q'$ if and only if the corresponding repair $J'$ of $E$ entails $Q$. Let $J$ be a repair of $D$ that entails $Q'$, and let $h$ be a homomorphism from $Q'$ to $J$. We claim that $h$ is a homomorphism from $Q$ to $J'$. Clearly, for every relation name $R'$ that occurs in both $\ins{S}$ and $\ins{S}'$, the fact that $R'(h(\bar z))\in J$, where $R'(\bar z)$ is the atom of $Q'$ over the relation name $R'$, implies that $R'(h(\bar z))\in J'$, as the two repairs contain the same facts over these relation names. Moreover, by the definition of the repair $J'$, the fact that $P(h(\bar z, x))\in J$ and $T(h(\bar z, x,x))\in J$ implies that $R(h(\bar z,x,x))\in J'$. Therefore, $h$ is indeed a homomorphism from $Q$ to $J'$.
We can similarly show that if $J'$ is a repair of $E$ that entails $Q$, then the corresponding repair $J$ of $D$ entails $Q'$. For a homomorphism $h$ from $Q$ to $J'$, we clearly have that $R'(h(\bar z))\in J$ if and only if $R'(h(\bar z))\in J'$ for relation names $R'$ that occur in both $\ins{S}$ and $\ins{S}'$. Furthermore, if $R(h(\bar z,x,x))\in J'$, then $P(h(\bar z, x))\in J$, and since $J$ contains all the $T$-facts of $D$, we also have that $T(h(\bar z, x,x))\in J$, and $h$ is a homomorphism from $Q'$ to $J$. We conclude that
\[
\card{\rep{D}{\dep',Q'}}\ =\ \card{\rep{E}{\dep,Q}}
\]
and
\[
\rfreq{Q'}{D,\dep'}\ =\ \rfreq{Q}{E,\dep}.
\]

It is only left to show that $Q'$ is not $\dep'$-safe. Assume, towards a contradiction, that $Q'$ is $\dep'$-safe, which in turn implies that it satisfies one of the conditions of $\IsSafe$.
\begin{enumerate}
    \item We have that $\comp{Q}{\dep'}\neq\emptyset$ because $P(\bar z, x)$ has the liaison variable $x$ at a non-primary-lhs position (it is a liaison variable as it also occurs in $T(\bar z,x,x)$); thus, $Q'$ does not satisfy the condition of line~1.
    
    \item If $Q'=Q_1'\cup Q_2'$ with $Q_1'\neq\emptyset\neq Q_2'$ and $\var{Q_1'}\cap \var{Q_2'}=\emptyset$, it also holds that $Q=Q_1\cup Q_2$ with $Q_1\neq\emptyset\neq Q_2$ and $\var{Q_1}\cap \var{Q_2}=\emptyset$, where $Q_1=Q_1'$ and $Q_2=Q_2'\setminus\set{P(\bar z,x),T(\bar z,x,x)}\cup\set{R(\bar z,x,x)}$, assuming, w.l.o.g, that $P(\bar z,x),T(\bar z,x,x)\in Q_2'$. In this case, $Q$ also satisfies the condition of line~3, which contradicts Lemma~\ref{lemma:final-no-condition}.
    
    \item For every atom $\beta$ that occurs in both $Q$ and $Q'$, we have that $\beta\in\comp{Q}{\dep}$ if and only if $\beta\in\comp{Q'}{\dep'}$ because every liaison variable in $Q$ is also a liaison variable in $Q'$ and vice versa. Moreover, $R(\bar z,x,x)\in \comp{Q}{\dep}$ and $P(\bar z,x)\in \comp{Q'}{\dep'}$, and since both atoms have the same variables at primary-lhs positions, it is rather clear that if $Q'$ satisfies the condition of line~5 or the condition of line~7, then so does $Q$, which contradicts Lemma~\ref{lemma:final-no-condition}.
\end{enumerate}
This concludes our proof.
\end{proof}

\begin{replemma}{\ref{lemma:red_second}}
	\lemmaredsecond
\end{replemma}

\begin{proof}
Assume, without loss of generality, that the $R$-atom of $Q$ is $R(\bar z,c,x)$. Let $Q'=Q\setminus\set{R(\bar z,c,x)}\cup\set{P(\bar z, c),T(\bar z,c,x)}$, where $P$ and $T$ are two fresh relation names not in $Q$. We denote by $\dep'$ the set of FDs obtained from $\dep$ by removing the FDs of $\dep_R$ and adding the FDs $P:(\att{P}\setminus\set{B})\ra\set{B}$ and $T:\att{T}\ra\emptyset$, where $B$ is the attribute associated with the constant $c$ in $P(\bar z, c)$.

Given a database $D$ over the schema $\ins{S}'$, we construct a database $E$ over $\ins{S}$ by adding to $E$ all the $R'$-facts of $D$ for relations $R'$ that occur in both $\ins{S}$ and $\ins{S}'$. Moreover, for every fact $P(\bar a,b)$ such that either \e{(1)} $b\neq c$ or \e{(2)} $b=c$ and there is no fact of the form $T(\bar a,c,*)$ in $D$, we add a fact $R(\bar a,b,\odot)$ to $E$ for a constant $\odot$ that does not occur in $D$. For every fact $P(\bar a,c)$ such that there is a fact of the form $T(\bar a,c,d)$ in $D$, we add the fact $R(\bar a,c,d)$ to $E$. (If there are multiple facts of the form $T(\bar a,c,*)$ for some $\bar a$ we also add multiple facts to $R$. These facts are not in a conflict, as they agree on the value of the attribute in the right-hand side of the FD of $\dep_R$.)

We can again show that $\rfreq{Q'}{D,\dep'}=\rfreq{Q}{E,\dep}$ and $Q'$ is not $\dep'$-safe. The proof is similar to that of Lemma~\ref{lemma:red_first} and we do not give it again. Observe that a homomorphism $h$ from $Q'$ to a repair $J$ of $D$ is such that $P(h(\bar z,c))\in J$ and $T(h(\bar z,c,x))\in J$, in which case, by the definition of $J'$, $R(h(\bar z,c,x))\in J'$. A homomorphism $h$ from $Q$ to a repair $J'$ of $E$ is such that $R(h(\bar z,c,x))\in J'$; hence, $P(h(\bar z,c))\in J$. Moreover, the variable $x$ has at least one more occurrence in $Q$; hence, $h(x)\in\adom{D}$ and $h(x)\neq\odot$. This implies that $T(h(\bar z,c,x))\in J$ by the definition of $J'$.
\end{proof}

\begin{replemma}{\ref{lemma:red_third}}
	\lemmaredthird
\end{replemma}

\begin{proof}
Assume, without loss of generality, that the $R$-atom of $Q$ is $R(\bar z,x,y)$. Let $Q'=Q\setminus\set{R(\bar z,x,y)}\cup\set{P(\bar z, x),T(\bar z,x,y)}$, where $P$ and $T$ are two fresh relation names not in $Q$. We denote by $\dep'$ the FD set obtained from $\dep$ by removing the FDs of $\dep_R$ and adding the FDs $P:(\att{P}\setminus\set{B})\ra\set{B}$ and $T:\att{T}\ra\emptyset$, where $B$ is the attribute associated with the variable $x$ in $P(\bar z, x)$.

Given a database $D$ over the schema $\ins{S}'$, we construct a database $E$ over $\ins{S}$ by adding to $E$ all the $R'$-facts of $D$ for the relation names $R'$ of $\ins{S} \cap \ins{S}'$. Moreover, for every fact $P(\bar a,b)$ such that there is no fact of the form $T(\bar a,b,*)$ in $D$ we add $R(\bar a,b,\odot)$ to $E$ for a constant $\odot$ that does not occur in $D$. For every fact $P(\bar a,b)$ such that there is a fact of the form $T(\bar a,b,c)$ in $D$, we add the fact $R(\bar a,b,c)$ to $E$. (If there are multiple facts of the form $T(\bar a,b,*)$ for some $\bar a$ we also add multiple facts to $R$, and these facts do not jointly violate $\dep_R$.)

We can show that $\rfreq{Q'}{D,\dep'}=\rfreq{Q}{E,\dep}$
and $Q'$ is not $\dep'$-safe. Note that, similarly to Lemma~\ref{lemma:red_second}, a homomorphism $h$ from $Q$ to a repair $J'$ of $E$ is such that $R(h(\bar z,x,y))\in J'$, and since $y$ is a liaison variable of $Q$, $h(y)\neq\odot$. This implies that $T(\bar z,x,y)\in J$ by the definition of $J$.
\end{proof}

\begin{replemma}{\ref{lemma:red_fourth}}
	\lemmaredfourth
\end{replemma}

\begin{proof}
Assume, without loss of generality, that the $R$-atom of $Q$ is $R(\bar z,x,c)$. Let $Q'=Q\setminus\set{R(\bar z,x,c)}\cup\set{P(\bar z, x),T(\bar z,x,c)}$, where $P$ and $T$ are two fresh relation names not in $Q$. We denote by $\dep'$ the FD set obtained from $\dep$ by removing the FDs of $\dep_R$ and adding the FDs $P:(\att{P}\setminus\set{B})\ra\set{B}$ and $T:\att{T}\ra\emptyset$, where $B$ is the attribute associated with the variable $x$ in $P(\bar z, x)$.

Given a database $D$ over $\ins{S}'$, we construct a database $E$ over $\ins{S}$ by adding to $E$ all the $R'$-facts of $D$ for relations $R'$ that occur in both $\ins{S}$ and $\ins{S}'$. Moreover, for every fact $P(\bar a,b)$ such that there is no fact $T(\bar a,b,c)$ in $D$ we add a fact $R(\bar a,b,\odot)$ to $E$ for a constant $\odot$ not in $D$. For every fact $P(\bar a,b)$ such that there is a fact $T(\bar a,b,c)$ in $D$, we add the fact $R(\bar a,b,c)$ to $E$. Similarly to the previous lemmas, We can show that $\rfreq{Q'}{D,\dep'}=\rfreq{Q}{E,\dep}$
and that $Q'$ is not $\dep'$-safe.
\end{proof}

\begin{replemma}{\ref{lemma:red_fifth}}
	\lemmaredfifth
\end{replemma}

\begin{proof}
Assume, without loss of generality, that $\alpha=R(\bar z,x,y,y)$.
Let $Q'=Q\setminus\set{R(\bar z,x,y,y)}\cup\set{P(\bar z, x),T(\bar z,x,y,y)}$, where $P$ and $T$ are two fresh relation names that do not occur in $Q$. We denote by $\dep'$ the FD set obtained from $\dep$ by removing the FDs of $\dep_R$ and adding the FDs $P:(\att{P}\setminus\set{B})\ra\set{B}$ and $T:\att{T}\ra\emptyset$, where $B$ is the attribute associated with the variable $x$ in $P(\bar z, x)$.

Given a database $D$ over $\ins{S}'$, we construct a database $E$ over $\ins{S}$ by adding to $E$ all the $R'$-facts of $D$ for relations $R'$ that occur in both $\ins{S}$ and $\ins{S}'$. Moreover, for every fact $P(\bar a,b)$ such that there is no fact of the form $T(\bar a,b,c,c)$, for some constant $c$, in $D$ we add a fact $R(\bar a,b,\odot,\otimes)$ to $E$ for constants $\odot\neq \otimes$ that do not occur in $D$. For every fact $P(\bar a,b)$ such that there is a fact of the form $T(\bar a,b,c,c)$ in $D$, we add the fact $R(\bar a,b,c,c)$ to $E$. (If there are multiple facts of the form $T(\bar a,b,*,*)$ for some $\bar a$ we also add multiple facts to $R$, and these facts do not jointly violate $\dep_R$.)

We can show that $\rfreq{Q'}{D,\dep'}=\rfreq{Q}{E,\dep}$
and that $Q'$ is not $\dep'$-safe.
The proof is similar to the proofs of the previous lemmas. An important observation here is that a homomorphism $h$ from $Q$ to a repair $J'$ of $E$ is such that $R(h(\bar z,x,y,y))\in J'$; hence, it cannot be the case that $R(h(\bar z,x,y,y))$ is a fact of the form $R(\bar a,b,\odot,\otimes)$, but rather a fact of the form $R(\bar a,b,c,c)$ for some constant $c$. By the definition of $J'$, $P(h(\bar z,x))\in J$ and $T(h(\bar z,x,y,y))\in J$, and $h$ is a homomorphism from $Q'$ to $J$.
\end{proof}

\begin{replemma}{\ref{lemma:red_last}}
	\lemmaredlast
\end{replemma}

\begin{proof}
Assume, without loss of generality, that $\alpha=R(\bar z,x,y,y)$.
Let $Q'=Q\setminus\set{R(\bar z,x,y,y)}\cup\set{P(\bar z, x),T(\bar z,x,y,y)}$, where $P,T$ are two fresh relation names not in $Q$. We denote by $\dep'$ the FD set obtained from $\dep$ by removing the FDs of $\dep_R$ and adding the FDs $P:(\att{P}\setminus\set{B})\ra\set{B}$ and $T:\att{T}\ra\emptyset$, where $B$ is the attribute associated with the variable $x$ in $P(\bar z, x)$.

Given a database $D$ over $\ins{S}'$, we construct a database $E$ over $\ins{S}$ by adding to $E$ all the $R'$-facts of $D$ for relations $R'$ that occur in both $\ins{S}$ and $\ins{S}'$. Moreover, for every fact $P(\bar a,b)$ such that there is no fact of the form $T(\bar a,b,c,c)$, for some constant $c$, in $D$ we add a fact $R(\bar a,b,b,\odot)$ to $E$ for a constant $\odot$ not in $D$. For every fact $P(\bar a,b)$ such that there is a fact of the form $T(\bar a,b,c,c)$ in $D$, we add the fact $R(\bar a,b,c,c)$ to $E$. If there are multiple facts of the form $T(\bar a,b,*,*)$ for some $\bar a$ we also add multiple facts to $R$. Note that there are no violations of the FD $R:(X\cup \set{A'})\ra\set{A''}$ in $E$ as no two facts of $E$ agree on the values of all the attributes in $X\cup\set{A'}$.

We can show that $\rfreq{Q'}{D,\dep'}=\rfreq{Q}{E,\dep}$
and that $Q'$ is not $\dep'$-safe.
The proof is similar to the proofs of the previous lemmas. Observe that a homomorphism $h$ from $Q$ to a repair $J'$ of $E$ is such that $R(h(\bar z,x,y,y))\in J'$; hence, it cannot be the case that $R(h(\bar z,x,y,y))$ is a fact of the form $R(\bar a,b,b,\odot)$, but rather a fact of the form $R(\bar a,b,c,c)$ for some constant $c$. By the definition of $J'$, $P(h(\bar z,x))\in J$ and $T(h(\bar z,x,y,y))\in J$, and $h$ is a homomorphism from $Q'$ to $J$.
\end{proof}
\section{Proofs of Section~\ref{sec:apx-counting}}

\subsection{Proof of Proposition~\ref{pro:union-of-sets-properties}}

\begin{repproposition}{\ref{pro:union-of-sets-properties}}
\prounionofsets
\end{repproposition}

\begin{proof}
	The proofs for items (1) and (3) have been given in the main body of the paper. We proceed to establish item (2).
	As discussed in the proof of item (1), $\rep{D}{\dep,H} = \rep{D \setminus D'}{\dep}$, where $D' = \{f \in D \mid H \cup \{f\} \not \models \dep\}$. Thus, sampling repairs from $\rep{D}{\dep,H}$ uniformly at random in polynomial time boils down to sampling repairs from $\rep{D \setminus D'}{\dep}$ uniformly at random in polynomial time. Therefore, to prove item (2), we only need to prove the following lemma:

\begin{lemma}\label{lem:lhs-sampler}
	Consider a set of FDs $\dep$ with an LHS chain (up to equivalence). There exists a randomized procedure $\mathsf{Sample}$ that accepts as input $D$ and $\dep$ such that the following hold:
	\begin{enumerate}
		\item For each $D' \in \rep{D}{\dep}$, $\Pr( \mathsf{Sample}(D,\dep) = D' ) = \frac{1}{\card{\rep{D}{\dep}}}$.
		\item $\mathsf{Sample}(D,\dep)$ runs in polynomial time in $||D||$.
	\end{enumerate}
\end{lemma}

\begin{proof}
We first need some auxiliary notions and results. Consider a database $D$, a set $\dep$ of FDs, and a partition $P = \{D_1,\ldots,D_n\}$ of $D$. We say that $P$ is {\em $\dep$-independent} (resp., {\em $\dep$-conflicting}) if for every distinct $i,j \in [n]$, $f \in D_i$ and $g \in D_j$ implies that $\{f,g\} \models \dep$ (resp., $\{f,g\} \not \models \dep$).
We first establish a useful technical result for the case where $P$ is $\dep$-independent.

\begin{lemma}\label{lem:partition-repairs-independent}
	Assume that $P$ is $\dep$-independent. For every database $D'$, the following are equivalent:
	\begin{enumerate}
		\item $D' \in \rep{D}{\dep}$.
		\item $D' = \bigcup_{i \in [n]} D'_i$, where $D'_i \in \rep{D_i}{\dep}$ for each $i \in [n]$.
	\end{enumerate}
\end{lemma}

\begin{proof}
	We first establish the direction $(1)$ implies $(2)$. Assume that $D' \in \rep{D}{\dep}$. Let $D'_i$ be the restriction of $D'$ to the facts in $D_i$, for each $i \in [n]$. We show that $D'_i \in \rep{D_i}{\dep}$, for each $i \in [n]$. Consider an arbitrary $i \in [n]$. Since $D' \models \dep$, then $D'_i \models \dep$. It remains to show that $D'_i$ is maximal. 
	Assume, towards a contradiction, that there is a fact $f \in D_i \setminus D'_i$ such that $D'_i \cup \{f\} \models \dep$. Since $P$ is a $\dep$-independent partition, no fact $g \in D'_j$, for $j \in [n]$ and $j \neq i$, is such that $\{f,g\} \not \models \dep$. Hence, $D' \cup \{f\} \ \models \dep$. This contradicts the fact that $D'$ is a repair of $\rep{D}{\dep}$, and the claim follows.
	
	We now establish $(2)$ implies $(1)$. Assume that $D' = \bigcup_{i \in [n]} D'_i$, where $D'_i \in \rep{D_i}{\dep}$ for each $i \in [n]$. Assume, towards a contradiction, that $D' \not \in \rep{D}{\dep}$. Note that $D'$ is consistent, i.e., $D' \models \dep$, since $P$ is $\dep$-independent. Hence, the fact that $D' \not \in \rep{D}{\dep}$ implies that $D'$ is not maximal, i.e.,  there is a fact $f \in D \setminus D'$ such that $D' \cup \{f\} \models \dep$.
	However, since $D'_i \subseteq D'$, for each $i \in [n]$, we conclude that $D'_i \cup \{f\} \models \dep$. The latter holds also if $i \in [n]$ is such that $f \in D_i$. This implies that $D'_i$ is not a repair of $\rep{D_i}{\dep}$, which contradicts our hypothesis.
\end{proof}

We now proceed to establish an analogous technical result for the case where $P$ is $\dep$-conflicting.

\begin{lemma}\label{lem:partition-repairs-conflicting}
	Assume that $P$ is $\dep$-conflicting. For every database $D'$, the following are equivalent:
	\begin{enumerate}
		\item $D' \in \rep{D}{\dep}$.
		\item $D' \in \rep{D_i}{\dep}$ for some $i \in [n]$.
	\end{enumerate}
\end{lemma}

\begin{proof}
	We first establish $(1)$ implies $(2)$. Assume that $D' \in \rep{D}{\dep}$. Since $P$ is a $\dep$-conflicting partition, and since $D' \models \dep$, $D'$ must necessarily contain only facts that all belong to some $D_i$, i.e., $D' \subseteq D_i$, for some $i \in [n]$. Moreover, since $D' \in \rep{D}{\dep}$, for every fact $f \in D \setminus D'$, $D' \cup \{f\} \not \models \dep$. This also holds if $f \in D_i$, and therefore $D' \in \rep{D_i}{\dep}$, as needed.
	
	We now establish $(2)$ implies $(1)$. Assume that $D' \in \rep{D_i}{\dep}$ for some $i \in [n]$. Since $D'$ is a repair of $D_i$ w.r.t.~$\dep$ and $P$ is a $\dep$-conflicting partition of $D$, for every fact $f \in D \setminus D'$ it holds that $D' \cup \{f\} \not \models \dep$. Hence, $D' \in \rep{D}{\dep}$, and the claim follows.
\end{proof}

The following is an easy consequence of Lemmas~\ref{lem:partition-repairs-independent} and~\ref{lem:partition-repairs-conflicting}:

\begin{corollary}\label{cor:partition-count}
		Consider a database $D$, a set of FDs $\dep$, and a partition $P = \{D_1,\ldots,D_n\}$ of $D$.
		\begin{enumerate}
			\item If $P$ is $\dep$-independent, then $\card{\rep{D}{\dep}} = \prod_{i \in [n]} \card{\rep{D_i}{\dep}}$.
			\item If $P$ is $\dep$-conflicting, then $\card{\rep{D}{\dep}} = \sum_{i \in [n]} \card{\rep{D_i}{\dep}}$.
		\end{enumerate}
\end{corollary}

By exploiting the above results and the notion of blocktree, we show that we can sample repairs of a single-relation database when considering sets of FDs with an LHS chain.

\begin{algorithm}[t]
	\LinesNumbered
	\KwIn{A database $D$ over $\ins{S}$, a set $\dep$ of FDs with an LHS chain over $\ins{S}$, and $R \in \ins{S}$}
	\KwOut{$D' \in \rep{D_R}{\dep_R}$ with probability $\frac{1}{\card{\rep{D_R}{\dep_R}}}$}
	\vspace{2mm}
	\SetKwProg{Fn}{Function}{:}{}
	
	Let $\Lambda = \phi_1,\ldots,\phi_n$ be an LHS chain of $\dep_R$\\
	Construct the $(R,\Lambda)$-blocktree $T= (V,E,\lambda)$ of $D$ w.r.t $\dep$\\
	Let $u_r$ be the root of $T$\\
	\Return $\mathsf{SampleFromTree}(T, u_r)$
	
	\vspace{2mm}
	\Fn{$\mathsf{SampleFromTree}(T = (V,E,\lambda), u)$}{
		\eIf{$u$ is a leaf node of $T$}{
			\textbf{return} $\lambda(u)$\;
		}{
			repair $\gets\emptyset$\;
			\For{$v\in \child{u}$}{
				Choose $v'\in \child{v}$ with probability $\frac{\card{\rep{\lambda(v')}{\dep_R}}}{\card{\rep{\lambda(v)}{\dep_R}}}$\\
				$D_{v'}$ $\gets$ $\mathsf{SampleFromTree}(T, v')$\\
				repair $\gets$ repair $\cup D_{v'}$
			}
		}
		\textbf{return} repair\;
	}
	
	\vspace{1em}

	\caption{The procedure $\mathsf{RSample}$}\label{alg:rsample}
\end{algorithm}

\begin{lemma}\label{lem:lhs-sampler-R}
	Consider a database $D$, and a set of FDs $\dep$ with an LHS chain, both over a schema $\ins{S}$. There is a randomized procedure $\mathsf{RSample}$ that takes as input $D$, $\dep$ and $R \in \ins{S}$ such that:
	\begin{enumerate}
		\item For each $D' \in \rep{D_R}{\dep_R}$, $\Pr(\mathsf{RSample}(D,\dep,R) = D') = \frac{1}{\card{\rep{D_R}{\dep_R}}}$. 
		\item $\mathsf{RSample}(D,\dep,R)$ runs in polynomial time in $||D||$.
	\end{enumerate}
\end{lemma}

\begin{proof}
	Consider the randomized procedure $\mathsf{RSample}$ shown in Algorithm~\ref{alg:rsample}. It first constructs the $(R,\Lambda)$-blocktree $T = (V,E,\lambda)$ of $D$ w.r.t.~$\dep$, for some LHS chain $\Lambda = \phi_1,\ldots,\phi_n$ of $\dep_R$. Then, it executes the auxiliary procedure $\mathsf{SampleFromTree}$, with input $T$ and its root, and returns the result of this call; here, for a node $u$ of $T$, we use $\child{u}$ to denote the set of children of $u$ in $T$.

	\medskip
	
	\noindent\paragraph{\underline{Correctness of $\mathsf{RSample}$}} 
	\smallskip
	
	\noindent We first show that $\mathsf{RSample}$ is correct, i.e., item (1) of Lemma~\ref{lem:lhs-sampler-R} holds.
	To this end, we only need to show that for every node $u$ of $T$ at an even level, $\mathsf{SampleFromTree}(T,u)$ outputs a repair $D' \in \rep{\lambda(u)}{\dep_R}$ with probability $\frac{1}{\card{\rep{\lambda(u)}{\dep_R}}}$. This, since the root $u_r$ of $T$ is at level $0$, and $\lambda(u_r) = D_R$, will immediately imply correctness.
	We proceed by induction on the level of $u$.
	
	\medskip
	\noindent \paragraph{Base Case.} Assume that $u$ is a leaf node of $T$, i.e., at level $2n$. In this case, $\mathsf{SampleFromTree}$ returns $\lambda(u)$ with probability $1$. Since $u$ is a leaf of $T$, $\lambda(u)$ is consistent w.r.t.~$\dep_R$. Hence, $\lambda(u)$ is the only repair in $\rep{\lambda(u)}{\dep_R}$, and the claim follows.
	
	\medskip
	\noindent \paragraph{Inductive Step.} Assume now that $u$ is a node of $T$ that appears at level $2i$, for some $i \in \{0,,\ldots,n-1\}$. Assuming $\child{u} = \{v_1,\ldots,v_k\}$, since $u$ occurs at an even level, we have that $\{\lambda(v_1), \ldots,\lambda(v_k)\} = \block{\lambda(u)}{\phi_{i+1}}$. That is, each child of $u$ is associated to a block of the database associated with $u$ w.r.t.~the FD $\phi_{i+1}$ of $\Lambda$.
	By the definition of block, and from the fact that $\Lambda$ is an LHS chain of $\dep_R$, we conclude that $\{\lambda(v_1), \ldots,\lambda(v_k)\} = \block{\lambda(u)}{\phi_{i+1}}$ is a $\dep_R$-independent partition of $\lambda(u)$.
	Hence, by Lemma~\ref{lem:partition-repairs-independent} and Corollary~\ref{cor:partition-count}, to prove that $\mathsf{SampleFromTree}(T,u)$ outputs a repair of $\rep{\lambda(u)}{\dep_R}$ with probability $\frac{1}{\card{\rep{\lambda(u)}{\dep_R}}}$, we need to show that, for each $v \in \child{u}$, after lines 11 and 12 of $\mathsf{SampleFromTree}$, $D_{v'} \in \rep{\lambda(v)}{\dep_R}$, and is chosen with probability $\frac{1}{\card{\rep{\lambda(v)}{\dep_R}}}$.
	To this end, let $v \in \child{u}$.
	Since $v$ is at level $2i + 1$, we have that $\child{v} = \{v'_1,\ldots,v'_k\}$, and $\{\lambda(v'_1),\ldots,\lambda(v'_k)\} = \sblock{\lambda(v)}{\phi_{i+1}}$. That is, each child of $v$ is associated to a subblock of the database associated with $v$, w.r.t.~$\phi_{i+1}$. By definition of subblock, and from the fact that $\lambda(v)$ is a block of $\lambda(u)$ w.r.t.~$\phi_{i+1}$, we conclude that $\{\lambda(v'_1),\ldots,\lambda(v'_k)\} = \sblock{\lambda(v)}{\phi_{i+1}}$ is a $\dep_R$-conflicting partition of $\lambda(v)$. Hence, by Corollary~\ref{cor:partition-count}, line 11 is choosing a child $v' \in \child{v}$ over a well-defined probability distribution. Moreover, since by inductive hypothesis $\mathsf{SampleFromTree}(T,v')$ outputs a repair of $\rep{\lambda(v')}{\dep_R}$, for some child $v' \in \child{v}$,  with probability $\frac{1}{\card{\rep{\lambda(v')}{\dep_R}}}$, by Lemma~\ref{lem:partition-repairs-conflicting}, lines 11 and 12 actually construct a repair $D_{v'}$ of $\rep{\lambda(v)}{\dep_R}$ with probability
	\[ \frac{\card{\rep{\lambda(v')}{\dep_R}}}{\card{\rep{\lambda(v)}{\dep_R}}} \cdot \frac{1}{\card{\rep{\lambda(v')}{\dep_R}}}\ =\ \frac{1}{\card{\rep{\lambda(v)}{\dep_R}}}.
	\]

	\medskip
	
	\noindent\paragraph{\underline{Running time of $\mathsf{RSample}$}} 
	\smallskip
	
	\noindent We now show that $\mathsf{RSample}(D,\dep,R)$ runs in polynomial time in $||D||$, i.e., item (2) of Lemma~\ref{lem:lhs-sampler-R}.
	Computing an LHS chain $\Lambda = \phi_1,\ldots,\phi_n$ of $\dep_R$ is independent of $D$. Moreover, constructing  the $(R,\Lambda)$-blocktree $T = (V,E,\lambda)$ of $D$ w.r.t~$\dep$ is feasible in polynomial time in $||D||$ for the following reasons: $T$ has height $2n$ and each node $u$ in $T$ has at most $|D|$ children; hence, $T$ has polynomially many nodes in $||D||$. Furthermore, given a node $v$, if $v$ is the root of $T$, constructing its associated database $\lambda(v)$ is trivial, and if $v$ is not the root, constructing $\lambda(v)$ boils down to construct a block (or a subblock) of $\lambda(u)$ w.r.t.~some FD in $\Lambda$, where $u$ is the parent of $v$. Clearly, computing a block (resp., subblock) is feasible in polynomial time in $||D||$.
	
	Regarding the procedure $\mathsf{SampleFromTree}$, this is a recursive procedure performing a number of recursive calls which is at most the number of nodes of $T$, which, as discussed above, is polynomial in $||D||$. Moreover, at each call, $\mathsf{SampleFromTree}$ either receives a leaf node $u$, and thus it only needs to output the (already computed) database $\lambda(u)$, or it needs to compute the number of repairs of (polynomially many) subsets of $D$ w.r.t.~$\dep_R$. Since $\dep_R$ has an LHS chain, the latter can be performed in polynomial time (by Proposition~\ref{pro:counting-repairs-lhs-fp}). Therefore, the call $\mathsf{SampleFromTree}(T,u_r)$ in $\mathsf{RSample}$ runs in polynomial time in $||D||$, and the claim follows.
\end{proof}

With the above lemma in place, it is now straightforward to define the procedure $\mathsf{Sample}$. Given a database $D$, and a set $\dep$ of FDs with an LHS chain (up to equivalence), both over a schema $\ins{S}$, $\mathsf{Sample}$ first constructs a canonical cover $\dep'$ of $\dep$ (which has an LHS chain), and then outputs
\[
\bigcup_{R \in \ins{S}} \mathsf{RSample}(D,\dep',R).
\]
The construction of $\dep'$ is database-independent. Since $\rep{D_R}{\dep'_R} = \rep{D_R}{\dep'}$, for each $R \in \ins{S}$, by Lemmas~\ref{lem:partition-repairs-independent} and~\ref{lem:partition-repairs-conflicting}, Corollary~\ref{cor:partition-count} and Lemma~\ref{lem:lhs-sampler-R}, for every database $D'$, $\Pr(\mathsf{Sample}(D,\dep) = D') = \frac{1}{\card{\rep{D}{\dep}}}$, and $\mathsf{Sample}(D,\dep)$ runs in polynomial time in $||D||$, as needed.
\end{proof}

With the procedure $\mathsf{Sample}$ in place, item (2) of Proposition~\ref{pro:union-of-sets-properties} follows.
\end{proof}

\subsection{Proof of Lemma~\ref{lem:mapping}}

\begin{replemma}{\ref{lem:mapping}}
	\lemmapping
\end{replemma}

\begin{proof}
	Let $\varphi = C_1 \wedge \cdots \wedge C_m$, with $C_i = \ell^1_i \vee \ell^2_i \vee \ell^3_i$ for each $i \in [m]$, and let $x_1,\ldots,x_n$ be the Boolean variables of $\varphi$.
	In what follows, we let $\phi_1 = R: \text{Var} \ra \text{VValue}$ and $\phi_2 = R : \text{\rm Clause} \ra \text{\rm LValue}$.  
	For notational convenience, we write $D^k_\varphi$ for the database $\mathsf{db}_k(\varphi)$. We start by defining the mapping $\mathsf{Map}$ that assigns a truth assignment of $\varphi$ to a set of reparis of $D^k_\varphi$. In what follows, by abuse of notation, we refer to the symbol $C^j_i$ simply as a clause, and treat it as the clause $C_i$ of $\varphi$.
	
	Let $\tau : \{x_1,\ldots,x_n\} \ra \{0,1\}$ be a truth assignment (not necessarily satisfying) of $\varphi$. By overloading the notation, $\tau$ is extended to literals and formulas in the obvious way. For a clause $C^j_i = \ell^1_i \vee \ell^2_i \vee \ell^3_i$, we denote by $D^\tau_{i,j}$ the database
	\begin{center}
		\begin{tabular}{@{}ccccc@{}}
			\toprule
			Var & VValue & Clause & LValue &\\ \midrule
			$\var{\ell^1_i}$ & $\tau(\var{\ell^1_i})$ & $\angletup{C^j_i,\tau(\ell^1_i)}$ & $\tau(\ell^1_i)$ \\
			$\var{\ell^2_i}$ & $\tau(\var{\ell^2_i})$ & $\angletup{C^j_i,\tau(\ell^2_i)}$ & $\tau(\ell^2_i)$  \\
			$\var{\ell^3_i}$ & $\tau(\var{\ell^3_i})$ & $\angletup{C^j_i,\tau(\ell^3_i)}$ & $\tau(\ell^3_i)$  \\ \bottomrule
		\end{tabular}
	\end{center}
	Roughly, $D^\tau_{i,j}$ encodes how clause $C^j_i$ evaluates w.r.t.\ $\tau$,  by specifying the truth value of each variable in $C^j_i$. This is what we meant for \emph{$\tau$ chooses values for the literals in the $C^j_i$-gadget}. For example, with assignment $\tau = \{x_1 \mapsto 1, x_2 \mapsto 0, x_3 \mapsto 0\}$ and a clause $C^j_i = x_1 \vee \neg x_2 \vee x_3$, the database $D^\tau_{i,j}$ is:
	\begin{center}
		\begin{tabular}{@{}ccccc@{}}
			\toprule
			Var & VValue & Clause & LValue\\ \midrule
			$x_1$ & 1 & $\angletup{C^j_i,1}$ & 1\\
			$x_2$ & 0 & $\angletup{C^j_i,1}$ & 1 \\
			$x_3$ & 0 & $\angletup{C^j_i,0}$ & 0 \\ \bottomrule
		\end{tabular}
	\end{center}
	Note that each $D^\tau_{i,j}$ is a consistent subset of $D^k_\varphi$.

	For each clause $C^j_i$, we denote by $D_{C^j_i}$ the database containing the following fact of $D^k_C$:
	\begin{center}
		\begin{tabular}{@{}ccccc@{}}
			\toprule
			Var & VValue & Clause & LValue\\ \midrule
			$\star$ & $\star$ & $\angletup{C^j_i,1}$ & 0 \\ \bottomrule
		\end{tabular}
	\end{center}
	
	Consider now the set $\mathcal{C}_T$ (resp., $\mathcal{C}_F$) of all clauses $C^j_i$ such that $\tau(C^j_i) = 1$ (resp., $\tau(C^j_i) = 0$), i.e., $\mathcal{C}_T$ contains all the clauses that are true w.r.t. $\tau$, and $\mathcal{C}_F$ all clauses that are false w.r.t. $\tau$. 
	Let $\mathcal{C}= \mathcal{C}_F \cup \mathcal{C}_T$.\footnote{These sets should be parametrized with $\tau$, but in order to keep the notation simple, we omit it since $\tau$ is always clear.}
	Let $\mathcal{C}'$ be \emph{any} (possibly empty, not necessarily proper) subset of $\mathcal{C}_T$, and let, for each $C^j_i \in \mathcal{C}$, $D^{\tau,-}_{i,j}$ be the database obtained from $D^\tau_{i,j}$ by removing all facts having $1$ in position $(R,\text{\rm LValue})$. We define the database
	\[
	D^{\mathcal{C}'}_\text{start}\ =\ \bigcup\limits_{C^j_i \in \mathcal{C} \setminus \mathcal{C}'} D^\tau_{i,j} \cup \bigcup\limits_{C^j_i \in \mathcal{C}'}  (D^{\tau,-}_{i,j} \cup D_{C^j_i}).
	\]
	Note that $D^{\mathcal{C}'}_\text{start}$ is consistent. Intuitively, the above database represents the assignment $\tau$ in the following way. The truth state of each clause $C^j_i$ that is \emph{not} in $\mathcal{C}'$ is stored via the database $D^\tau_{i,j}$, specifying \emph{explicitly} the truth value of each literal in $C^j_i$ (this value is stored in position $(R,\text{\rm LValue})$), while for clauses $C^j_i \in \mathcal{C}'$, this information is stored using $D^{\tau,-}_{i,j}$, i.e., by omitting from $D^\tau_{i,j}$ the facts corresponding to literals that evaluate to true, and replacing the information that $C^j_i$ is true with the singleton database $D_{C^j_i}$. Hence, the set $\mathcal{C}'$ represents a choice of which true clauses $C^j_i$ we want to store via $(D^{\tau,-}_{i,j} \cup D_{C^j_i})$ rather than via $D^\tau_{i,j}$.
	In general, $D^{\mathcal{C}'}_\text{start}$ is not a repair, as it is not maximal, but it can be ``expanded'' with additional facts from $D^k_\varphi$ in order to obtain a repair. Different ways might exist to expand it, and thus lead to different repairs containing it. We use $\mathsf{expand}(D^{\mathcal{C}'}_\text{start})$ to denote the set of all repairs of $\rep{D^k_\varphi}{\dep}$ that contain $D^{\mathcal{C}'}_\text{start}$.

	We are finally ready to define our function. We let $\mathsf{Map}$ be the function from truth assignments over the variables of $\varphi$ to sets of databases such that, for every truth assignment $\tau$,
	\[
	\mathsf{Map}(\tau)\ =\ \bigcup\limits_{\mathcal{C}' \subseteq \mathcal{C}_T} \mathsf{expand}(D^{\mathcal{C}'}_\text{start}).
	\]
	In other words, $\mathsf{Map}(\tau)$ constructs all repairs that can be obtained by expanding $D^{\mathcal{C}'}_\text{start}$ by considering all possible choices of true clauses $\mathcal{C}' \subseteq \mathcal{C}_T$.
	We proceed to show that $\mathsf{Map}$ enjoys the properties stated in the three items of Lemma~\ref{lem:mapping}, which in turn will prove Lemma~\ref{lem:mapping} itself.

	\medskip
	\textbf{(Item 1)} It follows by the definition of the function $\mathsf{Map}$.
	
	\medskip
	\textbf{(Item 2)} Let $D' \in \rep{D^k_\varphi}{\dep}$. We construct from $D'$ a truth assignment $\tau$ as follows.  For each Boolean variable $x_i$ of $\varphi$, if there is a fact $R(\bar{t}) \in D'$ of the form
	\begin{center}
		\begin{tabular}{@{}ccccc@{}}
			\toprule
			Var & VValue & Clause & LValue\\ \midrule
			$x_i$ & $\mathsf{varvalue}$ & $\angletup{C^{j}_{i},\mathsf{litvalue}}$ & $\mathsf{litvalue}$ \\ \bottomrule
		\end{tabular}
	\end{center}
	then $\tau(x_i) = \mathsf{varvalue}$. Since $D'$ is a repair, and thus satisfies $\phi_1$, there cannot be a variable $x_i$ that gets assigned two different truth values, and thus, $\tau$ is well-defined. Moreover, for every Boolean variable $x_i$, there is a tuple in $D'$ containing $x_i$ of the form above, and thus $\tau$ is defined over each variable of $\varphi$. Indeed, if this was not the case, even a tuple of the following form, for some $C^j_i$
	\begin{center}
		\begin{tabular}{@{}ccccc@{}}
			\toprule
			Var & VValue & Clause & LValue\\ \midrule
			$x_i$ & $\mathsf{varvalue}$ & $\angletup{C^{j}_{i},0}$ & 0 \\ \bottomrule
		\end{tabular}
	\end{center}
	which is in $D^k_\varphi$, does not appear in $D'$. Since $D'$ is maximal, it means that the above tuple is in a conflict with some tuple of $D'$. Since position $(R,\text{\rm LValue})$ contains a $0$ in the above tuple, it can only be in a conflict with a tuple in $D'$ w.r.t.\ $\phi_1$, and thus, $D'$ contains a tuple containing $x_i$, which contradicts the hypothesis. Hence, $\tau$ is a truth assignment for $\varphi$.
	We now claim that $D' \in \mathsf{Map}(\tau)$. In particular, we show that $D^{\mathcal{C}'}_\text{start} \subseteq D'$, for some set of clauses $\mathcal{C}' \subseteq \mathcal{C}_T$.  Let $\mathcal{C}'$ be the maximal subset of $\mathcal{C}_T$ of clauses $C^j_i$ such that $D'$ does not contain $D^\tau_{i,j}$. Recall that
	$$ D^{\mathcal{C}'}_\text{start} = \bigcup\limits_{C^j_i \in \mathcal{C} \setminus \mathcal{C}'} D^\tau_{i,j} \cup \bigcup\limits_{C^j_i \in \mathcal{C}'}  (D^{\tau,-}_{i,j} \cup D_{C^j_i}).$$
	We show that $D'$ contains both unions in the above expressions.

	To see that $D'$ contains the first union of the above expression, note that $C^j_i \in \mathcal{C} \setminus{C}'$ means that either $C^j_i \in \mathcal{C}_F$ or $C^j_i \in \mathcal{C}_T \setminus \mathcal{C}'$.
	If $C^j_i \in \mathcal{C}_T \setminus \mathcal{C}'$, then $D'$ contains $D^\tau_{i,j}$ by the definition of $\mathcal{C}'$.
	If $C^j_i \in \mathcal{C}_F$, assume, towards a contradiction, that there is a fact $R(\bar{t})$ in $D^{\tau}_{i,j}$ that is not in $D'$. Since $C^j_i \in \mathcal{C}_F$, the tuple $\bar{t}$ must have a 0 in position $(R,\text{\rm LValue})$, and being $D'$ maximal, adding $R(\bar{t})$ to $D'$ causes a violation of $\phi_1$, which means $\tau(C_i) = 1$, contradicting the hypothesis that $C_i$ is false w.r.t.\ $\tau$. Hence, $D'$ contains the first union in the definition of $D^{\mathcal{C}'}_\text{start}$.
	
	To see that $D'$ contains the second union, let $C^j_i \in \mathcal{C}'$. We show that $D'$ contains $(D^{\tau,-}_{i,j} \cup D_{C^j_i})$ by showing that $D'$ contains $D^{\tau,-}_{i,j}$ and $D_{C^j_i}$. By contradiction, assume that $D'$ does not contain $D^{\tau,-}_{i,j}$. Thus, there exists a fact $R(\bar{t}) \in D^{\tau,-}_{i,j}$ that is not in $D'$. Since $D^{\tau,-}_{i,j}$ contains only facts with 0 in position $(R,\text{\rm LValue})$, by maximality of $D'$, $R(\bar{t})$ is in a conflict with some tuple of $D'$ w.r.t.\ $\phi_1$. Recall that a fact in $D^{\tau,-}_{i,j}$ encodes some literal whose value is false w.r.t.\ $\tau$. This means that $\tau$ is assigning two different truth values to the same variable, which is not possible. Hence, $D'$ contains $D^{\tau,-}_{i,j}$. We now show that $D'$ contains $D_{C^j_i}$, for each $C^j_i \in \mathcal{C}'$. Since, for any $C^j_i \in \mathcal{C}'$, $D'$ does not contain the full set $D^{\tau}_{i,j}$, but it contains $D^{\tau,-}_{i,j}$ as we have just shown, it means $D'$ has no facts from the $C^j_i$-gadget that have a 1 in position $(R,\text{\rm LValue})$. Hence, by maximality of $D'$, $D'$ must contain $D_{C^j_i}$, for each $C^j_i \in \mathcal{C}'$.

	Summing up, $D'$ contains $D^{\mathcal{C}'}_\text{start}$, which implies that $D' \in \mathsf{expand}(D^{\mathcal{C}'}_\text{start})$. Thus, $D' \in \mathsf{Map}(\tau)$.
	
	\medskip
	\textbf{(Item 3)} Let $\tau$ be a truth assignment. We first show that for each set $\mathcal{C}' \subseteq \mathcal{C}_T$, $\mathsf{expand}(D^{\mathcal{C}'}_\text{start})$ is a singleton, that is, there exists exactly one repair $D'$ of $\rep{D^k_\varphi}{\dep}$ that contains the consistent subset $D^{\mathcal{C}'}_\text{start}$. Recall that
	\[
	D^{\mathcal{C}'}_\text{start}\ =\ \bigcup\limits_{C^j_i \in \mathcal{C} \setminus \mathcal{C}'} D^\tau_{i,j} \cup \bigcup\limits_{C^j_i \in \mathcal{C}'}  (D^{\tau,-}_{i,j} \cup D_{C^j_i}).
	\]
	Consider first the set $\mathcal{T} \subseteq D^k_\varphi$ that are not in $D^{\mathcal{C}'}_\text{start}$ and that mention a clause $C^j_i \in \mathcal{C}_F$. Note that the only facts of $\mathcal{T}$ that can be added to $D^{\mathcal{C}'}_\text{start}$ without causing inconsistencies are necessarily the ones in $\mathcal{T} \cap D^k_C$. This is because for $C^j_i \in \mathcal{C}_F$,  $D^{\mathcal{C}'}_\text{start}$ contains $D^\tau_{i,j}$, and adding further facts from the $C^j_i$-gadget will cause an inconsistency w.r.t.\ $\phi_1$. Moreover, note that no fact from $\mathcal{T} \cap D^k_C$ can be in a conflict with any fact outside $D^{\mathcal{C}'}_\text{start}$ that is consistent with $D^{\mathcal{C}'}_\text{start}$.  Thus, the facts $\mathcal{T} \cap D^k_C$ can all be added to $D^{\mathcal{C}'}_\text{start}$, and every repair $D' \in \mathsf{expand}(D^{\mathcal{C}'}_\text{start})$ must contain $\mathcal{T} \cap D^k_C$. Hence,
	\[
	D_1\ =\ D^{\mathcal{C}'}_\text{start} \cup \bigcup\limits_{C^j_i \in \mathcal{C}_F} D_{C^j_i}
	\]
	is consistent, and every repair in $\mathsf{expand}(D^{\mathcal{C}'}_\text{start})$ contains $D_1$. We now show that $|\mathsf{expand}(D_1)| = 1$.
	
	Consider the set of facts $\mathcal{T}$ of $D^k_\varphi$ that are not in $D_1$, i.e., $\mathcal{T} = D^k_\varphi \setminus D_1$. By the above discussion on $D_1$, no facts in $\mathcal{T}$ that mention a clause $C^j_i \in \mathcal{C}_F$ can be further added to $D_1$ without causing inconsistency.
	Moreover, note that also facts $R(\bar{t}) \in \mathcal{T}$ that mention a clause $C^j_i \in \mathcal{C}_T \setminus \mathcal{C}'$ cannot be added to $D_1$ without causing conflicts, because $D_1$ contains $D^{\tau}_{i,j}$, and all facts in $D^k_C$ that mention $C^j_i$ have a 0 in position $(R,\text{\rm LValue})$. Therefore, the only facts of $\mathcal{T}$ that might be added to $D_1$ without conflicts are the ones that mention a clause $C^j_i \in \mathcal{C}'$. Since $D_1$ already contains $(D^{\tau,-}_{i,j} \cup D_{C^j_i})$, for each $C^j_i \in \mathcal{C}'$, among the facts of $\mathcal{T}$ that mention some $C^j_i \in \mathcal{C}'$, only the ones from the $C^j_i$-gadget, for each $C^j_i \in \mathcal{C}'$, that have a 0 in position $(R,\text{\rm LValue})$ can be added to $D_1$ without conflicts. Let $\mathcal{T}'$ be the restriction of $\mathcal{T}$ over such facts.
	To prove that $\mathsf{expand}(D_1)$ is a singleton, it suffices to show that no two facts of $\mathcal{T}'$ are in a conflict. Indeed, if this is the case, then, the only repair containing $D_1$ is the one where every fact $R(\bar{t}) \in \mathcal{T}'$ such that $D_1 \cup \{R(\bar{t})\}$ is consistent is added to $D_1$.
	
	Assume there are facts $R(\bar{t}),R(\bar{u}) \in \mathcal{T}'$ such that $\{R(\bar{t}),R(\bar{u})\} \not \models \dep$. From the above discussion, $\bar{t}$ and $\bar{u}$ must be tuples of the form
	
	\begin{center}
		\begin{tabular}{@{}cccccc@{}}
			\toprule
			\emph{Tuple} & Var & VValue & Clause & LValue\\ \midrule
			$\bar{t}$ &  $x$ & $\mathsf{varvalue}_1$ & $\angletup{C^{j_1}_{i_1},0}$ & 0\\ 
			$\bar{u}$ &  $x$ & $\mathsf{varvalue}_2$ & $\angletup{C^{j_2}_{i_2},0}$ & 0 \\ \bottomrule
		\end{tabular}
	\end{center}
	Thus, $R(\bar t)$ and $R(\bar u)$ must necessarily violate $\phi_1 = \text{Var} \ra \text{VValue}$; hence, $\mathsf{varvalue}_1 \neq \mathsf{varvalue}_2$. Since $R(\bar{t})$ and $R(\bar{u})$ are not in $D_1$, and thus, they are not in $D^{\tau,-}_{i_1,j_1}$ and $D^{\tau,-}_{i_2,j_2}$, respectively, and both have a 0 in position $(R,\text{\rm LValue})$, it must necessarily be the case that $C^{j_1}_{i_1}$ and $C^{j_2}_{i_2}$ each have a literal, over variable $x$, that is assigned $1$ by $\tau$. But this is not possible since $\mathsf{varvalue}_1 \neq \mathsf{varvalue}_2$. Hence, $\mathsf{expand}(D_1) = \mathsf{expand}(D^{\mathcal{C}'}_\text{start})$ is a singleton.

	To see now why
	$
	|\mathsf{Map}(\tau)| = \left|\bigcup\limits_{\mathcal{C}' \subseteq \mathcal{C}_T} \mathsf{expand}(D^{\mathcal{C}'}_\text{start})\right|= 2^{k \cdot c_\tau},
	$
	it suffices to note that for each $\mathcal{C}' \subseteq \mathcal{C}_T$, \emph{the} repair $\mathsf{expand}(D^{\mathcal{C}'}_\text{start})$ contains all facts of $D^k_C$ that mention a clause $C^j_i \in (\mathcal{C}_F \cup \mathcal{C}')$, and no other facts from $D^k_C$. Hence, for any two distinct $\mathcal{C}',\mathcal{C}'' \subseteq \mathcal{C}_T$, the two repairs $\mathsf{expand}(D^{\mathcal{C}'}_\text{start})$ and $\mathsf{expand}(D^{\mathcal{C}''}_\text{start})$ are different. Recall that $\mathcal{C}_T$ collects all clauses $C^j_i$ such that $\tau(C^j_i) = 1$, and thus, $|\mathcal{C}_T| = k \cdot c_\tau$. Since $\mathsf{Map}(\tau)$ contains all repairs $\mathsf{expand}(D^{\mathcal{C}'}_\text{start})$, for each (possibly empty, not necessarily proper) subset $\mathcal{C}'$ of $\mathcal{C}_T$, there are exactly $2^{k \cdot c_\tau}$ repairs in $\mathsf{Map}(\tau)$.
\end{proof}

\subsection{Proof of Theorem~\ref{the:apx-relfreq}}

\begin{reptheorem}{\ref{the:apx-relfreq}}
	\thmapxrelfreq
\end{reptheorem}

\begin{proof}
	Item (1) has been already discussed in the main body of the paper.
	Concerning item (2), the main part of the proof has been already given in the main body of the paper. It remains to show that the randomized procedure $\mathsf{A}$ correctly approximates the integer $\card{\rep{D}{\hat{\dep}}}$. For the sake of readability, let us first recall the procedure $\mathsf{A}$.
	Given a database $D$ over $\{R\}$, $\epsilon > 0$, and $0 < \delta < 1$, we define $\mathsf{A}$ as the randomized procedure:
	\begin{itemize}
		\item[-] compute the database $D'$ from $D$;
		\item[-] let $\epsilon' = \frac{\epsilon}{2+\epsilon}$;
		\item[-] let $r = \max\left\{\frac{1-\epsilon'}{2 \cdot (1+2^{|D|})}, \mathsf{A}'(D',\epsilon',\delta)\right\}$;
		\item[-] output $\frac{1}{r} - 1$.
	\end{itemize}
	Recall that $\mathsf{A}'$ is an FPRAS for $\prob{RelFreq}(\dep,Q)$, which exists by hypothesis. Recall also the key equation $\blacklozenge$ shown in the main body of the paper, which states that
	\[
	\rfreq{Q}{D',\dep}\ =\ \frac{1}{1 + \card{\rep{D}{\hat{\dep}}}}.
	\]
	We are now ready to establish the probabilistic guarantees. By assumption, it holds that
	\[
	\pr\left((1-\epsilon')\cdot \rfreq{Q}{D',\dep} \leq \mathsf{A}'(D',\epsilon',\delta) \leq (1+\epsilon') \cdot \rfreq{Q}{D',\dep}\right)\ \ge\ 1 - \delta.
	\]
	Thus, it suffices to show that the left-hand side of the above inequality is bounded from above by
	\[
	\pr\left((1-\epsilon)\cdot \card{\rep{D}{\hat{\dep}}} \leq \mathsf{A}(D,\epsilon,\delta) \leq  (1+\epsilon) \cdot \card{\rep{D}{\hat{\dep}}}\right).
	\]
	To this end, by equation $\blacklozenge$, we get that
	\[
	\pr\left((1-\epsilon')\cdot \rfreq{Q}{D',\dep} \leq \mathsf{A'}(D',\epsilon',\delta) \leq (1+\epsilon') \cdot \rfreq{Q}{D',\dep}\right)\ =\ \pr(E),
	\]
	where $E$ is the event
	\[
	\frac{1-\epsilon'}{1+\card{\rep{D}{\hat{\dep}}}} \leq \mathsf{A'}(D',\epsilon',\delta) \leq \frac{1+\epsilon'}{1+\card{\rep{D}{\hat{\dep}}}}.
	\]
	Note that $\card{\rep{D}{\hat{\dep}}} \le 2^{|D|}$, i.e., $\card{\rep{D}{\hat{\dep}}}$ is at most the the number of all subsets of $D$. Hence,
	\[
	\frac{1-\epsilon'}{1 + \card{\rep{D}{\hat{\dep}}}}\ \ge\ \frac{1-\epsilon'}{1+2^{|D|}}.
	\]
	Thus, for $E$ to hold is necessary that the output of $\mathsf{A}'(D',\epsilon',\delta)$ is no smaller than $\frac{1-\epsilon'}{1+2^{|D|}}$. Hence, for any number $p < \frac{1-\epsilon'}{1+2^{|D|}}$, $E$ coincides with the event
	\[
	\frac{1-\epsilon'}{1+\card{\rep{D}{\hat{\dep}}}} \leq \max\left\{p, \mathsf{A'}(D',\epsilon',\delta)\right\}
	\leq \frac{1+\epsilon'}{1+\card{\rep{D}{\hat{\dep}}}}.
	\]
	Hence, with $p=\frac{1-\epsilon'}{2 \cdot (1+2^{|D|})} < \frac{1-\epsilon'}{1 + 2^{|D|}}$, we conclude that
	\[	
	\pr(E)\ =\ \pr\left(\frac{1-\epsilon'}{1+\card{\rep{D}{\hat{\dep}}}} \leq \max\left\{p, \mathsf{A'}(D',\epsilon',\delta)\right\}
	\leq \frac{1+\epsilon'}{1+\card{\rep{D}{\hat{\dep}}}}\right).
	\]
	Since the random variable $\max\{p,\mathsf{A}'(D',\epsilon',\delta)\}$ always outputs a rational strictly larger than $0$, the latter probability coincides with
	\[
	\pr\left(\frac{1+\card{\rep{D}{\hat{\dep}}}}{1+\epsilon'} \leq \frac{1}{\max\left\{p, \mathsf{A'}(D',\epsilon',\delta)\right\}}
	\leq \frac{1+\card{\rep{D}{\hat{\dep}}}}{1-\epsilon'}\right).
	\]
	For short, let $X$ be the random variable $\frac{1}{\max\{p,\mathsf{A}'(D',\epsilon',\delta)\}}$. Since $\frac{1}{1-\epsilon'} = 1 + \frac{\epsilon'}{1-\epsilon'}$ and $\frac{1}{1+\epsilon'} = 1 - \frac{\epsilon'}{1+ \epsilon'} \ge 1 - \frac{\epsilon'}{1-\epsilon'}$, the probability above is less or equal than
	\[
	\pr\left(\left(1-\frac{\epsilon'}{1-\epsilon'}\right) \cdot (1+\card{\rep{D}{\hat{\dep}}}) \leq X 
	\leq \left(1+\frac{\epsilon'}{1-\epsilon'}\right) \cdot (1+\card{\rep{D}{\hat{\dep}}})\right).
	\]
	If we subtract $1$ from all sides of the inequality, then the above probability coincides with
	\[
	\pr\left(\left(1-\frac{\epsilon'}{1-\epsilon'}\right) \cdot (1+\card{\rep{D}{\hat{\dep}}}) - 1\leq X - 1 \leq \left(1+\frac{\epsilon'}{1-\epsilon'}\right) \cdot (1+\card{\rep{D}{\hat{\dep}}}) - 1\right).
	\]
	By expanding the products in the above expression, we obtain
	\[
	\pr\left(\card{\rep{D}{\hat{\dep}}} - \frac{\epsilon'}{1-\epsilon'} -  \frac{\epsilon'}{1-\epsilon'} \cdot \card{\rep{D}{\hat{\dep}}} \leq X - 1 \leq \card{\rep{D}{\hat{\dep}}} + \frac{\epsilon'}{1-\epsilon'} + \frac{\epsilon'}{1-\epsilon'} \cdot \card{\rep{D}{\hat{\dep}}}\right).
	\]
	Finally, since $\card{\rep{D}{\hat{\dep}}} \ge 1$, we have that
	\[
	\frac{\epsilon'}{1-\epsilon'}\ \le\ \frac{\epsilon'}{1-\epsilon'} \cdot \card{\rep{D}{\hat{\dep}}}.
	\] 
	Thus, the above probability is less or equal than
	\[
	\pr\left(\card{\rep{D}{\hat{\dep}}} - 2 \cdot\frac{\epsilon'}{1-\epsilon'} \cdot \card{\rep{D}{\hat{\dep}}} \leq X - 1 \leq  \card{\rep{D}{\hat{\dep}}} + 2 \cdot \frac{\epsilon'}{1-\epsilon'} \cdot \card{\rep{D}{\hat{\dep}}}\right),
	\]
	which in turn coincides with
	\[
	\pr\left( \left(1-2 \cdot\frac{\epsilon'}{1-\epsilon'}\right) \cdot \card{\rep{D}{\hat{\dep}}} \leq X - 1 \leq  \left(1+ 2 \cdot \frac{\epsilon'}{1-\epsilon'}\right) \cdot \card{\rep{D}{\hat{\dep}}}\right).
	\]
	Recalling that $\epsilon' = \frac{\epsilon}{2+\epsilon}$, one can verify that $2 \cdot \frac{\epsilon'}{1-\epsilon'} = \epsilon$. Moreover, $X-1$ is $\mathsf{A}(D,\epsilon,\delta)$. Hence, the above probability coincides with
	\[
	\pr\left( (1-\epsilon) \cdot \card{\rep{D}{\hat{\dep}}} \leq \mathsf{A}(D,\epsilon,\delta) \leq (1+ \epsilon) \cdot \card{\rep{D}{\hat{\dep}}}\right).
	\]
	Consequently, we have shown that 
	\[
	\pr\left( (1-\epsilon) \cdot \card{\rep{D}{\hat{\dep}}} \leq \mathsf{A}(D,\epsilon,\delta) \leq (1+ \epsilon) \cdot \card{\rep{D}{\hat{\dep}}}\right)\ \geq\ 1-\delta,
	\]
	which implies that $\mathsf{A}$ is an FPRAS for $\sharp \prob{Repairs}(\hat{\dep})$, and the claim follows.
\end{proof}

\end{document}